\documentclass[
  aps,
  prl,
  reprint,
  groupedaddress,
  nofootinbib,
  longbibliography,
  floatfix,
  nobibnotes
]{revtex4-2}

\usepackage[T1]{fontenc}
\usepackage{microtype}
\usepackage{graphicx}
\usepackage{mathtools}
\usepackage{amssymb}
\usepackage{amsthm}
\usepackage{amsmath}
\usepackage{amsbsy}
\usepackage{bbm}
\usepackage[all]{xy}
\usepackage{bm}
\usepackage{tikz}
\usepackage{array}
\usepackage{colortbl,xcolor}
\usepackage{ytableau}
\usepackage{hhline}
\usepackage{capt-of}
\usepackage{braket}
\usepackage{placeins}

\usepackage{hyperref}

\hypersetup{
  colorlinks=true,
  citecolor=blue,
  linkcolor=blue,
  urlcolor=blue,
  bookmarksopen=true,
  bookmarksnumbered=true,
  pdftitle={Optimal two-mode bosonic loss codes from finite group symmetry},
  pdfauthor={Argyris Giannisis Manes, Mahadevan Subramanian, and Liang Jiang}
}
\allowdisplaybreaks

\newtheorem{theorem}{Theorem}
\newtheorem{lemma}{Lemma}
\newtheorem{proposition}{Proposition}
\newtheorem{corollary}{Corollary}

\theoremstyle{definition}
\newtheorem{definition}{Definition}

\theoremstyle{remark}
\newtheorem{remark}{Remark}
\newtheorem{conjecture}[theorem]{Conjecture}

\theoremstyle{plain}

\newcommand{\norm}[1]{\left\lVert#1\right\rVert}

\newcommand{\tr}[1]{\text{Tr}\left[ #1\right]}

\DeclareMathOperator*{\Tr}{Tr}

\makeatletter
\let\FinalLetterCaption\caption
\let\FinalLetterMakeCaption\@makecaption
\let\FinalLetterAtCaption\@caption
\let\FinalLetterCaptionOf\captionof
\makeatother
\usepackage{subcaption}
\usepackage{float}
\usepackage{enumerate}
\usepackage{docmute}
\usepackage{chapterbib}
\hypersetup{hypertexnames=false}

\makeatletter
\let\auto@bib\@empty
\let\bibliographystyle\bibliographystyle@latex
\let\write@bibliographystyle\relax
\newcommand{\FinalLetterSetup}{%
  \IfFileExists{Candidacy.bib}{}{%
    \let\bibliographystyle\@gobble
    \renewcommand{\bibliography}[1]{
}%
  }%
  \let\caption\FinalLetterCaption
  \let\@makecaption\FinalLetterMakeCaption
  \let\@caption\FinalLetterAtCaption
  \let\captionof\FinalLetterCaptionOf
}
\newcommand{\FinalSupplementSetup}{%
  \IfFileExists{Candidacy.bib}{}{%
    \let\bibliographystyle\@gobble
    \renewcommand{\bibliography}[1]{
\begin{thebibliography}{28}%
\makeatletter
\providecommand \@ifxundefined [1]{%
 \@ifx{#1\undefined}
}%
\providecommand \@ifnum [1]{%
 \ifnum #1\expandafter \@firstoftwo
 \else \expandafter \@secondoftwo
 \fi
}%
\providecommand \@ifx [1]{%
 \ifx #1\expandafter \@firstoftwo
 \else \expandafter \@secondoftwo
 \fi
}%
\providecommand \natexlab [1]{#1}%
\providecommand \enquote  [1]{``#1''}%
\providecommand \bibnamefont  [1]{#1}%
\providecommand \bibfnamefont [1]{#1}%
\providecommand \citenamefont [1]{#1}%
\providecommand \href@noop [0]{\@secondoftwo}%
\providecommand \href [0]{\begingroup \@sanitize@url \@href}%
\providecommand \@href[1]{\@@startlink{#1}\@@href}%
\providecommand \@@href[1]{\endgroup#1\@@endlink}%
\providecommand \@sanitize@url [0]{\catcode `\\12\catcode `\$12\catcode `\&12\catcode `\#12\catcode `\^12\catcode `\_12\catcode `\%12\relax}%
\providecommand \@@startlink[1]{}%
\providecommand \@@endlink[0]{}%
\providecommand \url  [0]{\begingroup\@sanitize@url \@url }%
\providecommand \@url [1]{\endgroup\@href {#1}{\urlprefix }}%
\providecommand \urlprefix  [0]{URL }%
\providecommand \Eprint [0]{\href }%
\providecommand \doibase [0]{https://doi.org/}%
\providecommand \selectlanguage [0]{\@gobble}%
\providecommand \bibinfo  [0]{\@secondoftwo}%
\providecommand \bibfield  [0]{\@secondoftwo}%
\providecommand \translation [1]{[#1]}%
\providecommand \BibitemOpen [0]{}%
\providecommand \bibitemStop [0]{}%
\providecommand \bibitemNoStop [0]{.\EOS\space}%
\providecommand \EOS [0]{\spacefactor3000\relax}%
\providecommand \BibitemShut  [1]{\csname bibitem#1\endcsname}%
\let\auto@bib@innerbib\@empty
\bibitem [{\citenamefont {Noh}\ \emph {et~al.}(2019)\citenamefont {Noh}, \citenamefont {Albert},\ and\ \citenamefont {Jiang}}]{nohQuantumCapacityBounds2019}%
  \BibitemOpen
  \bibfield  {author} {\bibinfo {author} {\bibfnamefont {K.}~\bibnamefont {Noh}}, \bibinfo {author} {\bibfnamefont {V.~V.}\ \bibnamefont {Albert}},\ and\ \bibinfo {author} {\bibfnamefont {L.}~\bibnamefont {Jiang}},\ }\href {https://doi.org/10.1109/TIT.2018.2873764} {\bibfield  {journal} {\bibinfo  {journal} {IEEE Transactions on Information Theory}\ }\textbf {\bibinfo {volume} {65}},\ \bibinfo {pages} {2563} (\bibinfo {year} {2019})}\BibitemShut {NoStop}%
\bibitem [{\citenamefont {Gross}(2021)}]{grossDesigningCodesInteractions2021}%
  \BibitemOpen
  \bibfield  {author} {\bibinfo {author} {\bibfnamefont {J.~A.}\ \bibnamefont {Gross}},\ }\href {https://doi.org/10.1103/PhysRevLett.127.010504} {\bibfield  {journal} {\bibinfo  {journal} {Physical Review Letters}\ }\textbf {\bibinfo {volume} {127}},\ \bibinfo {pages} {010504} (\bibinfo {year} {2021})}\BibitemShut {NoStop}%
\bibitem [{\citenamefont {Dummit}\ and\ \citenamefont {Foote}(2004)}]{dummitAbstractAlgebra2004}%
  \BibitemOpen
  \bibfield  {author} {\bibinfo {author} {\bibfnamefont {D.~S.}\ \bibnamefont {Dummit}}\ and\ \bibinfo {author} {\bibfnamefont {R.~M.}\ \bibnamefont {Foote}},\ }\href@noop {} {\emph {\bibinfo {title} {Abstract Algebra}}},\ \bibinfo {edition} {3rd}\ ed.\ (\bibinfo  {publisher} {Wiley},\ \bibinfo {address} {Hoboken, NJ},\ \bibinfo {year} {2004})\BibitemShut {NoStop}%
\bibitem [{\citenamefont {Serre}(1977)}]{serreLinearRepresentationsFinite1977}%
  \BibitemOpen
  \bibfield  {author} {\bibinfo {author} {\bibfnamefont {J.-P.}\ \bibnamefont {Serre}},\ }\href@noop {} {\emph {\bibinfo {title} {{Linear representations of finite groups}}}},\ \bibinfo {series} {{Graduate texts in mathematics}}\ No.~\bibinfo {number} {42}\ (\bibinfo  {publisher} {Springer-Verlag},\ \bibinfo {address} {New York},\ \bibinfo {year} {1977})\BibitemShut {NoStop}%
\bibitem [{\citenamefont {Fulton}\ and\ \citenamefont {Harris}(2004)}]{fultonRepresentationTheory2004}%
  \BibitemOpen
  \bibfield  {author} {\bibinfo {author} {\bibfnamefont {W.}~\bibnamefont {Fulton}}\ and\ \bibinfo {author} {\bibfnamefont {J.}~\bibnamefont {Harris}},\ }\href {https://doi.org/10.1007/978-1-4612-0979-9} {\emph {\bibinfo {title} {Representation {{Theory}}}}},\ \bibinfo {series} {Graduate {{Texts}} in {{Mathematics}}}, Vol.\ \bibinfo {volume} {129}\ (\bibinfo  {publisher} {Springer},\ \bibinfo {address} {New York, NY},\ \bibinfo {year} {2004})\BibitemShut {NoStop}%
\bibitem [{\citenamefont {Goodman}\ and\ \citenamefont {Wallach}(2009)}]{goodmanSymmetryRepresentationsInvariants2009}%
  \BibitemOpen
  \bibfield  {author} {\bibinfo {author} {\bibfnamefont {R.}~\bibnamefont {Goodman}}\ and\ \bibinfo {author} {\bibfnamefont {N.~R.}\ \bibnamefont {Wallach}},\ }\href {https://doi.org/10.1007/978-0-387-79852-3} {\emph {\bibinfo {title} {Symmetry, {{Representations}}, and {{Invariants}}}}},\ \bibinfo {series} {Graduate {{Texts}} in {{Mathematics}}}, Vol.\ \bibinfo {volume} {255}\ (\bibinfo  {publisher} {Springer},\ \bibinfo {address} {New York, NY},\ \bibinfo {year} {2009})\BibitemShut {NoStop}%
\bibitem [{\citenamefont {Sakurai}\ and\ \citenamefont {Napolitano}(2021)}]{sakuraiModernQuantumMechanics2021}%
  \BibitemOpen
  \bibfield  {author} {\bibinfo {author} {\bibfnamefont {J.~J.}\ \bibnamefont {Sakurai}}\ and\ \bibinfo {author} {\bibfnamefont {J.}~\bibnamefont {Napolitano}},\ }\href@noop {} {\emph {\bibinfo {title} {Modern Quantum Mechanics}}},\ \bibinfo {edition} {third edition}\ ed.\ (\bibinfo  {publisher} {Cambridge University Press},\ \bibinfo {address} {Cambridge},\ \bibinfo {year} {2021})\BibitemShut {NoStop}%
\bibitem [{\citenamefont {Hall}(2015)}]{hallLieGroupsLie2015}%
  \BibitemOpen
  \bibfield  {author} {\bibinfo {author} {\bibfnamefont {B.~C.}\ \bibnamefont {Hall}},\ }\href {https://doi.org/10.1007/978-3-319-13467-3} {\emph {\bibinfo {title} {Lie {{Groups}}, {{Lie Algebras}}, and {{Representations}}: {{An Elementary Introduction}}}}},\ \bibinfo {series} {Graduate {{Texts}} in {{Mathematics}}}, Vol.\ \bibinfo {volume} {222}\ (\bibinfo  {publisher} {Springer International Publishing},\ \bibinfo {address} {Cham},\ \bibinfo {year} {2015})\BibitemShut {NoStop}%
\bibitem [{\citenamefont {Aoude}\ and\ \citenamefont {Ochirov}(2021)}]{aoudeClassicalObservablesCoherentspin2021}%
  \BibitemOpen
  \bibfield  {author} {\bibinfo {author} {\bibfnamefont {R.}~\bibnamefont {Aoude}}\ and\ \bibinfo {author} {\bibfnamefont {A.}~\bibnamefont {Ochirov}},\ }\href {https://doi.org/10.1007/JHEP10(2021)008} {\bibfield  {journal} {\bibinfo  {journal} {Journal of High Energy Physics}\ }\textbf {\bibinfo {volume} {2021}},\ \bibinfo {pages} {8} (\bibinfo {year} {2021})}\BibitemShut {NoStop}%
\bibitem [{\citenamefont {Denys}\ and\ \citenamefont {Leverrier}(2024)}]{denysQuantumErrorCorrectingCodes2024}%
  \BibitemOpen
  \bibfield  {author} {\bibinfo {author} {\bibfnamefont {A.}~\bibnamefont {Denys}}\ and\ \bibinfo {author} {\bibfnamefont {A.}~\bibnamefont {Leverrier}},\ }\href {https://doi.org/10.1103/PhysRevLett.133.240603} {\bibfield  {journal} {\bibinfo  {journal} {Physical Review Letters}\ }\textbf {\bibinfo {volume} {133}},\ \bibinfo {pages} {240603} (\bibinfo {year} {2024})}\BibitemShut {NoStop}%
\bibitem [{\citenamefont {Vale}\ and\ \citenamefont {Waldron}(2016)}]{valeConstructionGInvariantFinite2016}%
  \BibitemOpen
  \bibfield  {author} {\bibinfo {author} {\bibfnamefont {R.}~\bibnamefont {Vale}}\ and\ \bibinfo {author} {\bibfnamefont {S.}~\bibnamefont {Waldron}},\ }\href {https://doi.org/10.1007/s00041-015-9443-9} {\bibfield  {journal} {\bibinfo  {journal} {Journal of Fourier Analysis and Applications}\ }\textbf {\bibinfo {volume} {22}},\ \bibinfo {pages} {1097} (\bibinfo {year} {2016})}\BibitemShut {NoStop}%
\bibitem [{\citenamefont {Kubischta}\ and\ \citenamefont {Teixeira}(2025)}]{kubischtaPermutationInvariantQuantumCodes2025}%
  \BibitemOpen
  \bibfield  {author} {\bibinfo {author} {\bibfnamefont {E.}~\bibnamefont {Kubischta}}\ and\ \bibinfo {author} {\bibfnamefont {I.}~\bibnamefont {Teixeira}},\ }\href {https://doi.org/10.1109/TIT.2024.3487964} {\bibfield  {journal} {\bibinfo  {journal} {IEEE Transactions on Information Theory}\ }\textbf {\bibinfo {volume} {71}},\ \bibinfo {pages} {485} (\bibinfo {year} {2025})}\BibitemShut {NoStop}%
\bibitem [{\citenamefont {Bond}\ \emph {et~al.}(2026)\citenamefont {Bond}, \citenamefont {Min{\'a}{\v r}}, \citenamefont {Ozols}, \citenamefont {{Safavi-Naini}},\ and\ \citenamefont {Visnevskyi}}]{bondPermutationinvariantCodesNumerical2026}%
  \BibitemOpen
  \bibfield  {author} {\bibinfo {author} {\bibfnamefont {L.~J.}\ \bibnamefont {Bond}}, \bibinfo {author} {\bibfnamefont {J.}~\bibnamefont {Min{\'a}{\v r}}}, \bibinfo {author} {\bibfnamefont {M.}~\bibnamefont {Ozols}}, \bibinfo {author} {\bibfnamefont {A.}~\bibnamefont {{Safavi-Naini}}},\ and\ \bibinfo {author} {\bibfnamefont {V.}~\bibnamefont {Visnevskyi}},\ }\href {https://doi.org/10.48550/arXiv.2603.10981} {\bibinfo {title} {Permutation-invariant codes: A numerical study and qudit constructions}} (\bibinfo {year} {2026}),\ \Eprint {https://arxiv.org/abs/2603.10981} {arXiv:2603.10981 [quant-ph]} \BibitemShut {NoStop}%
\bibitem [{Not(2008)}]{NotesCoxeterTransformations2008}%
  \BibitemOpen
  \href {https://doi.org/10.1007/978-3-540-77399-3} {\emph {\bibinfo {title} {Notes on {{Coxeter Transformations}} and the {{McKay Correspondence}}}}},\ Springer {{Monographs}} in {{Mathematics}}\ (\bibinfo  {publisher} {Springer},\ \bibinfo {address} {Berlin, Heidelberg},\ \bibinfo {year} {2008})\BibitemShut {NoStop}%
\bibitem [{\citenamefont {Dieci}\ \emph {et~al.}(2005)\citenamefont {Dieci}, \citenamefont {Gasparo},\ and\ \citenamefont {Papini}}]{dieciContinuationSingularValue2005}%
  \BibitemOpen
  \bibfield  {author} {\bibinfo {author} {\bibfnamefont {L.}~\bibnamefont {Dieci}}, \bibinfo {author} {\bibfnamefont {M.~G.}\ \bibnamefont {Gasparo}},\ and\ \bibinfo {author} {\bibfnamefont {A.}~\bibnamefont {Papini}},\ }\href {https://doi.org/10.1007/s00009-005-0038-6} {\bibfield  {journal} {\bibinfo  {journal} {Mediterranean Journal of Mathematics}\ }\textbf {\bibinfo {volume} {2}},\ \bibinfo {pages} {179} (\bibinfo {year} {2005})}\BibitemShut {NoStop}%
\bibitem [{\citenamefont {Fan}(1949)}]{fanTheoremWeylConcerning1949}%
  \BibitemOpen
  \bibfield  {author} {\bibinfo {author} {\bibfnamefont {K.}~\bibnamefont {Fan}},\ }\href {https://doi.org/10.1073/pnas.35.11.652} {\bibfield  {journal} {\bibinfo  {journal} {Proceedings of the National Academy of Sciences of the United States of America}\ }\textbf {\bibinfo {volume} {35}},\ \bibinfo {pages} {652} (\bibinfo {year} {1949})}\BibitemShut {NoStop}%
\bibitem [{\citenamefont {Chuang}\ \emph {et~al.}(1997)\citenamefont {Chuang}, \citenamefont {Leung},\ and\ \citenamefont {Yamamoto}}]{chuangBosonicQuantumCodes1997}%
  \BibitemOpen
  \bibfield  {author} {\bibinfo {author} {\bibfnamefont {I.~L.}\ \bibnamefont {Chuang}}, \bibinfo {author} {\bibfnamefont {D.~W.}\ \bibnamefont {Leung}},\ and\ \bibinfo {author} {\bibfnamefont {Y.}~\bibnamefont {Yamamoto}},\ }\href {https://doi.org/10.1103/PhysRevA.56.1114} {\bibfield  {journal} {\bibinfo  {journal} {Physical Review A}\ }\textbf {\bibinfo {volume} {56}},\ \bibinfo {pages} {1114} (\bibinfo {year} {1997})}\BibitemShut {NoStop}%
\bibitem [{\citenamefont {Wetherbee}\ \emph {et~al.}(2026)\citenamefont {Wetherbee}, \citenamefont {Xu}, \citenamefont {Albert}, \citenamefont {Royer},\ and\ \citenamefont {Fatemi}}]{wetherbeeBosonicCyclicCodes2026}%
  \BibitemOpen
  \bibfield  {author} {\bibinfo {author} {\bibfnamefont {O.~C.}\ \bibnamefont {Wetherbee}}, \bibinfo {author} {\bibfnamefont {Y.}~\bibnamefont {Xu}}, \bibinfo {author} {\bibfnamefont {V.~V.}\ \bibnamefont {Albert}}, \bibinfo {author} {\bibfnamefont {B.}~\bibnamefont {Royer}},\ and\ \bibinfo {author} {\bibfnamefont {V.}~\bibnamefont {Fatemi}},\ }\href {https://doi.org/10.48550/arXiv.2606.11010} {\bibinfo {title} {Bosonic {{Cyclic Codes}}: {{Trading Stabilizers}} for {{Gaussian Non-Clifford Phase Gates}}}} (\bibinfo {year} {2026}),\ \Eprint {https://arxiv.org/abs/2606.11010} {arXiv:2606.11010 [quant-ph]} \BibitemShut {NoStop}%
\bibitem [{\citenamefont {Omanakuttan}\ and\ \citenamefont {Gross}(2023)}]{omanakuttanMultispinCliffordCodes2023}%
  \BibitemOpen
  \bibfield  {author} {\bibinfo {author} {\bibfnamefont {S.}~\bibnamefont {Omanakuttan}}\ and\ \bibinfo {author} {\bibfnamefont {J.~A.}\ \bibnamefont {Gross}},\ }\href {https://doi.org/10.1103/PhysRevA.108.022424} {\bibfield  {journal} {\bibinfo  {journal} {Physical Review A}\ }\textbf {\bibinfo {volume} {108}},\ \bibinfo {pages} {022424} (\bibinfo {year} {2023})},\ \Eprint {https://arxiv.org/abs/2304.08611} {arXiv:2304.08611 [quant-ph]} \BibitemShut {NoStop}%
\bibitem [{\citenamefont {Bumgardner}(2012)}]{bumgardnerCodesWmetricSpaces2012a}%
  \BibitemOpen
  \bibfield  {author} {\bibinfo {author} {\bibfnamefont {C.}~\bibnamefont {Bumgardner}},\ }\href {https://doi.org/10.48550/arXiv.1205.4517} {\bibinfo {title} {Codes in {$W^*$}-metric spaces: Theory and examples}} (\bibinfo {year} {2012}),\ \Eprint {https://arxiv.org/abs/1205.4517} {arXiv:1205.4517 [quant-ph]} \BibitemShut {NoStop}%
\bibitem [{\citenamefont {Okada}(2025)}]{okadaQuantumAnalogDelsartes2025a}%
  \BibitemOpen
  \bibfield  {author} {\bibinfo {author} {\bibfnamefont {R.}~\bibnamefont {Okada}},\ }\href {https://doi.org/10.48550/arXiv.2502.14165} {\bibinfo {title} {A {{Quantum Analog}} of {{Delsarte}}'s {{Linear Programming Bounds}}}} (\bibinfo {year} {2025}),\ \Eprint {https://arxiv.org/abs/2502.14165} {arXiv:2502.14165 [quant-ph]} \BibitemShut {NoStop}%
\bibitem [{\citenamefont {Kubischta}\ and\ \citenamefont {Teixeira}(2026)}]{kubischtaMacWilliamsIdentitiesIntrinsic2026}%
  \BibitemOpen
  \bibfield  {author} {\bibinfo {author} {\bibfnamefont {E.}~\bibnamefont {Kubischta}}\ and\ \bibinfo {author} {\bibfnamefont {I.}~\bibnamefont {Teixeira}},\ }\href {https://doi.org/10.48550/arXiv.2604.16023} {\bibinfo {title} {{{MacWilliams Identities}} for {{Intrinsic Quantum Codes}}}} (\bibinfo {year} {2026}),\ \Eprint {https://arxiv.org/abs/2604.16023} {arXiv:2604.16023 [quant-ph]} \BibitemShut {NoStop}%
\bibitem [{\citenamefont {Boyd}\ and\ \citenamefont {Vandenberghe}(2023)}]{boydConvexOptimization2023}%
  \BibitemOpen
  \bibfield  {author} {\bibinfo {author} {\bibfnamefont {S.~P.}\ \bibnamefont {Boyd}}\ and\ \bibinfo {author} {\bibfnamefont {L.}~\bibnamefont {Vandenberghe}},\ }\href@noop {} {\emph {\bibinfo {title} {Convex Optimization}}},\ \bibinfo {edition} {version 29}\ ed.\ (\bibinfo  {publisher} {Cambridge University Press},\ \bibinfo {address} {Cambridge New York Melbourne New Delhi Singapore},\ \bibinfo {year} {2023})\BibitemShut {NoStop}%
\bibitem [{\citenamefont {Var{\v s}alovi{\v c}}(1989)}]{varsalovicQuantumTheoryAngular1989}%
  \BibitemOpen
  \bibfield  {author} {\bibinfo {author} {\bibfnamefont {D.~A.}\ \bibnamefont {Var{\v s}alovi{\v c}}},\ }\href@noop {} {\emph {\bibinfo {title} {Quantum Theory of Angular Momentum: {{Irreducible}} Tensors, Spherical Harmonics, Vector Coupling Coefficients 3 Nj Symbols}}}\ (\bibinfo  {publisher} {World Scientific Pub},\ \bibinfo {address} {Singapore ; Philadelphia},\ \bibinfo {year} {1989})\BibitemShut {NoStop}%
\bibitem [{DLMF()}]{NISTDigitalLibrary}%
  \BibitemOpen
  DLMF,\ \href@noop {} {\bibinfo {title} {{{{\emph{NIST}}}}{\emph{ digital library of mathematical functions}}}}\BibitemShut {NoStop}%
\bibitem [{\citenamefont {Gessel}(1995)}]{gesselFindingIdentitiesWZ1995}%
  \BibitemOpen
  \bibfield  {author} {\bibinfo {author} {\bibfnamefont {I.~M.}\ \bibnamefont {Gessel}},\ }\href {https://doi.org/10.1006/jsco.1995.1064} {\bibfield  {journal} {\bibinfo  {journal} {Journal of Symbolic Computation}\ }\textbf {\bibinfo {volume} {20}},\ \bibinfo {pages} {537} (\bibinfo {year} {1995})}\BibitemShut {NoStop}%
\bibitem [{\citenamefont {Koepf}(2014)}]{koepfHypergeometricSummationAlgorithmic2014}%
  \BibitemOpen
  \bibfield  {author} {\bibinfo {author} {\bibfnamefont {W.}~\bibnamefont {Koepf}},\ }\href {https://doi.org/10.1007/978-1-4471-6464-7} {\emph {\bibinfo {title} {Hypergeometric {{Summation}}: {{An Algorithmic Approach}} to {{Summation}} and {{Special Function Identities}}}}},\ Universitext\ (\bibinfo  {publisher} {Springer},\ \bibinfo {address} {London},\ \bibinfo {year} {2014})\BibitemShut {NoStop}%
\bibitem [{\citenamefont {Rao}\ \emph {et~al.}(1992)\citenamefont {Rao}, \citenamefont {der Jeugt}, \citenamefont {Raynal}, \citenamefont {Jagannathan},\ and\ \citenamefont {Rajeswari}}]{raoGroupTheoreticalBasis1992}%
  \BibitemOpen
  \bibfield  {author} {\bibinfo {author} {\bibfnamefont {K.~S.}\ \bibnamefont {Rao}}, \bibinfo {author} {\bibfnamefont {J.~V.}\ \bibnamefont {der Jeugt}}, \bibinfo {author} {\bibfnamefont {J.}~\bibnamefont {Raynal}}, \bibinfo {author} {\bibfnamefont {R.}~\bibnamefont {Jagannathan}},\ and\ \bibinfo {author} {\bibfnamefont {V.}~\bibnamefont {Rajeswari}},\ }\href {https://doi.org/10.1088/0305-4470/25/4/023} {\bibfield  {journal} {\bibinfo  {journal} {Journal of Physics A: Mathematical and General}\ }\textbf {\bibinfo {volume} {25}},\ \bibinfo {pages} {861} (\bibinfo {year} {1992})}\BibitemShut {NoStop}%
\end{thebibliography}%
}%
  }%
  \onecolumngrid
  \@booleanfalse\twocolumn@sw
  \let\title@column\@empty
  \frontmatter@init
  \setcounter{section}{0}%
  \setcounter{subsection}{0}%
  \setcounter{equation}{0}%
  \setcounter{figure}{0}%
  \setcounter{table}{0}%
  \setcounter{theorem}{0}%
  \setcounter{lemma}{0}%
  \setcounter{proposition}{0}%
  \setcounter{corollary}{0}%
  \setcounter{definition}{0}%
  \setcounter{example}{0}%
  \setcounter{remark}{0}%
  \setcounter{footnote}{0}%
  \setcounter{tocdepth}{2}%
  \setcounter{secnumdepth}{2}%
  \renewcommand{\theequation}{S\arabic{equation}}%
  \renewcommand{\thefigure}{S\arabic{figure}}%
  \renewcommand{\thetable}{S\arabic{table}}%
  \renewcommand*\l@section{\@dottedtocline{1}{0em}{2.8em}}%
  \renewcommand*\l@subsection{\@dottedtocline{2}{2.8em}{3.2em}}%
  \let\FinalOriginalLabel\label
  \let\FinalOriginalRef\ref
  \let\FinalOriginalPageref\pageref
  \renewcommand{\label}[1]{\FinalOriginalLabel{supp:##1}}%
  \let\ltx@label\label
  \let\FinalOriginalTOC\tableofcontents
  \renewcommand{\tableofcontents}{%
    \let\FinalCitationContext\the@ipfilectr
    \FinalOriginalTOC
    \global\let\the@ipfilectr\FinalCitationContext
  }%
  \renewcommand{\ref}[1]{\FinalOriginalRef{supp:##1}}%
  \renewcommand{\pageref}[1]{\FinalOriginalPageref{supp:##1}}%
}
\makeatother

\begin{document}
\begingroup
\FinalLetterSetup
\renewcommand{\addcontentsline}[3]{}

\raggedbottom

\title{Optimal two-mode bosonic loss codes from finite group symmetry}

\author{Argyris Giannisis Manes}
\email{argy@uchicago.edu}

\author{Mahadevan Subramanian}

\author{Liang Jiang}
\email{liangjiang@uchicago.edu}

\affiliation{Chicago Quantum Institute and Pritzker School of Molecular Engineering, University of Chicago, Illinois 60637, USA}

\begin{abstract}
 Photon loss is a dominant noise process in bosonic quantum hardware, including superconducting cavities. Fixed-total-photon-number qubit encodings in two bosonic modes retain the loss-detection advantage of dual-rail qubits while supporting photon loss correction. Optimizing entanglement fidelity in this setting for $4\leq n\leq25$ over arbitrary encoders and decoders reveals finite-group structure in every best-found code: 11 correspond to two-dimensional irreducible representations and 11 to reducible ones. Motivated by this emergence, we derive the necessary-and-sufficient Knill--Laflamme conditions for arbitrary finite-group-invariant codes, reducing their construction to equations on representation multiplicity spaces. For every finite subgroup of $SU(2)$ and every tensor rank, we explicitly construct the minimum number of operators required to impose all corresponding loss-correction constraints. These results produce analytic counterparts to most numerical codes and predict constructions missed by the initial search. Exact MacWilliams--Farkas certificates show that our codes achieve the maximum possible loss distance in 21 of the 22 sectors, and we prove that the resulting distance bound is monotone in total photon number. Beyond the scan, we identify a binary-polyhedral sequence with photon number $n_d=\lceil(3d^2+1)/4\rceil$ and construct each corresponding code through distance $d=10$. To our knowledge, the constructed $(n,d)=(28,6),(49,8),(76,10)$ codes give the smallest reported $n$ for their respective distances; all members through $d=9$ attain the fixed-$n$ LP distance bound. Together, these constructions, certificates, and a symmetry-reduced constraint count provide evidence for an infinite code family conjectured to attain every distance at the minimum photon number.
\end{abstract}

\maketitle
\enlargethispage{12pt}

\par\smallskip
\noindent\textit{Introduction.---}
Fault-tolerant quantum computation requires protecting quantum information from noise while retaining the ability to manipulate it efficiently. Quantum error correction (QEC) typically achieves this by encoding logical information across multiple physical qubits, whereas hardware-efficient bosonic or spin encodings use the large Hilbert spaces of bosonic modes \cite{cochraneMacroscopicallyDistinctQuantumsuperposition1999, gottesmanEncodingQubitOscillator2001,michaelNewClassQuantum2016,terhalScalableBosonicQuantum2020,ofekExtendingLifetimeQuantum2016} or spins \cite{grossDesigningCodesInteractions2021,omanakuttanMultispinCliffordCodes2023,omanakuttanFaultTolerantQuantumComputation2024,grossHardwareefficientErrorcorrectingCodes2024} to introduce the necessary redundancy. In both settings, QEC performance can be significantly improved by tailoring the code to the structure of the physical noise, such as strong noise bias \cite{aliferisFaulttolerantQuantumComputation2008,bonillaataidesXZZXSurfaceCode2021} or knowledge of the error locations (erasures) \cite{staceThresholdsTopologicalCodes2009,barrettFaultTolerantQuantum2010}. 

A particularly attractive example of hardware-aware QEC is the dual-rail encoding \cite{knillSchemeEfficientQuantum2001,kokLinearOpticalQuantum2007}, which uses the Fock states $\ket{0_L} = \ket{0} \otimes \ket{1},\, \ket{1_L} = \ket{1} \otimes \ket{0}$ to encode a qubit. Since the total photon number of both logical states is $1$, photon loss can be detected by total-photon-number or parity measurements without corrupting the logical information. This loss-detection mechanism has been demonstrated in superconducting cavities \cite{teohDualrailEncodingSuperconducting2023, chouSuperconductingDualrailCavity2024}, converting dominant photon loss errors into erasures that an outer code can correct with much lower overhead \cite{staceThresholdsTopologicalCodes2009}. 

Despite its many advantages, dual-rail only uses a two-dimensional subspace of its oscillators’ infinite-dimensional Hilbert space and can detect, but not correct, photon loss. Encoding in a higher fixed-total-photon-number subspace preserves this detection capability while adding enough redundancy for loss correction. Codes of this kind were first introduced by Chuang, Leung, and Yamamoto \cite{chuangBosonicQuantumCodes1997}, with later constructions including the two-mode binomial codes \cite{albertPerformanceStructureSinglemode2018}. The fixed-$n$ subspace of the two bosonic modes corresponds to a spin-$j=n/2$ system through the Schwinger representation \cite{sakuraiModernQuantumMechanics2021}. This connection enables a mapping of spin codes \cite{grossDesigningCodesInteractions2021,omanakuttanMultispinCliffordCodes2023,kubischtaFamilyQuantumCodes2023} and permutationally-invariant (PI) codes \cite{pollatsekPermutationallyInvariantCodes2004,ouyangPermutationinvariantQuantumCodes2014,aydinFamilyPermutationallyInvariant2024,kubischtaPermutationInvariantQuantumCodes2025} to fixed-$n$ two-mode bosonic loss codes \cite{aydinQuantumErrorCorrection2026}. This mapping provides us with a rich family of possible constructions, but does not identify which codes perform optimally against loss. 

\begin{figure*}[t!]
  \centering
  \includegraphics[width=\textwidth]{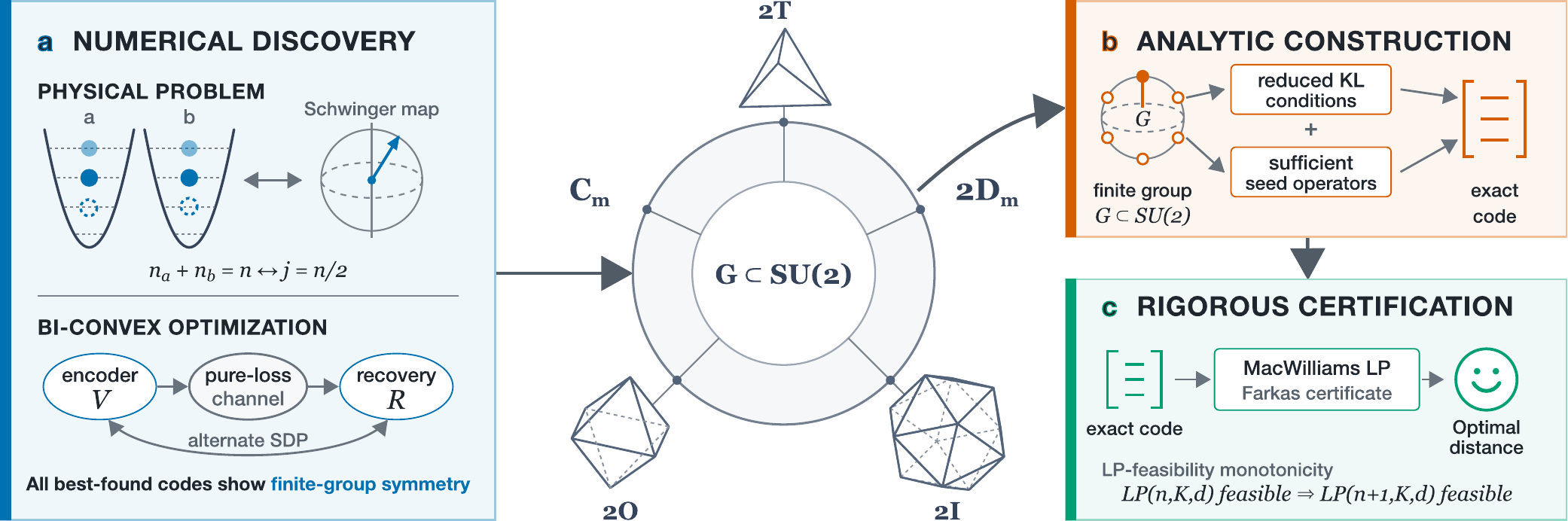}
  \caption{Discovery, construction, and certification of fixed-excitation
  two-mode loss codes. (a) Symmetry-unconstrained bi-convex optimization of
  the encoder and recovery produces numerical codes with finite-group
  representation structure. (b) General representation-theoretic
  Knill--Laflamme reductions and sufficient loss-seed operators turn that
  structure into exact codes. (c) MacWilliams linear programs, exact Farkas
  certificates, and LP-feasibility monotonicity certify optimal code
  distances.}
  \label{fig:discovery-construction-certification}
\end{figure*}
 
In this work, unconstrained entanglement fidelity optimization in the fixed-total-photon-number
subspace reveals finite-group representation structure in all 22 best-found
codes for $4\leq n\leq25$. Building on representation-theoretic QEC
\cite{grossDesigningCodesInteractions2021,omanakuttanMultispinCliffordCodes2023,
denysQuantumErrorCorrectingCodes2024,kubischtaQuantumCodesTwisted2024,
kubischtaIntrinsicQuantumCodes2025,
bradshawSymmetryBasedQuantumCodes2026}, we derive necessary-and-sufficient Knill-Laflamme (KL)
conditions for arbitrary finite-group-invariant codes and construct a
provably minimum set of operators imposing the error-correction constraints at
each loss order. This framework turns the numerically discovered codes into exact
codes and predicts constructions beyond the initial search. MacWilliams-Farkas certificates \cite{bumgardnerCodesWmetricSpaces2012a,okadaQuantumAnalogDelsartes2025a,kubischtaMacWilliamsIdentitiesIntrinsic2026} establish distance optimality in 21 of the 22 scanned sectors, and we prove that the
corresponding LP bound is monotonic in $n$. Finally, we use our formalism to construct binary-polyhedral loss codes of distance $d$ using $n_d=\lceil(3d^2+1)/4\rceil$ photons for every $2\leq d\leq10$. At distances $6$, $8$, and $10$, these use fewer photons than any previously reported code, to our knowledge. A symmetry-reduced constraint count suggests that this sequence extends to all distances, attaining the conjectured minimum photon number \cite{bondPermutationinvariantCodesNumerical2026}.
Figure~\ref{fig:discovery-construction-certification}
summarizes this pipeline.

\par\smallskip
\noindent\textit{Numerical discovery.---}
We start by encoding a $d_L$-dimensional logical system into the fixed-total-excitation subspace of two modes $H_n = \operatorname{span}\{\ket{k} \otimes \ket{n-k}: 0 \leq k \leq n \}$ and write $H_{\leq n} = \bigoplus_{q =0}^n H_q$.  Symmetric pure loss with transmissivity $\eta$ is described by the channel $\mathcal{N}: \mathcal{L}(H_n) \to \mathcal{L}(H_{\leq n})$
\begin{equation}
  \mathcal{N}(\rho) = \sum_{l_1 + l_2 \leq n} K_{l_1 l_2} \rho K_{l_1 l_2}^\dagger 
 \label{two-mode-loss-channel}
\end{equation}
with Kraus operators
\begin{equation}
  K_{l_1 l_2} = \sqrt{\frac{(1-\eta)^{l_1 + l_2}}{l_1! l_2!}} \eta^{\hat{n}_1/2} \hat{a}_1^{l_1} \otimes \eta^{\hat{n}_2/2} \hat{a}_2^{l_2}. \label{loss-kraus}
\end{equation}
Here, $\mathcal{L}(A)$ is the space of linear operators on $A$, and $\hat{a}_i,\, \hat{n}_i$ are the annihilation and number operators on mode $i$.    

We optimize CPTP maps $\mathcal{E}: \mathcal{L}(\mathbb{C}^{d_L}) \to \mathcal{L}(H_n)$ (encoder), $\mathcal{D}: \mathcal{L}\left(H_{\leq n}\right) \to \mathcal{L}(\mathbb{C}^{d_L})$ (decoder)
that maximize the entanglement fidelity of the composite channel given by
\begin{equation}
  \begin{aligned}
  F_e &= \braket{\Phi^+|\left[(\mathcal{D} \circ \mathcal{N} \circ \mathcal{E}) \otimes \mathcal{I}\right]\left(\ket{\Phi^+}\bra{\Phi^+}\right)|\Phi^+},  \\ 
  \ket{\Phi^+} &= \frac{1}{\sqrt{d_L}} \sum_{i =0}^{d_L - 1} \ket{i} \otimes \ket{i},
  \end{aligned}\label{entanglement-fidelity}
\end{equation}
where $\mathcal{I}$ is the identity channel, and $\ket{\Phi^+}$ is the $d_L$-dimensional Bell state \cite{nohQuantumCapacityBounds2019}. We choose entanglement fidelity because it is a key metric of QEC performance: it is affinely related to the average fidelity \cite{horodeckiGeneralTeleportationChannel1999}, and bounds the worst-case fidelity as $F_{\min} \geq 1 - d_L (1 - F_e)$ \cite{schumacherSendingEntanglementNoisy1996}. 

Since $F_e$ is linear in each Choi matrix, fixing one channel makes optimizing the other a semidefinite program (SDP) \cite{reimpellIterativeOptimizationQuantum2005,taghaviChannelOptimizedQuantumError2010,nohQuantumCapacityBounds2019} that is solvable in polynomial complexity in system dimension. The full problem is biconvex, so convergence doesn't guarantee global optimality. Fixed total excitation provides an extra simplification: outputs with different surviving photon numbers occupy orthogonal sectors, so the optimal recovery decomposes exactly into independent SDPs on each $H_q$, yielding a significant speedup over full-decoder optimization (see the Supplemental Material for further discussion).

Using this sector-decomposed alternating optimization, we optimize logical qubits independently for $4\leq n\leq25$ at high transmissivity. All runs start from independent random encoders, without symmetry constraints or analytic-code warm starts. Repeated random starts generally recover the same code structure, although this depends on photon number and transmissivity; secondary basins and lower-transmissivity results are reported in the Supplemental Material.

The resulting codes exhibit a striking common structure: every best-found high-transmissivity code aligns closely with a $G$-invariant rank-two projector for some finite group $G\subset SU(2)$. The overlap of the numerically obtained code projector $P$ with the nearest group-invariant projector $\widehat Q_G$ given by $
\Omega_G(P)=\tfrac1{d_L}\operatorname{Tr}(P\widehat Q_G) $  satisfies $\Omega_G(P)>0.99$ in 21 of the 22 sectors. The exception is $n=15$, with overlap $0.975$, where the random-start search does not fully reach the exact binary-tetrahedral construction. Eleven of the resulting invariant codes carry two-dimensional irreducible representations, while eleven are reducible, including two invariant under non-Abelian groups: $2T$ at $n =12$ and $2D_2$ at $n = 18$. 

The emerging symmetry is surprising: the Schwinger correspondence alone imposes no finite-group symmetry on the codes. To explain its emergence, we first relate the loss-error products entering the KL conditions to irreducible spin tensors.
\par\smallskip
\noindent\textit{Loss errors and spin tensors.---}
For a code projector $P\in\mathcal L(H_n)$, KL products $K_{l_1'l_2'}^\dagger K_{l_1l_2}$ with $l_1 + l_2 \neq l_1' + l_2'$ vanish after projection by $P$, since the Kraus operators map $P$ onto orthogonal photon-number sectors. Moreover $K_{l_1l_2}P\propto
\bigl(\hat a_1^{l_1}\otimes\hat a_2^{l_2}\bigr)P$ \cite{albertPerformanceStructureSinglemode2018}, so the order-$k$ loss KL conditions \cite{knillTheoryQuantumErrorcorrecting1997} are equivalent to $POP\propto P$ for every
\begin{equation}
  \begin{aligned}
    O \in \mathcal{K}_k &= \operatorname{span}\{A^\dagger B:\, A,B \in \mathcal{L}_k\} \\ 
  \mathcal{L}_k &= \operatorname{span}\{\hat{a}_1^t \otimes \hat{a}_2^{k-t}: \ 0 \leq t \leq k\}
  \end{aligned}
\end{equation}

Now, let $S_k$ be the spin-$k$ irreducible representation of $SU(2)$, and denote with $\mathcal{T}_n^\ell \subseteq \mathcal{L}(H_n)$ the space of irreducible spherical tensors of rank $\ell$. We prove the following:

\begin{lemma}
  A code on $H_n$ corrects photon loss errors up to order $k$ iff it satisfies the KL detection conditions for $\mathcal{K}_k$, which we can decompose in terms of $SU(2)$ representations as
  \begin{equation}
    \mathcal{K}_k \cong S_k \oplus S_{k-1} \oplus \dots \oplus S_0
  \end{equation}
  Equivalently, the corresponding spin code detects all $SU(2)$ irreducible tensors $\mathcal{T}_n^0,\, \mathcal{T}_n^1,\, \dots,\, \mathcal{T}_n^k$. \label{loss-tensor-equivalence}
\end{lemma}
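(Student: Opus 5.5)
Three things need proving. First, order-$k$ loss correction is equivalent to detecting $\mathcal{K}_k$. Second, $\mathcal{K}_k\cong S_k\oplus\cdots\oplus S_0$ as an $SU(2)$ representation. Third, after projection onto $H_n$, detecting $\mathcal{K}_k$ is the same as detecting $\mathcal{T}_n^0,\dots,\mathcal{T}_n^k$.

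\emph{Reduction to $\mathcal{K}_k$.} The first part is essentially set up in the text preceding the lemma. Correcting loss up to order $k$ means satisfying the KL conditions for all pairs of Kraus operators with $l_1+l_2\le k$ and $l_1'+l_2'\le k$. Pairs with different total loss vanish after projection by $P$ because they land in orthogonal sectors $H_{n-l}$. For equal totals $l$, the operator $K_{l_1l_2}P$ is a nonzero scalar times $(\hat a_1^{l_1}\otimes\hat a_2^{l_2})P$. Hence the conditions become $PA^\dagger BP\propto P$ for $A,B\in\mathcal{L}_l$ and all $l\le k$.

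It remains to show that $\mathcal{K}_l$ restricted to $H_n$ lies in $\mathcal{K}_k$ restricted to $H_n$ for $l\le k$. I would use the identity $\hat a_1^\dagger\hat a_1+\hat a_2^\dagger\hat a_2=\hat n_1+\hat n_2$, which acts on $H_{n-l}$ as the scalar $n-l$. Inserting it between $A^\dagger$ and $B$ (with $A,B\in\mathcal{L}_l$) and dividing by $n-l$ writes $A^\dagger B|_{H_n}$ as a combination of elements of $\mathcal{K}_{l+1}|_{H_n}$. This requires $n>l$, and $l\le k<n$ holds for any meaningful code. Inducting on $l$ gives the claim.

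\emph{Representation content.} The Schwinger $SU(2)$ acts on $(\hat a_1,\hat a_2)$ as the defining doublet, so $\mathcal{L}_k$ spans the symmetric power $\mathrm{Sym}^k(\mathbb{C}^2)\cong S_{k/2}$. Then $\mathcal{K}_k$ is the image of $S_{k/2}^*\otimes S_{k/2}\cong S_0\oplus S_1\oplus\cdots\oplus S_k$ under $A^\dagger\otimes B\mapsto A^\dagger B$, and this map is equivariant. To get the stated isomorphism I must show the map is injective, either on operators or at least after restriction to $H_n$, which is what matters.

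I expect this injectivity to be the main obstacle. My plan is to use normal ordering. The products $\hat a_1^{\dagger t}\hat a_2^{\dagger(k-t)}\hat a_1^{s}\hat a_2^{k-s}$ are distinct normally ordered monomials, hence linearly independent as operators, so $\dim\mathcal{K}_k=(k+1)^2$. On $H_n$, the restriction $\mathcal{K}_k|_{H_n}\subseteq\mathcal{L}(H_n)\cong S_{n/2}\otimes S_{n/2}^*=\bigoplus_{\ell=0}^{n}\mathcal{T}_n^\ell$. By Schur's lemma each isotypic piece maps either injectively or to zero. To rule out zero, I would exhibit a nonzero matrix element of a highest-weight vector, for instance $\hat a_1^{\dagger k}\hat a_2^{k}$ for $\ell=k$, or suitable combinations for lower $\ell$. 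This works for $k\le n$.

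\emph{Identification with spin tensors.} Since each $S_\ell$ appears in $\mathcal{L}(H_n)$ with multiplicity one, the image of the $S_\ell$ component equals $\mathcal{T}_n^\ell$. Therefore $\mathcal{K}_k|_{H_n}=\bigoplus_{\ell\le k}\mathcal{T}_n^\ell$, which gives the equivalence with detecting $\mathcal{T}_n^0,\dots,\mathcal{T}_n^k$.
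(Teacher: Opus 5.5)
Your proposal is correct in outline and takes a different route from the paper only at the injectivity step. The reduction to $\mathcal K_k$ by inserting $\hat N=\hat a_1^\dagger\hat a_1+\hat a_2^\dagger\hat a_2$ is the paper's lower-loss inclusion lemma. The identification $\mathcal L_k\cong\operatorname{Sym}^kS_{1/2}\cong S_{k/2}$ and the final multiplicity-one identification with $\mathcal T_n^\ell$ also match the paper. One small correction: you insert $\hat N$ only at levels $l\le k-1$, so $k\le n$ suffices, not $k<n$.

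The difference is how you show that $A^\dagger\otimes B\mapsto A^\dagger B$ is injective on $H_n$.
\begin{itemize}
\item The paper proves directly that the $(k+1)^2$ products $L_i^\dagger L_j$ are linearly independent on $H_n$. Since $L_i^\dagger L_j$ sends $\ket{m,n-m}$ to a multiple of $\ket{m+i-j,n-m-i+j}$, products with different shifts $i-j$ cannot cancel. For a fixed shift, a vanishing combination evaluated successively on $\ket{0,n},\ket{1,n-1},\dots$ is triangular, which forces every coefficient to zero; this is where $k\le n$ enters.
\item You instead use that $S_{k/2}^*\otimes S_{k/2}$ is multiplicity-free. By Schur's lemma, injectivity then reduces to checking that each of the $k+1$ irreducible components has nonzero restriction to $H_n$.
\end{itemize}
Your route is more conceptual; the paper's is elementary and fully explicit. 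You can drop the normal-ordering remark, since independence on the full Fock space does not transfer to $H_n$.

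As written, your lower-$\ell$ checks (``suitable combinations'') are left undone. You can close this with what you already proved.
\begin{itemize}
\item $J_+^\ell=(\hat a_1^{\ell})^\dagger\hat a_2^{\ell}$ lies in $\mathcal K_\ell$.
\item Its restriction to $H_n$ is nonzero for $\ell\le n$ and spans the highest-weight line of $\mathcal T_n^\ell$.
\item Your inclusion step gives $\mathcal K_\ell|_{H_n}\subseteq\mathcal K_k|_{H_n}$.
\end{itemize}
So the $SU(2)$-stable space $\mathcal K_k|_{H_n}$ contains $\mathcal T_n^\ell$ for every $\ell\le k$. It is spanned by $(k+1)^2$ operators, and $\sum_{\ell\le k}(2\ell+1)=(k+1)^2$. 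Hence $\mathcal K_k|_{H_n}=\bigoplus_{\ell\le k}\mathcal T_n^\ell$ and the product map is injective, without locating the lower highest-weight vectors at all.
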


 The proof follows from standard $SU(2)$ representation theory. The first step is to show that $\mathcal{L}_k \cong S_{k/2}$, from which it follows that $\mathcal{K}_k \cong S_{k/2}^* \otimes S_{k/2} \cong \bigoplus_{\ell=0}^k S_\ell$. We further show that $\mathcal{K}_{k-1} \subseteq \mathcal{K}_k$, and the result follows. See the Supplemental Material for the detailed proof. 

We note that Aydin, Albert, and Barg map spin codes to two-mode loss codes with the same distance \cite{aydinQuantumErrorCorrection2026}; our lemma gives a direct operator-space proof of the converse by identifying correction of up to $d-1$ photon losses (loss distance $d$) with detection of tensors up to rank $d-1$. We also note that the decomposition $\mathcal K_1\cong S_0\oplus S_1$ is also used in Ref.~\cite{kubischtaIntrinsicQuantumCodes2025} in its analysis of the $n=4$ CLY code.

\par\smallskip
\noindent\textit{Representation-theoretic construction.---}
To turn the loss–spin equivalence into explicit codes, we first derive the KL conditions for arbitrary $G$-invariant codes and then specialize to photon loss.

Let $G$ be a finite group with complex unitary representation $\rho:G \to U(V)$ on a vector space $V$. Write $V \cong \bigoplus_C M_C \otimes C$ as a $G$-representation, with $G$-equivariant isometries $S_C: M_C \otimes C \to V$ of the $C$-isotypic component into $V$, and $M_C$ the corresponding multiplicity space. Any $G$-invariant projector $P \in \mathcal{L}(V)$ has the form
  \begin{equation}
      P = \sum_C S_C(P_C \otimes I_C) S_C^\dagger \label{G-invariant-projector}
  \end{equation}
  with $P_C$ projectors on $M_C$. This includes irreducible covariant encodings as the case of a single nonzero rank-one $P_C$ \cite{denysQuantumErrorCorrectingCodes2024}.

To study the KL conditions for these projectors, we let $W\leq\mathcal L(V)$ \cite{kubischtaIntrinsicQuantumCodes2025} be a $G$-invariant space of KL operators under the conjugation action, and write $W\cong\bigoplus_U U^{\oplus m_U}$, where $m_U$ is the multiplicity of the irrep $U$ of $G$. For the $\alpha$ copy of $U$, we choose basis operators $T_\mu^{U,\alpha}$ transforming as $U$, where $1\leq\mu\leq d_U=\dim U$ and $1\leq\alpha\leq m_U$.

Symmetry implies that their matrix elements between $D$- and $C$-isotypic components have the form
\begin{equation}
    S_C^\dagger T_\mu^{U,\alpha}S_D
=
\sum_{j=1}^{r_U^{C,D}}
B_{j,\alpha}^{U;C,D}\otimes F_{j,\mu}^{U;C,D}, \label{BF-decomposition}
\end{equation}
where $j$ labels the $r_U^{C,D}$ copies of $U$ in $D^*\otimes C$. This can be seen as a finite-group Wigner--Eckart decomposition. The $F_{j,\mu}^{U;C,D}$ can be pre-computed, since they depend only on the representation data; they play the role of Clebsch--Gordan coefficients. On the other hand, the $B_{j,\alpha}^{U;C,D}:M_D\to M_C$ map between multiplicity spaces; $G$-equivariance makes them independent of $\mu$. Since the code projectors $P_C$ also act on the multiplicity spaces, the KL conditions reduce to constraints on these $B$ matrices:

\begin{theorem}[KL-conditions]
   Any $G$-invariant projector $P \in \mathcal{L}(V)$  satisfies the error detection conditions for every operator in $W \cong \bigoplus_U U^{\oplus m_U}$ iff 
  \begin{equation}
     P_C B_{j, \alpha}^{U; C,D} P_D = \begin{cases}
       \lambda_\alpha P_C, & D = C,\, U = \mathbf{1},\, j = 1 \\ 
       0, & \text{otherwise} 
     \end{cases}
  \end{equation}

  for all $1 \leq \alpha \leq m_U, \, 1 \leq j \leq r_U^{C,D}$. \label{general-KL-conditions}
\end{theorem}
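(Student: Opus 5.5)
The plan is to reduce the detection conditions $PTP=\lambda(T)P$ for all $T\in W$ to statements on each block $S_C^\dagger(\cdot)S_D$, and then use irreducibility of the $F$ factors to strip off the representation-space tensor factor. By linearity it suffices to impose $PT_\mu^{U,\alpha}P=\lambda_{\mu}^{U,\alpha}P$ for every basis operator. Writing $P=\sum_C S_C(P_C\otimes I_C)S_C^\dagger$ and using $S_C^\dagger S_D=\delta_{CD}I$, the condition splits into blocks:
\begin{equation}
(P_C\otimes I_C)\,S_C^\dagger T_\mu^{U,\alpha}S_D\,(P_D\otimes I_D)=\delta_{CD}\,\lambda_\mu^{U,\alpha}\,(P_C\otimes I_C).
\end{equation}
Inserting \eqref{BF-decomposition} turns the left side into $\sum_j (P_C B_{j,\alpha}^{U;C,D}P_D)\otimes F_{j,\mu}^{U;C,D}$.

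\textbf{Sufficiency.} Assume the stated conditions. Every term with $U\neq\mathbf 1$, $C\neq D$, or $j\neq1$ vanishes. For $U=\mathbf 1$ and $D=C$, the space $C^*\otimes C$ contains $\mathbf 1$ exactly once by Schur's lemma, so $r_{\mathbf 1}^{C,C}=1$. The corresponding $F$ is a $G$-invariant operator on $C$, hence proportional to $I_C$. Normalizing $F_{1}^{\mathbf 1;C,C}=I_C$, the block becomes $\lambda_\alpha(P_C\otimes I_C)$.

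One point needs care here: $\lambda_\alpha$ must not depend on $C$. I read the theorem's $\lambda_\alpha$ as a single constant across all $C$ with $P_C\neq0$. This is exactly what $PTP\propto P$ requires, since the proportionality constant is global.

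\textbf{Necessity.} This is the main step. From $\sum_j X_j\otimes F_{j,\mu}=\delta_{CD}\lambda_\mu(P_C\otimes I_C)$, with $X_j=P_CB_jP_D$, I need to conclude that each $X_j$ is individually constrained. The key claim is that the operators $\{F_{j,\mu}\}_{j,\mu}$ are linearly independent in $\mathcal L(D,C)\cong D^*\otimes C$. This holds because the $j$-indexed copies of $U$ can be chosen as orthogonal (Hilbert–Schmidt) irreducible subspaces of $D^*\otimes C$, and within each copy the basis $\mu$ is independent. For fixed $\mu$ this still holds: the $F_{j,\mu}$ for different $j$ lie in distinct orthogonal copies, and each is nonzero.

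Given this independence, expand in a basis of $\mathcal L(M_D,M_C)$. The right side is $I_C$ tensored with something, and $I_C$ spans the trivial component of $C^*\otimes C$. Projecting onto each isotypic copy then gives two conclusions:
\begin{itemize}
\item[(i)] for $U\neq\mathbf 1$, or for $U=\mathbf 1$ with $C\neq D$, every $X_j=0$;
\item[(ii)] for $U=\mathbf 1$ and $C=D$, $X_1=\lambda P_C$.
\end{itemize}
When $U\neq\mathbf1$, $I_C$ is orthogonal to the $U$-copies, so the right side must vanish, which forces $\lambda_\mu^{U,\alpha}=0$ whenever any $P_C\neq0$. When $C\neq D$ there is no identity term at all.

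Finally, the constant $\lambda$ is independent of $C$ because the global condition uses a single scalar. Since $\lambda$ for $U=\mathbf1$ has no $\mu$ index, it depends only on $\alpha$.

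The obstacle I expect is stating the independence of the $F$'s cleanly. I would phrase it via orthogonal isotypic decomposition of $D^*\otimes C$ under the conjugation action, with the $F_{j,\cdot}$ forming orthonormal bases of mutually orthogonal copies of $U$. This choice is possible by complete reducibility with respect to the Hilbert–Schmidt inner product.
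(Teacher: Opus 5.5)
Your proposal is correct and follows essentially the same route as the paper. Both arguments compress the detection condition into the blocks $S_C^\dagger(\cdot)S_D$ and strip off the representation factor by linear independence. Both then separate the trivial component using that nontrivial components of $\operatorname{End}(C)$ meet $\mathbb{C}I_C$ only in zero, and use the single global scalar to make $\lambda_\alpha$ independent of $C$.

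The differences are cosmetic. The paper phrases the independence step through the $\Phi_j^{U;C,D}$ forming a basis of $\operatorname{Hom}_G(U,D^*\otimes C)$, so no Hilbert--Schmidt-orthogonal choice of copies is needed; the $F_{j,\mu}^{U;C,D}$ are independent for any such basis, and the vanishing conditions on the $B$'s are basis-independent anyway. The paper also proves the $B\otimes F$ decomposition itself from the $G$-equivariance of $\mu\mapsto S_C^\dagger T_\mu^{U,\alpha}S_D$, which you take as given from the setup.
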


Notice that only irreps $U \subset  D^* \otimes C$ contribute non-trivial KL constraints, thereby guiding the choice of representations with a favorable balance of constraints and multiplicity-space freedom; the single-irrep case of this selection rule was established in \cite{kubischtaQuantumCodesTwisted2024}.  Also note that $\lambda_\alpha$ is common to all nonzero $C$-sectors but may depend on the multiplicity label $\alpha$. Proofs, formulas for $B_{j,\alpha}^{U;C,D}$, and normalizations for $F_{j,\mu}^{U;C,D}$ are given in the Supplement.

\looseness=-1
Theorem \ref{general-KL-conditions} reduces the KL conditions to multiplicity-space equations, but constructing a complete $G$-adapted basis of $W$ remains cumbersome. The symmetry structure makes this unnecessary: $G$-invariance gives $POP=\lambda_OP\Longrightarrow P(g\cdot O)P=\lambda_OP$, so one seed operator $O$ enforces the KL condition on its entire $G$-orbit span. Therefore, the remaining problem is to find the smallest set of seeds whose combined $G$-orbits span $W$. The underlying orbit-generator theorem is due to Vale and Waldron \cite{valeConstructionGInvariantFinite2016}; applying it to QEC gives the following result, which we prove in the Supplemental Material.

\begin{theorem}
A $G$-invariant projector $P$ detects
$W\cong\bigoplus_U U^{\oplus m_U}$ iff it detects an appropriate
orbit-generating set of
$s_{\min}(W)=\max_U\lceil m_U/d_U\rceil$ seeds.
No smaller set of seed orbits spans $W$.\label{general-seed-orbit-theorem} 
\end{theorem}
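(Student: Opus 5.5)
The proof splits into three parts: a reduction from detecting $W$ to detecting seed orbits, a construction of $s=\max_U\lceil m_U/d_U\rceil$ seeds whose orbits span $W$, and a lower bound showing that no smaller set works.

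\emph{Reduction.} Let $P$ be $G$-invariant, so that $\rho(g)P\rho(g)^\dagger=P$. If $POP=\lambda_O P$, then
\[
P\,\rho(g)O\rho(g)^\dagger\,P=\rho(g)\,POP\,\rho(g)^\dagger=\lambda_O P .
\]
The detection condition $POP\propto P$ is linear in $O$. Hence the set of operators detected by $P$ is a linear subspace containing every $G$-orbit of a detected operator. Consequently $P$ detects $W$ iff it detects a set of seeds whose orbit spans equal $W$. The ``only if'' direction is trivial, since the seeds lie in $W$. The theorem therefore reduces to a purely representation-theoretic claim: the minimum number of cyclic $G$-submodules needed to generate $W\cong\bigoplus_U U^{\oplus m_U}$ equals $\max_U\lceil m_U/d_U\rceil$.

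\emph{Upper bound.} The span of the orbit of a single $O\in W$ is the cyclic submodule $\mathbb{C}[G]O$. Since $\mathbb{C}[G]\cong\bigoplus_U U^{\oplus d_U}$, every cyclic module is a quotient of the regular representation. So a cyclic module contains each irrep $U$ with multiplicity at most $d_U$. Conversely, any module with $m_U\le d_U$ for all $U$ embeds in $\mathbb{C}[G]$ and is therefore cyclic, being a direct summand of a cyclic module.

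Set $s=\max_U\lceil m_U/d_U\rceil$. For each $U$, partition its $m_U$ copies into $s$ groups of at most $d_U$ copies each. This writes $W=W_1\oplus\cdots\oplus W_s$ with every $W_i$ having multiplicities $\le d_U$, so each $W_i$ is cyclic. Let $O_i$ be a generator of $W_i$. The orbits of $O_1,\dots,O_s$ then span $W$.

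To make the generator explicit, follow Vale--Waldron. Write $W_i=\bigoplus_U U\otimes\mathbb{C}^{k_U}$ with $k_U\le d_U$. Identify $U\otimes\mathbb{C}^{k_U}$ with a subspace of $\operatorname{End}(U)\cong U\otimes U^*$ by choosing $k_U$ linearly independent vectors in $U^*$. Take $O_i=\sum_U T_U$, where $T_U$ is a component of full rank. By the Wedderburn isomorphism $\mathbb{C}[G]\cong\bigoplus_U\operatorname{End}(U)$, the group algebra acts independently on each block and as all of $\operatorname{End}(U)$ within a block. Hence $\mathbb{C}[G]O_i$ is everything.

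\emph{Lower bound.} Suppose $t$ seeds generate $W$. Then $W$ is a quotient of $\mathbb{C}[G]^{\oplus t}$. By semisimplicity, $W$ is isomorphic to a submodule of it. Comparing multiplicities gives $m_U\le t\,d_U$ for every $U$, so $t\ge\max_U\lceil m_U/d_U\rceil$.

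\emph{Main obstacle.} The delicate step is the explicit cyclic-generator claim: showing that a suitably generic combination of basis tensors generates each block. I would prove it through the Wedderburn decomposition, where it reduces to the statement that $\operatorname{End}(U)$ acting on $k_U\le d_U$ linearly independent vectors produces all of $U\otimes\mathbb{C}^{k_U}$. This holds because the action on independent vectors can be prescribed arbitrarily. I would also note that the construction survives when the $G$-action on $W$ is by conjugation $O\mapsto\rho(g)O\rho(g)^\dagger$ rather than left multiplication, since only the abstract module structure of $W$ enters the argument.
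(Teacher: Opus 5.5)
Your argument is correct. The reduction step matches the paper's: detecting $O$ implies detecting its whole $G$-orbit span, and detection is linear in $O$. You reach the Vale--Waldron count, however, by a different packaging.

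The paper works inside each isotypic block $W[U]\cong U\otimes\mathbb{C}^{m_U}$. Burnside's theorem gives $\operatorname{span}\{\pi_U(g)\}=\operatorname{End}(U)$, so the orbit span of a seed $\sum_i u_i\otimes v_{t,i}$ is exactly $U\otimes\operatorname{span}\{v_{t,i}\}$. That one computation yields both the necessity of $s\,d_U\ge m_U$ and the explicit seeds $\sum_i u_i\otimes e_{(t-1)d_U+i}$. The blocks are then glued using the isotypic projectors, which lie in $\operatorname{span}\{\pi_W(g)\}$.

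You argue instead through the regular representation. Cyclic modules are quotients of $\mathbb{C}[G]\cong\bigoplus_U U^{\oplus d_U}$, so a module is cyclic iff $m_U\le d_U$ for every $U$. Likewise, $t$ generators make $W$ a quotient of $\mathbb{C}[G]^{\oplus t}$, and hence, by semisimplicity, a submodule of it. Your Wedderburn step is Burnside's theorem and the isotypic projectors bundled into a single isomorphism, so your explicit generator coincides with the paper's.

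Your route gives a cleaner, more structural lower bound. The paper's ``it's clear that we need at least $s_{\min}$ seeds'' tacitly uses that each $\Pi_U$ commutes with the action to reduce to single blocks. The paper's route, in turn, is more elementary and gives you the seeds directly in coordinates.
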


Theorem \ref{general-seed-orbit-theorem} gives the minimum seed count, but not an explicit choice of seeds. For photon loss, we attain this minimum on each $W=\mathcal T_n^\ell$ using rotated powers of the raising operator $J_+$.
\begin{theorem}
  Let $G$ be a finite subgroup of $SU(2)$, define $\widetilde G=G\slash (G\cap\{I,-I\})$, and let $P$ be a $G$-invariant code projector. For every tensor rank $1\leq \ell \leq n$, set $s_\ell =\left\lceil\frac{2\ell+1}{|\widetilde G|}\right\rceil$. There exist rotations $U_1,\ldots,U_{s_\ell}\in SU(2)$ such that $P$ detects every operator in $\mathcal{T}_n^\ell$ iff
  \begin{equation}
    P R_n(U_t)J_+^\ell R_n(U_t)^\dagger P\propto P, \quad 1 \leq t \leq s_\ell.
    \label{minimum-seeds-equation}
  \end{equation}
  Here $R_n$ is the spin-$n/2$ representation, and $s_\ell$ attains the minimum from Theorem \ref{general-seed-orbit-theorem} for $W = \mathcal{T}_n^\ell$. \label{minimum-seeds-theorem}
\end{theorem}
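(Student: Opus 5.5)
The proof has two parts: a lower bound, which identifies $s_\ell$ with the minimum of Theorem~\ref{general-seed-orbit-theorem} for $W=\mathcal T_n^\ell$, and an upper bound, which exhibits explicit seeds of the form $R_n(U_t)J_+^\ell R_n(U_t)^\dagger$ whose $G$-orbits span $\mathcal T_n^\ell$.

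\emph{Lower bound.} Under conjugation, $\mathcal T_n^\ell\cong S_\ell$ restricted to $G$. Since $\ell$ is an integer, $-I$ acts trivially, so this is really a representation of $\widetilde G$. I would compare dimensions through the span of orbits. The orbit span of a single seed $O$ is the image of the map
\begin{equation}
\mathbb C[\widetilde G]\to W,\qquad g\mapsto g\cdot O,
\end{equation}
so its dimension is at most $|\widetilde G|$. Hence $s$ seeds span at most $s|\widetilde G|$ dimensions, which forces $s\ge\lceil(2\ell+1)/|\widetilde G|\rceil$. The formula $\max_U\lceil m_U/d_U\rceil$ from Theorem~\ref{general-seed-orbit-theorem} gives the same value. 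The $\ge$ direction is the dimension count above. For $\le$, I would argue that $S_\ell|_{\widetilde G}$ contains each irrep $U$ of $\widetilde G$ at most $\lceil(2\ell+1)/|\widetilde G|\rceil\, d_U$ times. This can be checked from the McKay/Molien multiplicity formulas, or more simply from the structure $S_{\ell+|\widetilde G|}\supseteq S_\ell\oplus\mathbb C[\widetilde G]$-type periodicity of restrictions. I expect this equality to require a case-by-case check over the cyclic, dihedral and polyhedral groups. The check is to show $m_U\le d_U\lceil (2\ell+1)/|\widetilde G|\rceil$ using the known generating functions.

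\emph{Upper bound.} The key observation is that $J_+^\ell$ is the highest-weight vector $T^\ell_{\ell}$ of $\mathcal T_n^\ell$. Its $SU(2)$-rotates $R(U)J_+^\ell R(U)^\dagger$ correspond to the coherent-state vectors $|\ell,\hat u\rangle$ in $S_\ell$, with $\hat u\in S^2$. Choosing $U_1,\dots,U_s$ amounts to choosing directions $\hat u_1,\dots,\hat u_s$ such that the $s|\widetilde G|$ vectors $|\ell,g\hat u_t\rangle$ span $S_\ell$. For the Veronese identification $S_\ell\cong$ degree-$2\ell$ binary forms, spanning by coherent states $|\ell,\hat u\rangle$ is equivalent to the points $g\hat u_t$ imposing independent conditions. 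In other words, no nonzero degree-$2\ell$ binary form vanishes at $\min(s|\widetilde G|,2\ell+1)$ of those points. I would also want generic orbits to have size exactly $|\widetilde G|$ on $\mathbb{CP}^1$.

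Since a nonzero degree-$2\ell$ form has at most $2\ell$ zeros, any $2\ell+1$ distinct points impose independent conditions. It therefore suffices to pick $\hat u_t$ with trivial stabilizer in $\widetilde G$ and pairwise distinct orbits, which a generic choice achieves. The union of orbits then contains at least $2\ell+1$ distinct points, so the coherent states span $S_\ell$.

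To conclude, I would invoke the $G$-invariance argument preceding Theorem~\ref{general-seed-orbit-theorem}: $POP=\lambda P$ implies $P(g\cdot O)P=\lambda P$. So the detection condition for each seed propagates to its orbit span, and hence to all of $\mathcal T_n^\ell$. The converse is trivial, since each seed lies in $\mathcal T_n^\ell$.

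The main obstacle is the equality of $s_\ell$ with $\max_U\lceil m_U/d_U\rceil$. The dimension bound only gives one inequality, so the multiplicity estimate for restricted $SU(2)$ irreps is where the casework lives.
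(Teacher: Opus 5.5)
Your argument is essentially the paper's: you identify $R_n(U)J_+^\ell R_n(U)^\dagger$ with the coherent power $(ax+by)^{2\ell}$, choose points of $\mathbb{CP}^1$ with trivial $\widetilde G$-stabilizer and disjoint orbits, use linear independence of any $2\ell+1$ distinct coherent powers, and propagate the KL condition along $G$-orbits. Your root-counting argument (a nonzero binary form of degree $2\ell$ has at most $2\ell$ projective zeros) is a cleaner substitute for the paper's Vandermonde determinant, and it handles the point $[0:1]$ without a separate case.

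The one flaw is your claimed ``main obstacle.'' Identifying $s_\ell$ with $\max_U\lceil m_U/d_U\rceil$ needs no McKay/Molien casework, and the periodicity heuristic you sketch is unproven and unnecessary. Your two halves already sandwich the true minimum:
\begin{itemize}
\item your explicit seeds show $s_{\min}(\mathcal T_n^\ell)\le s_\ell$;
\item your orbit-size count shows $s_{\min}(\mathcal T_n^\ell)\ge\lceil(2\ell+1)/|\widetilde G|\rceil=s_\ell$;
\item the orbit-generator theorem says $s_{\min}(\mathcal T_n^\ell)=\max_U\lceil m_U/d_U\rceil$.
\end{itemize}
Equivalently, apply the necessity half of that theorem to your explicit spanning set. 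The orbit span of $s$ seeds meets the $U$-isotypic component in at most $s\,d_U$ copies of $U$, so you get $m_U\le s_\ell d_U$ for every $U$ at once. That is exactly the multiplicity bound you proposed to check group by group.

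The paper closes the argument this way. The only difference is how it gets the lower bound. It notes that $-I$ acts trivially, so only irreps of $\widetilde G$ occur, and then writes
\[
2\ell+1=\sum_U m_Ud_U\le s_{\min}\sum_U d_U^2=s_{\min}|\widetilde G|.
\]
Your direct bound of $|\widetilde G|$ on the dimension of a single orbit span gives the same inequality without going through the multiplicities.
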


Such rotations are obtained from distinct projective orbits with trivial $\widetilde G$-stabilizers. Writing $(a_t,b_t)^{\mathsf T}$ for the first column of $U_t$, the corresponding seed is
\begin{equation}
R_n(U_t)J_+^\ell R_n(U_t)^\dagger
=\sum_{\mu=-\ell}^{\ell}
\binom{2\ell}{\ell+\mu}^{1/2}
a_t^{\ell+\mu}b_t^{\ell-\mu}T_\mu^{(\ell)}
\label{rotated-seed-tensor-expansion}
\end{equation}
in terms of the irreducible $SU(2)$ tensors of rank $\ell$, with $T_\ell^{(\ell)} = J_+^\ell$ ; see the Supplemental Material for details. 

We note that earlier works developed symmetry-based reductions of the KL conditions for particular finite-group codes, at leading order \cite{grossDesigningCodesInteractions2021} and at higher ranks for real $2O$ and binary-dihedral codes \cite{omanakuttanMultispinCliffordCodes2023,kubischtaPermutationInvariantQuantumCodes2025}. Our construction is distinct: it applies uniformly to all $G\subset SU(2)$ and tensor ranks, requires no reality assumption, and is levelwise minimal.

Together, Theorems 1 and 3 guide the choice of representations with favorable constraint and degree-of-freedom counts and provide levelwise-minimal loss seeds. We compute the corresponding multiplicity-space KL conditions from these seeds and solve them through the maximum possible rank, using any remaining degrees of freedom to minimize the next-order KL residual.
\FloatBarrier
\par\smallskip
\noindent\textit{Exact codes and predictive power.---}
We illustrate the three roles of the analytic construction through Figure~\ref{fig:representative-wigner-functions}. At $n=22$, the aligned numerical projector nearly coincides with the constructed $2D_5$ code [Figs.~\ref{fig:representative-wigner-functions}(a,b)], showing the agreement we observe in almost every sector.  At $n=15$, the analytic $2T$ construction identified a code missed by our initial random-start search and motivated additional runs [Fig. \ref{fig:representative-wigner-functions}(c)]. These subsequently reached comparable fidelity at $\eta=0.95$, although the best numerical projector remains only approximately $2T$-invariant. The full classification and numerical comparisons are given in the Supplemental material.

Last but not least, our construction extends beyond the numerical scan to codes with improved photon-number parameters. For every integer $2\leq d\leq10$, we construct a rank-two loss code of distance $d$ at $n_d=\lceil(3d^2+1)/4\rceil$, with $2O$ symmetry for even $d$ and $2T$ symmetry for odd $d$. To our knowledge, the new $(28,6),(49,8)$, and $(76,10)$ binary-octahedral codes improve the smallest previously reported photon numbers from $36$~\cite{ouyangPermutationinvariantQuantumCodes2014,aydinQuantumErrorCorrection2026}, $61$, and $91$~\cite{kubischtaPermutationInvariantQuantumCodes2025}, respectively.  The $n=28$ code shown in Fig.~\ref{fig:representative-wigner-functions}(d) is the first member beyond the original scan. We note that the $n=7$ \cite{grossDesigningCodesInteractions2021} and $n=37$ members also carry $2I$ symmetry. 

These finite constructions aren't isolated, but suggest an infinite binary-polyhedral family: a proposed uniform $2T$ realization of the previously conjectured odd-distance binary-dihedral sequence \cite{kubischtaPermutationInvariantQuantumCodes2025}, together with a new $2O$ branch at even distances. We justify the conjectured existence of an infinite family by comparing the number of relevant symmetry-reduced KL constraints with the available multiplicity-space degrees of freedom. For suitable two-dimensional $2O$ and $2T$ irreps $C$, the relevant KL--operator space is adjoint-closed and all irreps it shares with $C \otimes C^*$ are real, allowing the reduced $B_{j,\alpha}^{U;C,C}$ matrices to be chosen Hermitian and the KL constraints to be counted over the reals. At every point of the proposed sequence, this formal constraint count exactly matches the $2\dim M_C - 2$ degrees of freedom of a rank-one multiplicity projector, providing structural evidence—but not a proof—for the following conjecture.
\begin{conjecture}[Binary-polyhedral family]
  For every integer $d\geq2$, a rank-two loss code of distance at least $d$ exists at $n_d=\lceil(3d^2+1)/4\rceil$, with $2O$ symmetry for even $d$ and $2T$ symmetry for odd $d$.
  \label{binary-polyhedral-conjecture}
\end{conjecture}

\par
 
\begin{figure}[h!]
  \centering
  \includegraphics[width=\columnwidth]{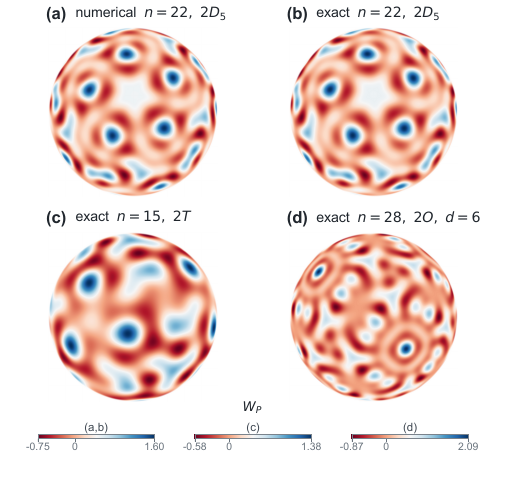}
  \caption{Spherical spin-Wigner functions $W_P(\theta,\phi)$ for rank-two projectors: (a,b) aligned numerical and exact $n=22$ $2D_5$ codes; (c) exact $n=15$ $2T$ code; (d) exact $n=28$ $2O$ code with $d=6$.}
  \label{fig:representative-wigner-functions}
\end{figure}

\par\smallskip
\noindent\textit{Certification and optimality.---}
Since biconvex optimization cannot certify global fidelity optimality, we instead derive exact upper bounds on loss distance and compare them with our constructions. Combining Lemma 1 with the $SU(2)$ MacWilliams linear program \cite{bumgardnerCodesWmetricSpaces2012a,okadaQuantumAnalogDelsartes2025a,kubischtaMacWilliamsIdentitiesIntrinsic2026} gives a necessary linear program $\operatorname{LP}(n,K,d)$ for a rank-$K$, distance-$d$ code on $H_n$. When this program is infeasible, Farkas’ lemma provides an exact certificate ruling out those code parameters \cite{boydConvexOptimization2023}; feasibility alone, however, does not imply code existence.

The LP distance bound is saturated to high precision at $21$ of the $22$ photon numbers $4\leq n\leq25$. The sole gap is $n=18$, where the LP permits $d=5$ but our best construction has $d=4$; Figure~\ref{fig:distance-phase-diagram} summarizes the complete comparison. One noteworthy example is $n = 13$. Gross’s $n=13$ $2O$ family corrects first-order angular-momentum errors while retaining a free phase $\phi$ \cite{grossDesigningCodesInteractions2021}, and later PI, spin, and intrinsic-code treatments reported distance $3$ for members of the same family \cite{omanakuttanMultispinCliffordCodes2023,aydinQuantumErrorCorrection2026,kubischtaIntrinsicQuantumCodes2025}. We choose $\phi$ so that $J_+^3$ is detected; our orbit theorem and KL reduction promote this single condition to detection of the full rank-$3$ sector, yielding distance $4$. The Farkas bound then certifies $13$ is the first photon number where this distance is possible.

Beyond the original scan, every constructed member of our binary-polyhedral family through $d=9$ saturates the fixed-$n_d$ LP distance bound to machine precision; the distance-$10$ code at $n=76$ lies one below the LP bound of $11$. Saturation at fixed photon number does not, however, establish minimum photon number: for distances $6$ and $8$, the LP first permits $n=26$ and $43$, below our constructions at $28$ and $49$. Whether these gaps admit better codes remains open. Indeed, Bond et al. \cite{bondPermutationinvariantCodesNumerical2026} used numerical searches to conjecture that $n_d$ is the minimum photon number required for distance at least $d$, in the equivalent setting of PI codes. Our constructions and constraint count provide further evidence for the optimality of this sequence. If correct, this conjecture would imply systematic non-tightness of the LP bound.

Finally, Figure~\ref{fig:distance-phase-diagram}'s staircase pattern reflects LP-feasibility monotonicity.
\begin{theorem}
   If $\operatorname{LP}(n,K,d)$ is feasible, then $\operatorname{LP}(n+1,K,d)$ is feasible. \label{LP-monotonicity}
\end{theorem}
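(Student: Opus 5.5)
The plan is to prove the contrapositive through Farkas duality, and then to check it directly on primal feasible points. Recall how $\operatorname{LP}(n,K,d)$ arises. For a rank-$K$ projector $P$ on $H_n$ (spin $j=n/2$) and each rank $0\le\ell\le n$, the primal variables are the $SU(2)$ enumerators
\begin{equation*}
A_\ell=\tfrac{1}{K^2}\sum_{\mu}\bigl|\operatorname{Tr}(P T_\mu^{(\ell)})\bigr|^2,\qquad B_\ell=\tfrac{1}{K}\sum_\mu \operatorname{Tr}\bigl(P T_\mu^{(\ell)} P T_\mu^{(\ell)\dagger}\bigr),
\end{equation*}
taken with an orthonormal tensor basis. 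They satisfy the normalization $A_0=B_0=1$, positivity $A_\ell,B_\ell\ge0$, the inequality $B_\ell\ge A_\ell$, the detection equalities $A_\ell=B_\ell$ for $1\le\ell\le d-1$ (by Lemma~\ref{loss-tensor-equivalence}), and the MacWilliams identity $B=M^{(j)}A$. The matrix $M^{(j)}$ comes from recoupling and is built from $6j$ symbols $\{j\,j\,\ell;\,j\,j\,\ell'\}$. Every constraint other than the MacWilliams one is independent of $n$ except through the range of $\ell$. So everything hinges on relating $M^{(j)}$ to $M^{(j+1/2)}$.

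First step: construct an explicit linear map $\Phi:\mathbb R^{n+1}\times\mathbb R^{n+1}\to\mathbb R^{n+2}\times\mathbb R^{n+2}$ sending feasible $(A,B)$ for $n$ to feasible $(A',B')$ for $n+1$. For intuition I would use the physical embedding $H_n\to H_{n+1}$ given by the $SU(2)$-covariant photon-addition map. Equivalently, this is the composition of coupling an extra spin-$\tfrac12$, in the maximally mixed state, onto spin $j$ with projection onto the spin-$(j+\tfrac12)$ block of $S_j\otimes S_{1/2}$. By Schur's lemma, such a covariant map acts on each tensor rank $\ell$ by a scalar. So I expect $A'_\ell$ and $B'_\ell$ to be nonnegative combinations of $A_\ell,B_\ell$, and possibly of neighbouring ranks, with coefficients given by explicit Racah coefficients. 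I would compute these coefficients from the standard $6j$ formulas with one entry equal to $\tfrac12$. The candidate I expect is a rank-preserving rescaling together with the assignment $A'_{n+1}=B'_{n+1}=0$. Once positivity, $B'_\ell\ge A'_\ell$ and the detection equalities for $\ell<d$ are checked, they follow from the corresponding properties of $(A,B)$ because the coefficients are nonnegative and shared between the $A$ and $B$ channels.

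Second step: verify the MacWilliams constraint $B'=M^{(j+1/2)}A'$. This reduces to an intertwining identity of the form $M^{(j+1/2)}D_A=D_B M^{(j)}$, suitably padded, where $D_A$ and $D_B$ are the maps of the first step. The identity is a statement about $6j$ symbols, namely a Biedenharn--Elliott/Racah-type recoupling identity with one spin-$\tfrac12$ leg. I would try to derive it by evaluating a single $9j$-type recoupling in two orders, and check it numerically for small $j$ first.

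This intertwining identity is the main obstacle. If the naive rescaling fails, I would switch to the dual side. I would take a Farkas certificate $(y,z)$ for the infeasibility of $\operatorname{LP}(n+1,K,d)$ and build a certificate for $\operatorname{LP}(n,K,d)$ by pulling it back through the adjoint of the covariant map. That requires only that the pulled-back certificate keep the sign pattern, which is again a positivity statement about the same Racah coefficients. Either route gives the claim: feasibility propagates from $n$ to $n+1$, which produces the staircase in Figure~\ref{fig:distance-phase-diagram}.
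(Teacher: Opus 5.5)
Your framework matches the paper's: find one linear map that carries feasible $(\mathbf{A},\mathbf{B})$ at $n$ to feasible points at $n+1$ and intertwines the MacWilliams matrices. But the whole content of the theorem is the existence of that map. Your proposal never produces it, and the concrete candidate you name provably fails.

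Suppose $D$ is a rank-preserving rescaling padded by a zero row at $\ell=n+1$. The last row of $M_{n+1}D=DM_n$ reads $[M_{n+1}]_{n+1,t}D_t=0$ for every $t$. Since $[M_m]_{\ell 0}=(2\ell+1)/(m+1)$, you get $[M_{n+1}]_{n+1,0}=(2n+3)/(n+2)\neq 0$, hence $D_0=0$ and $A'_0=0$. Yet the LP at $n+1$ forces $B'_0=\frac{1}{n+2}\sum_t A'_t=K/(n+2)$ and $A'_0=KB'_0>0$. So any valid map must push lower ranks into higher ones, including the new top rank $n+1$.

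Once ranks mix, the detection equalities $A'_\ell=KB'_\ell$ for $\ell<d$ survive only if no rank $\geq d$ feeds a rank $<d$. That is, the map must be lower triangular, $[T_n]_{\ell t}=0$ for $\ell<t$. It must also:
\begin{itemize}
\item act identically on $\mathbf{A}$ and $\mathbf{B}$;
\item be entrywise nonnegative, so that $\mathbf{A}'\geq 0$ and $K\mathbf{B}'\geq\mathbf{A}'$ hold;
\item have unit column sums, so that $\sum A'=K$ and $\sum B'=K^2$ are preserved.
\end{itemize}
Your allowance for ``neighbouring ranks'' does not enforce triangularity. Any feed from rank $\ell+1$ into rank $\ell$ breaks detection at $\ell=d-1$.

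Neither of your fallbacks supplies such a map.
\begin{itemize}
\item The spin-$\tfrac{1}{2}$ coupling heuristic sends $P$ to a positive operator that is not a rank-$K$ projector, so it carries no LP guarantee; the paper explicitly leaves monotonicity of actual codes open. By Schur's lemma it rescales $\mathbf{A}$ diagonally, $\mathbf{A}'=D_A\mathbf{A}$ with $A'_{n+1}=0$. But then $\mathbf{B}'=M_{n+1}D_A\mathbf{A}$, and this equals $D_A\mathbf{B}=D_AM_n\mathbf{A}$ only if $M_{n+1}D_A=D_AM_n$, which the argument above rules out. So the maps on $\mathbf{A}$ and $\mathbf{B}$ cannot coincide.
\item The Farkas route is just the transpose of the same problem. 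Pulling a certificate back requires exactly $T_n^{\mathsf{T}}$ with the same positivity, stochasticity and triangularity, so it does not avoid the missing intertwiner.
\end{itemize}

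The idea you are missing is the paper's algebraic construction:
\begin{enumerate}
\item Reduce $M_n$ to a single hypergeometric sum.
\item Using the Whipple and Thomae transformations, factor $M_n=\mathsf{R}_n\mathsf{U}_n\mathsf{R}_n^{-1}$. Here $\mathsf{R}_n$ is lower triangular and $\mathsf{U}_n$ is the signed Pascal involution $[\mathsf{U}_n]_{ij}=(-1)^i\binom{j}{i}$.
\item Find a three-band $\mathsf{Q}_n$ with $\mathsf{U}_{n+1}\mathsf{Q}_n=\mathsf{Q}_n\mathsf{U}_n$.
\item Set $T_n=\mathsf{R}_{n+1}\mathsf{Q}_n\mathsf{R}_n^{-1}$. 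An explicit closed form shows $T_n$ is nonnegative and lower triangular, and a telescoping sum gives unit column sums.
\end{enumerate}
Without this, or an equivalent explicit intertwiner with those four properties, the proposal identifies the crux but does not prove the theorem.
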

The proof constructs a nonnegative, column-stochastic, lower-triangular $T_n$ with $M_{n+1}T_n=T_nM_n$ for every MacWilliams matrix $M_n$, mapping LP solutions from $n$ to $n+1$. Thus, LP infeasibility propagates downward, while monotonicity of code constructions remains open.

\par\smallskip
\noindent\textit{Discussion and outlook.---}
We've seen that finite-group symmetry is not merely an elegant ansatz for systematically constructing bosonic and spin codes: it emerges from unconstrained loss optimization, and leads to code constructions that are optimal in loss distance. This surprising connection echoes GKP codes \cite{gottesmanEncodingQubitOscillator2001}: they were originally constructed to correct for small displacements in a single bosonic mode, and yet they are also found through the same biconvex optimization for single-mode loss, while achieving the capacity of the channel \cite{nohQuantumCapacityBounds2019,zhengPerformanceAchievableRates2025,subramanianAchievableRatesConcatenated2025}. Likewise, Gross introduced finite-group spin codes to correct low-order rotation errors \cite{grossDesigningCodesInteractions2021}; here, the same structures emerge from photon-loss optimization and often attain certified optimal loss distance.

Several theoretical directions remain open. Although we have focused on qubits, our representation-theoretic tools apply to arbitrary code dimension and may yield new code families in spin, permutation-invariant, bosonic, and other settings. Most immediately, it would be valuable to prove the conjectured binary-polyhedral family. One possible route is to embed each solution into the next occurrence of its supporting irrep, but constructing such a nesting is unexpectedly subtle and relates to the broader question of whether a spin-$n/2$ code can always be promoted to spin-$(n+1)/2$ while preserving its distance. Lastly, the widening separation between the LP-feasibility and construction frontiers suggests that the intrinsic-code relaxation is not tight. Deriving stronger bounds that either close this gap or reveal new codes within it is therefore an important open problem. 

Finally, our original motivation was physical, and these codes retain the defining hardware advantage of dual rail: photon loss changes the total excitation number and is therefore directly detectable. At the same time, every element of $G\subset SU(2)$ is implementable by passive Gaussian operations in two modes and preserves the codespace, inducing its corresponding logical action; the non-trivial logical gates depend on the kernel of the selected representation. Supplementing these native operations with number-selective control \cite{heeresCavityStateManipulation2015} may provide a route to a universal gate set, which we will investigate in future hardware work. More broadly, because our construction detects the complete space of spin-tensor errors through a prescribed rank, it may be adapted to the native noise processes of other spin-like platforms, including trapped-ion systems \cite{houCoherentCouplingNondestructive2024}. The emergent symmetry therefore provides both a systematic code-design principle and a natural route toward hardware implementation.


\FloatBarrier

\par\smallskip
\noindent\textit{Acknowledgments.---}
AGM dedicates this work to the memory of his beloved grandmother, Kaiti Giannisi.

We thank Takahiro Tsunoda, Gideon Lee, Guo Zheng, Wes Campbell, James Teoh, Kevin Chou, and Kaavya Sahay for helpful discussions. 

We acknowledge support from the ARO(W911NF-23-1-0077), ARO MURI (W911NF-21-1-0325), AFOSR MURI (FA9550-21-1-0209, FA9550-23-1-0338), ONR MURI (N000142612102), DARPA (HR0011-24-9-0361), NSF (OSI-2326767, OSI-2426975, EEC-2550064, OSI-2553578, OSI-2553619). This material is based upon work supported by the U.S. Department of Energy, Office of Science, National Quantum Information Science Research Centers and Advanced Scientific Computing Research (ASCR) program under contract number DE-AC02-06CH11357 as part of the InterQnet quantum networking project. This work was completed with resources provided by the University of Chicago’s Research Computing Center.

\par\smallskip
\noindent\textit{AI disclosure.---}
The construction of the intertwiner $T_n$ was developed with OpenAI’s Codex (GPT-5.6 Sol). Codex and ChatGPT (GPT-6 Astra, GPT-5.6 Sol and earlier GPT-5 models) also assisted with analyzing numerical optimization results, generating figures, writing parts of our code, workshopping proofs, and preparing the manuscript, including minor arguments and expository revisions. The authors independently verified all outputs and take full responsibility for the results and claims.

\par\smallskip
\noindent\textit{Code availability.---}
The software and resulting QEC codes are available in our \href{https://github.com/argy-gm-code/optimal-two-mode-bosonic-loss-codes}{GitHub repository}.
\par

\bibliographystyle{apsrev4-2}
\bibliography{Candidacy}

\onecolumngrid
\section*{End Matter}

\begin{center}
  \begin{minipage}{\textwidth}
  \centering
  \includegraphics[width=\textwidth]{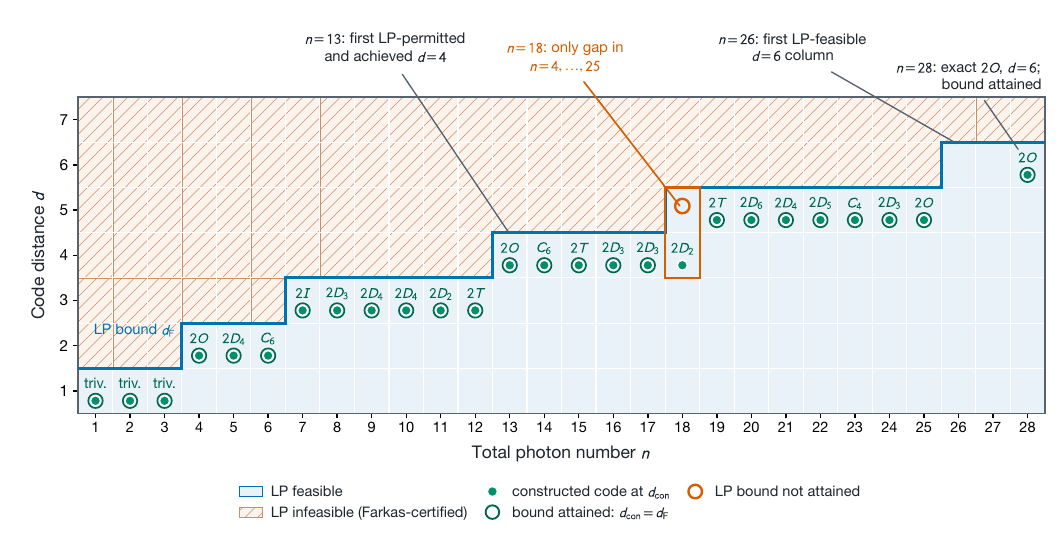}
  \captionof{figure}{Rank-two distance phase diagram. Each cell shows feasibility of
  the intrinsic $SU(2)$ MacWilliams linear-programming bound for total
  photon number $n$ and loss distance $d$: blue cells are LP feasible,
  whereas hatched orange cells are excluded by exact Farkas certificates. The dark blue line marks the boundary between feasible and infeasible regions. 
  LP feasibility is necessary, but not sufficient, for an actual code.
  Green markers show constructed codes at $d_{\mathrm{con}}$ and are labeled
  by their construction group; an outer green ring indicates
  $d_{\mathrm{con}}=d_{\mathrm{F}}$, where $d_{\mathrm{F}}$ is the largest
  LP-feasible distance. At $n=18$, the open orange circle marks the
  unattained LP bound. Constructed codes attain the LP upper bound in $21$
  of the $22$ studied sectors $n=4,\ldots,25$, with the sole gap at $n=18$
  ($d_{\mathrm{F}}=5$, $d_{\mathrm{con}}=4$). The staircase follows
  LP-feasibility monotonicity. Distance $4$ is first permitted and attained
  at $n=13$, and the $n=19$ $2T$ construction attains distance $5$. The
  LP first permits distance $6$ at $n=26$; the exact $n=28$ $2O$
  construction attains distance $6$ and saturates the bound. } 
  \label{fig:distance-phase-diagram}
  \end{minipage}
\end{center}


\clearpage
\endgroup

\begingroup
\FinalSupplementSetup

\raggedbottom

\title{Supplemental Material for ``Optimal two-mode bosonic loss codes from finite group symmetry''}

\author{Argyris Giannisis Manes}

\author{Mahadevan Subramanian}

\author{Liang Jiang}

\affiliation{Chicago Quantum Institute and Pritzker School of Molecular Engineering, University of Chicago, Illinois 60637, USA}

\date{September 2026}

\maketitle

\tableofcontents
\clearpage

\section{Two-mode loss channel and symmetry-unconstrained optimization}

\subsection{Fixed-excitation Hilbert space and two-mode loss}

We denote $H_n = \operatorname{span}\{\ket{k} \otimes \ket{n-k}: \ 0 \leq k \leq n \}$ and let $P_n$ be the projector onto $H_n$. Throughout this subsection, $\rho$ is a density operator supported on $H_n$, so $\rho=P_n\rho P_n$ and $\Tr\rho=1$. The two-mode loss channel is given by

\begin{equation}
\mathcal{N}(\rho) = \sum_{l_1 = 0}^n \sum_{l_2 = 0}^{n-l_1} K_{l_1 l_2} \rho K_{l_1 l_2}^\dagger.
\end{equation}

\begin{equation}
K_{l_1, l_2} = \sqrt{\frac{(1-\eta)^{l_1 + l_2}}{l_1! l_2!}} \eta^{\hat{n}_1/2} \hat{a}_1^{l_1} \otimes \eta^{\hat{n}_2/2} \hat{a}_2^{l_2}.
\end{equation}

Here, $\eta$ is the transmissivity, $\hat{a}_i$ is the single-mode annihilation operator acting on mode $i$, and $\hat{n}_i=\hat{a}_i^\dagger\hat{a}_i$ is the corresponding number operator.

First, we want to calculate the probability that any of the errors $K_{l_1, l_2}$ with $l_1 + l_2 = k$, occur, given that the state is in the $H_n$ subspace.  We know that the probability of a specific error $K_{l_1, l_2}$ is given by $p_{l_1, l_2} = \tr{K_{l_1, l_2}^\dagger K_{l_1, l_2} \rho}$. We see that 
\begin{equation}
K_{l_1, l_2}^\dagger K_{l_1, l_2}  = \frac{(1-\eta)^{l_1 + l_2}}{l_1! l_2!} (\hat{a}_1^\dagger)^{l_1} \eta^{\hat{n}_1} \hat{a}_1^{l_1} \otimes (\hat{a}_2^\dagger)^{l_2}  \eta^{\hat{n}_2} \hat{a}_2^{l_2}.
\end{equation}

It's clear that this is diagonal in the Fock basis. Indeed, we have (given $l_1 \leq m, l_2 \leq t$; otherwise this is zero)

\begin{equation}
\begin{aligned}
\bra{m} \otimes \bra{t} \left(K_{l_1, l_2}^\dagger K_{l_1, l_2} \right) \ket{m} \otimes \ket{t} &= \frac{(1-\eta)^{l_1 + l_2}}{l_1! l_2!} \cdot  \bra{m}(\hat{a}_1^\dagger)^{l_1} \eta^{\hat{n}_1} \hat{a}_1^{l_1}\ket{m} \cdot  \bra{t}(\hat{a}_2^\dagger)^{l_2}  \eta^{\hat{n}_2} \hat{a}_2^{l_2}  \ket{t} \\
&= \frac{(1-\eta)^{l_1 + l_2}}{l_1! l_2!} \cdot  \frac{m! \eta^{m-l_1} }{(m-l_1)!}  \cdot \frac{t! \eta^{t-l_2}}{(t-l_2)!} ={m \choose l_1}{t \choose l_2} \eta^{m+t-l_1 -l_2} (1-\eta)^{l_1 + l_2}.
\end{aligned}
\end{equation}

Thus, we write

\begin{equation}
P_n K_{l_1, l_2}^\dagger K_{l_1, l_2} P_n = \sum_{m = 0}^n {m \choose l_1}{n-m \choose l_2} \eta^{n-l_1 -l_2} (1-\eta)^{l_1 + l_2} \ket{m,n-m}\bra{m,n-m}.
\end{equation}

Since $\rho=P_n\rho P_n$, we may insert these projectors in the trace defining $p_{l_1,l_2}$. Thus,

\begin{equation}
\begin{aligned}
p_{l_1, l_2} &= \sum_{m = 0}^n \tr{{m \choose l_1}{n-m \choose l_2} \eta^{n-l_1 -l_2} (1-\eta)^{l_1 + l_2} \ket{m,n-m}\bra{m,n-m} \rho } \\
&=\sum_{m = 0}^n \rho_{(m,n-m), (m,n-m)} {m \choose l_1}{n-m \choose l_2} \eta^{n-l_1 -l_2} (1-\eta)^{l_1 + l_2}.
\end{aligned}
\end{equation}

Summing over all the combinations, we calculate

\begin{equation}
\begin{aligned}
\Pr[l_1 + l_2= k] &= \sum_{l_1=0}^k p_{l_1, k-l_1} \\
&= \sum_{l_1=0}^k \sum_{m = 0}^n \rho_{(m,n-m), (m,n-m)} {m \choose l_1}{n-m \choose k-l_1} \eta^{n-l_1 -(k-l_1)} (1-\eta)^{l_1 +k - l_1} \\
&= \sum_{l_1=0}^k \sum_{m = 0}^n \rho_{(m,n-m), (m,n-m)} {m \choose l_1}{n-m \choose k-l_1} \eta^{n-k} (1-\eta)^{k } \\
&= \sum_{m = 0}^n \rho_{(m,n-m), (m,n-m)}{n \choose k} \eta^{n-k} (1-\eta)^{k }.
\end{aligned}
\end{equation}

using the Vandermonde identity. Since $\sum_{m=0}^n \rho_{(m,n-m), (m,n-m)}=\Tr\rho=1$, this becomes

\begin{equation}
    \Pr[k \text{ total photons lost}| n \text{ subspace}] = \sum_{m=0}^n \rho_{(m, n-m), (m, n-m)} {n \choose k} \eta^{n-k} (1-\eta)^k ={ n \choose k} \eta^{n-k} (1-\eta)^k,  \label{k-loss-probability}
\end{equation}

which is $\operatorname{Bin}(n, 1-\eta)$. This is a key fact that will later be used in the sector-decoder implementation. 

\subsection{Entanglement fidelity and bi-convex optimization}

Suppose we want to encode a $d_L$-dimensional codespace $\mathbb{C}^{d_L}$, and start with a maximally entangled state

\begin{equation}
\ket{\Phi^+ } = \frac{1}{\sqrt{d_L}} \sum_{i  = 0}^{d_L-1} \ket{i} \otimes \ket{i}.
\end{equation}

Furthermore, assume we have access to an encoding channel $\mathcal{E}:\mathcal{L}(\mathbb{C}^{d_L})\to\mathcal{L}(H_n)$, a noise channel $\mathcal{N}:\mathcal{L}(H_n)\to\mathcal{L}(H_{\leq n})$, and a recovery channel $\mathcal{D}:\mathcal{L}(H_{\leq n})\to\mathcal{L}(\mathbb{C}^{d_L})$, all of which are completely positive and trace-preserving (CPTP), where $H_{\leq n}=\bigoplus_{q=0}^n H_q$. The entanglement fidelity of the composed channel in this case is given by

\begin{equation}
F_e = \braket{\Phi^+|\left[(\mathcal{D} \circ \mathcal{N} \circ \mathcal{E})\otimes \mathcal{I}\right](\ket{\Phi^+}\bra{\Phi^+}) |\Phi^+},
\end{equation}

where $\mathcal{I}$ is the identity channel.

In our case, the only channel that is fixed is the noise channel (pure loss); we are therefore optimizing over both the encoding and recovery channels $\mathcal{E}, \mathcal{D}$, respectively. We show how these can be solved using biconvex optimization, following the derivations in \cite{nohQuantumCapacityBounds2019}. First, for any quantum channel $\mathcal{C}$, we define a Choi matrix $\hat{X}_{\mathcal{C}}$ with matrix elements

\begin{equation}
(\hat{X}_\mathcal{C})_{[ij], [i'j']} = \braket{j|\mathcal{C}(\ket{i}\bra{i'})|j'},
\end{equation}

for all possible input basis states $i,i'$ and output basis states $j,j'$. One can further define a linear map over Choi matrices associated with our noise channel $\mathcal{N}$ as

\begin{equation}
(f_{\mathcal{N}}(\hat{X}))_{[l'i'], [li]} = \sum_{k, k' = 0}^{n}  (\hat{X}_{\mathcal{N}})_{[kl], [k'l']} (\hat{X})_{[ik], [i'k']}.
\end{equation}

The entanglement fidelity of the full channel can be written as \cite{nohQuantumCapacityBounds2019}

\begin{equation}
F_e = \frac{1}{d_L^2} \Tr[\hat{X}_\mathcal{D} f_{\mathcal{N}}(\hat{X}_\mathcal{E})].
\end{equation}

For the maps $\mathcal{D}, \mathcal{E}$ to be CPTP, we enforce the Choi-matrix constraints. Thus, the optimization problem becomes

\begin{equation}
\begin{aligned}
\underset{\hat{X}_\mathcal{E},\hat{X}_{\mathcal{D}}}{\operatorname{maximize}}\quad
&\Tr[\hat{X}_\mathcal{D} f_{\mathcal{N}}(\hat{X}_\mathcal{E})] \\
\text{subject to}\quad
&\hat{X}_\mathcal{D} = \hat{X}_\mathcal{D}^\dagger \geq 0,
\qquad \Tr_{\mathbb{C}^{d_L}} \hat{X}_{\mathcal{D}} = I_{H_{\leq n}}, \\
&\hat{X}_\mathcal{E} = \hat{X}_\mathcal{E}^\dagger \geq 0,
\qquad \Tr_{H_n} \hat{X}_{\mathcal{E}} = I_{\mathbb{C}^{d_L}}.
\end{aligned}
\end{equation}

If we fix one of the two unknown Choi matrices (WLOG $\hat{X}_{\mathcal{D}}$), the problem of optimizing for $\hat{X}_{\mathcal{E}}$ is a semidefinite program (SDP). Since we're trying to optimize both, we alternate between optimizing for $\hat{X}_{\mathcal{E}}$ and $\hat{X}_{\mathcal{D}}$ until we converge to a solution. Note that this approach to biconvex optimization doesn't guarantee convergence to a global maximum. Finally, note that we can also write the fidelity as

\begin{equation}
F_e = \frac{1}{d_L^2} \Tr [T_\mathcal{D} T_\mathcal{N} T_\mathcal{E}], \label{entanglement-fidelity-superoperator}
\end{equation}

where $T_\mathcal{C}$ is the superoperator of the channel defined as $(T_\mathcal{C})_{jj', ii'}= \braket{j|\mathcal{C}(\ket{i}\bra{i'})|j'}$. We note that the main difference with \cite{nohQuantumCapacityBounds2019} is that instead of enforcing an average photon number constraint on a single mode, we directly encode in $H_n$. 

\subsection{Sector-by-sector recovery}

First, we'll use the shorthand $S = \bigoplus_{q = 0}^n H_q$. Consider the canonical inclusion $V_q: H_q \to S$, and define the channel $\Delta(\cdot) = \sum_{q=0}^n P_q (\cdot) P_q$, where the projector onto $H_q$ is given by $P_q = V_q V_q^\dagger$. Note that we have

\begin{equation}
P_q K_{l_1 l_2} P_n = \delta_{q, n-l_1-l_2} K_{l_1 l_2} P_n.
\end{equation}

Therefore, for all inputs $\rho \in \mathcal{L}(H_n)$, we have

\begin{equation}
\begin{aligned}
(\Delta \circ \mathcal{N})(\rho)
&= \sum_{q=0}^n P_q \left( \sum_{l_1 = 0}^n \sum_{l_2 = 0}^{n-l_1} K_{l_1 l_2} \rho K_{l_1 l_2}^\dagger \right) P_q = \sum_{q=0}^n \sum_{l_1 = 0}^n \sum_{l_2 = 0}^{n-l_1} P_q K_{l_1 l_2} P_n \rho P_n K_{l_1 l_2}^\dagger P_q \\
&= \sum_{l_1 = 0}^n \sum_{l_2 = 0}^{n-l_1} \sum_{q=0}^n \delta_{q, n-l_1-l_2} K_{l_1 l_2} \rho K_{l_1 l_2}^\dagger = \sum_{l_1 = 0}^n \sum_{l_2 = 0}^{n-l_1} K_{l_1 l_2} \rho K_{l_1 l_2}^\dagger
= \mathcal{N}(\rho).
\end{aligned}
\end{equation}

Therefore, for any decoder $\mathcal{D}$, we can write

\begin{equation}
\mathcal{D} \circ \mathcal{N} \circ \mathcal{E}
= \mathcal{D} \circ \Delta \circ \mathcal{N} \circ \mathcal{E}
= \widetilde{\mathcal{D}} \circ \mathcal{N} \circ \mathcal{E},
\end{equation}

where $\widetilde{\mathcal{D}} = \mathcal{D} \circ \Delta$. Therefore, every decoder $\mathcal{D}$ has a corresponding decoder that first gets rid of coherences between different $H_q$. Indeed, for any $\rho \in \mathcal{L}(S)$, we can write

\begin{equation}
\widetilde{\mathcal{D}}(\rho)
= \mathcal{D} \circ \Delta(\rho)
= \mathcal{D}\left(\sum_{q=0}^n P_q \rho P_q\right)
= \sum_{q=0}^n \mathcal{D}\left(V_q(V_q^\dagger \rho V_q)V_q^\dagger\right).
\end{equation}

So, we can define $\mathcal{D}_q: \mathcal{L}(H_q) \to \mathcal{L}(\mathbb{C}^{d_L})$, which is given by

\begin{equation}
\mathcal{D}_q(X)
= \mathcal{D}(V_q X V_q^\dagger)
= \mathcal{D} \circ \mathcal{V}_q(X)
\qquad \forall X \in \mathcal{L}(H_q),
\end{equation}

where $\mathcal{V}_q: \mathcal{L}(H_q) \to \mathcal{L}(S)$, $\mathcal{V}_q(\cdot) = V_q (\cdot) V_q^\dagger$. Since $V_q^\dagger V_q = I_{H_q}$, this can be seen as a composition of CPTP maps, and is thus also CPTP. Therefore, we can effectively write any decoder as

\begin{equation}
\widetilde{\mathcal{D}}(\rho)
= \sum_{q=0}^n \mathcal{D}_q \circ \mathcal{V}_q^\dagger(\rho),
\end{equation}

for all $\rho \in \mathcal{L}(S)$, where $\mathcal{V}_q^\dagger: \mathcal{L}(S) \to \mathcal{L}(H_q)$, $\mathcal{V}_q^\dagger(\cdot) = V_q^\dagger (\cdot) V_q$. Note that $\mathcal{V}_q^\dagger$ is completely positive and trace non-increasing.

Likewise, suppose that we're given arbitrary CPTP maps $\mathcal{D}_q: \mathcal{L}(H_q) \to \mathcal{L}(\mathbb{C}^{d_L})$. Then we'll show that

\begin{equation}
\mathcal{D}(\rho)
= \sum_{q=0}^n \mathcal{D}_q \circ \mathcal{V}_q^\dagger(\rho)
\end{equation}

is a CPTP map $\mathcal{D}: \mathcal{L}(S) \to \mathcal{L}(\mathbb{C}^{d_L})$. We have that

\begin{equation}
\begin{aligned}
\tr{\mathcal{D}(\rho)}
&= \sum_{q=0}^n \tr{\mathcal{D}_q(V_q^\dagger \rho V_q)}
= \sum_{q=0}^n \tr{V_q^\dagger \rho V_q}
= \sum_{q=0}^n \tr{\rho V_q V_q^\dagger} \\
&= \tr{\rho \cdot \sum_{q=0}^n V_q V_q^\dagger}
= \tr{\rho \cdot I_S}
= \tr{\rho},
\end{aligned}
\end{equation}

so this is indeed TP. Furthermore, it follows that $\mathcal{D}$ is completely positive as the sum of completely positive maps, since each $\mathcal{D}_q \circ \mathcal{V}_q^\dagger$ is completely positive as the composition of completely positive maps.

We've shown that any global decoder can be mapped to local decoders and vice versa. The only thing that remains to be shown is that the global optimum coincides with the optimum obtained by optimizing each $\mathcal{D}_q$. By definition, we have that

\begin{equation}
F_e
= \braket{\Phi^+|\left[(\mathcal{D} \circ \mathcal{N} \circ \mathcal{E}) \otimes \mathcal{I}\right]\left(\ket{\Phi^+}\bra{\Phi^+}\right)|\Phi^+}.
\end{equation}

Indeed, we start with $\ket{\Phi^+} = \frac{1}{\sqrt{d_L}}\sum_{i=0}^{d_L-1}\ket{i} \otimes \ket{i}$, a sector-$q$ decoder $\mathcal{D}(\rho) = \sum_{q=0}^n \mathcal{D}_q \circ \mathcal{V}_q^\dagger(\rho)$, encoder $\mathcal{E}$, and the noise channel, which we can write as

\begin{equation}
\mathcal{N}(\cdot) = \sum_{q=0}^n \mathcal{N}_q(\cdot),
\qquad
\mathcal{N}_q(\cdot) = \sum_{l=0}^{n-q} K_{l,n-l-q}(\cdot)K_{l,n-l-q}^\dagger,
\qquad
\mathcal{N}_q: \mathcal{L}(H_n) \to \mathcal{L}(S).
\end{equation}

These maps are also CP and trace non-increasing, and, for every $X \in \mathcal{L}(H_n)$, we have

\begin{equation}
(\mathcal{V}_{q'}^\dagger \circ \mathcal{N}_q)(X)
= \delta_{q',q}(\mathcal{V}_q^\dagger \circ \mathcal{N}_q)(X).
\end{equation}

So, we can thus write the whole channel as

\begin{equation}
  \begin{aligned}
    &\left[(\mathcal{D} \circ \mathcal{N} \circ \mathcal{E}) \otimes \mathcal{I}\right]
\left(\ket{\Phi^+}\bra{\Phi^+}\right) \\
&\qquad= \frac{1}{d_L} \sum_{i,j} \sum_{q=0}^n
\mathcal{D}_q \circ \mathcal{V}_q^\dagger \circ \mathcal{N}_q \circ \mathcal{E}(\ket{i}\bra{j}) \otimes \ket{i}\bra{j},
  \end{aligned}
\end{equation}

and the entanglement fidelity becomes

\begin{equation}
\begin{aligned}
F_e
&= \frac{1}{d_L^2} \sum_{i,j} \braket{i|\sum_{q=0}^n \mathcal{D}_q \circ \mathcal{V}_q^\dagger \circ \mathcal{N}_q \circ \mathcal{E}(\ket{i}\bra{j})|j} = \frac{1}{d_L^2} \sum_{i,j} \sum_{q=0}^n
\braket{i|\mathcal{D}_q \circ \mathcal{V}_q^\dagger \circ \mathcal{N}_q \circ \mathcal{E}(\ket{i}\bra{j})|j} \\
&= \frac{1}{d_L^2} \sum_{q=0}^n \sum_{i,j}
\braket{i|\mathcal{D}_q \circ \mathcal{V}_q^\dagger \circ \mathcal{N}_q \circ \mathcal{E}(\ket{i}\bra{j})|j} = \frac{1}{d_L^2} \sum_{q=0}^n \sum_{i,j}
\left(T_{\mathcal{D}_q}T_{\mathcal{V}_q^\dagger}T_{\mathcal{N}_q}T_{\mathcal{E}}\right)_{ij,ij} \\
&= \frac{1}{d_L^2} \sum_{q=0}^n
\Tr\left[T_{\mathcal{D}_q}T_{\mathcal{V}_q^\dagger}T_{\mathcal{N}_q}T_{\mathcal{E}}\right]
= \sum_{q=0}^n F_{q,e}
 \hspace{2em}\text{with}\qquad
F_{q,e} = \frac{1}{d_L^2}\Tr\left[T_{\mathcal{D}_q}T_{\mathcal{V}_q^\dagger}T_{\mathcal{N}_q}T_{\mathcal{E}}\right].
\end{aligned}
\label{Fe-sector-decomposition}
\end{equation}

Thus,

\begin{equation}
\begin{aligned}
\max_{(\mathcal{D}_q)} F_e
&= \frac{1}{d_L^2}\max_{(\mathcal{D}_q)} \sum_{q=0}^n
\Tr\left[T_{\mathcal{D}_q}T_{\mathcal{V}_q^\dagger}T_{\mathcal{N}_q}T_{\mathcal{E}}\right] = \frac{1}{d_L^2}\sum_{q=0}^n \max_{\mathcal{D}_q}
\Tr\left[T_{\mathcal{D}_q}T_{\mathcal{V}_q^\dagger}T_{\mathcal{N}_q}T_{\mathcal{E}}\right].
\end{aligned}
\label{sector-optimization-equivalence}
\end{equation}

So, we can solve each sector optimization problem separately, i.e., solve for each $q$

\begin{equation}
\mathcal{D}_q
= \underset{\mathcal{D}}{\operatorname{arg\,max}}\,
\Tr\left[T_{\mathcal{D}}T_{\mathcal{V}_q^\dagger}T_{\mathcal{N}_q}T_{\mathcal{E}}\right].
\end{equation}

We can put each sector optimization into the form of an optimization over a CPTP noise channel. Let

\begin{equation}
p_q = {n \choose q}\eta^q(1-\eta)^{n-q}
\end{equation}

be the probability that $q$ photons remain after the loss channel acts on any state supported on $H_n$, and, whenever $p_q>0$, define

\begin{equation}
\overline{\mathcal{N}}_q = \frac{1}{p_q}\mathcal{V}_q^\dagger \circ \mathcal{N}_q,
\qquad
\overline{\mathcal{N}}_q: \mathcal{L}(H_n) \to \mathcal{L}(H_q).
\end{equation}

This map is CP because $\mathcal{V}_q^\dagger \circ \mathcal{N}_q$ is CP and $p_q>0$. We now show explicitly that it is TP. Set $r=n-q$, the number of photons lost, and define operators from $H_n$ to $H_q$ by

\begin{equation}
A_{l,q}
= \frac{1}{\sqrt{l!(r-l)!}}V_q^\dagger \hat{a}_1^l \hat{a}_2^{r-l}P_n,
\qquad 0\leq l\leq r=n-q.
\end{equation}

Restricted to $H_n$, the loss Kraus operators satisfy

\begin{equation}
V_q^\dagger K_{l,r-l}P_n
= \sqrt{\eta^q(1-\eta)^r}\,A_{l,q}.
\end{equation}

Indeed, after $r$ annihilations the output has total photon number $q$, so the number-operator part of the Kraus operator contributes $\eta^{q/2}$, while the loss part contributes $(1-\eta)^{r/2}$. It follows that

\begin{equation}
(\mathcal{V}_q^\dagger \circ \mathcal{N}_q)(X)
= \eta^q(1-\eta)^r\sum_{l=0}^r A_{l,q}X A_{l,q}^\dagger.
\end{equation}

We next compute the Kraus completeness relation. The operator $\sum_{l=0}^r A_{l,q}^\dagger A_{l,q}$ is diagonal in the Fock basis of $H_n$. We have that

\begin{equation}
\begin{aligned}
A_{l,q}^\dagger A_{l,q}
&= \frac{1}{l!(r-l)!}P_n(\hat{a}_1^\dagger)^l(\hat{a}_2^\dagger)^{r-l}P_q
\hat{a}_1^l\hat{a}_2^{r-l}P_n \\
&= \frac{1}{l!(r-l)!}P_n(\hat{a}_1^\dagger)^l\hat{a}_1^l
(\hat{a}_2^\dagger)^{r-l}\hat{a}_2^{r-l}P_n.
\end{aligned}
\end{equation}

So we see

\begin{equation}
\begin{aligned}
\braket{m,n-m|\sum_{l=0}^r A_{l,q}^\dagger A_{l,q}|m,n-m}
&= \sum_{l=0}^r \frac{m!(n-m)!}{l!(r-l)!(m-l)!(n-m-r+l)!} \\
&= \sum_{l=0}^r {m \choose l}{n-m \choose r-l} = {n \choose r} = {n \choose q},
\end{aligned}
\end{equation}

which follows from the Chu--Vandermonde identity. Therefore,

\begin{equation}
\sum_{l=0}^r A_{l,q}^\dagger A_{l,q} = {n \choose q}I_{H_n}.
\end{equation}

Combining this identity with $p_q={n \choose q}\eta^q(1-\eta)^r$, we obtain the explicit expression

\begin{equation}
\overline{\mathcal{N}}_q(X)
= \frac{1}{{n \choose q}}\sum_{l=0}^r A_{l,q}X A_{l,q}^\dagger.
\end{equation}

Consequently, for every $X\in\mathcal{L}(H_n)$,

\begin{equation}
\begin{aligned}
\tr{\overline{\mathcal{N}}_q(X)}
&= \frac{1}{{n \choose q}}\tr{X\sum_{l=0}^r A_{l,q}^\dagger A_{l,q}} \\
&= \tr{X},
\end{aligned}
\end{equation}

so $\overline{\mathcal{N}}_q$ is TP and hence CPTP. This explicit expression also shows that $\overline{\mathcal{N}}_q$ is independent of $\eta$. Finally, the original sector objective satisfies

\begin{equation}
\Tr\left[T_{\mathcal{D}}T_{\mathcal{V}_q^\dagger}T_{\mathcal{N}_q}T_{\mathcal{E}}\right]
= p_q\Tr\left[T_{\mathcal{D}}T_{\overline{\mathcal{N}}_q}T_{\mathcal{E}}\right].
\end{equation}

For a fixed encoder and a sector with $p_q>0$, multiplying the objective by $p_q$ does not change which $\mathcal{D}_q$ maximizes it. Since the normalized sector channel is independent of $\eta$, the optimal decoder in that sector is also independent of $\eta$. The probabilities $p_q$ must still be retained when computing the total entanglement fidelity or optimizing the encoder across sectors. If $p_q=0$, the sector never occurs and its decoder has no effect on the physical channel.

\subsection{Optimization details}

 We alternate between encoder and (sector-by-sector) decoder SDPs, solving each one with SCS. After each accepted encoder–recovery update, we use the computed entanglement fidelity $F_e$ to tighten a nonincreasing duality gap target toward $0.01(1-F_e)$, bounded between $10^{-7}$ and $10^{-5}$. The primal and dual residual targets are ten times the duality gap target; the Choi positivity and trace-preservation target is tightened on the same infidelity scale, bounded between $10^{-7}$ and $10^{-6}$. We note that for the sector decoder, the relevant duality gap and dual residual are weighted by the probability of the relevant sector, while the gap is also divided by the square of the logical dimension. Measured solver gaps, residuals, and Choi errors determine whether a solve is accepted or retried at higher accuracy. After the prescribed alternating rounds, a final polishing stage sets nominal gap, residual, and Choi targets of $10^{-8}$, $10^{-7}$, and $10^{-8}$, respectively.

We perform 21000 rounds of bi-convex optimization for all $n \in \mathbb{N}$ with $4 \leq n \leq 25$. All but $5$ runs complete the $21,000$ rounds; $5$ time out.  For the high-eta runs, we perform the optimization for $\eta = 0.9, \, \eta = 0.95$ for all such $n$, and use the same five seeds $0-4$ for the initial random encoder. In the case of $n =15$, we expanded the search with extra seeds $5-9$. For the low-eta sweep, we additionally optimize for $6 \leq n \leq 13$ using two seeds $\{0,1\}$ for each eta in $\{0.7, 0.75, 0.8, 0.85\}$.  

\subsection{Covariance and equivalent code orientations}

We'll characterize all Gaussian unitaries that commute with the two-mode loss of equal strength. In Gaussian phase-space notation \cite{nohQuantumCapacityBounds2019}, two-mode symmetric loss is characterized by

\begin{equation}
X_\eta=\sqrt{\eta}I_4,
\qquad
Y_\eta=(1-\eta)\frac{I_4}{2},
\qquad
\boldsymbol{d}_\eta=0.
\end{equation}

If our system state is characterized by $\overline{x},V_x$ (mean and covariance matrices), these transform as

\begin{equation}
\overline{x}\longmapsto X_\eta\overline{x}+\boldsymbol{d}_\eta,
\qquad
V_x\longmapsto X_\eta V_xX_\eta^T+Y_\eta.
\end{equation}

So, in the case of loss we get

\begin{equation}
\overline{x}\longmapsto\sqrt{\eta}\overline{x},
\qquad
V_x\longmapsto\eta V_x+\frac{1-\eta}{2}I_4.
\end{equation}

Now, suppose that we act on the system with some two-mode Gaussian unitary, described by a symplectic matrix $S_U$ and displacement $\boldsymbol{s}_U$. Let $\Omega$ denote the standard two-mode symplectic form. We know this transforms

\begin{equation}
\overline{x}\longmapsto S_U\overline{x}+\boldsymbol{s}_U,
\qquad
V_x\longmapsto S_UV_xS_U^T,
\qquad
S_U\Omega S_U^T=\Omega.
\end{equation}

For loss and the unitary to commute, we need, assuming $\eta\neq1$,

\begin{equation}
\sqrt{\eta}S_U\overline{x}+\boldsymbol{s}_U
=\sqrt{\eta}S_U\overline{x}+\sqrt{\eta}\boldsymbol{s}_U
\quad\Longrightarrow\quad
\boldsymbol{s}_U=0,
\end{equation}

and

\begin{equation}
\eta S_UV_xS_U^T+\frac{1-\eta}{2}S_US_U^T
=\eta S_UV_xS_U^T+\frac{1-\eta}{2}I_4
\quad\Longrightarrow\quad
S_US_U^T=I_4.
\end{equation}

So the two conditions we get are $S_US_U^T=I_4$ and $\boldsymbol{s}_U=0$, i.e., we have no displacement, and $S_U$ is orthosymplectic. In other words, loss commutes with any combination of beam splitters and phase shifts (passive Gaussian unitaries), as desired. When $\eta=1$, the loss channel is the identity, so the covariance needed below is immediate.

In particular, let $U\in SU(2)$ describe a passive two-mode transformation, and let $R_q(U)$ be its representation on $H_q$. Define

\begin{equation}
R_{\leq n}(U)=\bigoplus_{q=0}^n R_q(U),
\qquad
\mathcal{U}_q(X)=R_q(U)XR_q(U)^\dagger,
\qquad
\mathcal{U}_{\leq n}(X)=R_{\leq n}(U)XR_{\leq n}(U)^\dagger.
\end{equation}

Restricting the covariance relation to an input in $H_n$ gives

\begin{equation}
\mathcal{N}\circ\mathcal{U}_n
=\mathcal{U}_{\leq n}\circ\mathcal{N}.
\label{loss-passive-covariance}
\end{equation}

Now, we'll argue that the entanglement fidelity is invariant if we appropriately update the recovery. Recall that $F_e$ is the entanglement fidelity of the composite channel $\mathcal{D}\circ\mathcal{N}\circ\mathcal{E}$ defined above, and write $F_e(\mathcal{N},\mathcal{D},\mathcal{E})$ when displaying its dependence on these channels. For a fixed encoding, write the optimized fidelity as

\begin{equation}
F_e^{\mathrm{opt}}(\mathcal{N},\mathcal{E})
=\max_{\mathcal{D}}F_e(\mathcal{N},\mathcal{D},\mathcal{E}),
\end{equation}

where the maximum is over CPTP recovery maps $\mathcal{D}:\mathcal{L}(H_{\leq n})\to\mathcal{L}(\mathbb{C}^{d_L})$, and is attained by some $\mathcal{D}_0$. Now, suppose we rotate our encoding by $U$, i.e., $\mathcal{E}'=\mathcal{U}_n\circ\mathcal{E}$. We see that from what we just showed, we get

\begin{equation}
\begin{aligned}
\mathcal{D}\circ\mathcal{N}\circ\mathcal{E}'
&=\mathcal{D}\circ\mathcal{N}\circ\mathcal{U}_n\circ\mathcal{E} \\
&=\mathcal{D}\circ\mathcal{U}_{\leq n}\circ\mathcal{N}\circ\mathcal{E}.
\end{aligned}
\end{equation}

Composition with the unitary channel $\mathcal{U}_{\leq n}$ is a bijection on CPTP recovery maps. In particular,

\begin{equation}
\mathcal{D}_0'=\mathcal{D}_0\circ\mathcal{U}_{\leq n}^\dagger,
\qquad
\mathcal{U}_{\leq n}^\dagger(X)
=R_{\leq n}(U)^\dagger X R_{\leq n}(U),
\end{equation}

attains the same fidelity for $\mathcal{E}'$ that $\mathcal{D}_0$ attains for $\mathcal{E}$. Therefore,

\begin{equation}
F_e^{\mathrm{opt}}(\mathcal{N},\mathcal{U}_n\circ\mathcal{E})
=F_e^{\mathrm{opt}}(\mathcal{N},\mathcal{E}).
\end{equation}

\begin{remark}
The optimized entanglement fidelity is also invariant under complex conjugation in the occupation-number basis. Here $*$ denotes entrywise complex conjugation in this fixed basis. The loss Kraus operators are real in this basis, so $T_\mathcal{N}^*=T_\mathcal{N}$. Thus, if $T_\mathcal{E}'=T_\mathcal{E}^*$ and $T_\mathcal{D}'=T_\mathcal{D}^*$, then

\begin{equation}
T_\mathcal{D}'T_\mathcal{N}T_\mathcal{E}'
=\left(T_\mathcal{D}T_\mathcal{N}T_\mathcal{E}\right)^*,
\end{equation}

and Equation~\eqref{entanglement-fidelity-superoperator} gives $F_e'=F_e$, since entanglement fidelity is real. Conjugation is a bijection on CPTP maps, so the optimized fidelity is unchanged. For an isometric encoder $V$, this conjugates $T_\mathcal{E}=V\otimes V^*$ and maps its code projector $P=VV^\dagger$ to $P^*$.
\end{remark}

\section{Representation-theoretic preliminaries}

We proceed by presenting a few fundamental definitions and theorems from representation theory of finite groups, which are necessary for the code construction in Ref.~\cite{grossDesigningCodesInteractions2021}, as well as our proofs, mostly following Refs.~\cite{dummitAbstractAlgebra2004,serreLinearRepresentationsFinite1977,fultonRepresentationTheory2004}.

\subsection{Representations, intertwiners, and operator spaces}

\begin{definition}
If $V,W$ are finite-dimensional complex vector spaces, $\operatorname{Hom}(V,W)$ is the set of complex-linear maps from $V$ to $W$. The dual of $V$ is $V^*=\operatorname{Hom}(V,\mathbb C)$, and $\operatorname{Hom}(V,W)\cong V^*\otimes W$.
\end{definition}

\begin{definition}
A \textit{linear representation} $\pi$ of a group $G$ in a vector space $V$ is a homomorphism from $G$ to $\operatorname{GL}(V)$, the group of invertible linear transformations from $V$ to itself.
\label{rep-def}
\end{definition}

\begin{definition}
Let $\pi_V:G\to\operatorname{GL}(V)$ and $\pi_W:G\to\operatorname{GL}(W)$ be representations of $G$. A linear map $\phi:V\to W$ is a map of representations precisely when it respects the two group actions, meaning that
\begin{equation}
\phi(\pi_V(g)v)=\pi_W(g)\phi(v)
\qquad \forall g\in G,\ v\in V.
\end{equation}
This is also called a $G$-linear map. 
\end{definition}

\begin{remark}
$\operatorname{Hom}(V,W)$ is, in general, also a representation. We define it by the action $(g\phi)(v)=g\phi(g^{-1}v)$, or, in the strict representation formalism, we have $\pi_{\operatorname{Hom}}:G\to\operatorname{GL}(\operatorname{Hom}(V,W))$, and
\begin{equation}
(\pi_{\operatorname{Hom}}(g)\phi)(v)=\pi_W(g)\phi(\pi_V(g^{-1})v).
\label{hom-rep}
\end{equation}
We can also write the action more generally as $g\cdot\phi=\pi_W(g)\phi\pi_V(g)^{-1}$.
\end{remark}

\begin{remark}
Under the action in Eq.~\eqref{hom-rep}, a linear map is fixed by every element of $G$ exactly when it is $G$-linear; these maps form $\operatorname{Hom}_G(V,W)\subset\operatorname{Hom}(V,W)$ \cite{fultonRepresentationTheory2004}. This is because being fixed under the action of $G$ means $(g\phi)(v)=\phi(v)\implies g\phi(g^{-1}v)=\phi(v)$. Plugging $v\to gv$, we get $\phi(gv)=g\phi(v)$, i.e., $\phi$ is a $G$-linear map.
\end{remark}

\begin{remark}
If we have $W=V$, then we're looking at $\operatorname{Hom}(V,V)=\operatorname{End}(V)$. Using Eq.~\eqref{hom-rep}, we see the representation on $\operatorname{End}(V)$ is defined by the action
\begin{equation}
g\cdot A=\pi(g)A\pi(g)^{-1},
\qquad A\in\operatorname{End}(V),\ g\in G.
\end{equation}
Note that $\operatorname{End}(V)$ is also an algebra over the relevant field.
\end{remark}

\begin{definition}
The dual $V^*$ is also a representation, with matrices
\begin{equation}
\pi_{V^*}(g)=\pi_V(g)^{-T}.
\label{dual-representation}
\end{equation}
\end{definition}

\begin{remark}
It's clear that if the representation is unitary, $\pi_{V^*}(g)=(\pi_V(g)^{-1})^T=(\pi_V(g)^\dagger)^T=\pi_V(g)^*$, i.e., the dual representation is given by the conjugate.
\label{dual-unitary-rep}
\end{remark}

\begin{definition}
Suppose $W$ a subspace of $V$, i.e., $W\leq V$ that's stable under a representation $\pi$. We can then restrict the representation $\pi^W:G\to\operatorname{GL}(W)$. This is also a valid representation and is called a \textit{subrepresentation}.
\label{subrep-def}
\end{definition}

\begin{definition}
A representation $\pi:G\to\operatorname{GL}(V)$ is called \textit{irreducible} if its only invariant subspaces are $V,\{0\}$, i.e., it has no non-trivial subrepresentations.
\label{irrep-def}
\end{definition}

\begin{lemma}
Let $G$ a group, and let $V,W$ be finite-dimensional complex Hilbert spaces carrying unitary representations $\pi_V:G\to U(V)$ and $\pi_W:G\to U(W)$, respectively. Suppose that $\mathcal A\subseteq\operatorname{Hom}(V,W)$ is stable under the action of $G$. Then:
\begin{enumerate}
\item The adjoint space $\mathcal A^\dagger\cong\mathcal A^*$.
\item The product space
\begin{equation}
\mathcal K(\mathcal A)=\operatorname{span}\{A^\dagger B:A,B\in\mathcal A\}
\end{equation}
is an equivariant image of $\mathcal A^*\otimes\mathcal A$, and if the products $A_i^\dagger A_j$ are all linearly independent, for a basis $A_i$ of $\mathcal A$, then we have
\begin{equation}
\mathcal K(\mathcal A)\cong\mathcal A^*\otimes\mathcal A.
\end{equation}
\end{enumerate}
\label{adjoint-KL-rep-lemma}
\end{lemma}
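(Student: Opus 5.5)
The plan is to handle the two parts of the lemma separately, in both cases writing down an explicit intertwiner and checking equivariance against the $\operatorname{Hom}$ action of Eq.~\eqref{hom-rep}.

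For part 1, the $G$-action on $\mathcal A^\dagger\subseteq\operatorname{Hom}(W,V)$ is $g\cdot X=\pi_V(g)X\pi_W(g)^\dagger$. Taking adjoints gives
\begin{equation}
(g\cdot A)^\dagger=\bigl(\pi_W(g)A\pi_V(g)^\dagger\bigr)^\dagger=\pi_V(g)A^\dagger\pi_W(g)^\dagger=g\cdot A^\dagger,
\end{equation}
so $\mathcal A^\dagger$ is $G$-stable. The map $A\mapsto A^\dagger$ is a bijection $\mathcal A\to\mathcal A^\dagger$ that commutes with the action, but it is antilinear.

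To obtain a complex-linear isomorphism, I would pair $\mathcal A^\dagger$ with $\mathcal A$ through the Hilbert--Schmidt form. Define $\Phi:\mathcal A^\dagger\to\mathcal A^*$ by $\Phi(X)(A)=\operatorname{Tr}(XA)$. This map is complex-linear in $X$, and it is equivariant because
\begin{equation}
\operatorname{Tr}\bigl((g\cdot X)A\bigr)=\operatorname{Tr}\bigl(X\,(g^{-1}\cdot A)\bigr),
\end{equation}
which follows from cyclicity of the trace and unitarity of $\pi_V,\pi_W$. This is exactly the dual action $\bigl(g\cdot f\bigr)(A)=f(g^{-1}A)$. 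The map is injective because $\Phi(B^\dagger)(B)=\operatorname{Tr}(B^\dagger B)>0$ for $B\neq0$. Since $\dim\mathcal A^\dagger=\dim\mathcal A=\dim\mathcal A^*$, it is an isomorphism. Equivalently, I could cite Remark~\ref{dual-unitary-rep}: the conjugate of a unitary representation is its dual.

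For part 2, I would define the linear map
\begin{equation}
m:\mathcal A^\dagger\otimes\mathcal A\to\operatorname{End}(V),\qquad X\otimes B\mapsto XB.
\end{equation}
It is well defined by the universal property of the tensor product, since multiplication is bilinear. Its image is by definition $\mathcal K(\mathcal A)$. It is equivariant, because
\begin{equation}
(g\cdot X)(g\cdot B)=\pi_V(g)X\pi_W(g)^\dagger\pi_W(g)B\pi_V(g)^\dagger=g\cdot(XB),
\end{equation}
where the action on $\operatorname{End}(V)$ is conjugation. Composing with the part 1 isomorphism $\mathcal A^*\otimes\mathcal A\cong\mathcal A^\dagger\otimes\mathcal A$ exhibits $\mathcal K(\mathcal A)$ as an equivariant image of $\mathcal A^*\otimes\mathcal A$. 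In particular, $\mathcal K(\mathcal A)$ is $G$-stable.

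For the isomorphism claim, take a basis $A_i$ of $\mathcal A$. Then $\{A_i^\dagger\}$ is a basis of $\mathcal A^\dagger$, and $\{A_i^\dagger\otimes A_j\}$ is a basis of the tensor product. The map $m$ sends these basis elements to the products $A_i^\dagger A_j$. If those products are linearly independent, $m$ is injective and hence an isomorphism onto its image.

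The only delicate step is part 1: the adjoint map is antilinear, so it cannot directly serve as the isomorphism. The trace pairing $\Phi$ resolves this by providing a complex-linear equivariant identification, and all remaining checks are routine.
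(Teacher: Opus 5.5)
Your proposal is correct and follows the paper's proof. Part 2 uses the same multiplication map $A^\dagger\otimes B\mapsto A^\dagger B$, with the same equivariance, surjectivity, and injectivity checks. In part 1, the paper picks a Hilbert--Schmidt orthonormal basis $\{A_i\}$ of $\mathcal A$ and shows that the $A_i^\dagger$ transform by the complex-conjugate, hence dual, matrices; your trace pairing $X\mapsto\operatorname{Tr}(X\,\cdot\,)$ is the coordinate-free form of the same identification, since it sends each $A_i^\dagger$ to the corresponding dual-basis functional, and it has the advantage of making complex-linearity explicit.
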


\begin{proof}
We prove the two claims.
\begin{enumerate}
\item First, we have that
\begin{equation}
(g\cdot A)^\dagger=(\pi_W(g)A\pi_V(g)^\dagger)^\dagger=\pi_V(g)A^\dagger\pi_W(g)^\dagger=g\cdot A^\dagger
\qquad \forall A\in\mathcal A.
\end{equation}
Now, define an orthonormal basis $\{A_0,A_1,\dots,A_s\}$ of $\mathcal A$ with respect to the Hilbert--Schmidt inner product $\braket{A,B}=\operatorname{Tr}(A^\dagger B)$, i.e., we have $\operatorname{Tr}(A_i^\dagger A_j)=\delta_{ij}$. Note that this is preserved by the $G$ action. Suppose that the representation of $\mathcal A$ is given by matrices $D_{uv}^{(\mathcal A)}$ in this basis, i.e.,
\begin{equation}
g\cdot A_v=\sum_{u=0}^sD_{uv}^{(\mathcal A)}(g)A_u.
\end{equation}
These matrices are going to be unitary since
\begin{equation}
\begin{aligned}
\delta_{ij}
&=\braket{A_i,A_j}=\braket{g\cdot A_i,g\cdot A_j}
=\braket{\sum_{u=0}^sD_{ui}^{(\mathcal A)}(g)A_u,\sum_{v=0}^sD_{vj}^{(\mathcal A)}(g)A_v}\\
&=\sum_{u=0}^s\overline{D}_{ui}^{(\mathcal A)}(g)D_{uj}^{(\mathcal A)}(g)
=\left((D^{(\mathcal A)})^\dagger(g)D^{(\mathcal A)}(g)\right)_{ij},
\end{aligned}
\end{equation}
for every $g$. So, we have
\begin{equation}
g\cdot A_v^\dagger=(g\cdot A_v)^\dagger=\sum_{u=0}^s\overline{D}_{uv}^{(\mathcal A)}(g)A_u^\dagger,
\end{equation}
which gives us that the induced representation matrices on $\mathcal A^\dagger$ are given by $\overline{D}_{uv}^{(\mathcal A)}$, i.e., we have $\mathcal A^\dagger\cong\mathcal A^*$, using Eq.~\eqref{dual-representation} and Remark~\ref{dual-unitary-rep}.

\item Define the map $\Phi:\mathcal A^\dagger\otimes\mathcal A\to\mathcal K(\mathcal A)$ via $\Phi(A^\dagger\otimes B)=A^\dagger B$. Firstly, we have that, since the action on the tensor representation is $g\cdot(a\otimes b)=(g\cdot a)\otimes(g\cdot b)$ \cite{fultonRepresentationTheory2004}, we have
\begin{equation}
\begin{aligned}
\Phi(g\cdot(A^\dagger\otimes B))
&=\Phi((g\cdot A^\dagger)\otimes(g\cdot B))
=\Phi((g\cdot A)^\dagger\otimes(g\cdot B))\\
&=(g\cdot A)^\dagger(g\cdot B)\\
&=\pi_V(g)A^\dagger\pi_W(g)^\dagger\pi_W(g)B\pi_V(g)^\dagger\\
&=\pi_V(g)A^\dagger B\pi_V(g)^\dagger
=g\cdot\Phi(A^\dagger\otimes B),
\end{aligned}
\end{equation}
and therefore we see that $\Phi$ is $G$-equivariant. It is also by definition surjective. Finally, if the products $A_i^\dagger A_j$ are linearly independent, we see that
\begin{equation}
\Phi\left(\sum_{ij}c_{ij}A_i^\dagger\otimes A_j\right)=0
\implies\sum_{ij}c_{ij}A_i^\dagger A_j=0
\implies c_{ij}=0,
\end{equation}
so we get that $\Phi$ is injective. Thus, it is a $G$-equivariant isomorphism, and we conclude $\mathcal K(\mathcal A)\cong\mathcal A^\dagger\otimes\mathcal A\cong\mathcal A^*\otimes\mathcal A$ as desired.
\end{enumerate}
\end{proof}

\subsection{Isotypic decompositions and invariant operators}

\begin{lemma}[Schur's lemma {\cite{fultonRepresentationTheory2004}}]
If $V,W$ are irreducible representations of $G$ and $\phi:V\to W$ is a $G$-module homomorphism ($G$-linear map), then
\begin{enumerate}
\item Either $\phi$ is an isomorphism, or $\phi=0$.
\item If $V=W$, then $\phi=\lambda I_V$ for some $\lambda\in\mathbb C$.
\end{enumerate}
\label{schur}
\end{lemma}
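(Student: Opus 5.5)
The plan is the standard proof of Schur's lemma, built on the fact that kernels and images of $G$-linear maps are subrepresentations; the only external input is algebraic closure of $\mathbb C$.

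\emph{Step 1: kernels and images are invariant.} For a $G$-linear $\phi:V\to W$:
\begin{itemize}
\item If $v\in\ker\phi$, then $\phi(\pi_V(g)v)=\pi_W(g)\phi(v)=0$, so $\ker\phi\leq V$ is $G$-stable.
\item If $w=\phi(v)\in\operatorname{im}\phi$, then $\pi_W(g)w=\phi(\pi_V(g)v)\in\operatorname{im}\phi$, so $\operatorname{im}\phi\leq W$ is $G$-stable.
\end{itemize}
By Definition~\ref{subrep-def}, both are subrepresentations.

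\emph{Step 2: part 1.} By Definition~\ref{irrep-def}, irreducibility of $V$ forces $\ker\phi\in\{0,V\}$, and irreducibility of $W$ forces $\operatorname{im}\phi\in\{0,W\}$.
\begin{itemize}
\item If $\ker\phi=V$, then $\phi=0$.
\item Otherwise $\ker\phi=0$, so $\phi$ is injective and $\operatorname{im}\phi\neq0$. Hence $\operatorname{im}\phi=W$, so $\phi$ is a linear bijection.
\end{itemize}
The inverse of a bijective $G$-linear map is automatically $G$-linear: apply $\phi^{-1}$ to $\phi(\pi_V(g)v)=\pi_W(g)\phi(v)$. So $\phi$ is an isomorphism of representations.

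\emph{Step 3: part 2.} Take $V=W$, which is finite-dimensional and nonzero.
\begin{itemize}
\item Because $\mathbb C$ is algebraically closed, the characteristic polynomial of $\phi$ has a root $\lambda$.
\item The identity $I_V$ is $G$-linear and $G$-linear maps form a vector space, so $\phi-\lambda I_V$ is $G$-linear.
\item $\phi-\lambda I_V$ has a nonzero kernel, so it is not an isomorphism. By part 1 it must be $0$, giving $\phi=\lambda I_V$.
\end{itemize}

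The only step that needs care is the existence of an eigenvalue in part 2. It uses both finite-dimensionality and the complex field, and fails over $\mathbb R$. Everything else is a direct check of definitions.
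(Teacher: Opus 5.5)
Your proof is correct and is the standard kernel/image-plus-eigenvalue argument from Fulton--Harris; the paper gives no proof of its own and only cites that reference. The finite-dimensionality you flag in part 2 is implicit throughout the paper, where all the representations considered are finite-dimensional, so it is not a gap.
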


\begin{lemma}
Let $V,W$ representations of $G$ with $V$ irreducible, and $\phi:V\to W$ a $G$-linear map such that $\phi\neq0$. Then $\operatorname{im}\phi\subseteq W$ is a $G$-subrepresentation of $W$ isomorphic to $V$.
\label{schur-extension}
\end{lemma}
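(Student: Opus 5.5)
The plan is to apply Lemma~\ref{schur} to $\phi$ regarded as a map into its image, and to check separately that this image is a subrepresentation.

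First I will verify that $\operatorname{im}\phi$ is $G$-stable. Take $w=\phi(v)$ in the image. By $G$-linearity, $\pi_W(g)w=\pi_W(g)\phi(v)=\phi(\pi_V(g)v)$, and the right-hand side again lies in $\operatorname{im}\phi$. Since $\operatorname{im}\phi$ is also a linear subspace, Definition~\ref{subrep-def} makes it a subrepresentation of $W$, with action given by the restriction $\pi_W^{\operatorname{im}\phi}$.

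Next I will let $\tilde\phi:V\to\operatorname{im}\phi$ denote the corestriction of $\phi$. It is still $G$-linear, because the action on $\operatorname{im}\phi$ is just the restricted one. By construction $\tilde\phi$ is surjective. Its kernel $\ker\phi$ is a $G$-stable subspace of $V$: if $\phi(v)=0$, then $\phi(\pi_V(g)v)=\pi_W(g)\phi(v)=0$. Irreducibility of $V$ (Definition~\ref{irrep-def}) therefore forces $\ker\phi=\{0\}$ or $\ker\phi=V$. The second case is excluded because $\phi\neq0$, so $\ker\phi=\{0\}$.

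Hence $\tilde\phi$ is a bijective $G$-linear map, that is, an isomorphism of representations $V\cong\operatorname{im}\phi$. The same conclusion also follows from Lemma~\ref{schur}(1), once one notes that $\operatorname{im}\phi$ is irreducible: any nonzero subrepresentation $U\leq\operatorname{im}\phi$ pulls back to the nonzero $G$-stable subspace $\tilde\phi^{-1}(U)\leq V$, which must equal $V$, so $U=\operatorname{im}\phi$. The direct kernel argument is shorter, so I will present that one.

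I do not expect a genuine obstacle. The only point needing care is that the codomain must be replaced by $\operatorname{im}\phi$, which need not be irreducible inside $W$, before any Schur-type argument is applied. The kernel argument avoids this issue entirely because it only uses irreducibility of $V$.
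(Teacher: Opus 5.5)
Your proof is correct. The paper gives no proof of this lemma; it is stated right after Schur's lemma as a standard fact from the cited representation-theory texts. Your argument is exactly the standard one that fills that gap: the image is $G$-stable, and the kernel is a $G$-stable subspace of the irreducible $V$, hence zero because $\phi\neq0$. So the corestriction $V\to\operatorname{im}\phi$ is a bijective $G$-linear map, and its inverse is automatically $G$-linear.

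One wording slip in your closing remark: it is $W$, not $\operatorname{im}\phi$, that need not be irreducible. You have just shown that $\operatorname{im}\phi\cong V$ is irreducible.
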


\begin{definition}
The \textit{character} $\chi_\pi$ of a representation is defined as $\chi_\pi(g)=\operatorname{Tr}(\pi(g))$. We can further define an inner product of characters as
\begin{equation}
\braket{\chi,\varphi}=\frac{1}{|G|}\sum_{g\in G}\chi(g)\varphi(g)^*.
\label{char-def}
\end{equation}
\end{definition}

\begin{lemma}[{\cite[Sec.~2.1, Proposition~1]{serreLinearRepresentationsFinite1977}}]
If $\chi$ is the character of a representation $\pi$ of degree $r$, we have
\begin{enumerate}
\item $\chi(1)=r$.
\item $\chi(g^{-1})=\chi(g)^*$ for all $g\in G$.
\item $\chi(hgh^{-1})=\chi(g)$ for all $g,h\in G$. This makes $\chi$ a class function.
\end{enumerate}
\end{lemma}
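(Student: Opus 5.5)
We prove the three standard facts from the definitions $\chi(g)=\operatorname{Tr}\pi(g)$ and the homomorphism property of $\pi$ (Definition~\ref{rep-def}); no earlier lemma is needed beyond finiteness of $G$.

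\emph{Item 1} is immediate: $\pi(1)=I_V$ because $\pi$ is a homomorphism into $\operatorname{GL}(V)$, so $\chi(1)=\operatorname{Tr}I_V=\dim V=r$.

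\emph{Item 3} follows from cyclicity of the trace. Since $\pi(hgh^{-1})=\pi(h)\pi(g)\pi(h)^{-1}$, we get $\chi(hgh^{-1})=\operatorname{Tr}\bigl(\pi(h)\pi(g)\pi(h)^{-1}\bigr)=\operatorname{Tr}\pi(g)=\chi(g)$.

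\emph{Item 2} is the only step needing an idea, namely finiteness of $G$. Every $g$ has finite order $m$, so $\pi(g)^m=\pi(g^m)=I_V$. Hence the minimal polynomial of $\pi(g)$ divides $x^m-1$, which has distinct roots, so $\pi(g)$ is diagonalizable with eigenvalues $\lambda_1,\dots,\lambda_r$ that are $m$-th roots of unity. Each such root satisfies $|\lambda_i|=1$, so $\lambda_i^{-1}=\lambda_i^*$.

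In the same eigenbasis, $\pi(g^{-1})=\pi(g)^{-1}$ is diagonal with entries $\lambda_i^{-1}$. Therefore
\begin{equation}
\chi(g^{-1})=\sum_i\lambda_i^{-1}=\sum_i\lambda_i^*=\chi(g)^*.
\end{equation}

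Alternatively, averaging an inner product over $G$ makes $\pi$ unitary, so $\pi(g^{-1})=\pi(g)^\dagger$ and the trace of the adjoint is the complex conjugate. Both routes rest on the same point: the hypothesis that $G$ is finite must be used explicitly, and either route uses it.
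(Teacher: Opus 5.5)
Your proof is correct and follows essentially the same route as the argument in Serre, Sec.~2.1, Proposition~1, which the paper cites instead of giving its own proof: $\pi(1)=I_V$ gives item~1, cyclicity of the trace gives item~3, and for item~2 the finite order of $\pi(g)$ makes its eigenvalues roots of unity, with the unitary-matrix alternative also noted by Serre. Your diagonalizability step is valid but not needed, since $\operatorname{Tr}\pi(g)^{-1}=\sum_i\lambda_i^{-1}$ holds for the eigenvalues counted with multiplicity in any case.
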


\begin{lemma}[Isotypic decomposition]
Let $G$ be a finite group and let $\pi:G\to\operatorname{GL}(V)$ be a finite-dimensional complex representation with character $\varphi$. Then $V$ can be written as
\begin{equation}
V\cong W_1^{\oplus m_1}\oplus W_2^{\oplus m_2}\oplus\dots\oplus W_s^{\oplus m_s},
\end{equation}
where $W_1,\dots,W_s$ are pairwise nonisomorphic irreducible representations of $G$ with characters $\chi_1,\dots,\chi_s$, and $m_i=\braket{\varphi,\chi_i}$.
\label{irrep-decomp}
\end{lemma}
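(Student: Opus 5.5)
The plan is to build $V$ from the $C$-isotypic components and then compute each multiplicity by orthogonality of irreducible characters. The argument has three steps.

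First, complete reducibility. For a finite group $G$ acting on a finite-dimensional complex space $V$, we average any inner product over $G$, setting $\langle u,v\rangle_G=\frac{1}{|G|}\sum_{g}\langle \pi(g)u,\pi(g)v\rangle$. This makes $\pi$ unitary. Then for any $G$-stable subspace $W'\le V$, the orthogonal complement $W'^{\perp}$ is also $G$-stable. Inducting on $\dim V$ gives a decomposition of $V$ into a direct sum of irreducible subrepresentations (Definitions~\ref{subrep-def} and \ref{irrep-def}). Grouping isomorphic summands gives $V\cong\bigoplus_i W_i^{\oplus m_i}$, with the $W_i$ pairwise nonisomorphic. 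Here $m_i$ is, for the moment, just the number of summands isomorphic to $W_i$ in this particular decomposition.

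Second, orthonormality of irreducible characters: $\langle\chi_i,\chi_j\rangle=\delta_{ij}$ under the inner product \eqref{char-def}. For irreps $W,W'$ with matrices $\pi_W,\pi_{W'}$ and any linear map $X:W'\to W$, I would form the average $\phi_X=\frac{1}{|G|}\sum_g \pi_W(g)X\pi_{W'}(g)^{-1}$. This map is $G$-linear. By Schur's lemma (Lemma~\ref{schur}), $\phi_X=0$ if $W\not\cong W'$. If $W=W'$, then $\phi_X=\frac{\operatorname{Tr}X}{\dim W}I$, by taking traces. Choosing $X$ to be matrix units gives the orthogonality relations for the matrix coefficients. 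Summing over diagonal entries, and using $\chi(g^{-1})=\chi(g)^*$, gives $\langle\chi_i,\chi_j\rangle=\delta_{ij}$.

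Third, the multiplicity formula. The character of a direct sum is the sum of the characters, since the trace is additive on block-diagonal matrices. So $\varphi=\sum_j m_j\chi_j$, and linearity of the inner product gives $\langle\varphi,\chi_i\rangle=\sum_j m_j\delta_{ji}=m_i$. As a byproduct, $m_i$ depends only on $\varphi$ and not on the chosen decomposition, which justifies writing the decomposition in the statement.

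The only real work is the orthogonality step, in particular the bookkeeping that turns the Schur average into $\frac{1}{|G|}\sum_g\chi_i(g)\chi_j(g)^*=\delta_{ij}$. The identity $\pi(g^{-1})=\pi(g)^{-1}$, together with unitarity or $\chi(g^{-1})=\chi(g)^*$, handles the complex conjugate. Everything else is standard (Serre, Ch.~1--2), so in the paper it would be sufficient to cite it rather than reprove it.
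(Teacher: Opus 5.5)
Your argument is correct. It is the standard textbook proof: complete reducibility by averaging an inner product, Schur-lemma orthogonality $\langle\chi_i,\chi_j\rangle=\delta_{ij}$, and linearity applied to $\varphi=\sum_j m_j\chi_j$. The paper does not prove this lemma; it imports it from Serre and Fulton--Harris, whose proof is the one you outline, so citing it as you suggest matches how the paper treats it.
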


\begin{lemma}[Isotypic projector formula]
Let $G$ be a finite group, let $\pi:G\to\operatorname{GL}(V)$ be a finite-dimensional complex representation, and let $W_i$ be an irreducible representation of $G$ occurring in $V$, with character $\chi_i$ and dimension $d_i$.
The projection $\Pi_i$ of $V$ onto the $W_i$-isotypic component is given by
\begin{equation}
\Pi_i=\frac{d_i}{|G|}\sum_{g\in G}\chi_i(g)^*\pi(g),
\label{isotypic-projector-formula}
\end{equation}
\end{lemma}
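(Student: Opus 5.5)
The plan is to follow the standard argument from Serre. First show that $\Pi_i$ is a $G$-linear map. Then use Schur's lemma (Lemma~\ref{schur}) to reduce it to a scalar on every irreducible summand of $V$. Finally, identify that scalar using the orthogonality relations for irreducible characters.

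\emph{Step 1 ($G$-linearity).} For $h\in G$, I will compute
\begin{equation}
\pi(h)\Pi_i\pi(h)^{-1}=\frac{d_i}{|G|}\sum_{g\in G}\chi_i(g)^*\pi(hgh^{-1}).
\end{equation}
Reindexing by $g'=hgh^{-1}$ and using that $\chi_i$ is a class function gives back $\Pi_i$. So $\Pi_i\in\operatorname{Hom}_G(V,V)$.

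\emph{Step 2 (reduction to irreducibles).} By Lemma~\ref{irrep-decomp}, write $V$ as a direct sum of irreducible subrepresentations $W\subseteq V$. Each such $W$ is $\pi$-stable, so $\Pi_i$ maps $W$ into itself. The restriction $\Pi_i|_W$ is a $G$-linear endomorphism of an irreducible representation, so by Lemma~\ref{schur} it equals $\lambda_W I_W$. Taking traces on $W$ gives
\begin{equation}
\lambda_W\dim W=\frac{d_i}{|G|}\sum_{g}\chi_i(g)^*\chi_W(g)=d_i\braket{\chi_W,\chi_i}.
\end{equation}

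\emph{Step 3 (orthogonality).} I will invoke the orthogonality relations for irreducible characters of a finite group~\cite{serreLinearRepresentationsFinite1977}: $\braket{\chi_W,\chi_i}=1$ if $W\cong W_i$ and $0$ otherwise. Since $\dim W=d_i$ when $W\cong W_i$, we get $\lambda_W=1$ on every summand isomorphic to $W_i$ and $\lambda_W=0$ on all other summands. Hence $\Pi_i$ is the identity on the $W_i$-isotypic component and vanishes on its complement $\bigoplus_{j\neq i}W_j^{\oplus m_j}$. It is therefore idempotent with image exactly that component.

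This also shows that the isotypic component, unlike the individual summands, is independent of the chosen decomposition, because it equals the image of the canonically defined $\Pi_i$. If $\pi$ is unitary, $\pi(g)^\dagger=\pi(g^{-1})$ and $\chi_i(g^{-1})=\chi_i(g)^*$ give $\Pi_i^\dagger=\Pi_i$, so the projection is orthogonal. This is the form used later for the isometries $S_C$.

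The only ingredient not already stated earlier in the paper is the character orthogonality relation. I would cite it from Serre (Sec.~2.3) rather than reprove it. If a self-contained argument were desired, the main work would be the standard averaging trick: apply Schur's lemma to $\frac{1}{|G|}\sum_g \pi_W(g)X\pi_{W_i}(g)^{-1}$ for arbitrary linear maps $X:W_i\to W$, then take traces. Everything else is routine.
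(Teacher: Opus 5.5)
Your proof is correct and is the standard argument from Serre (Sec.~2.6): establish $G$-linearity, apply Schur's lemma on each irreducible summand, and use character orthogonality to identify the scalar. The paper does not prove this lemma itself but cites it from the standard references, so there is nothing further to compare, except that your Step~1 is exactly the commutation lemma the paper proves immediately afterwards, which you could cite instead.
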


\begin{lemma}
The projector $\Pi_i$ commutes with the representation $\pi$, i.e.,
\begin{equation}
\Pi_i\pi(g)=\pi(g)\Pi_i.
\label{projector-commutation}
\end{equation}
\end{lemma}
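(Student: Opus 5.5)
We need to prove the last stated lemma: $\Pi_i$ commutes with $\pi(g)$. The plan is a direct computation from the isotypic projector formula \eqref{isotypic-projector-formula}, using only that $\chi_i$ is a class function.

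First, conjugate $\Pi_i$ by $\pi(h)$ for a fixed $h\in G$. Inserting the formula gives
\begin{equation}
\pi(h)\Pi_i\pi(h)^{-1}=\frac{d_i}{|G|}\sum_{g\in G}\chi_i(g)^*\,\pi(hgh^{-1}),
\end{equation}
where we use that $\pi$ is a homomorphism, so $\pi(h)\pi(g)\pi(h^{-1})=\pi(hgh^{-1})$.

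Second, reindex the sum with $g'=hgh^{-1}$. Conjugation by $h$ is a bijection of $G$, so the sum still runs over all of $G$. Since $\chi_i$ is a class function, $\chi_i(g)=\chi_i(h^{-1}g'h)=\chi_i(g')$. The sum therefore becomes $\frac{d_i}{|G|}\sum_{g'}\chi_i(g')^*\pi(g')=\Pi_i$.

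Multiplying $\pi(h)\Pi_i\pi(h)^{-1}=\Pi_i$ on the right by $\pi(h)$ yields \eqref{projector-commutation}.

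None of these steps is a real obstacle. The only points to get right are the reindexing bijection and invoking the class-function property of the character, which was stated earlier.
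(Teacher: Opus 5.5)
Your proof is correct and follows essentially the same route as the paper, which reindexes the sum by conjugation ($t\mapsto gtg^{-1}$) and invokes the class-function property of $\chi_i$. The only difference is cosmetic: you show $\pi(h)\Pi_i\pi(h)^{-1}=\Pi_i$ and then multiply by $\pi(h)$, whereas the paper transforms $\Pi_i\pi(g)$ directly into $\pi(g)\Pi_i$.
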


\begin{proof}
Indeed, we have
\begin{equation}
\begin{aligned}
\Pi_i\pi(g)
&=\frac{d_i}{|G|}\sum_{t\in G}\chi_i(t)^*\pi(t)\pi(g)
=\frac{d_i}{|G|}\sum_{t\in G}\chi_i(t)^*\pi(tg)\\
&=\frac{d_i}{|G|}\sum_{t\in G}\chi_i(gtg^{-1})^*\pi(gt)
=\frac{d_i}{|G|}\sum_{t\in G}\chi_i(t)^*\pi(gt)
=\pi(g)\Pi_i,
\end{aligned}
\end{equation}
using the fact that $\chi_i$ is a class function, as desired.
\end{proof}

We prove two more lemmas that will be useful later.

\begin{lemma}
Let $C$ a representation of $G$. Then we can decompose
\begin{equation}
\operatorname{End}(C)\cong\mathbb CI_C\oplus\operatorname{End}_0(C),
\qquad
\operatorname{End}_0(C)=\{A\in\operatorname{End}(C):\operatorname{Tr}(A)=0\},
\end{equation}
and $\mathbb CI_C=\{\lambda I_C:\lambda\in\mathbb C\}$ carries the trivial representation.
\label{end-decomp}
\end{lemma}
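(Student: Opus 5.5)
The plan is to exhibit the splitting explicitly and then check that both summands are $G$-stable under the conjugation action $g\cdot A=\pi(g)A\pi(g)^{-1}$ on $\operatorname{End}(C)$.

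\emph{Vector-space decomposition.} Take the linear projection
\begin{equation}
A\mapsto \tfrac{\operatorname{Tr}(A)}{\dim C}I_C .
\end{equation}
Every $A\in\operatorname{End}(C)$ can then be written uniquely as
\begin{equation}
A=\tfrac{\operatorname{Tr}(A)}{\dim C}I_C+\left(A-\tfrac{\operatorname{Tr}(A)}{\dim C}I_C\right).
\end{equation}
The second term is traceless. Uniqueness follows because the intersection is zero: if $\lambda I_C$ has trace zero, then $\lambda\dim C=0$, so $\lambda=0$. Here we work over $\mathbb{C}$ with $\dim C\geq 1$, so division by $\dim C$ is legitimate.

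\emph{$G$-stability.} For any $g\in G$,
\begin{equation}
g\cdot(\lambda I_C)=\pi(g)\lambda I_C\pi(g)^{-1}=\lambda I_C .
\end{equation}
So $\mathbb{C}I_C$ is invariant and every group element acts on it as the identity. That is exactly the trivial representation. For the complement, cyclicity of the trace gives $\operatorname{Tr}(\pi(g)A\pi(g)^{-1})=\operatorname{Tr}(A)$, so $\operatorname{End}_0(C)$ is also invariant. Equivalently, the projection above is $G$-equivariant, because $\operatorname{Tr}$ is invariant under conjugation and $I_C$ is fixed.

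An internal direct sum of two $G$-invariant subspaces is a direct sum of subrepresentations. This yields the claimed isomorphism $\operatorname{End}(C)\cong\mathbb{C}I_C\oplus\operatorname{End}_0(C)$.

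No step is a real obstacle. The only point needing care is that the decomposition is a decomposition of representations and not merely of vector spaces. That is what the invariance checks establish, and they rest entirely on cyclicity of the trace.
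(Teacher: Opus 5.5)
Your proposal is correct and follows essentially the same route as the paper. Both use the splitting $A=\frac{\operatorname{Tr}(A)}{\dim C}I_C+\bigl(A-\frac{\operatorname{Tr}(A)}{\dim C}I_C\bigr)$ together with conjugation-invariance of the trace to show that both summands are subrepresentations. Your zero-intersection argument for uniqueness and your explicit check that $G$ fixes $\mathbb{C}I_C$ pointwise just spell out steps the paper asserts in passing.
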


\begin{proof}
Indeed, for any operator $X\in\operatorname{End}(C)$, we can write
\begin{equation}
X=\frac{\operatorname{Tr}(X)}{\dim C}I_C+\left(X-\frac{\operatorname{Tr}(X)}{\dim C}I_C\right).
\end{equation}
This is by definition unique, and it's clear that $\frac{\operatorname{Tr}(X)}{\dim C}I_C\in\mathbb CI_C$ (the trivial representation), and $X-\frac{\operatorname{Tr}(X)}{\dim C}I_C\in\operatorname{End}_0(C)$. Furthermore, it's clear that since
\begin{equation}
\operatorname{Tr}(\pi(g)A\pi(g)^{-1})=\operatorname{Tr}(\pi(g)^{-1}\pi(g)A)=\operatorname{Tr}(A),
\end{equation}
both of these spaces are $G$-stable, i.e., they're both subrepresentations. So we can indeed write
\begin{equation}
\operatorname{End}(C)\cong\mathbb CI_C\oplus\operatorname{End}_0(C),
\end{equation}
as desired.
\end{proof}

\begin{lemma}
Let $\operatorname{End}_0(C)^G$ denote the $G$-invariant subspace of $\operatorname{End}_0(C)$ from Lemma~\ref{end-decomp}. If $C$ is an irrep of $G$, then
\begin{equation}
\operatorname{End}_0(C)^G=\{0\}.
\end{equation}
\label{unique-trivial-rep-end}
\end{lemma}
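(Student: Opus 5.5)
The plan is to prove that the fixed-point space of the conjugation action on $\operatorname{End}_0(C)$ is trivial by combining Lemma~\ref{end-decomp} with Schur's lemma (Lemma~\ref{schur}). First I will identify the $G$-invariant elements of $\operatorname{End}(C)$ under the action $g\cdot A=\pi(g)A\pi(g)^{-1}$. By the remark following Eq.~\eqref{hom-rep}, $A$ is fixed by every $g$ exactly when $A\pi(g)=\pi(g)A$ for all $g$, i.e.\ when $A\in\operatorname{Hom}_G(C,C)$. Thus
\begin{equation}
\operatorname{End}(C)^G=\operatorname{Hom}_G(C,C).
\end{equation}

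Next, since $C$ is irreducible and finite-dimensional over $\mathbb C$, part~2 of Schur's lemma gives $\operatorname{Hom}_G(C,C)=\mathbb C I_C$. Now take any $A\in\operatorname{End}_0(C)^G$. Since $\operatorname{End}_0(C)\subseteq\operatorname{End}(C)$, $A$ is also $G$-invariant in $\operatorname{End}(C)$, so $A=\lambda I_C$ for some $\lambda\in\mathbb C$. The trace-zero condition then gives $0=\operatorname{Tr}(A)=\lambda\dim C$. Because $C$ is irreducible, $C\neq\{0\}$ and $\dim C\geq 1$, so $\lambda=0$ and hence $A=0$. Equivalently, by Lemma~\ref{end-decomp}, the decomposition $\operatorname{End}(C)=\mathbb C I_C\oplus\operatorname{End}_0(C)$ is into $G$-stable summands. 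Invariants split accordingly, and since all invariants already lie in $\mathbb C I_C$ and $\mathbb C I_C\cap\operatorname{End}_0(C)=\{0\}$, the $\operatorname{End}_0(C)$ summand has no nonzero invariants.

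There is no real obstacle. The only point to state carefully is the identification of fixed points of the conjugation action with intertwiners, which the earlier remark already supplies.
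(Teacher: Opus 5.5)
Your proposal is correct and follows the paper's route: a $G$-invariant $A$ commutes with every $\pi(g)$, so Schur's lemma gives $A=\lambda I_C$, and $\operatorname{Tr}(A)=0$ forces $\lambda=0$. You also state explicitly that $\dim C\geq 1$ is needed to conclude $\lambda=0$, which the paper leaves implicit.
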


\begin{proof}
If $A\in\operatorname{End}_0(C)^G$ we have $g\cdot A=A$ and $\operatorname{Tr}(A)=0$. But we have that
\begin{equation}
g\cdot A=\pi(g)A\pi(g)^{-1}=A
\implies\pi(g)A=A\pi(g),
\end{equation}
i.e., $A$ commutes with the representation. But that means that $A\in\operatorname{Hom}_G(C,C)$. So from Schur's lemma we have that $A=\lambda I_C$. And since $\operatorname{Tr}(A)=0$ we get $\lambda=0\implies\operatorname{End}_0(C)^G=\{0\}$.
\end{proof}

\begin{remark}
This means that if $C$ is an irrep, then the trivial representation doesn't appear in $\operatorname{End}_0(C)$. So the decomposition $\operatorname{End}(C)\cong\mathbb CI_C\oplus\operatorname{End}_0(C)$ has only one instance of the trivial representation.
\end{remark}

\begin{remark}
Also note that $\operatorname{End}_G(C)=\operatorname{End}(C)^G$. The former refers to linear transformations $\phi:C\to C$ such that $\phi(\pi(g)v)=\pi(g)\phi(v)$ for all $v$, whereas the latter refers to linear operators $\psi:C\to C$ such that $g\cdot\psi=\psi$. This is also true more generally for $\operatorname{Hom}_G(A,B)=\operatorname{Hom}(A,B)^G$.
\end{remark}

\begin{lemma}\label{hom-tensor-prod}
Let $A,B$ representations of a finite group $G$, and $M,N$ finite-dimensional vector spaces carrying the trivial $G$ action. Then, we have that
\begin{equation}
\operatorname{Hom}_G(M\otimes A,N\otimes B)
\cong\operatorname{Hom}(M,N)\otimes\operatorname{Hom}_G(A,B).
\end{equation}
\end{lemma}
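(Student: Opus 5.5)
We will prove the isomorphism by writing down an explicit linear map and checking three things: it is well defined, it is injective, and its image is exactly the $G$-equivariant maps. Define
\begin{equation}
\Psi:\operatorname{Hom}(M,N)\otimes\operatorname{Hom}(A,B)\to\operatorname{Hom}(M\otimes A,N\otimes B),\qquad \Psi(X\otimes Y)=X\otimes Y .
\end{equation}
This is the standard isomorphism of vector spaces. Since every space involved is finite-dimensional, it suffices to compare dimensions, or equivalently to note that $\Psi$ sends the product of matrix-unit bases to a basis.

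The key observation is how $G$ acts. $G$ acts on $M\otimes A$ by $I_M\otimes\pi_A(g)$ and on $N\otimes B$ by $I_N\otimes\pi_B(g)$. Under the conjugation action of Eq.~\eqref{hom-rep}, we therefore get
\begin{equation}
g\cdot\Psi(X\otimes Y)=(I_N\otimes\pi_B(g))(X\otimes Y)(I_M\otimes\pi_A(g)^{-1})=X\otimes(g\cdot Y)=\Psi\bigl(X\otimes(g\cdot Y)\bigr).
\end{equation}
So $\Psi$ is $G$-equivariant when $\operatorname{Hom}(M,N)$ is given the trivial action and $\operatorname{Hom}(A,B)$ its natural action. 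Taking $G$-fixed points of an equivariant isomorphism gives an isomorphism of fixed-point spaces:
\begin{equation}
\operatorname{Hom}_G(M\otimes A,N\otimes B)=\operatorname{Hom}(M\otimes A,N\otimes B)^G\cong\bigl(\operatorname{Hom}(M,N)\otimes\operatorname{Hom}(A,B)\bigr)^G .
\end{equation}
The first equality is the remark that $\operatorname{Hom}_G=\operatorname{Hom}^G$.

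It remains to show
\begin{equation}
\bigl(\operatorname{Hom}(M,N)\otimes\operatorname{Hom}(A,B)\bigr)^G=\operatorname{Hom}(M,N)\otimes\operatorname{Hom}(A,B)^G ,
\end{equation}
and this is the only step needing care. Fix a basis $X_i$ of $\operatorname{Hom}(M,N)$ and write an arbitrary element uniquely as $\sum_i X_i\otimes Y_i$. Because $G$ acts only on the second factor,
\begin{equation}
g\cdot\sum_i X_i\otimes Y_i=\sum_i X_i\otimes(g\cdot Y_i).
\end{equation}
By uniqueness of the expansion in the basis $X_i$, the element is fixed if and only if each $Y_i$ is fixed. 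That gives the inclusion "$\subseteq$"; the reverse inclusion is immediate. One could instead average with the Reynolds operator $\frac{1}{|G|}\sum_g g\cdot$, which acts as $\mathrm{id}\otimes\frac{1}{|G|}\sum_g g\cdot$, but the basis argument already suffices and needs no finiteness. Combining the displays, and using $\operatorname{Hom}(A,B)^G=\operatorname{Hom}_G(A,B)$ again, yields the lemma.
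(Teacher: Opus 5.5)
Your proof is correct and follows the same route as the paper: you identify $\operatorname{Hom}(M\otimes A,N\otimes B)$ equivariantly with $\operatorname{Hom}(M,N)\otimes\operatorname{Hom}(A,B)$ and then take $G$-fixed points. The paper reaches that identification through $(M\otimes A)^*\otimes(N\otimes B)$, whereas you write the map $X\otimes Y\mapsto X\otimes Y$ directly; your basis-expansion argument also spells out a step the paper only asserts, namely that the fixed points of $\operatorname{Hom}(M,N)\otimes\operatorname{Hom}(A,B)$, with $G$ acting trivially on the first factor, are exactly $\operatorname{Hom}(M,N)\otimes\operatorname{Hom}(A,B)^G$.
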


\begin{proof}
Firstly, it's easy to see that
\begin{equation}
\begin{aligned}
\operatorname{Hom}(M\otimes A,N\otimes B)
&\cong(M\otimes A)^*\otimes(N\otimes B)\\
&\cong(M^*\otimes N)\otimes(A^*\otimes B)\\
&\cong\operatorname{Hom}(M,N)\otimes\operatorname{Hom}(A,B).
\end{aligned}
\end{equation}
These natural identifications respect the $G$ action. And since $M,N$ carry the trivial $G$ action, we see that $\operatorname{Hom}_G(M\otimes A,N\otimes B)\cong\operatorname{Hom}(M,N)\otimes\operatorname{Hom}_G(A,B)$, as desired.
\end{proof}

\begin{lemma}\label{EndGU-decomposition}
Let $V$ a finite-dimensional complex representation of a finite group $G$ which decomposes as $V\cong\bigoplus_U M_U\otimes U$ in terms of irreps $U$ of $G$ and multiplicity spaces $M_U$. Then, we have that
\begin{equation}
\operatorname{End}_G(V)\cong\bigoplus_U\operatorname{End}(M_U)\otimes I_U.
\end{equation}
\end{lemma}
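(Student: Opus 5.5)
The plan is to reduce to the isotypic components and then apply Lemma~\ref{hom-tensor-prod} together with Schur's lemma. First, fix the decomposition $V=\bigoplus_U V_U$ into isotypic components, with $V_U\cong M_U\otimes U$. Let $\Pi_U$ be the isotypic projectors from Eq.~\eqref{isotypic-projector-formula}, which commute with $\pi(g)$ by Eq.~\eqref{projector-commutation}. Any $\phi\in\operatorname{End}_G(V)$ then splits into blocks $\Pi_{U'}\phi\Pi_U$. Each block is itself a $G$-linear map $V_U\to V_{U'}$, because the projectors are $G$-linear, so
\begin{equation}
\operatorname{End}_G(V)\cong\bigoplus_{U,U'}\operatorname{Hom}_G(V_U,V_{U'}).
\end{equation}

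Next, apply Lemma~\ref{hom-tensor-prod} with $M=M_U$, $N=M_{U'}$, $A=U$, $B=U'$. Since the multiplicity spaces carry the trivial action, this gives
\begin{equation}
\operatorname{Hom}_G(M_U\otimes U,M_{U'}\otimes U')\cong\operatorname{Hom}(M_U,M_{U'})\otimes\operatorname{Hom}_G(U,U').
\end{equation}
Schur's lemma (Lemma~\ref{schur}) then identifies the second factor. For nonisomorphic irreps $U\neq U'$, $\operatorname{Hom}_G(U,U')=0$, so every off-diagonal block vanishes. For $U=U'$, $\operatorname{Hom}_G(U,U)=\mathbb C I_U$. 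Each diagonal block is therefore $\operatorname{End}(M_U)\otimes I_U$, and summing over $U$ gives the claim.

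The main obstacle is checking that the isomorphism in Lemma~\ref{hom-tensor-prod} is realized concretely as $X\otimes Y\mapsto X\otimes Y$ acting on $M_U\otimes U$. Without this, the statement is only an abstract dimension count, not the operator form $\operatorname{End}(M_U)\otimes I_U$ that is needed later in Eq.~\eqref{G-invariant-projector}. To settle it, expand $\phi\in\operatorname{End}(M_U\otimes U)$ as $\sum_k X_k\otimes Y_k$ with the $X_k$ linearly independent. Because $G$ acts only on the second factor, $g\cdot\phi=\sum_k X_k\otimes (g\cdot Y_k)$. Linear independence of the $X_k$ then forces $g\cdot Y_k=Y_k$ for every $k$, so each $Y_k$ lies in $\operatorname{End}_G(U)=\mathbb C I_U$. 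Hence $\phi=X\otimes I_U$ with $X=\sum_k \lambda_k X_k$, which is exactly the required form.
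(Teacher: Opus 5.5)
Your proposal is correct and follows the paper's own proof: split $\operatorname{End}_G(V)$ into blocks between isotypic components, apply the lemma $\operatorname{Hom}_G(M\otimes A,N\otimes B)\cong\operatorname{Hom}(M,N)\otimes\operatorname{Hom}_G(A,B)$, then use Schur's lemma to kill the off-diagonal blocks and reduce each diagonal block to $\operatorname{End}(M_U)\otimes I_U$. Your extra linear-independence argument is sound; it makes explicit the fixed-point step that the paper leaves implicit in the phrase ``these natural identifications respect the $G$ action.''
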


\begin{proof}
It is clear that
\begin{equation}
\begin{aligned}
\operatorname{Hom}_G(V,V)
&=\operatorname{Hom}_G\left(\bigoplus_U M_U\otimes U,\bigoplus_{U'}M_{U'}\otimes U'\right)\\
&\cong\bigoplus_{U,U'}\operatorname{Hom}_G(M_U\otimes U,M_{U'}\otimes U')
\cong\bigoplus_U\operatorname{End}(M_U)\otimes I_U,
\end{aligned}
\end{equation}
where we've used Lemma~\ref{hom-tensor-prod}, and the last isomorphism follows from Schur's lemma.
\end{proof}

We also use Burnside's theorem, see Ref.~\cite{goodmanSymmetryRepresentationsInvariants2009}.

\begin{lemma}[Burnside's theorem]
The only irreducible algebra of linear transformations on the finite-dimensional vector space $V$ over $\mathbb C$ of dimension greater than $1$ is the algebra of all linear transformations mapping $V$ into $V$, i.e., $\operatorname{End}(V)$.
\label{burnside}
\end{lemma}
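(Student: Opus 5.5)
The plan is the classical elementary route: produce a rank-one operator inside the algebra, then use irreducibility to generate every rank-one operator, and hence all of $\operatorname{End}(V)$. Throughout, $\mathcal A\subseteq\operatorname{End}(V)$ is a subalgebra with no invariant subspaces other than $\{0\}$ and $V$, and $\dim V=N\geq2$.

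\emph{Preliminaries.} First I would check that $\mathcal A\neq\{0\}$. If $\mathcal A=\{0\}$, every subspace would be invariant, contradicting irreducibility since $N\geq2$. The subspace $\{v:\mathcal A v=0\}$ is invariant and not all of $V$, so it is $\{0\}$. Hence for every $v\neq0$, $\mathcal A v$ is a nonzero invariant subspace, which forces $\mathcal A v=V$.

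Dually, consider the transposes $\mathcal A^{T}=\{A^{T}\}$ acting on $V^*$. If $\Lambda\subseteq V^*$ is invariant under $\mathcal A^{T}$, then its annihilator $\Lambda^{\perp}\subseteq V$ is $\mathcal A$-invariant. So $V^*$ is also irreducible, and $\mathcal A^{T}\varphi=V^*$ for every $\varphi\neq0$.

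\emph{Key step: a rank-one element.} Choose $T\in\mathcal A$ of minimal nonzero rank $r$, and let $R=\operatorname{ran}T$. I would argue by contradiction that $r=1$.

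For any $A\in\mathcal A$, the map $TA|_{R}$ sends $R$ into $R$. I would work with operators of the form $TAT$, restricting to the slice of $\mathcal A$ where they are nonzero.
\begin{enumerate}
\item Pick $A_0$ with $TA_0T\neq0$. Such an $A_0$ exists because $\mathcal A\,Tx=V$ for $Tx\neq0$, so some $A$ sends $Tx$ outside $\ker T$.
\item Since $\operatorname{ran}(TA_0T)\subseteq R$ and the rank is minimal, $\operatorname{ran}(TA_0T)=R$. Hence $TA_0|_R$ is invertible on $R$.
\item Over $\mathbb C$, $TA_0|_R$ has an eigenvalue $\lambda$. Then $TA_0T-\lambda T\in\mathcal A$ has range in $R$ but is singular on $R$, so its rank is below $r$. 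Minimality forces $TA_0T=\lambda T$.
\end{enumerate}
This gives: for every $A\in\mathcal A$, $TA$ acts on $R$ as a scalar.

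Now suppose $\dim R\geq2$. Take independent $x,y\in R$ with $x=Tz$, and pick $A$ with $Az$ arranged so that $TAT z$ is not proportional to $x$. Such a choice should come from the transitivity $\mathcal A z'=V$ together with $T|_R\neq0$. This contradicts $TAT=\lambda T$, so $r=1$.

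I expect this step to be the main obstacle. The bookkeeping that $T|_R$ is nonzero (so that $T^2\neq0$, or else composing suitably to get the needed vectors) has to be done carefully. If the direct contradiction resists, I would fall back on the double-commutant route instead: show the commutant of $\mathcal A$ is $\mathbb C I$ by Schur's lemma (Lemma~\ref{schur}) using an eigenspace argument, then apply density on $V^{\oplus N}$.

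\emph{Conclusion.} Write the rank-one element as $T=u\varphi$, meaning $v\mapsto\varphi(v)u$, with $u\neq0$ and $\varphi\neq0$. For $A,B\in\mathcal A$,
\[
ATB=(Au)\,(B^{T}\varphi)\in\mathcal A .
\]
By the preliminaries, $Au$ ranges over all of $V$ and $B^{T}\varphi$ ranges over all of $V^*$. So every rank-one operator lies in $\mathcal A$. Rank-one operators span $\operatorname{End}(V)$, hence $\mathcal A=\operatorname{End}(V)$.
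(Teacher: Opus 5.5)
Your proof is correct, and it takes a different route from the paper. The paper gives no proof: it cites Goodman--Wallach, where Burnside's theorem is derived from Schur's lemma and the Jacobson density theorem. That is essentially the double-commutant fallback you mention. Your main argument is the classical elementary minimal-rank proof.

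The step you flagged as the obstacle is in fact immediate, but drop the appeal to $T|_R\neq0$. It is not needed, and it can fail: in $\operatorname{End}(\mathbb C^2)$ the minimal-rank element $T=E_{12}$ has $T^2=0$. First, you have $TA|_R=\lambda_A I_R$ for every $A\in\mathcal A$; when $TAT=0$ this holds trivially with $\lambda_A=0$. Now take any $0\neq x\in R$. Transitivity gives $\mathcal A x=V$, so
\[
R=T(V)=T(\mathcal A x)=\{TAx:A\in\mathcal A\}\subseteq\mathbb C x ,
\]
which forces $r=1$. In your own phrasing, write $x=Tz$ and $y=Tw$, and choose $A$ with $Ax=w$. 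You must control $ATz$, not $Az$. Then $TATz=y$ is not proportional to $x$.

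Here is what each route buys.
\begin{itemize}
\item \emph{Your route.} It uses only that $\mathbb C$ is algebraically closed, through one eigenvalue of $TA_0|_R$. It needs no semisimplicity or density machinery. It also works verbatim for non-unital $\mathcal A$, which is exactly where the hypothesis $\dim V>1$ enters: it rules out $\mathcal A=\{0\}$ in your preliminaries.
\item \emph{The density route.} It is less hands-on but more conceptual and more general. It gives multiple transitivity on arbitrary simple modules and the double-commutant theorem, and it fits the Schur-lemma framework the paper already uses.
\end{itemize}
For the paper's actual application, $\mathcal A_U=\operatorname{span}\{\pi_U(g):g\in G\}$ is unital, so either proof suffices.
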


\begin{remark}
The notion of reducibility for algebras is the same as for representations, i.e., a collection of linear transformations is reducible if it has a non-trivial invariant subspace \cite{goodmanSymmetryRepresentationsInvariants2009}.
\end{remark}

\subsection{The Schwinger representation and \texorpdfstring{$SU(2)$}{SU(2)} tensor operators}

The Schwinger model identifies the fixed total photon number subspace $H_n$ with the spin-$n/2$ representation \cite{sakuraiModernQuantumMechanics2021}. We denote this representation by $R_n:SU(2)\to U(H_n)$. Under this identification, the basis states are mapped from Fock to spin states as
\begin{equation}
\ket{m}\otimes\ket{n-m}\longmapsto\ket{j=n/2,m_j=(2m-n)/2}.
\label{schwinger-basis-map}
\end{equation}
The same identification realizes the angular momentum operators in terms of the single-mode annihilation operators $\hat a_1,\hat a_2$ as
\begin{equation}
J_+=\hat a_1^\dagger\otimes\hat a_2,
\qquad
J_-=\hat a_1\otimes\hat a_2^\dagger,
\qquad
J_z=\frac{1}{2}\left(\hat n_1\otimes I_2-I_1\otimes\hat n_2\right).
\label{schwinger-angular-momentum}
\end{equation}
These operators preserve $H_n$ because they commute with the total number operator
\begin{equation}
\hat N=\hat n_1\otimes I_2+I_1\otimes\hat n_2,
\qquad
\hat N\big|_{H_n}=nI_{H_n}.
\label{total-number-operator}
\end{equation}
The operators in Eq.~\eqref{schwinger-angular-momentum} satisfy the usual angular momentum commutation relations and have the standard action on the spin basis in Eq.~\eqref{schwinger-basis-map}.

The representation $R_n$ describes the action of $SU(2)$ on states in $H_n$. For the Knill--Laflamme conditions, however, we need the induced action on operators in $\operatorname{End}(H_n)$.  The group $SU(2)$ acts on $\operatorname{End}(H_n)$ by conjugation,
\begin{equation}
U\cdot A=R_n(U)AR_n(U)^\dagger,
\qquad
U\in SU(2),\quad A\in\operatorname{End}(H_n).
\label{operator-conjugation-action}
\end{equation}

It remains to determine how this operator representation decomposes into irreducible representations of $SU(2)$. Let's denote by $S_j$ the spin-$j$ irreducible representation of $SU(2)$, with $\dim S_j=2j+1$. We first state the Clebsch--Gordan theorem in its standard representation-theoretic form \cite{hallLieGroupsLie2015}.

\begin{lemma}[Clebsch--Gordan theorem]
Let $r,s\in\mathbb Z$, $r\geq s>0$. Then we have that
\begin{equation}
S_{r/2}\otimes S_{s/2}
\cong S_{(r+s)/2}\oplus S_{(r+s-2)/2}\oplus\dots\oplus S_{(r-s+2)/2}\oplus S_{(r-s)/2}.
\label{Clebsch-gordan-equation}
\end{equation}
\label{Clebsch-gordan}
\end{lemma}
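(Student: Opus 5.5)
The plan is to compare characters, using that finite-dimensional representations of the compact group $SU(2)$ are completely reducible and determined up to isomorphism by their characters. Every $U\in SU(2)$ is unitarily diagonalizable with eigenvalues $e^{\pm i\theta}$, so it is conjugate in $SU(2)$ to $D_\theta=\operatorname{diag}(e^{i\theta},e^{-i\theta})$. Since characters are class functions, it therefore suffices to compare characters on the torus $\{D_\theta\}$.

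\textbf{Step 1: the character of $S_{r/2}$.} I would realize $S_{r/2}$ concretely, for instance as homogeneous degree-$r$ polynomials in two variables, or as $H_r$ via the Schwinger map of Eq.~\eqref{schwinger-basis-map}. In this model $D_\theta$ acts diagonally on the basis $\ket{m}\otimes\ket{r-m}$ with eigenvalue $e^{i(2m-r)\theta}$. This gives
\begin{equation}
\chi_{r/2}(\theta)=\sum_{k=0}^{r}e^{i(r-2k)\theta}=\frac{\sin((r+1)\theta)}{\sin\theta}.
\end{equation}
I would take as known that this model is irreducible and exhausts the spin-$r/2$ irrep; this is the standard classification, for example in Ref.~\cite{hallLieGroupsLie2015}.

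\textbf{Step 2: the product identity.} The character of a tensor product is the product of characters, so it remains to prove
\begin{equation}
\chi_{r/2}(\theta)\,\chi_{s/2}(\theta)=\sum_{t=0}^{s}\chi_{(r+s-2t)/2}(\theta).
\end{equation}
Write $x=e^{i\theta}$ and multiply both sides by $x-x^{-1}$. The left side becomes $(x^{r+1}-x^{-r-1})\sum_{k=0}^{s}x^{s-2k}$. The right side becomes $\sum_{t=0}^{s}\bigl(x^{r+s-2t+1}-x^{-(r+s-2t+1)}\bigr)$. Expanding the left side, the terms $x^{r+1+s-2k}$ match the positive-power terms on the right with $t=k$. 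The terms $-x^{-r-1+s-2k}$ match the negative-power terms on the right after the reindexing $k\mapsto s-k$. The hypothesis $r\ge s$ ensures that every exponent $r-s+1,\dots,r+s+1$ is positive, so no cancellation between the two groups occurs and the list of spins on the right is exactly as stated.

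\textbf{Step 3: conclusion.} Both sides of Eq.~\eqref{Clebsch-gordan-equation} are finite-dimensional unitary representations of $SU(2)$ with equal characters. Orthogonality of irreducible characters (Haar measure, or equivalently the Weyl integration formula on the torus) then gives equal multiplicities of each $S_j$, which yields the claimed isomorphism.

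The main obstacle is not the algebra in Step 2, which is a telescoping identity, but justifying the compact-group tools. These are: complete reducibility, the fact that characters determine the representation, and the fact that the listed $S_j$ are the irreps. If a more self-contained route were preferred, I would instead peel off highest weights. The tensor product has $J_z$-weights $m_1+m_2$ with highest weight $(r+s)/2$ of multiplicity one. This weight generates a copy of $S_{(r+s)/2}$ under $J_-$. Its orthogonal complement is $J$-invariant, and induction on the weight multiplicities, which is the same count as in Step 2, produces the remaining summands.
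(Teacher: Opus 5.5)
The paper gives no proof of this lemma. It is stated as standard background with a citation to Hall, so there is no in-paper argument to match.

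Your argument is a correct, self-contained proof, modulo the compact-group facts you explicitly import. These are complete reducibility, characters determining representations, and the identification of the $S_j$, which the paper likewise takes from the literature.

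Restricting characters to the torus $D_\theta$ carries the same information as the $J_z$-weight multiplicities. So your Step 2 is the generating-function form of the usual weight-counting proof, and your highest-weight ``peeling'' alternative is that same count made constructive. The character route gets the result from a one-line algebraic identity. The peeling route avoids Haar measure and character orthogonality at the cost of an induction.

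Two points to tighten:
\begin{itemize}
\item Cancelling the factor $x-x^{-1}$ is legitimate because $\mathbb{Z}[x,x^{-1}]$ has no zero divisors. Equivalently, you can argue by continuity at $\theta\in\pi\mathbb{Z}$, where your closed form $\sin((r+1)\theta)/\sin\theta$ is undefined.
\item If $\chi_j$ is defined by the quotient formula for every half-integer $j$, the Laurent-polynomial identity in Step 2 holds for all $r,s\geq 0$. The real role of $r\geq s$ is to make every spin $(r+s-2t)/2$, $0\leq t\leq s$, nonnegative. Then each summand is the character of a genuine irrep, rather than a formal negative-spin term that cancels against others.
\end{itemize}

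Your argument also covers the equality case $r=s$. That is the case the paper actually uses, for $\operatorname{End}(H_n)$ and for $\mathcal K_k$.
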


\begin{lemma}
For any irreducible representation $S_j$ of $SU(2)$, $S_j^*\cong S_j$.
\label{su2-self-dual}
\end{lemma}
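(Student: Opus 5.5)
The plan is to exhibit an explicit isomorphism $S_j^*\cong S_j$ and, as an alternative route, to compare characters. Character theory settles it most quickly. By Remark~\ref{dual-unitary-rep}, once $S_j$ is realized unitarily, the dual representation is the complex conjugate, so $\chi_{S_j^*}(U)=\overline{\chi_{S_j}(U)}$. Every $U\in SU(2)$ is conjugate to $\operatorname{diag}(e^{i\theta},e^{-i\theta})$. On the spin-$j$ weight basis $\ket{j,m}$ this diagonal element acts by $e^{2im\theta}$, so
\begin{equation}
\chi_{S_j}(U)=\sum_{m=-j}^{j}e^{2im\theta}.
\end{equation}
The set of weights $\{-j,\dots,j\}$ is symmetric under $m\mapsto -m$, so this sum is real and $\chi_{S_j^*}=\chi_{S_j}$. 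Because $SU(2)$ is compact, finite-dimensional representations are completely reducible and are determined up to isomorphism by their characters. Hence $S_j^*\cong S_j$.

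Since the paper so far only quotes finite-group character theory, I would also give a self-contained explicit intertwiner, so that nothing depends on compact-group character theory. The idea is to use the $SU(2)$-invariant antisymmetric form on $\mathbb C^2$: $\epsilon=\begin{pmatrix}0&1\\-1&0\end{pmatrix}$ satisfies $U^T\epsilon U=\epsilon$ for all $U\in SU(2)$, which is just $\det U=1$. Equivalently, $\epsilon U\epsilon^{-1}=\overline U$ for unitary $U$, so the defining representation $S_{1/2}$ is isomorphic to its conjugate, which by Remark~\ref{dual-unitary-rep} is its dual.

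Realize $S_j$ as $\operatorname{Sym}^{2j}(\mathbb C^2)$; in the Schwinger picture this is exactly $H_{2j}$. The form $\epsilon^{\otimes 2j}$ restricted to symmetric tensors is a nondegenerate invariant bilinear form. Concretely, $\langle j,m|j,m'\rangle_\epsilon=(-1)^{j-m}\delta_{m,-m'}$, and it is nondegenerate because it pairs weight $m$ with weight $-m$. An invariant nondegenerate bilinear form $\beta$ gives a $G$-linear isomorphism $v\mapsto\beta(v,\cdot)$ from $S_j$ to $S_j^*$, which proves the claim. Invariance of the form follows termwise from $U^T\epsilon U=\epsilon$ applied on each tensor factor.

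The only step needing care is nondegeneracy of the restricted form on $\operatorname{Sym}^{2j}$. The weight-pairing computation above handles it, or one can appeal to irreducibility: the induced map $S_j\to S_j^*$ is nonzero, $S_j^*$ is irreducible, and Schur's Lemma~\ref{schur} then makes it an isomorphism. I expect that to be the main (mild) obstacle; everything else is routine.
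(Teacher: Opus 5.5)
Your proposal is correct and more self-contained than the paper, which gives no argument for this lemma and only cites Fulton--Harris (Sec.~11.3, footnote~2). Both of your routes work, and they differ in what they rely on.

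The character argument is the shortest. It imports compact-group theory, namely Haar-measure averaging for complete reducibility and orthogonality of characters so that they determine continuous finite-dimensional representations. The paper only sets up character theory for finite groups.

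The $\epsilon$-form argument is fully elementary:
\begin{itemize}
\item Invariance is $(U^{\otimes 2j})^T\epsilon^{\otimes 2j}U^{\otimes 2j}=(U^T\epsilon U)^{\otimes 2j}=\epsilon^{\otimes 2j}$.
\item Your value $(-1)^{j-m}$ for the pairing of the normalized symmetrized tensors of weights $m$ and $-m$ is correct. So the Gram matrix is antidiagonal with entries $\pm1$, and $v\mapsto\beta(v,\cdot)$ is an explicit $G$-linear isomorphism; you do not need the Schur fallback.
\item It depends only on $S_j\cong\operatorname{Sym}^{2j}(\mathbb C^2)$, which the paper also takes from the literature. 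Using the subspace of symmetric tensors rather than the quotient is harmless over $\mathbb C$.
\item It yields extra structure. Since $\beta(w,v)=(-1)^{2j}\beta(v,w)$, integer spins carry an invariant symmetric form and half-integer spins an invariant symplectic one.
\item At $j=1/2$ it gives precisely the identification $S_{1/2}^*\cong S_{1/2}$ that the paper invokes for the annihilation operators.
\end{itemize}

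If you want to stay within the tools the paper actually states, lead with the $\epsilon$-form argument and keep the character argument as a remark.
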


\begin{proof}
See Ref.~\cite{fultonRepresentationTheory2004}, Sec.~11.3, footnote 2.
\end{proof}

The conjugation representation on $\operatorname{End}(H_n)$ is naturally isomorphic to $H_n\otimes H_n^*$. Since $H_n\cong S_{n/2}$, the Clebsch--Gordan theorem and Lemma~\ref{su2-self-dual} give
\begin{equation}
\operatorname{End}(H_n)
\cong H_n\otimes H_n^*
\cong S_{n/2}\otimes S_{n/2}
\cong\bigoplus_{\ell=0}^{n}S_\ell.
\label{end-spin-decomposition}
\end{equation}
This decomposition identifies the irreducible angular momentum sectors of the operator space. In particular, each spin-$\ell$ irrep occurs exactly once in $\operatorname{End}(H_n)$. We denote this unique invariant subspace by
\begin{equation}
\mathcal T_n^\ell
=\operatorname{span}\left\{T_\mu^{(\ell)}:-\ell\leq\mu\leq\ell\right\}
\cong S_\ell,
\qquad
0\leq\ell\leq n.
\label{tensor-operator-space}
\end{equation}
The basis operators $T_\mu^{(\ell)}$ are the rank-$\ell$ irreducible spherical tensors.

To motivate the defining properties of these operators, we first need to consider the representation of the Lie algebra $\mathfrak{su}(2)$. We denote the usual generators of $\mathfrak{su}(2)$ by $X_a=-i\sigma_a/2$, with the usual Pauli matrices $p_a, \, a\in\{x,y,z\}$, and define the Lie algebra representation \cite{hallLieGroupsLie2015}
\begin{equation}
dR_n:\mathfrak{su}(2)\longrightarrow\mathfrak{u}(H_n),
\qquad
dR_n(X)=\left.\frac{d}{dt}R_n(e^{tX})\right|_{t=0}.
\label{differentiated-spin-representation}
\end{equation}
The Schwinger angular momentum operators in Eq.~\eqref{schwinger-angular-momentum} satisfy $J_a=i\,dR_n(X_a)$.
The action in the operator representation for some $X_a\in\mathfrak{su}(2)$ gives
\begin{equation}
\begin{aligned}
J_a\cdot A
&=i\left.\frac{d}{dt}\right|_{t=0}
\left(R_n(e^{tX_a})AR_n(e^{tX_a})^\dagger\right)\\
&=i[dR_n(X_a),A]\\
&=[J_a,A].
\end{aligned}
\label{operator-lie-algebra-action}
\end{equation}

Thus, the familiar commutator action of angular momentum is precisely the infinitesimal form of the $SU(2)$ conjugation action on operators.  Under the infinitesimal conjugation action in Eq.~\eqref{operator-lie-algebra-action}, the rank-$\ell$ irreducible tensors satisfy the commutation relations (setting $\hbar=1$) \cite{sakuraiModernQuantumMechanics2021}
\begin{equation}
\begin{aligned}
[J_z,T_\mu^{(\ell)}]&=\mu T_\mu^{(\ell)},\\
[J_\pm,T_\mu^{(\ell)}]
&=\sqrt{(\ell\mp\mu)(\ell\pm\mu+1)}T_{\mu\pm1}^{(\ell)}.
\end{aligned}
\label{spherical-tensor-commutators}
\end{equation}
These relations determine the elementary tensors up to a common nonzero scalar. We fix their overall normalization by taking
\begin{equation}
T_\ell^{(\ell)}=J_+^\ell.
\label{elementary-tensor-normalization}
\end{equation}

For the proof of our seed construction theorem, we also need a concrete model of the abstract spin-$\ell$ representation $\mathcal T_n^\ell$. We first recall the symmetric-power construction.

\begin{definition}
Let $V$ be a vector space and let $r$ be a positive integer. The symmetric power $\operatorname{Sym}^r V$ is the quotient of $V^{\otimes r}$ by the subspace generated by
\begin{equation}
v_1\otimes v_2\otimes\dots\otimes v_r
-v_{\sigma(1)}\otimes v_{\sigma(2)}\otimes\dots\otimes v_{\sigma(r)},
\end{equation}
for all $(v_1,\dots,v_r)\in V^{\times r}$ and all permutations $\sigma$ of $\{1,\dots,r\}$. We denote the equivalence class of $v_1\otimes\dots\otimes v_r$ by
\begin{equation}
v_1\odot v_2\odot\dots\odot v_r\in\operatorname{Sym}^r V.
\end{equation}
\end{definition}

\begin{remark}
The map $V^r\to\operatorname{Sym}^r V$ sending $(v_1,\dots,v_r)$ to $v_1\odot\dots\odot v_r$ is symmetric and multilinear. It has the universal property that every symmetric multilinear map $m:V^r\to W$ into a vector space $W$ factors uniquely through a linear map $\widetilde m:\operatorname{Sym}^r V\to W$. If $V$ carries a representation of a group $G$, then $\operatorname{Sym}^r V$ carries the induced representation
\begin{equation}
g\cdot(v_1\odot v_2\odot\dots\odot v_r)
=(g\cdot v_1)\odot(g\cdot v_2)\odot\dots\odot(g\cdot v_r).
\end{equation}
\end{remark}

\begin{lemma}[Symmetric-power realization]
Any irreducible representation of $SU(2)$ is a symmetric power of the standard representation $V\cong\mathbb C^2\cong S_{1/2}$. Specifically, $S_j\cong\operatorname{Sym}^{2j}(V)$, and thus $S_j$ can be represented on the polynomial space
\begin{equation}
\mathcal P_j=\operatorname{span}\{x^{2j},x^{2j-1}y,\dots,y^{2j}\}.
\label{su2-polynomial-space}
\end{equation}
\label{su2-symmetric-irreps}
\end{lemma}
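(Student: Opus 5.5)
The plan is to realize $\operatorname{Sym}^{2j}(V)$ concretely as polynomials, prove it is irreducible of dimension $2j+1$, and then identify it with $S_j$ through its character or highest weight.

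\textbf{Polynomial model.} Fix the standard basis $e_1,e_2$ of $V\cong\mathbb C^2$ and write $x=e_1$, $y=e_2$. The universal property of the symmetric power says that $\operatorname{Sym}^{2j}V$ has the basis of symmetrized monomials $x^{\odot a}\odot y^{\odot b}$ with $a+b=2j$. This gives a $G$-equivariant linear isomorphism onto the space $\mathcal P_j$ of homogeneous polynomials of degree $2j$, and in particular $\dim\operatorname{Sym}^{2j}V=2j+1$. Under this identification, $U\in SU(2)$ acts by the linear substitution induced from its action on $x$ and $y$. This is well defined and is a homomorphism because the induced action $g\cdot(v_1\odot\dots\odot v_r)=(gv_1)\odot\dots\odot(gv_r)$ is multiplicative in the polynomial picture.

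\textbf{Irreducibility.} Differentiating the action as in Eq.~\eqref{differentiated-spin-representation}, the generators act as first-order differential operators:
\begin{equation}
J_z=\tfrac12\left(x\partial_x-y\partial_y\right),\qquad J_+=x\partial_y,\qquad J_-=y\partial_x,
\end{equation}
up to the sign and ordering conventions of the Schwinger map~\eqref{schwinger-angular-momentum}. The monomial $x^a y^b$ is a $J_z$-eigenvector with eigenvalue $(a-b)/2$, and these eigenvalues are distinct, so every weight space is one-dimensional. A nonzero invariant subspace $W\subseteq\mathcal P_j$ is stable under $J_z$, so it is spanned by the monomials it contains and therefore contains at least one monomial. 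The identity
\begin{equation}
J_+\,x^a y^b=b\,x^{a+1}y^{b-1}
\end{equation}
is nonzero whenever $b>0$, and the analogous identity holds for $J_-$. Repeated application of $J_\pm$ to that monomial therefore produces every monomial, so $W=\mathcal P_j$.

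Invariance under the group implies invariance under the Lie algebra, so this argument applies. Alternatively, one can avoid the Lie algebra and use the diagonal torus $\operatorname{diag}(e^{i\theta},e^{-i\theta})$ to separate weights, then the subgroup element swapping $x$ and $y$ together with a generic rotation to mix weight spaces.

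\textbf{Identification with $S_j$.} On the torus element $\operatorname{diag}(e^{i\theta},e^{-i\theta})$, the character of $\mathcal P_j$ is
\begin{equation}
\sum_{a+b=2j}e^{i(a-b)\theta}=\sum_{m=-j}^{j}e^{2im\theta}.
\end{equation}
This is the spin-$j$ character $\chi_j$. Since every element of $SU(2)$ is conjugate into the torus and characters are class functions, $\chi_{\mathcal P_j}=\chi_j$ on all of $SU(2)$. Characters determine finite-dimensional representations of compact groups, and both representations are irreducible, so $\mathcal P_j\cong S_j$. Equivalently, $x^{2j}$ is a highest-weight vector of weight $j$, and the highest-weight classification (as in Ref.~\cite{hallLieGroupsLie2015}) gives the same conclusion.

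\textbf{Main obstacle.} The only nontrivial input is the uniqueness statement: an irreducible $SU(2)$ representation is determined by its character, or equivalently by its dimension or highest weight. I would cite this from standard references rather than reprove it; everything else reduces to the explicit monomial computation above.
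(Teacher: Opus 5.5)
Your argument is correct, but it takes a different route from the paper, whose entire proof of this lemma is a citation to Fulton--Harris (Sec.~11) and Hall. You prove most of the content yourself:
\begin{itemize}
\item the polynomial model of $\operatorname{Sym}^{2j}V$;
\item irreducibility, from the one-dimensional $J_z$-weight spaces and the nonvanishing ladder actions $J_+x^ay^b=b\,x^{a+1}y^{b-1}$ and $J_-x^ay^b=a\,x^{a-1}y^{b+1}$;
\item the identification with $S_j$, via the torus character $\sum_{m=-j}^{j}e^{2im\theta}$.
\end{itemize}
Only the classification input is left to a citation. What your route buys is that it already contains the generators $J^{\operatorname{poly}}_{z,\pm}$ and the weight relations the paper writes down immediately after the lemma and later uses to build $\Phi_n^\ell$. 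The paper's citation is shorter and covers construction and classification at once.

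One point to tighten: the first clause of the statement (\emph{any} irreducible representation is a symmetric power) needs exhaustiveness, not just uniqueness.
\begin{itemize}
\item Your ``determined by its dimension'' formulation does supply it. An arbitrary irrep of dimension $d$ is then isomorphic to $\mathcal P_{(d-1)/2}$, because you have exhibited an irrep of every dimension. You should state this deduction explicitly.
\item The ``determined by its character'' formulation alone does not supply it. It identifies $\mathcal P_j$ with $S_j$, but it does not show that an arbitrary irrep has character equal to some $\chi_j$. That step needs either completeness of the $\chi_j$ among class functions (Weyl integration or Peter--Weyl), or the part of the highest-weight theorem asserting that every irrep has a highest weight in $\tfrac12\mathbb Z_{\geq0}$.
\end{itemize}

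A minor imprecision: the universal property by itself gives only the equivariant surjection $\operatorname{Sym}^{2j}V\to\mathcal P_j$. That the symmetrized monomials span $\operatorname{Sym}^{2j}V$ comes from the defining relations. Together these give the isomorphism and $\dim\operatorname{Sym}^{2j}V=2j+1$.
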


\begin{proof}
See Ref.~\cite{fultonRepresentationTheory2004}, Sec.~11, or Ref. ~\cite{hallLieGroupsLie2015}.
\end{proof}

We now make this symmetric-power realization explicit in our notation. Here $x,y$ are taken to be the basis vectors of $V$, and polynomial monomials are shorthand for their symmetric products; in particular, $xy$ denotes $x\mathbin{\odot}y$. For
\begin{equation}
U=
\begin{pmatrix}
a&-\overline b\\
b&\overline a
\end{pmatrix}
\in SU(2),
\end{equation}
the action on the basis vectors is
\begin{equation}
U\cdot x=ax+by,
\qquad
U\cdot y=-\overline b x+\overline a y.
\label{polynomial-direct-action}
\end{equation}
Setting $n=2j$, the action of the irreducible representation $\Pi_n:SU(2)\to\operatorname{GL}(\mathcal P_{n/2})$ is therefore
\begin{equation}
(\Pi_n(U)f)(x,y)
=f(ax+by,-\overline b x+\overline a y),
\qquad
f\in\mathcal P_{n/2}.
\label{homogeneous-polynomial-representation}
\end{equation}
Differentiating this representation gives the polynomial angular momentum generators
\begin{equation}
J_z^{\operatorname{poly}}
=\frac{1}{2}(x\partial_x-y\partial_y),
\qquad
J_x^{\operatorname{poly}}
=\frac{1}{2}(y\partial_x+x\partial_y),
\qquad
J_y^{\operatorname{poly}}
=\frac{1}{2}(iy\partial_x-ix\partial_y).
\label{J-generators-polynomial}
\end{equation}
We can likewise define
\begin{equation}
J_+^{\operatorname{poly}}
=J_x^{\operatorname{poly}}+iJ_y^{\operatorname{poly}}
=x\partial_y,
\qquad
J_-^{\operatorname{poly}}
=J_x^{\operatorname{poly}}-iJ_y^{\operatorname{poly}}
=y\partial_x.
\label{Jpm-definition-polynomial}
\end{equation}

We have now realized the same spin-$\ell$ representation both as the tensor-operator space $\mathcal T_n^\ell$ and as the polynomial space $\mathcal P_\ell$. To identify these two realizations explicitly, we define the polynomials
\begin{equation}
p_\mu^{(\ell)}
=\sqrt{\binom{2\ell}{\ell+\mu}}x^{\ell+\mu}y^{\ell-\mu},
\qquad
-\ell\leq\mu\leq\ell,
\end{equation}
and we see
\begin{equation}
\begin{aligned}
J_z^{\operatorname{poly}}p_\mu^{(\ell)}
&=\mu p_\mu^{(\ell)},\\
J_+^{\operatorname{poly}}p_\mu^{(\ell)}
&=\sqrt{(\mu+\ell+1)(\ell-\mu)}p_{\mu+1}^{(\ell)},\\
J_-^{\operatorname{poly}}p_\mu^{(\ell)}
&=\sqrt{(\ell+\mu)(\ell-\mu+1)}p_{\mu-1}^{(\ell)}.
\end{aligned}
\label{polynomial-weight-relations}
\end{equation}
The action of $J_a^{\operatorname{poly}}$ on $p_\mu^{(\ell)}$ in the $\mathcal P_\ell$ representation is the same as the action of $J_a$ on $T_\mu^{(\ell)}$ in the operator representation. Since $SU(2)$ is connected, with our chosen normalization this defines the $SU(2)$-equivariant isomorphism $\Phi_n^\ell$ \cite[Proposition~4.5(2)]{hallLieGroupsLie2015} such that
\begin{equation}
\Phi_n^\ell\left(T_\mu^{(\ell)}\right)
=\sqrt{\binom{2\ell}{\ell+\mu}}x^{\ell+\mu}y^{\ell-\mu}.
\label{lie-algebra-to-polynomial-isomorphism}
\end{equation}
Equivariance means explicitly that
\begin{equation}
\Phi_n^\ell\left(R_n(U)AR_n(U)^\dagger\right)
=\Pi_{2\ell}(U)\Phi_n^\ell(A),
\qquad
U\in SU(2),\quad A\in\mathcal T_n^\ell.
\label{tensor-polynomial-equivariance}
\end{equation}
The map $\Phi_n^\ell$ is the bridge used in the seed construction, mapping operators to their representative polynomials in the same irrep of $SU(2)$.  In particular, Eq.~\eqref{elementary-tensor-normalization} gives $\Phi_n^\ell(J_+^\ell)=x^{2\ell}$, which is the highest-weight correspondence used below. This normalization need not be Hilbert--Schmidt normalization. When an orthonormal tensor basis is required below, we rescale the entire rank-$\ell$ multiplet by its common Hilbert--Schmidt norm.

\section{Loss products and irreducible spin tensors}

\subsection{Loss operators and the Knill--Laflamme product space}

Let $R_m(g)$ be the spin-$m/2$ representation on $H_m$. A photon-loss operator $L:H_n\to H_{n-k}$ carries the $SU(2)$ action
\begin{equation}
g\cdot L=R_{n-k}(g)LR_n(g)^\dagger.
\label{loss-operator-action}
\end{equation}
On the input sector $H_n$, the Kraus operators of total loss order $k$ are scalar multiples of the corresponding annihilation monomials. For $0<\eta<1$ and $l_1+l_2=k$,
\begin{equation}
K_{l_1l_2}P_n
=\sqrt{\frac{(1-\eta)^k\eta^{n-k}}{l_1!l_2!}}
\left(\hat a_1^{l_1}\otimes\hat a_2^{l_2}\right)P_n.
\label{kraus-loss-monomial}
\end{equation}
Moreover, different total-loss orders have orthogonal output sectors, so
\begin{equation}
P_nK_{l_1l_2}^\dagger K_{l_1'l_2'}P_n=0
\qquad\text{if}\qquad
l_1+l_2\neq l_1'+l_2'.
\label{different-loss-orders-orthogonal}
\end{equation}
Thus the Knill--Laflamme products for loss order $k$ are represented by products of degree-$k$ annihilation monomials.

We first present the following lemmas.

\begin{lemma}\label{multi-op-action}
Given composable loss operators $L_1,L_2,\dots,L_t$, the $SU(2)$ action on their product is given by
\begin{equation}
g\cdot(L_1L_2\dots L_t)
=(g\cdot L_1)(g\cdot L_2)\dots(g\cdot L_t).
\end{equation}
\end{lemma}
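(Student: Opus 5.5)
The plan is a direct computation from the definition of the action in Eq.~\eqref{loss-operator-action}, by induction on the number of factors $t$. Composable means each $L_i$ maps some sector $H_{m_i}$ to $H_{m_{i-1}}$ with $m_{i}$ equal to the source of $L_{i}$ and the target of $L_{i+1}$. The first step is to record that the action on each factor is $g\cdot L_i=R_{m_{i-1}}(g)L_iR_{m_i}(g)^\dagger$. This is the natural $\operatorname{Hom}(H_{m_i},H_{m_{i-1}})$ action of Eq.~\eqref{hom-rep}, specialized to the unitary representations $R_m$, so that $R_m(g)^{-1}=R_m(g)^\dagger$.

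The base case $t=1$ is trivial. For $t=2$ I would write
\begin{equation}
(g\cdot L_1)(g\cdot L_2)=R_{m_0}(g)L_1R_{m_1}(g)^\dagger R_{m_1}(g)L_2R_{m_2}(g)^\dagger=R_{m_0}(g)L_1L_2R_{m_2}(g)^\dagger,
\end{equation}
using the unitarity $R_{m_1}(g)^\dagger R_{m_1}(g)=I_{H_{m_1}}$ on the intermediate sector. The right-hand side is exactly $g\cdot(L_1L_2)$ for $L_1L_2:H_{m_2}\to H_{m_0}$. For general $t$, I would apply this two-factor identity to $(L_1\cdots L_{t-1})$ and $L_t$ and then use the induction hypothesis on the first $t-1$ factors. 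Equivalently, one can telescope all the inner factors $R_{m_i}(g)^\dagger R_{m_i}(g)$ at once.

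The only point needing care is bookkeeping. Every intermediate representation must be the one attached to the shared sector, so the cancellation uses the same $R_{m_i}$ on both sides, and the outer representations must match the source and target of the product. If the $L_i$ are regarded as operators on the full Fock space rather than between fixed sectors, I would first restrict each to the relevant $H_m$ blocks. Since the annihilation monomials map $H_m$ into $H_{m-k}$, this restriction is harmless. No genuine obstacle is expected; the lemma is essentially the functoriality of conjugation.
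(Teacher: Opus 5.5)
Your proposal is correct and follows the paper's proof: both insert $R_m(g)^\dagger R_m(g)=I_{H_m}$ at each intermediate photon-number sector and telescope. The only difference is that the paper does the telescoping in one step, labeling the sectors as $H_{n-p+p_1+\dots+p_i}\to H_{n-p+p_1+\dots+p_{i-1}}$, whereas you organize it as an induction on $t$ (and note the one-step version as an equivalent alternative).
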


\begin{proof}
Let $\operatorname{ord}(L_i)=p_i$ correspond to the number of photons lost per operator, with $p=\sum_i p_i$. Then we have, by definition,
\begin{equation}
\begin{aligned}
g\cdot(L_1L_2\dots L_t)
&=R_{n-p}(g)L_1L_2\dots L_tR_n^\dagger(g)\\
&=R_{n-p}(g)L_1R_{n-p+p_1}^\dagger(g)
R_{n-p+p_1}(g)L_2R_{n-p+p_1+p_2}^\dagger(g)\dots
R_{n-p_t}(g)L_tR_n^\dagger(g)\\
&=(g\cdot L_1)(g\cdot L_2)\dots(g\cdot L_t),
\end{aligned}
\end{equation}
which was to be shown.
\end{proof}

\begin{lemma}
Under the action $L\mapsto R_{n-1}(g)LR_n(g)^\dagger$, the space $\operatorname{span}\{\hat a_1\otimes I_2,I_1\otimes\hat a_2\}$ carries a representation isomorphic to $S_{1/2}$.
\label{annihilation-irrep}
\end{lemma}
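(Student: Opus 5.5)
The plan is to compute the action of $SU(2)$ on the two operators $\hat a_1\otimes I_2$ and $I_1\otimes\hat a_2$, restricted to maps $H_n\to H_{n-1}$. I will show that the action closes on their span and that the resulting $2\times 2$ matrices are exactly those of a spin-$1/2$ representation. I work with the Schwinger realization: $R_m(U)$ is the restriction to $H_m$ of the passive two-mode unitary $\hat R(U)$, which acts linearly on the mode operators. This is the same passive Gaussian unitary used earlier in the covariance subsection.

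The key step is the standard Bogoliubov-free mode transformation for a passive linear-optics unitary. For $U=\begin{pmatrix}a&-\overline b\\ b&\overline a\end{pmatrix}$ one has $\hat R(U)\,\hat a_i\,\hat R(U)^\dagger=\sum_{k}\overline{U}_{ik}\,\hat a_k$ (or $U^{T}$/$U$, depending on convention). Conventions are fixed by requiring $\hat R(U)\hat a_i^\dagger\hat R(U)^\dagger=\sum_k U_{ki}\hat a_k^\dagger$, so that $\hat R(U)\ket{1,0}=a\ket{1,0}+b\ket{0,1}$, consistent with Eq.~\eqref{schwinger-basis-map}. Taking adjoints then gives $\hat R(U)\hat a_i\hat R(U)^\dagger=\sum_k\overline{U_{ki}}\,\hat a_k$. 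To verify this, I would check it on the generators: differentiating gives the commutators $[J_\pm,\hat a_i]$ and $[J_z,\hat a_i]$ from Eq.~\eqref{schwinger-angular-momentum}. Since $SU(2)$ is connected, the Lie-algebra computation integrates to the group statement.

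Since $\hat R(U)$ commutes with $\hat N$, it preserves each $H_m$. Restricting the identity $\hat R(U)\hat a_i\hat R(U)^\dagger$ to $H_n$ therefore gives $R_{n-1}(U)\,\hat a_i\,R_n(U)^\dagger=\sum_k\overline{U_{ki}}\,\hat a_k$ as maps $H_n\to H_{n-1}$. This shows the span is $G$-stable, and in the basis $(\hat a_1,\hat a_2)$ the representing matrix is $\overline U$. The matrix $\overline U$ is the conjugate, i.e.\ dual, of the defining representation $S_{1/2}$ (Remark~\ref{dual-unitary-rep}). By Lemma~\ref{su2-self-dual}, $S_{1/2}^*\cong S_{1/2}$; explicitly, $\overline U=\epsilon U\epsilon^{-1}$ with $\epsilon=i\sigma_y$. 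It remains to confirm that the two operators are linearly independent on $H_n$ for $n\ge1$. This is clear because they have different Fock-basis supports; for example, they act differently on $\ket{n,0}$.

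The main obstacle is bookkeeping. I must keep the conjugation convention $R_{n-1}(g)LR_n(g)^\dagger$ consistent with the sign and conjugation conventions of Eq.~\eqref{polynomial-direct-action}, so that the representation obtained is a genuine homomorphism and not an anti-homomorphism. The self-duality argument absorbs any residual conjugation, so the isomorphism class is unaffected. A cross-check via weights avoids this issue: $[J_z,\hat a_1]=-\tfrac12\hat a_1$ and $[J_z,\hat a_2]=\tfrac12\hat a_2$, with $[J_+,\hat a_1]=-\hat a_2$ and $[J_+,\hat a_2]=0$. These relations display a two-dimensional irreducible $\mathfrak{su}(2)$-module of highest weight $1/2$, with highest-weight vector $\hat a_2$, hence $S_{1/2}$.
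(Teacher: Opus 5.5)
Your proposal is correct and follows the paper's proof. The paper cites Aoude--Ochirov for the transformation law $(\hat a_1\otimes I_2,\,I_1\otimes\hat a_2)^T\mapsto U^\dagger(\hat a_1\otimes I_2,\,I_1\otimes\hat a_2)^T$, which is your $\hat a_i\mapsto\sum_k\overline{U_{ki}}\,\hat a_k$; it then identifies this as $S_{1/2}^*$ and concludes by self-duality, while your derivation from the generators, the linear-independence check on $H_n$, and the highest-weight cross-check fill in details the paper leaves to that reference.
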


\begin{proof}
See Ref.~\cite{aoudeClassicalObservablesCoherentspin2021}, adapted to our convention. The operators transform as
\begin{equation}
\begin{pmatrix}
\hat a_1\otimes I_2\\
I_1\otimes\hat a_2
\end{pmatrix}
\longmapsto
U^\dagger
\begin{pmatrix}
\hat a_1\otimes I_2\\
I_1\otimes\hat a_2
\end{pmatrix},
\qquad U\in SU(2).
\end{equation}
So the operator space carries the $S_{1/2}^*$ representation, which is isomorphic to the fundamental $SU(2)$ representation from Lemma \ref{su2-self-dual}.
\end{proof}

Now, we consider the representation of the loss operators of interest.

\begin{definition}
For $0\leq k\leq n$, we denote by $\mathcal L_k$ the set of two-mode loss operators corresponding to losing $k$ photons, i.e.,
\begin{equation}
\mathcal L_k
=\operatorname{span}\{\hat a_1^t\otimes\hat a_2^{k-t}:0\leq t\leq k\}.
\label{loss-subspaces}
\end{equation}
\end{definition}

\begin{lemma}
The space $\mathcal L_k$ is isomorphic to $S_{k/2}$ as an $SU(2)$ representation.
\label{k-loss-operator-irrep}
\end{lemma}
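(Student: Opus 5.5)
The plan is to build the isomorphism from Lemma~\ref{annihilation-irrep} and Lemma~\ref{multi-op-action}, using the symmetric-power realization of Lemma~\ref{su2-symmetric-irreps}. Write $A_1=\hat a_1\otimes I_2$ and $A_2=I_1\otimes \hat a_2$, viewed as operators between consecutive sectors $H_m\to H_{m-1}$. By Lemma~\ref{annihilation-irrep}, for each $m$ the span of $\{A_1,A_2\}$ carries a copy of $S_{1/2}$, and the transformation matrix $U^\dagger$ does not depend on $m$. Lemma~\ref{multi-op-action} then shows how the $SU(2)$ action on a degree-$k$ monomial $A_{i_1}A_{i_2}\cdots A_{i_k}:H_n\to H_{n-k}$ behaves: it acts factorwise, $g\cdot(A_{i_1}\cdots A_{i_k})=(g\cdot A_{i_1})\cdots(g\cdot A_{i_k})$.

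First I would define the multilinear map $m:(S_{1/2})^k\to \mathcal L_k$ that sends $(v_1,\dots,v_k)$ to the product $v_1v_2\cdots v_k$. Here each $v_i$ is interpreted as the corresponding linear combination of $A_1,A_2$ acting at the appropriate sector. Since $[\hat a_1\otimes I,I\otimes\hat a_2]=0$, this map is symmetric. By the universal property of $\operatorname{Sym}^k$ it factors through a linear map $\widetilde m:\operatorname{Sym}^k S_{1/2}\to\mathcal L_k$. The factorwise action gives $G$-equivariance of $m$, and hence of $\widetilde m$. The image of $\widetilde m$ contains every monomial $A_1^tA_2^{k-t}=\hat a_1^t\otimes\hat a_2^{k-t}$, so $\widetilde m$ is surjective.

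Next I would prove injectivity. Both spaces have dimension $k+1$, provided the $k+1$ operators $\hat a_1^t\otimes\hat a_2^{k-t}$ restricted to $H_n$ are linearly independent. This requires $k\le n$, which the definition assumes. To check independence, apply each operator to the Fock state $\ket{m}\otimes\ket{n-m}$. The operator $\hat a_1^t\otimes\hat a_2^{k-t}$ is nonzero there exactly when $t\le m$ and $k-t\le n-m$, and it lands on the distinct basis vector $\ket{m-t}\otimes\ket{n-m-k+t}$. Choosing $m=t$ (valid since $t\le k\le n$) shows that distinct $t$ give operators with distinct, nonzero matrix elements. Therefore no nontrivial linear relation among them is possible, because different $t$ map $\ket{m,n-m}$ to orthogonal outputs. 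Hence $\widetilde m$ is an equivariant bijection, and $\mathcal L_k\cong\operatorname{Sym}^k S_{1/2}\cong S_{k/2}$ by Lemma~\ref{su2-symmetric-irreps}.

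The main obstacle is bookkeeping rather than concept. The factors in a monomial act between different sectors $H_m$, so one must confirm that Lemma~\ref{annihilation-irrep}'s transformation law holds uniformly in $m$; in particular, the matrix $U^\dagger$ must be sector-independent. This is exactly what makes $m$ equivariant via Lemma~\ref{multi-op-action}. I would verify it directly from $R_{m-1}(U)\hat a_iR_m(U)^\dagger=\sum_j (U^\dagger)_{ij}\hat a_j$, which holds on the full Fock space and hence on every sector. The dual/conjugate issue ($S_{1/2}^*$ versus $S_{1/2}$) is resolved by Lemma~\ref{su2-self-dual}, since $\operatorname{Sym}^k(S_{1/2}^*)\cong \operatorname{Sym}^k(S_{1/2})$.
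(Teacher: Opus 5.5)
Your proof is correct and takes the same route as the paper. Both use the symmetric multilinear product map from $k$ copies of $\operatorname{span}\{\hat a_1\otimes I_2,\,I_1\otimes\hat a_2\}$ to $\mathcal L_k$, factor it through $\operatorname{Sym}^k$ by the universal property, get equivariance from the factorwise action on composed loss operators, and conclude with $\operatorname{Sym}^k S_{1/2}\cong S_{k/2}$. Your dimension count (linear independence of the monomials on $H_n$, using $k\le n$) and your check that the transformation matrix is the same in every sector supply details the paper only asserts when it says the map is ``clearly a bijection.''
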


\begin{proof}
Let $\mathcal L_1=\operatorname{span}\{\hat a_1\otimes I_2,I_1\otimes\hat a_2\}\cong S_{1/2}$ from Lemma~\ref{annihilation-irrep}. Define a multilinear map
\begin{equation}
m:\underbrace{\mathcal L_1\times\cdots\times\mathcal L_1}_{k\text{ factors}}\longrightarrow\mathcal L_k
\end{equation}
such that $m(L_1,L_2,\dots,L_k)=L_1L_2\dots L_k$, where $L_i\in\mathcal L_1$. This is symmetric since the single-mode annihilation operators on distinct tensor factors commute, and thus permuting the input operators gives the same output. By the universal property, there exists some linear $\widetilde m:\operatorname{Sym}^k\mathcal L_1\to\mathcal L_k$ such that $\widetilde m(L_1\odot L_2\odot\dots\odot L_k)=L_1L_2\dots L_k$.

Note that the left-hand side is the output of the universal symmetric mapping, whereas the right-hand side is regular operator multiplication. The map $\widetilde m$ is clearly a bijection. Furthermore, it is $SU(2)$-equivariant. Indeed, since the action on the symmetric power representation is given by
\begin{equation}
g\cdot(L_1\odot L_2\odot\dots\odot L_k)
=(g\cdot L_1)\odot(g\cdot L_2)\odot\dots\odot(g\cdot L_k),
\end{equation}
we have that
\begin{equation}
\begin{aligned}
\widetilde m(g\cdot(L_1\odot L_2\odot\dots\odot L_k))
&=\widetilde m((g\cdot L_1)\odot(g\cdot L_2)\odot\dots\odot(g\cdot L_k))\\
&=(g\cdot L_1)(g\cdot L_2)\dots(g\cdot L_k)\\
&=g\cdot(L_1L_2\dots L_k)
=g\cdot\widetilde m(L_1\cdot L_2\cdot\dots\cdot L_k),
\end{aligned}
\end{equation}
where we've also used Lemma~\ref{multi-op-action}. Therefore, we see that
\begin{equation}
\mathcal L_k\cong\operatorname{Sym}^k\mathcal L_1
\cong\operatorname{Sym}^kS_{1/2}
\cong S_{k/2},
\end{equation}
where the last isomorphism follows from Lemma~\ref{su2-symmetric-irreps}.
\end{proof}

Now, we quickly show the following lemma.

\begin{lemma}
The space $\mathcal L_k^\dagger=\{L^\dagger:L\in\mathcal L_k\}$ is isomorphic to $S_{k/2}^*$ as an $SU(2)$ representation. Furthermore, the relevant space of Knill--Laflamme operators,
\begin{equation}
\mathcal K_k
=\operatorname{span}\{A^\dagger B:A,B\in\mathcal L_k\}
\cong S_{k/2}^*\otimes S_{k/2}.
\label{loss-kl-representation-equation}
\end{equation}
\label{loss-kl-representation}
\end{lemma}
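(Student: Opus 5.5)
The plan is to apply Lemma~\ref{adjoint-KL-rep-lemma} with $\mathcal A=\mathcal L_k\subseteq\operatorname{Hom}(H_n,H_{n-k})$ and $V=H_n$, $W=H_{n-k}$. These carry the unitary representations $R_n$ and $R_{n-k}$. The action on $\mathcal L_k$ is the one in Eq.~\eqref{loss-operator-action}, and $\mathcal L_k$ is stable under it by Lemma~\ref{k-loss-operator-irrep}.

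The first claim then follows from part~1 of Lemma~\ref{adjoint-KL-rep-lemma}: $\mathcal L_k^\dagger\cong\mathcal L_k^*\cong S_{k/2}^*$, since dualizing respects isomorphism of representations.

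For the second claim, part~2 of the same lemma gives an equivariant surjection $\mathcal L_k^*\otimes\mathcal L_k\to\mathcal K_k$. It is an isomorphism once the $(k+1)^2$ products $A_s^\dagger A_t$ are linearly independent, where $A_t=\hat a_1^t\otimes\hat a_2^{k-t}$ restricted to $H_n$. So the main step is proving this independence. Write the products as
\[
A_s^\dagger A_t=(\hat a_1^\dagger)^s\hat a_1^t\otimes(\hat a_2^\dagger)^{k-s}\hat a_2^{k-t},
\]
restricted to $H_n$. Acting on the Fock basis $\ket{m,n-m}$, this operator sends $\ket{m,n-m}$ to a nonzero multiple of $\ket{m-t+s,\,n-m+t-s}$, provided $m\ge t$ and $n-m\ge k-t$. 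Such $m$ exist because $k\le n$.

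Products with different shifts $\delta=s-t$ have disjoint off-diagonal supports, so it suffices to fix $\delta$. For fixed $\delta$, the matrix entries of $A_s^\dagger A_t$ along the $\delta$-diagonal are
\[
f_t(m)=\frac{m!}{(m-t)!}\cdot\frac{(n-m)!}{(n-m-k+t)!}.
\]
This is a polynomial in $m$ of degree $k$, up to the common reindexing. Its zeros are at $m=0,\dots,t-1$ and at $m=n-k+t+1,\dots,n$.

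I expect this independence argument to be the main obstacle, especially when $k$ is close to $n$. Distinct $t$ give distinct zero patterns, so I would show independence of the functions $m\mapsto f_t(m)$ on the allowed range by a triangular argument. Ordered by $t$, the smallest $m$ at which $f_t$ is nonzero is $m=t$, and all $f_{t'}$ with $t'>t$ vanish there. So if $\sum_t c_t f_t=0$, evaluating successively at $m=t$ for increasing $t$ forces each $c_t=0$. Since $k\le n$, each point $m=t$ lies in $[0,n]$, so these evaluation points are available.

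With independence established, part~2 of Lemma~\ref{adjoint-KL-rep-lemma} gives $\mathcal K_k\cong\mathcal L_k^*\otimes\mathcal L_k\cong S_{k/2}^*\otimes S_{k/2}$, using Lemma~\ref{k-loss-operator-irrep}. As a consistency check, Lemma~\ref{su2-self-dual} and the Clebsch--Gordan theorem (Lemma~\ref{Clebsch-gordan}) turn this into $\bigoplus_{\ell=0}^kS_\ell$. That matches a $(k+1)^2$-dimensional subspace of $\operatorname{End}(H_n)$ in which each $S_\ell$ appears at most once, which requires $k\le n$.
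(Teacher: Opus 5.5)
Your proposal is correct and is essentially the paper's proof: you invoke the adjoint/product-space lemma, then prove linear independence of the $(k+1)^2$ products shift by shift, using the same triangular evaluation on $\ket{t}\otimes\ket{n-t}$ in increasing order of $t$. Handling negative shifts directly, rather than by taking adjoints as the paper does, is an inessential variation. One small correction: your displayed $f_t(m)$ is the matrix entry only for $\delta=0$. For $\delta\neq0$ there are extra square-root factors, so the entries are not polynomials in $m$. This does not affect the argument, which only uses the support $t\le m\le n-k+t$, and that support is correct for every $\delta$.
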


\begin{proof}
We're going to invoke Lemma~\ref{adjoint-KL-rep-lemma}, together with Lemma~\ref{k-loss-operator-irrep}. The first part follows immediately. For the second part, we need only show that the relevant operators are linearly independent. Indeed, let
\begin{equation}
L_t=\hat a_1^t\otimes\hat a_2^{k-t}.
\end{equation}
We need to show that $\{L_i^\dagger L_j\}$ are linearly independent. The product $L_i^\dagger L_j$ maps each occupation state $\ket{m}\otimes\ket{n-m}$ to a multiple of
\begin{equation}
\ket{m+i-j}\otimes\ket{n-m-i+j}.
\label{loss-product-matrix-diagonal}
\end{equation}
Therefore, basis elements with different $i-j$ are automatically linearly independent since they have non-overlapping matrix elements. For fixed $\delta\geq 0$, the operators on the corresponding matrix diagonal are $L_{j+\delta}^\dagger L_j$ for $0\leq j\leq k-\delta$. Suppose there exist $c_j$ such that
\begin{equation}
A=\sum_{j=0}^{k-\delta}c_jL_{j+\delta}^\dagger L_j=0.
\end{equation}
Applying this to $\ket{0}\otimes\ket{n}$ leaves only the $j=0$ term, so
\begin{equation}
A(\ket{0}\otimes\ket{n})
=c_0L_\delta^\dagger L_0(\ket{0}\otimes\ket{n})=0.
\end{equation}
Since $L_\delta^\dagger L_0(\ket{0}\otimes\ket{n})\neq 0$, we obtain $c_0=0$. Now suppose inductively that $c_0,\dots,c_{s-1}=0$ for some $1\leq s\leq k-\delta$. When $A$ is applied to $\ket{s}\otimes\ket{n-s}$, every term with $j>s$ vanishes because $\hat a_1^j\ket{s}=0$, while the terms with $j<s$ vanish by the induction hypothesis. Therefore,
\begin{equation}
A(\ket{s}\otimes\ket{n-s})
=c_sL_{s+\delta}^\dagger L_s(\ket{s}\otimes\ket{n-s})=0.
\end{equation}
The remaining vector is nonzero because $n\geq k$ and $0\leq s\leq k-\delta$, so $c_s=0$. Induction gives $c_j=0$ for every $0\leq j\leq k-\delta$.

The operators with $i-j=-\delta<0$ are the adjoints of the operators with $i-j=\delta>0$. Taking the adjoint of any linear relation among them therefore reduces it to the case just proved. Hence all the products $L_i^\dagger L_j$ are linearly independent, as desired.
\end{proof}

\begin{lemma}
Let $1\leq k\leq n$, and define $L_i^{(r)}=\hat a_1^i\otimes\hat a_2^{r-i}$ for $0\leq i\leq r$. As operators on $H_n$, we have
\begin{equation}
\left(L_i^{(k-1)}\right)^\dagger L_j^{(k-1)}
=\frac{1}{n-k+1}\left(
\left(L_{i+1}^{(k)}\right)^\dagger L_{j+1}^{(k)}
+\left(L_i^{(k)}\right)^\dagger L_j^{(k)}
\right),
\label{lower-loss-KL-identity}
\end{equation}
for all $0\leq i,j\leq k-1$. Therefore, $\mathcal K_{k-1}\subseteq\mathcal K_k$.
\label{lower-loss-KL-inclusion}
\end{lemma}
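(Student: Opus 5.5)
The plan is to verify the identity by evaluating both sides on the occupation basis of $H_n$, using the fact that all the operators involved act diagonally up to a fixed shift. Fix $0\leq i,j\leq k-1$ and a basis state $\ket{m}\otimes\ket{n-m}$.

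By Eq.~\eqref{loss-product-matrix-diagonal}, each of the three products $\left(L_i^{(k-1)}\right)^\dagger L_j^{(k-1)}$, $\left(L_{i+1}^{(k)}\right)^\dagger L_{j+1}^{(k)}$ and $\left(L_i^{(k)}\right)^\dagger L_j^{(k)}$ maps $\ket{m}\otimes\ket{n-m}$ to a multiple of $\ket{m+i-j}\otimes\ket{n-m-i+j}$. The shift is the same for all three, since $(i+1)-(j+1)=i-j$. It therefore suffices to compare the scalar coefficients.

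The key observation is that $L_{j+1}^{(k)}=(\hat a_1\otimes I)L_j^{(k-1)}$ and $L_j^{(k)}=(I\otimes\hat a_2)L_j^{(k-1)}$, with analogous relations for $i$. Hence
\begin{equation}
\left(L_{i+1}^{(k)}\right)^\dagger L_{j+1}^{(k)}+\left(L_i^{(k)}\right)^\dagger L_j^{(k)}
=\left(L_i^{(k-1)}\right)^\dagger\bigl(\hat n_1+\hat n_2\bigr)L_j^{(k-1)} .
\end{equation}
Here we used that the mode-$1$ and mode-$2$ operators commute, so the inserted factor $\hat a_1^\dagger\hat a_1+\hat a_2^\dagger\hat a_2$ can be moved to sit between the two loss monomials.

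The operator $L_j^{(k-1)}$ maps $H_n$ into $H_{n-k+1}$, on which $\hat N$ acts as $(n-k+1)I$ by Eq.~\eqref{total-number-operator}. The right-hand side above is therefore $(n-k+1)\left(L_i^{(k-1)}\right)^\dagger L_j^{(k-1)}$ as operators on $H_n$. Since $k\leq n$, the factor $n-k+1$ is nonzero, and dividing by it gives Eq.~\eqref{lower-loss-KL-identity}. This operator-level argument avoids the explicit factorial bookkeeping altogether. As a cross-check, one can also compute the coefficients directly; the identity then reduces to the elementary fact $(m-j)+(n-m-k+1+j)=n-k+1$.

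Finally, every spanning element $A^\dagger B$ of $\mathcal K_{k-1}$, with $A,B\in\mathcal L_{k-1}$, is by bilinearity a combination of the products $\left(L_i^{(k-1)}\right)^\dagger L_j^{(k-1)}$. Each such product lies in $\mathcal K_k$ by the identity just proved, so $\mathcal K_{k-1}\subseteq\mathcal K_k$.

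The only real care needed is the restriction to $H_n$ in the step where $\hat N$ is replaced by the scalar $n-k+1$, and the requirement that this scalar be nonzero, which is guaranteed by $k\leq n$. I do not expect a genuine obstacle.
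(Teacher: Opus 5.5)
Your proposal is correct and matches the paper's proof essentially step for step. You factor $L_{j+1}^{(k)}=(\hat a_1\otimes I_2)L_j^{(k-1)}$ and $L_j^{(k)}=(I_1\otimes\hat a_2)L_j^{(k-1)}$, collapse the sum to $\left(L_i^{(k-1)}\right)^\dagger\hat N L_j^{(k-1)}$, and use $\hat N=(n-k+1)I$ on $H_{n-k+1}$; your preliminary shift-structure remark and the coefficient cross-check are harmless but not needed.
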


\begin{proof}
We have
\begin{equation}
L_{i+1}^{(k)}=(\hat a_1\otimes I_2)L_i^{(k-1)},
\qquad
L_i^{(k)}=(I_1\otimes\hat a_2)L_i^{(k-1)}.
\end{equation}
Therefore,
\begin{equation}
\begin{aligned}
\left(L_{i+1}^{(k)}\right)^\dagger L_{j+1}^{(k)}
+\left(L_i^{(k)}\right)^\dagger L_j^{(k)}
&=\left(L_i^{(k-1)}\right)^\dagger\hat N L_j^{(k-1)}\\
&=(n-k+1)\left(L_i^{(k-1)}\right)^\dagger L_j^{(k-1)},
\end{aligned}
\end{equation}
since $L_j^{(k-1)}$ maps $H_n$ into $H_{n-k+1}$, where the total number operator $\hat N$ acts as $n-k+1$. Rearranging gives the claimed identity. Thus, every operator spanning $\mathcal K_{k-1}$ belongs to $\mathcal K_k$, and hence $\mathcal K_{k-1}\subseteq\mathcal K_k$.
\end{proof}

\subsection{Equivalence between loss distance and tensor-rank detection}

\begin{lemma}
Protecting against loss up to order $k$ reduces to satisfying the Knill--Laflamme conditions for $\mathcal K_k$, which we can decompose in terms of $SU(2)$ representations as
\begin{equation}
\mathcal K_k\cong S_k\oplus S_{k-1}\oplus\dots\oplus S_1\oplus S_0.
\label{loss-product-spin-decomposition}
\end{equation}
This is equivalent to having a code that detects every operator in $\mathcal T_n^\ell$ for each $0\leq\ell\leq k$.
\label{loss-rotation-equivalence}
\end{lemma}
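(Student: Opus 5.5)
The proof splits into three steps: reduce loss correction to detection of $\mathcal K_k$, decompose $\mathcal K_k$ as an $SU(2)$ representation, and identify that decomposition with the tensor spaces $\mathcal T_n^\ell$.

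\emph{Reduction to $\mathcal K_k$.} Correcting loss up to order $k$ means the Knill--Laflamme conditions $PK_{l_1'l_2'}^\dagger K_{l_1l_2}P\propto P$ hold for all Kraus operators with $l_1+l_2\le k$ and $l_1'+l_2'\le k$. By Eq.~\eqref{different-loss-orders-orthogonal}, every cross term with $l_1+l_2\neq l_1'+l_2'$ vanishes identically after projection by $P_n$. Such terms therefore impose no constraint.

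By Eq.~\eqref{kraus-loss-monomial}, each remaining same-order product is a nonzero scalar multiple of $L^\dagger L'$ with $L,L'\in\mathcal L_{k'}$, for some $k'\le k$. The scalar is nonzero for $0<\eta<1$. Since the conditions are linear in the operator, they hold exactly when $POP\propto P$ for all $O\in\mathcal K_{k'}$ and all $k'\le k$. Lemma~\ref{lower-loss-KL-inclusion}, iterated, gives $\mathcal K_0\subseteq\mathcal K_1\subseteq\dots\subseteq\mathcal K_k$. So the whole family reduces to detection of $\mathcal K_k$ alone.

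\emph{Decomposition of $\mathcal K_k$.} Lemma~\ref{loss-kl-representation} gives $\mathcal K_k\cong S_{k/2}^*\otimes S_{k/2}$. Lemma~\ref{su2-self-dual} replaces $S_{k/2}^*$ by $S_{k/2}$. The Clebsch--Gordan theorem (Lemma~\ref{Clebsch-gordan}) with $r=s=k$ then yields $\bigoplus_{\ell=0}^k S_\ell$, treating $k=0$ trivially. This establishes Eq.~\eqref{loss-product-spin-decomposition}.

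\emph{Identification with the $\mathcal T_n^\ell$.} This is the step needing the most care, because it matches an abstract isomorphism with concrete subspaces of $\operatorname{End}(H_n)$. Restricted to $H_n$, $\mathcal K_k$ is an $SU(2)$-stable subspace of $\operatorname{End}(H_n)$ under conjugation. It is stable because $\mathcal L_k$ and $\mathcal L_k^\dagger$ are stable, and Lemma~\ref{adjoint-KL-rep-lemma} shows the product map is equivariant.

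By Eq.~\eqref{end-spin-decomposition}, each $S_\ell$ with $\ell\le n$ occurs exactly once in $\operatorname{End}(H_n)$, namely as $\mathcal T_n^\ell$. Consider any $SU(2)$-stable subspace $W\subseteq\operatorname{End}(H_n)$. Complete reducibility and Schur's lemma (Lemma~\ref{schur-extension}) show that the image of any copy of $S_\ell$ inside $W$ is an $S_\ell$-subrepresentation of $\operatorname{End}(H_n)$. Multiplicity one forces that image to be $\mathcal T_n^\ell$. Equivalently, the isotypic projector of Eq.~\eqref{isotypic-projector-formula} (with the Haar integral in place of the finite sum) maps $W$ into $\mathcal T_n^\ell$, and this restriction is onto whenever $S_\ell$ appears in $W$.

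Applying this with $W=\mathcal K_k$, and using $k\le n$, gives $\mathcal K_k=\bigoplus_{\ell=0}^k\mathcal T_n^\ell$ as an equality of subspaces. Detection is linear in the operator, so detecting $\mathcal K_k$ is equivalent to detecting every $\mathcal T_n^\ell$ with $0\le\ell\le k$. This is the final claim of the lemma.
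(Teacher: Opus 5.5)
Your proof is correct and follows the same route as the paper. You kill the cross-order products, embed each same-order product space $\mathcal K_p$ into $\mathcal K_k$ via the lower-order inclusion identity, and then combine $\mathcal K_k\cong S_{k/2}^*\otimes S_{k/2}$, self-duality, and Clebsch--Gordan. Your third step uses the multiplicity-one decomposition $\operatorname{End}(H_n)\cong\bigoplus_{\ell=0}^n S_\ell$ to upgrade the abstract isomorphism to the subspace equality $\mathcal K_k=\bigoplus_{\ell=0}^k\mathcal T_n^\ell$. The paper leaves this step implicit when it says the result ``then follows,'' and your version spells it out.
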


\begin{proof}
Products between errors of different total loss orders vanish by Eq.~\eqref{different-loss-orders-orthogonal}. For a common total loss order $p\leq k$, the corresponding products lie in $\mathcal K_p\subseteq\mathcal K_k$ by repeated application of Lemma~\ref{lower-loss-KL-inclusion}. The result then follows from Lemmas~\ref{loss-kl-representation} and \ref{su2-self-dual}, together with the Clebsch--Gordan theorem, Lemma~\ref{Clebsch-gordan}.
\end{proof}

\section{Knill--Laflamme conditions for finite-group-invariant codes}

\begin{corollary}
Let $P:V\to V$ be a Hermitian projector that is $G$-invariant with respect to a unitary representation $\pi_V$ of $G$ on $V$. For each irrep $C$ occurring in $V$, let $\pi_C$ denote its representation, let $M_C$ be its multiplicity space, and let $S_C:M_C\otimes C\to V$ be an equivariant isometry mapping the $C$-isotypic component to the physical representation. Here $G$ acts trivially on $M_C$, and equivariance means
\begin{equation}
\pi_V(g)S_C=S_C\bigl(I_{M_C}\otimes\pi_C(g)\bigr)
\qquad \forall g\in G.
\label{isotypic-embedding-equivariance}
\end{equation}
Then the code projector can be written as
\begin{equation}
P=\sum_C S_C(P_C\otimes I_C)S_C^\dagger,
\label{reducible-irrep-projector}
\end{equation}
where $P_C$ is a Hermitian projector on $M_C$.
\label{invariant-projector-form}
\end{corollary}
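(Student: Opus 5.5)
The plan is to read $G$-invariance of $P$ as the statement $P\in\operatorname{End}_G(V)$ and then apply Lemma~\ref{EndGU-decomposition}, which gives $\operatorname{End}_G(V)\cong\bigoplus_C\operatorname{End}(M_C)\otimes I_C$.

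First, $G$-invariance means $\pi_V(g)P\pi_V(g)^{-1}=P$ for all $g$. So $P$ commutes with the representation and is therefore $G$-linear. Next, I will make the isomorphism of Lemma~\ref{EndGU-decomposition} concrete using the isometries $S_C$. Because the isotypic components are mutually orthogonal and together span $V$, one has $\sum_C S_CS_C^\dagger=I_V$ and $S_C^\dagger S_{D}=\delta_{CD}I_{M_C\otimes C}$. These identities follow from the isotypic projectors $\Pi_C=S_CS_C^\dagger$ in Eq.~\eqref{isotypic-projector-formula}, which are orthogonal projectors for a unitary representation.

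Inserting the resolution of the identity on both sides of $P$ gives
\begin{equation}
P=\sum_{C,D}S_C\bigl(S_C^\dagger P S_D\bigr)S_D^\dagger .
\end{equation}
Each block $X_{CD}=S_C^\dagger PS_D:M_D\otimes D\to M_C\otimes C$ is $G$-linear. This follows from Eq.~\eqref{isotypic-embedding-equivariance} together with its adjoint, $S_C^\dagger\pi_V(g)=(I\otimes\pi_C(g))S_C^\dagger$, which holds by unitarity. By Lemma~\ref{hom-tensor-prod}, $X_{CD}$ lies in $\operatorname{Hom}(M_D,M_C)\otimes\operatorname{Hom}_G(D,C)$. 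Schur's lemma (Lemma~\ref{schur}) then forces $X_{CD}=0$ for $C\neq D$ and $X_{CC}=P_C\otimes I_C$ for some $P_C\in\operatorname{End}(M_C)$.

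It remains to show that each $P_C$ is a Hermitian projector. Since $S_C^\dagger S_C=I$, we have $P_C\otimes I_C=S_C^\dagger PS_C$, which is Hermitian because $P$ is. Using $P^2=P$ and the block-diagonal form just obtained,
\begin{equation}
P_C^2\otimes I_C=S_C^\dagger P S_C S_C^\dagger P S_C=S_C^\dagger P\Pi_C P S_C=S_C^\dagger P S_C=P_C\otimes I_C .
\end{equation}
The third equality holds because $P$ commutes with $\Pi_C$. Since $I_C\neq0$, this gives $P_C^2=P_C$ and $P_C^\dagger=P_C$.

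The one point needing care is the orthogonality $S_C^\dagger S_D=\delta_{CD}I$ for distinct irreps. I would obtain it from the fact that the $\Pi_C$ are Hermitian, which holds because $\chi(g^{-1})=\chi(g)^*$ and $\pi$ is unitary, together with $\Pi_C\Pi_D=\delta_{CD}\Pi_C$. Everything else is a routine application of Schur's lemma.
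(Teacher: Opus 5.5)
Your proposal is correct and follows the same route as the paper. You read $G$-invariance as $P\in\operatorname{End}_G(V)$, use the decomposition $\operatorname{End}_G(V)\cong\bigoplus_C\operatorname{End}(M_C)\otimes I_C$ (Schur's lemma applied to the blocks $S_C^\dagger P S_D$), and then deduce that each $P_C$ is a Hermitian idempotent. The only difference is that you spell out the identification through $\sum_C S_CS_C^\dagger=I_V$ and $S_C^\dagger S_D=\delta_{CD}I$, which the paper compresses into its final sentence about mapping the isotypic components into $V$ with the $S_C$.
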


\begin{proof}
If $P$ is $G$-invariant, i.e., $\pi_V(g)P\pi_V(g)^{-1}=P$, we have that by definition $P\in\operatorname{End}_G(V)$, and thus by Lemma~\ref{EndGU-decomposition}, we know the projector takes the form $\bigoplus_C P_C\otimes I_C$, i.e., is block diagonal with respect to the irreps of the group. As a result we're also going to have $P_C^2=P_C$ and $P_C=P_C^\dagger$ for all $C$. Mapping the isotypic components to the physical space $V$ with the isometries $S_C$ gives Eq.~\eqref{reducible-irrep-projector}.
\end{proof}

\subsection{General invariant-code Knill--Laflamme theorem}

We prove the following necessary and sufficient KL conditions for codes of this form. Let $W\subseteq\operatorname{End}(V)$ be a $G$-invariant space of operators we wish to detect, and write
\begin{equation}
W\cong\bigoplus_U U^{\oplus m_U}
\label{KL-operator-space-decomposition}
\end{equation}
as a $G$-representation under the conjugation action. For each irrep $U$, let $d_U=\dim U$, let $\pi_U$ denote the representation of $G$ on $U$, and choose a $G$-adapted basis $T_\mu^{U,\alpha}$, with $1\leq\mu\leq d_U$ and $1\leq\alpha\leq m_U$, satisfying
\begin{equation}
\pi_V(g)T_\mu^{U,\alpha}\pi_V(g)^{-1}
=\sum_{\nu=1}^{d_U}[\pi_U(g)]_{\nu\mu}T_\nu^{U,\alpha}.
\label{G-adapted-KL-basis}
\end{equation}
For each pair of irreps $C,D$ occurring in $V$, let $\pi_C$ and $\pi_D$ denote their representations, and define
\begin{equation}
r_U^{C,D}=\dim\operatorname{Hom}_G(U,D^*\otimes C).
\label{cross-irrep-multiplicity}
\end{equation}
Choose a basis $\Phi_j^{U;C,D}:U\to\operatorname{Hom}(D,C)$ of this intertwiner space, with $1\leq j\leq r_U^{C,D}$, and define
\begin{equation}
F_{j,\mu}^{U;C,D}=\Phi_j^{U;C,D}(\ket{\mu}).
\label{F-from-intertwiner}
\end{equation}
For the unique trivial intertwiner in $\operatorname{Hom}_G(\mathbf 1,C^*\otimes C)$, we choose the normalization
\begin{equation}
F_{1,1}^{\mathbf 1;C,C}=I_C.
\label{trivial-F-normalization}
\end{equation}

\begin{theorem}
Let $P$ be the $G$-invariant projector in Eq.~\eqref{reducible-irrep-projector}. For each $U,C,D,\alpha$, there are unique maps $B_{j,\alpha}^{U;C,D}:M_D\to M_C$, independent of $\mu$, such that
\begin{equation}
S_C^\dagger T_\mu^{U,\alpha}S_D
=\sum_{j=1}^{r_U^{C,D}}B_{j,\alpha}^{U;C,D}\otimes F_{j,\mu}^{U;C,D}.
\label{general-compressed-basis-operator}
\end{equation}
Then $P$ detects every operator in $W$ if and only if
\begin{equation}
P_CB_{j,\alpha}^{U;C,D}P_D
=
\begin{cases}
\lambda_\alpha P_C, & C=D,\ U=\mathbf 1,\ j=1,\\
0, & \text{otherwise},
\end{cases}
\label{general-invariant-code-KL-conditions}
\end{equation}
for every $U,C,D,\alpha,j$. The scalar $\lambda_\alpha$ is common to every $C$ for which $P_C\neq0$, although it may depend on the multiplicity label $\alpha$ of the trivial representation in $W$.
\label{general-invariant-code-KL}
\end{theorem}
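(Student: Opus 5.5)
The proof has two parts: first the existence and uniqueness of the decomposition \eqref{general-compressed-basis-operator}, then the reduction of the detection conditions $POP=\lambda_O P$ to the multiplicity spaces.

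\emph{Decomposition.} For fixed $U,C,D,\alpha$, define the linear map
\[
\Psi:U\to\operatorname{Hom}(M_D\otimes D,\,M_C\otimes C),\qquad \ket{\mu}\mapsto S_C^\dagger T_\mu^{U,\alpha}S_D .
\]
By \eqref{isotypic-embedding-equivariance} and its adjoint $S_C^\dagger\pi_V(g)=(I\otimes\pi_C(g))S_C^\dagger$, together with \eqref{G-adapted-KL-basis}, the map $\Psi$ is $G$-equivariant. Hence $\Psi\in\operatorname{Hom}_G(U,\operatorname{Hom}(M_D,M_C)\otimes\operatorname{Hom}(D,C))$, where $G$ acts trivially on the first factor. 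Following the argument of Lemma~\ref{hom-tensor-prod}, this space is isomorphic to
\[
\operatorname{Hom}(M_D,M_C)\otimes\operatorname{Hom}_G(U,\operatorname{Hom}(D,C)).
\]
The second factor has basis $\Phi_j^{U;C,D}$, $1\le j\le r_U^{C,D}$. Therefore $\Psi=\sum_j B_{j,\alpha}^{U;C,D}\otimes\Phi_j^{U;C,D}$ with unique coefficients $B_{j,\alpha}^{U;C,D}$, which are independent of $\mu$ by construction. Evaluating at $\ket{\mu}$ gives \eqref{general-compressed-basis-operator}.

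\emph{Reduction of the detection conditions.} By linearity, $P$ detects all of $W$ if and only if $PT_\mu^{U,\alpha}P=\lambda_{U,\alpha,\mu}P$ for every basis element. Since the isometries $S_C$ have mutually orthogonal ranges summing to $V$, this operator identity is equivalent to the family of block identities obtained by compressing with $S_C^\dagger(\cdot)S_D$:
\[
(P_C\otimes I_C)\Bigl(\sum_j B_{j,\alpha}\otimes F_{j,\mu}\Bigr)(P_D\otimes I_D)=\delta_{CD}\,\lambda\,(P_C\otimes I_C)
\quad\text{for all } C,D.
\]
The left side equals $\sum_j (P_CB_{j,\alpha}P_D)\otimes F_{j,\mu}$.

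The central fact is that, for fixed $C,D,U$, the vectors $(F_{j,\mu})_{\mu}$, viewed as a $j$-indexed family in $\operatorname{Hom}(D,C)^{d_U}$, are linearly independent. This holds because the $\Phi_j$ are a basis of intertwiners and each $\Phi_j$ is determined by its values on the basis vectors $\ket{\mu}$. Consequently, a relation $\sum_j X_j\otimes F_{j,\mu}=Y_\mu$ holding for all $\mu$ determines the $X_j$ uniquely. I would prove this by expanding each $X_j$ in matrix units and using the independence of the $\Phi_j$ coefficientwise.

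\emph{Identifying the right-hand side.} Next I determine which scalars $\lambda$ can be nonzero. The operator $\delta_{CD}\lambda P_C\otimes I_C$ is $G$-invariant, whereas the left side, as a function of $\mu$, transforms as $U$. If $U\neq\mathbf 1$, equivariance forces $\lambda_{U,\alpha,\mu}=0$: conjugate $PT_\mu P=\lambda_\mu P$ by $\pi_V(g)$ and use that $P$ commutes with $\pi_V(g)$, which shows $\lambda_\mu$ spans a copy of $U$ inside the trivial representation. For $U=\mathbf 1$ and $C=D$, the normalization $F_{1,1}^{\mathbf 1;C,C}=I_C$ together with the uniqueness of the trivial summand in $C^*\otimes C$ (Lemma~\ref{unique-trivial-rep-end}) gives $r=1$, and the identity becomes $P_CB_{1,\alpha}P_C=\lambda_\alpha P_C$. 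The scalar $\lambda_\alpha$ is then common to all $C$ with $P_C\neq0$, because it is the single scalar attached to $T^{\mathbf 1,\alpha}$. In every other case, the independence fact forces $P_CB_{j,\alpha}P_D=0$.

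The converse runs the same computation backwards: summing the block identities reassembles $PT_\mu^{U,\alpha}P=\lambda P$.

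\emph{Main obstacle.} I expect the hardest step to be stating and proving the independence fact cleanly. It must cover both the case $C=D$, where $U\subset C^*\otimes C$ can occur with multiplicity, and the case $C\neq D$. It must also handle correctly an irrep $U\cong\mathbf 1$ appearing in $\operatorname{Hom}(D,C)$ when $C\neq D$; Schur's lemma (Lemma~\ref{schur}) excludes this, so $r_{\mathbf 1}^{C,D}=0$ there.
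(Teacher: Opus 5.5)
Your proposal is correct and follows essentially the paper's route: the equivariant map $U\to\operatorname{Hom}(M_D\otimes D,M_C\otimes C)$ is factored through $\operatorname{Hom}(M_D,M_C)\otimes\operatorname{Hom}_G(U,D^*\otimes C)$, the detection condition is compressed blockwise by $S_C^\dagger(\cdot)S_D$, and Schur's lemma handles the trivial sector. The ``independence fact'' you flag as the main obstacle is just the injectivity of that same factorization, i.e.\ the uniqueness of the $B_{j,\alpha}^{U;C,D}$ you already proved, so it needs no separate argument. The one local difference is the case $C=D$, $U\neq\mathbf 1$: you first kill the scalar globally, since invariance of $P$ makes $\ket{\mu}\mapsto\lambda_\mu$ a $G$-invariant functional on $U$, whereas the paper argues blockwise that the images of the $\Phi_j^{U;C,C}$ lie in the $U$-isotypic part of $\operatorname{End}(C)$, which meets $\mathbb{C}I_C$ trivially; either argument suffices.
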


\begin{proof}
Fix two irreps $C,D$ of $G$, and define the map
\begin{equation}
X_\alpha^{U;C,D}:U\longrightarrow
\operatorname{Hom}(M_D\otimes D,M_C\otimes C)
\label{cross-block-intertwiner}
\end{equation}
by
\begin{equation}
X_\alpha^{U;C,D}(\ket{\mu})
=S_C^\dagger T_\mu^{U,\alpha}S_D,
\label{cross-block-intertwiner-action}
\end{equation}
where $\ket{\mu}$ is the basis of $U$ used in Eq.~\eqref{G-adapted-KL-basis}. We first show that this map is $G$-equivariant. Indeed,
\begin{equation}
\begin{aligned}
X_\alpha^{U;C,D}(\pi_U(g)\ket{\mu})
&=X_\alpha^{U;C,D}\left(\sum_\nu[\pi_U(g)]_{\nu\mu}\ket{\nu}\right)\\
&=\sum_\nu[\pi_U(g)]_{\nu\mu}S_C^\dagger T_\nu^{U,\alpha}S_D.
\end{aligned}
\label{cross-block-equivariance-first}
\end{equation}
On the other hand, the equivariance of $S_C$ and $S_D$ gives
\begin{equation}
\begin{aligned}
g\cdot X_\alpha^{U;C,D}(\ket{\mu})
&=(I_{M_C}\otimes\pi_C(g))S_C^\dagger T_\mu^{U,\alpha}S_D
(I_{M_D}\otimes\pi_D(g)^{-1})\\
&=S_C^\dagger\pi_V(g)T_\mu^{U,\alpha}\pi_V(g)^{-1}S_D\\
&=\sum_\nu[\pi_U(g)]_{\nu\mu}S_C^\dagger T_\nu^{U,\alpha}S_D.
\end{aligned}
\label{cross-block-equivariance-second}
\end{equation}
Thus,
\begin{equation}
X_\alpha^{U;C,D}(\pi_U(g)\ket{\mu})
=g\cdot X_\alpha^{U;C,D}(\ket{\mu}),
\label{cross-block-equivariance}
\end{equation}
i.e., $X_\alpha^{U;C,D}$ is $G$-equivariant. Therefore, using Lemma~\ref{hom-tensor-prod},
\begin{equation}
\begin{aligned}
X_\alpha^{U;C,D}
&\in\operatorname{Hom}_G\bigl(U,\operatorname{Hom}(M_D\otimes D,M_C\otimes C)\bigr)\\
&\cong\operatorname{Hom}(M_D,M_C)\otimes
\operatorname{Hom}_G\bigl(U,\operatorname{Hom}(D,C)\bigr)\\
&\cong\operatorname{Hom}(M_D,M_C)\otimes
\operatorname{Hom}_G(U,D^*\otimes C).
\end{aligned}
\label{cross-block-Hom-decomposition}
\end{equation}
Since the $\Phi_j^{U;C,D}$ form a basis of the second tensor factor, there are unique maps $B_{j,\alpha}^{U;C,D}:M_D\to M_C$ such that
\begin{equation}
X_\alpha^{U;C,D}
=\sum_{j=1}^{r_U^{C,D}}B_{j,\alpha}^{U;C,D}\otimes\Phi_j^{U;C,D}.
\label{cross-block-B-decomposition}
\end{equation}
Evaluating Eq.~\eqref{cross-block-B-decomposition} at $\ket{\mu}$ gives Eq.~\eqref{general-compressed-basis-operator}.

We're now ready to look at the QEC matrix of our code for each elementary tensor $T_\mu^{U,\alpha}$. Indeed, we have
\begin{equation}
\begin{aligned}
PT_\mu^{U,\alpha}P
&=\sum_{C,D}S_C(P_C\otimes I_C)S_C^\dagger
T_\mu^{U,\alpha}S_D(P_D\otimes I_D)S_D^\dagger\\
&=\sum_{C,D}S_C\left(
\sum_{j=1}^{r_U^{C,D}}
P_CB_{j,\alpha}^{U;C,D}P_D\otimes F_{j,\mu}^{U;C,D}
\right)S_D^\dagger.
\end{aligned}
\label{general-QEC-matrix}
\end{equation}
Since the isometries $S_C$ map to mutually orthogonal isotypic components of $V$, we have
\begin{equation}
S_C^\dagger S_D=\delta_{C,D}I_{M_C}\otimes I_C,
\label{isotypic-isometry-orthogonality}
\end{equation}
and hence
\begin{equation}
S_C^\dagger P S_D=\delta_{C,D}P_C\otimes I_C.
\label{invariant-projector-block}
\end{equation}

Suppose first that $P$ detects every operator in $W$. For $C\neq D$, compressing the KL equation for $T_\mu^{U,\alpha}$ between $S_C^\dagger$ and $S_D$ gives
\begin{equation}
\sum_{j=1}^{r_U^{C,D}}
P_CB_{j,\alpha}^{U;C,D}P_D\otimes F_{j,\mu}^{U;C,D}=0
\qquad\text{for every }\mu.
\label{cross-isotypic-KL-block}
\end{equation}
Equivalently, the corresponding map in
$\operatorname{Hom}(M_D,M_C)\otimes\operatorname{Hom}_G(U,D^*\otimes C)$ is zero. Since the $\Phi_j^{U;C,D}$ form a basis, this is equivalent to
\begin{equation}
P_CB_{j,\alpha}^{U;C,D}P_D=0
\label{cross-isotypic-KL-condition}
\end{equation}
for every $j$.

Now take $C=D$. If $U$ is nontrivial, the image of every $\Phi_j^{U;C,C}$ lies in the nontrivial isotypic component of $\operatorname{End}(C)$. By Lemmas~\ref{end-decomp} and \ref{unique-trivial-rep-end}, this component has zero intersection with $\mathbb CI_C$. Thus the diagonal block of $PT_\mu^{U,\alpha}P$ can be proportional to $P_C\otimes I_C$ only if
\begin{equation}
P_CB_{j,\alpha}^{U;C,C}P_C=0
\label{nontrivial-diagonal-KL-condition}
\end{equation}
for every $j$. Indeed, the equality must hold for every $\mu$, so the corresponding intertwiner $\sum_jP_CB_{j,\alpha}^{U;C,C}P_C\otimes\Phi_j^{U;C,C}$ must vanish. The linear independence of the $\Phi_j^{U;C,C}$ then gives Eq.~\eqref{nontrivial-diagonal-KL-condition}. For $U=\mathbf 1$, evaluation at $1\in\mathbf 1\cong\mathbb C$ gives the canonical isomorphism
\begin{equation}
\operatorname{Hom}_G(\mathbf 1,C^*\otimes C)
\cong(C^*\otimes C)^G,
\qquad \Phi\longmapsto\Phi(1).
\label{trivial-Hom-fixed-subspace}
\end{equation}
Indeed, since $G$ acts trivially on $\mathbf 1$, equivariance gives $g\cdot\Phi(1)=\Phi(g\cdot1)=\Phi(1)$, so $\Phi(1)\in(C^*\otimes C)^G$. Conversely, every $X\in(C^*\otimes C)^G$ defines an equivariant map $\Phi_X:\mathbf 1\to C^*\otimes C$ by $\Phi_X(z)=zX$. Under the identification $C^*\otimes C\cong\operatorname{End}(C)$, Lemma~\ref{end-decomp} gives $\operatorname{End}(C)\cong\mathbb CI_C\oplus\operatorname{End}_0(C)$, while Lemma~\ref{unique-trivial-rep-end} gives $\operatorname{End}_0(C)^G=\{0\}$. Therefore,
\begin{equation}
\operatorname{Hom}_G(\mathbf 1,C^*\otimes C)
\cong\operatorname{End}(C)^G
=\mathbb CI_C,
\label{trivial-intertwiner-space}
\end{equation}
so $r_{\mathbf 1}^{C,C}=1$. With the normalization in Eq.~\eqref{trivial-F-normalization}, the diagonal block is proportional to $P_C\otimes I_C$ exactly when
\begin{equation}
P_CB_{1,\alpha}^{\mathbf 1;C,C}P_C=\lambda_\alpha P_C.
\label{trivial-diagonal-KL-condition}
\end{equation}
The same scalar $\lambda_\alpha$ must occur for every nonzero $C$-sector because the full operator $PT_1^{\mathbf 1,\alpha}P$ must be a single scalar multiple of $P$.

Conversely, suppose Eq.~\eqref{general-invariant-code-KL-conditions} holds. Substituting those conditions into Eq.~\eqref{general-QEC-matrix} gives
\begin{equation}
PT_\mu^{U,\alpha}P
=
\begin{cases}
\lambda_\alpha P, & U=\mathbf 1,\\
0, & U\neq\mathbf 1.
\end{cases}
\label{adapted-basis-KL-conclusion}
\end{equation}
Thus the KL detection condition holds for every basis operator $T_\mu^{U,\alpha}$, and hence by linearity for every operator in $W$.
\end{proof}

\begin{remark}
The preceding proof gives the selection rule
\begin{equation}
U\not\subseteq D^*\otimes C
\implies S_C^\dagger T_\mu^{U,\alpha}S_D=0.
\label{reducible-selection-rule}
\end{equation}
Thus, we need only check the relevant conditions for $U\subseteq D^*\otimes C$.
\end{remark}

\subsection{Single-isotypic and pure-irrep corollaries}

We now specialize the general theorem to a code supported on a single isotypic component. Let $C$ be an irrep occurring in $V$, and let $P_S$ be a Hermitian projector on its multiplicity space $M_C$. The code projector is
\begin{equation}
P=S_C(P_S\otimes I_C)S_C^\dagger.
\label{single-isotypic-code-projector}
\end{equation}

\begin{corollary}
The projector in Eq.~\eqref{single-isotypic-code-projector} detects every operator in $W$ if and only if
\begin{equation}
P_SB_{j,\alpha}^{U;C,C}P_S
=
\begin{cases}
\lambda_\alpha P_S, & U=\mathbf 1,\ j=1,\\
0, & U\neq\mathbf 1,
\end{cases}
\label{single-isotypic-KL-conditions}
\end{equation}
for every irrep $U$ shared by $W$ and $\operatorname{End}(C)$, every $1\leq\alpha\leq m_U$, and every $1\leq j\leq r_U^{C,C}$.
\label{single-isotypic-KL}
\end{corollary}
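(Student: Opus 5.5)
This is a direct specialization of Theorem~\ref{general-invariant-code-KL}. The single-isotypic projector in Eq.~\eqref{single-isotypic-code-projector} is exactly of the form Eq.~\eqref{reducible-irrep-projector} with $P_C=P_S$ and $P_D=0$ for every irrep $D\neq C$. I will substitute this choice into Eq.~\eqref{general-invariant-code-KL-conditions} and check which conditions survive.

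\textbf{Step 1: cross-isotypic conditions.} For any pair $(C',D')$ with $C'\neq C$ or $D'\neq C$, at least one of $P_{C'}$, $P_{D'}$ vanishes. So $P_{C'}B_{j,\alpha}^{U;C',D'}P_{D'}=0$ holds automatically. This covers both the ``otherwise'' case with $C'\neq D'$ and the diagonal blocks of other irreps. In the latter case the trivial-irrep requirement $\lambda_\alpha P_{C'}$ reads $0=0$, since $P_{C'}=0$. Hence the statement in Theorem~\ref{general-invariant-code-KL} that $\lambda_\alpha$ is common to all nonzero sectors imposes nothing extra.

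\textbf{Step 2: the diagonal block $C'=D'=C$.} Here the theorem gives exactly the displayed conditions. For $U=\mathbf 1$ we have $r_{\mathbf 1}^{C,C}=1$ by Eq.~\eqref{trivial-intertwiner-space}, so only $j=1$ occurs, and the condition is $P_SB_{1,\alpha}^{\mathbf 1;C,C}P_S=\lambda_\alpha P_S$. For $U\neq\mathbf 1$ the condition is $P_SB_{j,\alpha}^{U;C,C}P_S=0$ for all $j$.

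\textbf{Step 3: restricting to shared irreps.} It remains to justify quantifying only over irreps $U$ shared by $W$ and $\operatorname{End}(C)$. If $U$ does not occur in $W$, then $m_U=0$ and there is nothing to check. If $U$ does not occur in $\operatorname{End}(C)\cong C^*\otimes C$, then $r_U^{C,C}=\dim\operatorname{Hom}_G(U,C^*\otimes C)=0$. In that case the selection rule Eq.~\eqref{reducible-selection-rule} gives $S_C^\dagger T_\mu^{U,\alpha}S_C=0$, so the condition is vacuous.

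\textbf{Step 4: the case $P_S=0$.} One should briefly note that if $P_S=0$ the code is trivial and both sides hold vacuously.

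There is no real obstacle here. The only point needing care is the bookkeeping in Step 1: the common-$\lambda_\alpha$ requirement and the off-diagonal conditions collapse because all other $P_D$ vanish. Everything else is read off directly from Theorem~\ref{general-invariant-code-KL}.
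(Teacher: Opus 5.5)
Your proposal is correct and matches the paper's proof: both apply the general invariant-code KL theorem with $P_S$ as the only nonzero multiplicity projector, so every cross-isotypic condition and every other diagonal sector (including the common-$\lambda_\alpha$ requirement) becomes trivial. Your Steps 3 and 4, on restricting to shared irreps and the case $P_S=0$, just spell out bookkeeping that the paper leaves implicit.
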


\begin{proof}
In the setting of Eq.~\eqref{single-isotypic-code-projector}, the only nonzero multiplicity projector in Eq.~\eqref{reducible-irrep-projector} is $P_C=P_S$. Thus, all cross-isotypic conditions in Theorem~\ref{general-invariant-code-KL} disappear, and Eq.~\eqref{general-invariant-code-KL-conditions} reduces exactly to Eq.~\eqref{single-isotypic-KL-conditions}.
\end{proof}

We now prove a corollary that will be very useful in practice.

\begin{corollary}
Let $C$ be an irrep of $G$ that occurs with multiplicity one in $V$, and let $P=S_CS_C^\dagger$ be the projector onto its unique copy. Let $O\in\operatorname{End}(V)$, and let
\begin{equation}
W_O=\operatorname{span}\{\pi_V(g)O\pi_V(g)^{-1}:g\in G\}
\label{single-operator-orbit-span}
\end{equation}
be its $G$-orbit span. If the only irrep of $W_O$ that appears in $\operatorname{End}(C)\cong C\otimes C^*$ is the trivial representation, then
\begin{equation}
POP=c(O)P
\label{trivial-overlap-automatic-detection}
\end{equation}
for some scalar $c(O)$.
\label{trivial-irrep-qecc}
\end{corollary}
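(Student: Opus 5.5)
The plan is to apply Corollary~\ref{single-isotypic-KL} (equivalently Theorem~\ref{general-invariant-code-KL}) with $W=W_O$, $M_C\cong\mathbb C$ and $P_S=1$.

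First, $W_O$ is $G$-invariant by construction: conjugating $\pi_V(h)O\pi_V(h)^{-1}$ by $\pi_V(g)$ gives $\pi_V(gh)O\pi_V(gh)^{-1}$. So it decomposes as $W_O\cong\bigoplus_U U^{\oplus m_U}$, and we may choose a $G$-adapted basis $T_\mu^{U,\alpha}$ of $W_O$ as in Eq.~\eqref{G-adapted-KL-basis}.

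Since $C$ has multiplicity one, $M_C\cong\mathbb C$. Hence every $B_{j,\alpha}^{U;C,C}$ is a scalar $b_{j,\alpha}^U$, and the single-isotypic projector is $P_S=1$. Eq.~\eqref{general-compressed-basis-operator} then reads
\begin{equation}
S_C^\dagger T_\mu^{U,\alpha}S_C=\sum_j b_{j,\alpha}^U F_{j,\mu}^{U;C,C}.
\end{equation}
By the selection rule, Eq.~\eqref{reducible-selection-rule}, this compression vanishes unless $U\subseteq D^*\otimes C$ with $D=C$, that is, unless $U$ occurs in $C^*\otimes C\cong\operatorname{End}(C)$. By hypothesis the only irrep shared by $W_O$ and $\operatorname{End}(C)$ is $\mathbf 1$. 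Therefore $S_C^\dagger T_\mu^{U,\alpha}S_C=0$ for every nontrivial $U$. For $U=\mathbf 1$, the relations $r_{\mathbf 1}^{C,C}=1$ and the normalization in Eq.~\eqref{trivial-F-normalization} give $S_C^\dagger T_1^{\mathbf 1,\alpha}S_C=b_{1,\alpha}^{\mathbf 1}I_C$. So the conditions of Corollary~\ref{single-isotypic-KL} hold with $\lambda_\alpha=b_{1,\alpha}^{\mathbf 1}$. The "otherwise" conditions are vacuous or automatically zero, and the trivial ones hold with $P_S=1$.

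Corollary~\ref{single-isotypic-KL} therefore says $P$ detects every operator in $W_O$. In particular it detects $O$ itself, since $O=\pi_V(e)O\pi_V(e)^{-1}\in W_O$. Concretely, we expand $O=\sum c_{U,\alpha,\mu}T_\mu^{U,\alpha}$ in the adapted basis. Using $P=S_CS_C^\dagger$, this gives $POP=S_C\bigl(S_C^\dagger OS_C\bigr)S_C^\dagger=\bigl(\sum_\alpha c_{\mathbf 1,\alpha,1}\lambda_\alpha\bigr)P$. So $c(O)=\sum_\alpha c_{\mathbf 1,\alpha,1}\lambda_\alpha$.

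The only step needing care is justifying the selection rule directly in this setting. The argument is that $\mu\mapsto S_C^\dagger T_\mu^{U,\alpha}S_C$ is a $G$-intertwiner $U\to\operatorname{End}(C)$, by the computation in Eq.~\eqref{cross-block-equivariance}. By Schur's lemma it vanishes when $U$ does not occur in $\operatorname{End}(C)$. When $U=\mathbf 1$, its image lies in $\operatorname{End}(C)^G=\mathbb CI_C$ by Lemma~\ref{unique-trivial-rep-end}. With this in hand the rest is bookkeeping.
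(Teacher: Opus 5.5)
Your proposal is correct and is essentially the paper's argument. The selection rule forces every nontrivial-irrep component of $S_C^\dagger O S_C$ to vanish, and the trivial component lies in $\operatorname{End}(C)^G=\mathbb{C}I_C$, so $POP=S_C\bigl(c(O)I_C\bigr)S_C^\dagger=c(O)P$; the paper applies this directly to $O$ instead of expanding in an adapted basis of $W_O$ and invoking the single-isotypic corollary, but the content is the same.
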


\begin{proof}
By the selection rule in Eq.~\eqref{reducible-selection-rule}, only irreps shared by $W_O$ and $\operatorname{End}(C)$ can contribute to $S_C^\dagger O S_C$. By assumption, the only such irrep is $\mathbf 1$. Remember that
\begin{equation}
\operatorname{End}(C)\cong\mathbb CI_C\oplus\operatorname{End}_0(C),
\label{EndC-trivial-nontrivial-split-repeated}
\end{equation}
and the trivial representation does not appear in $\operatorname{End}_0(C)$ by Lemma~\ref{unique-trivial-rep-end}. Thus,
\begin{equation}
S_C^\dagger O S_C=c(O)I_C
\label{multiplicity-one-compression-scalar}
\end{equation}
for some scalar $c(O)$. Mapping back to $V$ gives
\begin{equation}
POP=S_C\bigl(c(O)I_C\bigr)S_C^\dagger=c(O)P,
\label{multiplicity-one-automatic-KL-proof}
\end{equation}
as desired.
\end{proof}

\begin{remark}
More generally, suppose $P_S=\ket{s}\bra{s}$ has rank one, where $\ket{s}\in M_C$ is a unit vector~\cite{denysQuantumErrorCorrectingCodes2024}. For every operator $B\in\operatorname{End}(M_C)$,
\begin{equation}
P_SBP_S=\braket{s|B|s}P_S.
\label{rank-one-multiplicity-compression}
\end{equation}
Therefore, the trivial-irrep conditions in Eq.~\eqref{single-isotypic-KL-conditions} are automatic. If $W$ and $\operatorname{End}(C)$ share nontrivial irreps, the pure-irrep code detects all of $W$ if and only if
\begin{equation}
\braket{s|B_{j,\alpha}^{U;C,C}|s}=0
\label{pure-irrep-nontrivial-KL-condition}
\end{equation}
for every shared nontrivial $U$ and every $j,\alpha$. In the multiplicity-one case, this is the statement that all coefficients of the shared nontrivial components vanish. 
\end{remark}

\subsection{Constructing isotypic embeddings and multiplicity matrices}

We now describe how to generally construct the isotypic embeddings $S_C$, the representation matrices $F_{j,\mu}^{U;C,D}$, and the multiplicity matrices $B_{j,\alpha}^{U;C,D}$ appearing in Theorem~\ref{general-invariant-code-KL}. Let $d_C=\dim C$, $d_D=\dim D$, and $d_V=\dim V$. Throughout this subsection, $\operatorname{vec}$ denotes column-stacking vectorization, so that
\begin{equation}
\operatorname{vec}(AXB)=(B^T\otimes A)\operatorname{vec}(X).
\label{column-vectorization-identity}
\end{equation}

We first construct $S_C:M_C\otimes C\to V$. Since $m_C=\dim M_C$ is the multiplicity of $C$ in $V$, the intertwiner space $\operatorname{Hom}_G(C,V)$ has dimension $m_C$. Choose a basis $\{\widetilde E_{C,a}:C\to V:1\leq a\leq m_C\}$ satisfying
\begin{equation}
\pi_V(g)\widetilde E_{C,a}=\widetilde E_{C,a}\pi_C(g)
\qquad\forall g\in G.
\label{isotypic-embedding-intertwiner-equation}
\end{equation}
By Eq.~\eqref{column-vectorization-identity}, these maps can be found by solving
\begin{equation}
\left(I_{d_C}\otimes\pi_V(g)-\pi_C(g)^T\otimes I_{d_V}\right)
\operatorname{vec}(\widetilde E_{C,a})=0
\qquad\forall g\in G.
\label{vectorized-isotypic-embedding-equation}
\end{equation}
It is enough to impose Eq.~\eqref{vectorized-isotypic-embedding-equation} on any generating set of $G$.

The maps $\widetilde E_{C,a}$ need not give an isometric embedding. Define their Gram matrix $\Gamma^{(C)}$ by
\begin{equation}
\Gamma^{(C)}_{ab}
=\frac{1}{d_C}\operatorname{Tr}\left(\widetilde E_{C,a}^\dagger\widetilde E_{C,b}\right).
\label{isotypic-embedding-Gram-matrix}
\end{equation}
Because $\widetilde E_{C,a}^\dagger\widetilde E_{C,b}\in\operatorname{End}_G(C)$, Schur's lemma gives
\begin{equation}
\widetilde E_{C,a}^\dagger\widetilde E_{C,b}
=\Gamma^{(C)}_{ab}I_C.
\label{embedding-intertwiner-Schur-reduction}
\end{equation}
The linear independence of the $\widetilde E_{C,a}$ implies that $\Gamma^{(C)}$ is positive definite. We therefore define
\begin{equation}
E_{C,a}=\sum_{b=1}^{m_C}\widetilde E_{C,b}
\left[(\Gamma^{(C)})^{-1/2}\right]_{ba}.
\label{orthonormalized-embedding-intertwiners}
\end{equation}
Then
\begin{equation}
\begin{aligned}
E_{C,a}^\dagger E_{C,b}
&=\sum_{c,d=1}^{m_C}
\left[(\Gamma^{(C)})^{-1/2}\right]_{ac}
\Gamma^{(C)}_{cd}
\left[(\Gamma^{(C)})^{-1/2}\right]_{db}I_C\\
&=\delta_{ab}I_C.
\end{aligned}
\label{orthonormalized-embedding-isometry}
\end{equation}
For an orthonormal basis $\{\ket{a}:1\leq a\leq m_C\}$ of $M_C$, define
\begin{equation}
S_C(\ket{a}\otimes c)=E_{C,a}c
\qquad(c\in C).
\label{isotypic-embedding-from-intertwiners}
\end{equation}
Equations~\eqref{isotypic-embedding-intertwiner-equation} and \eqref{orthonormalized-embedding-isometry} imply
\begin{equation}
S_C^\dagger S_C=I_{M_C}\otimes I_C,
\qquad
\pi_V(g)S_C=S_C\bigl(I_{M_C}\otimes\pi_C(g)\bigr).
\label{constructed-isotypic-embedding-properties}
\end{equation}
Thus $S_C$ is the required equivariant isometry, and $S_CS_C^\dagger$ is the orthogonal projector onto the $C$-isotypic component of $V$.

We next construct the $F_{j,\mu}^{U;C,D}$. Fix bases of $U$, $C$, and $D$. For an intertwiner $\Phi^{U;C,D}:U\to\operatorname{Hom}(D,C)$, define
\begin{equation}
M[\Phi^{U;C,D}]
=\begin{pmatrix}
\operatorname{vec}(F_1)&\operatorname{vec}(F_2)&\cdots&\operatorname{vec}(F_{d_U})
\end{pmatrix},
\qquad
F_\mu=\Phi^{U;C,D}(\ket{\mu}).
\label{intertwiner-matrix-definition}
\end{equation}
The equivariance of $\Phi^{U;C,D}$ is equivalent to
\begin{equation}
\left(\pi_D(g)^*\otimes\pi_C(g)\right)M[\Phi^{U;C,D}]
-M[\Phi^{U;C,D}]\pi_U(g)=0
\qquad\forall g\in G,
\label{intertwiner-matrix-equation}
\end{equation}
where $*$ denotes entrywise complex conjugation in the chosen basis. Vectorizing Eq.~\eqref{intertwiner-matrix-equation} gives
\begin{equation}
\left(
I_{d_U}\otimes\pi_D(g)^*\otimes\pi_C(g)
-\pi_U(g)^T\otimes I_{d_Cd_D}
\right)
\operatorname{vec}\left(M[\Phi^{U;C,D}]\right)=0.
\label{vectorized-cross-intertwiner-equation}
\end{equation}
Solving these equations gives $r_U^{C,D}$ linearly independent intertwiners $\widetilde\Phi_a^{U;C,D}$, with matrices $\widetilde M_a=M[\widetilde\Phi_a^{U;C,D}]$. As above, it is enough to impose the equations on a generating set of $G$.

For a nontrivial irrep $U$, define
\begin{equation}
\Gamma^{(U;C,D)}_{ab}
=\frac{1}{d_U}\operatorname{Tr}(\widetilde M_a^\dagger\widetilde M_b).
\label{cross-intertwiner-Gram-matrix}
\end{equation}
The matrices $\widetilde M_a^\dagger\widetilde M_b$ commute with $\pi_U(g)$, so Schur's lemma gives $\widetilde M_a^\dagger\widetilde M_b=\Gamma^{(U;C,D)}_{ab}I_U$. Applying the same symmetric orthogonalization as in Eq.~\eqref{orthonormalized-embedding-intertwiners} produces a basis $\Phi_j^{U;C,D}$ satisfying
\begin{equation}
\operatorname{Tr}\left[
\left(F_{j,\mu}^{U;C,D}\right)^\dagger
F_{j',\nu}^{U;C,D}
\right]
=\delta_{jj'}\delta_{\mu\nu}.
\label{F-Hilbert-Schmidt-orthonormality}
\end{equation}
For $U=\mathbf 1$ and $C=D$, we instead use the fixed normalization $F_{1,1}^{\mathbf 1;C,C}=I_C$ from Eq.~\eqref{trivial-F-normalization}; if $C\neq D$, Schur's lemma gives $r_{\mathbf 1}^{C,D}=0$.

Finally, the multiplicity matrices are obtained by Hilbert--Schmidt contraction of the representation factor. For any $1\leq\mu\leq d_U$,
\begin{equation}
B_{j,\alpha}^{U;C,D}
=\frac{
\operatorname{Tr}_D\!\left[
\left(I_{M_C}\otimes(F_{j,\mu}^{U;C,D})^\dagger\right)
S_C^\dagger T_\mu^{U,\alpha}S_D
\right]
}{
\operatorname{Tr}\!\left[
(F_{j,\mu}^{U;C,D})^\dagger F_{j,\mu}^{U;C,D}
\right]
},
\label{general-B-matrix-formula}
\end{equation}
where $\operatorname{Tr}_D$ contracts the common $D$ factor and leaves a map from $M_D$ to $M_C$. The right-hand side is independent of $\mu$ by Eq.~\eqref{general-compressed-basis-operator}.
For $U\neq\mathbf 1$, the Hilbert--Schmidt orthonormality in Eq.~\eqref{F-Hilbert-Schmidt-orthonormality} makes the denominator in Eq.~\eqref{general-B-matrix-formula} equal to one, so
\begin{equation}
B_{j,\alpha}^{U;C,D}
=\operatorname{Tr}_D\!\left[
\left(I_{M_C}\otimes(F_{j,\mu}^{U;C,D})^\dagger\right)
S_C^\dagger T_\mu^{U,\alpha}S_D
\right],
\qquad U\neq\mathbf 1.
\label{nontrivial-B-matrix-formula}
\end{equation}
For the trivial diagonal component, the fixed normalization $F_{1,1}^{\mathbf 1;C,C}=I_C$ instead makes the denominator equal to $d_C$, and Eq.~\eqref{general-B-matrix-formula} becomes
\begin{equation}
B_{1,\alpha}^{\mathbf 1;C,C}
=\frac{1}{d_C}\operatorname{Tr}_C\left(S_C^\dagger T_1^{\mathbf 1,\alpha}S_C\right),
\label{trivial-B-matrix-formula}
\end{equation}
which is consistent with the identity normalization in Eq.~\eqref{trivial-F-normalization}. Equations~\eqref{vectorized-isotypic-embedding-equation}, \eqref{vectorized-cross-intertwiner-equation}, and \eqref{general-B-matrix-formula} provide all the data needed to impose the conditions in Theorem~\ref{general-invariant-code-KL}. When $r_U^{C,D}=0$, the selection rule in Eq.~\eqref{reducible-selection-rule} shows that no matrix needs to be constructed for that triple.

\section{Orbit generators and levelwise minimal loss seeds}

\subsection{General orbit-generator theorem}

We first note that, since the code projector commutes with the group action, detecting one operator implies detecting its entire $G$-orbit span.

\begin{lemma}
Let $P$ be a $G$-invariant code projector on a unitary representation $V$, and suppose that
\begin{equation}
POP=\lambda_OP
\end{equation}
for some operator $O\in\operatorname{End}(V)$. Then
\begin{equation}
PO'P=\lambda_{O'}P
\end{equation}
for every $O'$ in the $G$-orbit span
\begin{equation}
W_O=\operatorname{span}\{\pi_V(g)O\pi_V(g)^{-1}:g\in G\}.
\label{orbit-KL-span}
\end{equation}
\label{orbit-KL-extensions}
\end{lemma}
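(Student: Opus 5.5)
The plan is to show two things in turn: that a single conjugate $\pi_V(g)O\pi_V(g)^{-1}$ is detected with the same scalar $\lambda_O$, and that detection is preserved under linear combinations. The only input is the $G$-invariance of $P$, i.e.\ $\pi_V(g)P\pi_V(g)^{-1}=P$ for all $g\in G$. Equivalently, $P$ commutes with every $\pi_V(g)$, which by Corollary~\ref{invariant-projector-form} (or directly by hypothesis) is the defining property of the code projectors under consideration.

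For the first step, fix $g\in G$ and write $O_g=\pi_V(g)O\pi_V(g)^{-1}$. Using $P\pi_V(g)=\pi_V(g)P$ and $\pi_V(g)^{-1}P=P\pi_V(g)^{-1}$, I would compute
\begin{equation}
PO_gP=\pi_V(g)\,POP\,\pi_V(g)^{-1}=\lambda_O\,\pi_V(g)P\pi_V(g)^{-1}=\lambda_OP.
\end{equation}
So every element of the orbit is detected with the same constant $\lambda_{O_g}=\lambda_O$.

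For the second step, take a general $O'=\sum_i c_iO_{g_i}\in W_O$. By linearity of $X\mapsto PXP$, one gets $PO'P=\bigl(\sum_ic_i\bigr)\lambda_OP$, so $\lambda_{O'}=\lambda_O\sum_ic_i$. The only subtlety is that this expansion need not be unique, so $\sum_i c_i$ might seem ill-defined. It does not matter, because $\lambda_{O'}$ is determined by $PO'P$ itself whenever $P\neq0$ (and the claim is trivial when $P=0$).

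There is no real obstacle here. The only point needing care is that unitarity of $\pi_V$ is not even required beyond the invertibility of $\pi_V(g)$, and that the scalar for a linear combination is well defined, which I handle as just described.
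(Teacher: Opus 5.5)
Your proposal is correct and follows the paper's proof: both conjugate $POP=\lambda_O P$ by $\pi_V(g)$ using $P\pi_V(g)=\pi_V(g)P$, then extend to the orbit span $W_O$ by linearity. Your remark that $\lambda_{O'}$ is well defined because it is fixed by $PO'P$ itself (for $P\neq 0$) is a harmless refinement that the paper leaves implicit.
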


\begin{proof}
Since we can commute the representation through the projector, we have
\begin{equation}
\begin{aligned}
P\pi_V(g)O\pi_V(g)^{-1}P
&=\pi_V(g)POP\pi_V(g)^{-1}\\
&=\pi_V(g)\lambda_OP\pi_V(g)^{-1}
=\lambda_OP.
\end{aligned}
\end{equation}
Therefore, every operator in $W_O$ satisfies the detection condition by linearity.
\end{proof}

So we can see we don't need to test all relevant operators; we need only find a set of operators whose $G$-orbits span the operator space. The following general orbit-generator theorem is due to Vale and Waldron~\cite{valeConstructionGInvariantFinite2016}.

\begin{theorem}
Let $G$ be a finite group and let $W$ be a finite-dimensional complex representation with representation map $\pi_W$ and decomposition
\begin{equation}
W\cong\bigoplus_U U^{\oplus m_U}
\end{equation}
in terms of irreps of $G$. The minimum number of seeds $w_1,w_2,\dots,w_s\in W$ satisfying
\begin{equation}
\operatorname{span}\{\pi_W(g)w_a:g\in G,\ 1\leq a\leq s\}=W
\end{equation}
is
\begin{equation}
s_{\min}(W)=\max_U\left\lceil\frac{m_U}{d_U}\right\rceil,
\qquad d_U=\dim U.
\label{general-orbit-generator-count}
\end{equation}
\label{spanning-set-G-orbit}
\end{theorem}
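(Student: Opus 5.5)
The plan is to reduce to single isotypic components and then use the structure of the group algebra. First I reduce to the cyclic-module picture. For seeds $w_1,\dots,w_s$, the span of their orbits is the $\mathbb C[G]$-submodule they generate. So $s_{\min}(W)$ is the minimal number of generators of $W$ as a left $\mathbb C[G]$-module. By Maschke and Wedderburn, $\mathbb C[G]\cong\bigoplus_U\operatorname{End}(U)$. Under this isomorphism, $W\cong\bigoplus_U M_U\otimes U$ with $\dim M_U=m_U$, and the factor $\operatorname{End}(U)$ acts on $U$ while the others act by zero. Each central primitive idempotent $e_U$ projects $W$ onto its $U$-isotypic component, and $e_U$ lies in the group algebra. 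Hence a set of seeds generates $W$ iff, for every $U$, the projections $e_U w_a$ generate the isotypic component $M_U\otimes U$ as an $\operatorname{End}(U)$-module.

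Next I handle a single isotypic block. Identify $M_U\otimes U$ with $\operatorname{Hom}(U^*,M_U)$, or more concretely with $m_U\times d_U$ matrices on which $\operatorname{End}(U)$ acts by right multiplication by transposes. A single vector $w=\sum_i x_i\otimes u_i$ generates the submodule $\operatorname{span}\{x_i\}\otimes U$. Here $\{x_i\}$ are the ``coefficient vectors'' of $w$ in $M_U$ for a basis $u_i$ of $U$. This uses Burnside's theorem (Lemma~\ref{burnside}): $\operatorname{End}(U)$ acts irreducibly, and it is the full matrix algebra. The key point is that $\operatorname{End}(U)\cdot w=\{\sum_i x_i\otimes Au_i\}$ equals $X\otimes U$, where $X$ is the span of the $x_i$ and has dimension at most $d_U$. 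More precisely, $w$ corresponds to a linear map $U^*\to M_U$ of rank $\le d_U$, and its cyclic submodule is $\operatorname{im}(w)\otimes U$. So $s$ seeds generate at most an $(s\,d_U)$-dimensional subspace of $M_U$, tensored with $U$. Generation therefore requires $s\,d_U\ge m_U$, which gives the lower bound $s\ge\max_U\lceil m_U/d_U\rceil$.

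For the upper bound I set $s=\max_U\lceil m_U/d_U\rceil$. For each $U$, I split a basis of $M_U$ into $s$ groups of at most $d_U$ vectors each, some possibly empty. I define the $U$-component of seed $a$ as $\sum_i x^{(a)}_i\otimes u_i$, using the $a$-th group of basis vectors. I then set $w_a=\sum_U(\text{$U$-component})$. The images then cover all of $M_U$. Since $e_U\in\mathbb C[G]$ separates the components, the module generated by the $w_a$ contains each $e_U w_a$, and hence all of $W$.

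The step I expect to be the main obstacle is making the block computation rigorous. I must identify the cyclic submodule $\mathbb C[G]w$ exactly with $\bigoplus_U\operatorname{im}(w_U)\otimes U$. This needs the surjectivity $\mathbb C[G]\to\bigoplus_U\operatorname{End}(U)$, which follows from Burnside together with the pairwise non-isomorphism of the irreps (a Jacobson density argument). I will cite Wedderburn's theorem for this rather than reprove it. Everything else is bookkeeping with ranks.
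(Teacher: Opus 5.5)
Your proposal is correct and follows essentially the same route as the paper. The paper uses Burnside's theorem on each isotypic block, realizing the matrix units $E_{ji}$ as combinations of the $\pi_U(g)$, to show that a seed $\sum_i u_i\otimes v_i$ generates $U\otimes\operatorname{span}\{v_i\}$; this forces $s\,d_U\geq m_U$. It then uses the isotypic projectors, your $e_U\in\mathbb C[G]$, to assemble the per-block seeds and separate the components.

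The full Wedderburn surjectivity you flag as the main obstacle is not actually needed. Per-block Burnside together with $e_U\in\mathbb C[G]$ already supplies it, since $x e_U$ acts as $x$ on the $U$-block and as zero elsewhere, and this is exactly how the paper proceeds.
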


\begin{proof}
Let's first consider a single isotypic component of an irrep $U$, i.e., $W[U]\cong U\otimes\mathbb C^{m_U}$. We start by defining
\begin{equation}
\mathcal A_U=\operatorname{span}\{\pi_U(g):g\in G\}\subseteq\operatorname{End}(U).
\end{equation}
This is clearly a vector space over $\mathbb C$, contains the identity element $\pi_U(e)=I_U$, and is a ring. So it's by definition a $\mathbb C$-algebra. Furthermore, any invariant subspace of $\mathcal A_U$ is an invariant subspace of $\{\pi_U(g):g\in G\}$, and therefore $\mathcal A_U$ is irreducible, since $U$ is an irrep. It follows from Burnside's theorem, Lemma~\ref{burnside}, that $\operatorname{End}(U)=\mathcal A_U$. Note that the statement also holds trivially in the case that $d_U=1$, and both spaces are equal to $\mathbb CI_U$.

Now, choose a basis $u_1,\dots,u_{d_U}$ of $U$. We can write an element of $W[U]$ as
\begin{equation}
w_t=\sum_{i=1}^{d_U}u_i\otimes v_{t,i}.
\end{equation}
The action of the group on this vector is given by
\begin{equation}
g\cdot w_t=\sum_{i=1}^{d_U}\pi_U(g)u_i\otimes v_{t,i}.
\end{equation}
Because $\mathcal A_U=\operatorname{End}(U)$, for any $i,j\in\{1,\dots,d_U\}$ there exist parameters $c_g\in\mathbb C$ such that $E_{ji}=\sum_{g\in G}c_g\pi_U(g)$, where $E_{ji}$ is the linear map satisfying $E_{ji}u_i=u_j$ and $E_{ji}u_r=0$ for every $r\neq i$. Thus, we see that
\begin{equation}
\begin{aligned}
\left(\sum_{g\in G}c_g\pi_U(g)\right)\cdot w_t
&=\sum_{r=1}^{d_U}\left(\sum_{g\in G}c_g\pi_U(g)\right)u_r\otimes v_{t,r}\\
&=\sum_{r=1}^{d_U}E_{ji}u_r\otimes v_{t,r}
=u_j\otimes v_{t,i}.
\end{aligned}
\end{equation}
Thus, all the tensor products $u_j\otimes v_{t,i}$ exist in the $G$-orbit span of $w_t$. So, we see that
\begin{equation}
\operatorname{span}\{g\cdot w_t:1\leq t\leq s,\ g\in G\}
=U\otimes\operatorname{span}\{v_{t,i}:1\leq t\leq s,\ 1\leq i\leq d_U\}.
\end{equation}
We want these seeds to span $U\otimes\mathbb C^{m_U}$. So, it's clear that we need
\begin{equation}
s d_U\geq m_U
\quad\Longrightarrow\quad
s_{\min,U}=\left\lceil\frac{m_U}{d_U}\right\rceil.
\end{equation}
It's further clear that we can construct vectors $w_{1,U},w_{2,U},\dots,w_{s_{\min,U},U}$ satisfying this, simply by choosing a basis $e_1,e_2,\dots,e_{m_U}$ for the multiplicity space and defining
\begin{equation}
w_{t,U}=\sum_{\substack{1\leq i\leq d_U\\(t-1)d_U+i\leq m_U}}
u_i\otimes e_{(t-1)d_U+i}.
\label{spanning-vector-u}
\end{equation}

Finally, we need to satisfy this for every irrep $U$. It's clear that we need at least $s_{\min}=\max_U\lceil m_U/d_U\rceil$ seeds to span the entire space. We'll briefly show that this is sufficient. Define, for $1\leq t\leq s_{\min}$, and taking $w_{t,U}=0$ when $s_{\min,U}<t\leq s_{\min}$,
\begin{equation}
w_t=\bigoplus_U w_{t,U}.
\end{equation}
By Eq.~\eqref{isotypic-projector-formula}, the isotypic projector $\Pi_U$ is a linear combination of the matrices $\pi_W(g)$. Therefore, $\Pi_Uw_t$ is in the $G$-orbit span of the $w_t$, and furthermore $\Pi_Uw_t=w_{t,U}$. Therefore, each isotypic component $W[U]$ is in the orbit span of the $w_t$, and we thus conclude that
\begin{equation}
\operatorname{span}\{g\cdot w_t:g\in G,\ 1\leq t\leq s_{\min}\}=W,
\end{equation}
which was to be shown.
\end{proof}

Combining Lemma~\ref{orbit-KL-extensions} with Theorem~\ref{spanning-set-G-orbit}, it suffices to impose the KL conditions on $s_{\min}(W)$ appropriately chosen seed operators.

\subsection{Free projective orbits and rotated highest-weight seeds}

The explicit loss construction below requires finitely many free, pairwise disjoint projective orbits. The following elementary lemma guarantees that they are always available.

\begin{lemma}
Let $G$ be a finite subgroup of $SU(2)$ and let
\begin{equation}
\widetilde G=G/(G\cap\{I,-I\})
\end{equation}
act on $\mathbb{CP}^1$ in the natural way. That is, if $p = [a:b] \in \mathbb{CP}^1$ and $g \in G$ has the standard representation 

\begin{equation}
    R_1(g) = \begin{pmatrix}
        u & -\overline{v} \\ 
        v & \overline{u}
    \end{pmatrix}
\end{equation}

then the projective action is 

\begin{equation}
    \tilde{g} \cdot [a:b] = [ua - \overline{v}b: va + \overline{u}b]
\end{equation}

Then, for every positive integer $s$, there exist points $p_1,\dots,p_s\in\mathbb{CP}^1$ such that
\begin{equation}
\operatorname{Stab}_{\widetilde G}(p_t)=\{e\}
\end{equation}
for every $t$, and their $\widetilde G$-orbits are pairwise disjoint.
\label{free-disjoint-projective-orbits}
\end{lemma}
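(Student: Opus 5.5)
The plan is a counting argument: the points of $\mathbb{CP}^1$ with nontrivial stabilizer form a finite set, so we can pick free points one at a time, each avoiding a finite forbidden set. First we check that the projective action is well defined and that it factors through $\widetilde G$. The element $-I$ acts as $[a:b]\mapsto[-a:-b]=[a:b]$, so $G\cap\{I,-I\}$ acts trivially and $\widetilde G$ acts on $\mathbb{CP}^1$. This is just the Möbius action of $SU(2)/\{\pm I\}\cong SO(3)$ on the Riemann sphere.

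The key step is to show that every nonidentity $\tilde g\in\widetilde G$ fixes at most two points of $\mathbb{CP}^1$. Take a representative $g\in G$ with $g\neq\pm I$. A point $[a:b]$ is fixed exactly when $(a,b)^{\mathsf T}$ is an eigenvector of $R_1(g)$. Since $g\in SU(2)$ is unitary with eigenvalues $e^{\pm i\theta}$, and $g\neq\pm I$ forces $e^{i\theta}\neq e^{-i\theta}$, the matrix $g$ is diagonalizable with two distinct eigenvalues. Hence it has exactly two eigenlines, which give exactly two fixed points in $\mathbb{CP}^1$. It follows that the set
\begin{equation}
F=\bigcup_{\tilde g\neq e}\operatorname{Fix}(\tilde g)
\end{equation}
is finite, with $|F|\leq 2(|\widetilde G|-1)$. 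Every point $p\notin F$ has trivial stabilizer $\operatorname{Stab}_{\widetilde G}(p)=\{e\}$. The set $F$ is also $\widetilde G$-stable, because $\operatorname{Fix}(h\tilde g h^{-1})=h\operatorname{Fix}(\tilde g)$.

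Next we choose the points inductively. Pick $p_1\in\mathbb{CP}^1\setminus F$, which is possible because $\mathbb{CP}^1$ is uncountable and $F$ is finite. Having chosen $p_1,\dots,p_{t-1}$, pick
\begin{equation}
p_t\in\mathbb{CP}^1\setminus\Bigl(F\cup\bigcup_{r<t}\widetilde G\cdot p_r\Bigr).
\end{equation}
This excluded set is again finite, since each orbit has at most $|\widetilde G|$ points, so such a $p_t$ exists. By construction $p_t$ has trivial stabilizer. Its orbit is disjoint from the earlier orbits, because orbits of a group action either coincide or are disjoint, and $p_t$ lies in none of the earlier ones.

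There is no real obstacle in this argument. The only step that needs care is the fixed-point count, where we must use $g\neq\pm I$ rather than merely $g\neq I$; that is exactly why we pass to the quotient $\widetilde G$. If $G\subseteq\{\pm I\}$, then $\widetilde G$ is trivial and $F=\emptyset$, so the same argument applies verbatim.
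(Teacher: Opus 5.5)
Your proposal is correct and follows essentially the same argument as the paper: the union of fixed-point sets of nonidentity elements of $\widetilde G$ is finite, and points are then chosen inductively outside this set and the previously chosen orbits. Your justification of the two-fixed-point bound, via the distinct eigenvalues $e^{\pm i\theta}$ of a representative $g\neq\pm I$, is just a more explicit version of the paper's ``nonscalar matrix'' remark.
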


\begin{proof}
Every nonidentity element of $\widetilde G$ is represented by a nonscalar $2\times2$ matrix. Such a matrix fixes at most two points of $\mathbb{CP}^1$, namely its eigenlines. Because $\widetilde G$ is finite, the union
\begin{equation}
\mathcal F=\bigcup_{\widetilde g\neq e}\operatorname{Fix}(\widetilde g)
\end{equation}
is finite. Choose $p_1\in\mathbb{CP}^1\setminus\mathcal F$. Inductively, after choosing $p_1,\dots,p_r$, exclude
\begin{equation}
\mathcal F\cup\bigcup_{t=1}^r\widetilde G\cdot p_t.
\end{equation}
This is still a finite set, whereas $\mathbb{CP}^1$ is infinite, so we may choose $p_{r+1}$ outside it. Since $p_{r+1}\notin\mathcal F$, its stabilizer is trivial, and since it is not in any previously chosen orbit, its orbit is disjoint from all the previous ones. Continuing inductively proves the result.
\end{proof}

\begin{theorem}
Let $G$ be a finite subgroup of $SU(2)$, and denote
\begin{equation}
\widetilde G=G/(G\cap\{I,-I\}).
\end{equation}
For any $1\leq\ell\leq n$, let
\begin{equation}
s_\ell=\left\lceil\frac{2\ell+1}{|\widetilde G|}\right\rceil,
\label{levelwise-seed-count}
\end{equation}
and choose $s_\ell$ points $p_t=[a_t:b_t]\in\mathbb{CP}^1$ such that
\begin{equation}
\operatorname{Stab}_{\widetilde G}(p_t)=\{e\},
\qquad
(\widetilde G\cdot p_t)\cap(\widetilde G\cdot p_j)=\varnothing
\quad\text{for every }j\neq t,
\label{projective-seed-conditions}
\end{equation}
whose existence is guaranteed by Lemma~\ref{free-disjoint-projective-orbits}. Choose normalized representatives with $|a_t|^2+|b_t|^2=1$ and define
\begin{equation}
U_t=
\begin{pmatrix}
a_t&-\overline b_t\\
b_t&\overline a_t
\end{pmatrix}
\in SU(2).
\label{projective-point-rotation}
\end{equation}
Then
\begin{equation}
\operatorname{span}\left\{
g\cdot\left(R_n(U_t)J_+^\ell R_n(U_t)^\dagger\right):
g\in G,\ 1\leq t\leq s_\ell
\right\}
=\mathcal T_n^\ell.
\label{rotated-highest-weight-orbit-span}
\end{equation}
Consequently, if $P$ is a $G$-invariant code projector, satisfying the detection conditions for the operators
\begin{equation}
O_{\ell,t}=R_n(U_t)J_+^\ell R_n(U_t)^\dagger,
\qquad 1\leq t\leq s_\ell,
\label{rotated-highest-weight-seeds}
\end{equation}
is sufficient to detect every operator in $\mathcal T_n^\ell$.
\label{minimum-seeds-su2}
\end{theorem}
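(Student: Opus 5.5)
The plan is to transport the whole statement into the polynomial model $\mathcal P_\ell$ using the equivariant isomorphism $\Phi_n^\ell$ of Eq.~\eqref{lie-algebra-to-polynomial-isomorphism}. There the orbit-span claim becomes a classical fact: powers of distinct linear forms span the homogeneous polynomials of degree $2\ell$. The proof then finishes with Lemma~\ref{orbit-KL-extensions}.

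\emph{Step 1: compute the seeds and their orbits in $\mathcal P_\ell$.} By Eq.~\eqref{tensor-polynomial-equivariance} and $\Phi_n^\ell(J_+^\ell)=x^{2\ell}$,
\[
\Phi_n^\ell(O_{\ell,t})=\Pi_{2\ell}(U_t)x^{2\ell}=(a_tx+b_ty)^{2\ell}.
\]
For $g\in G$ with first column $(u,v)^{\mathsf T}$, Eq.~\eqref{homogeneous-polynomial-representation} gives
\[
\Pi_{2\ell}(g)(a_tx+b_ty)^{2\ell}=\bigl((ua_t-\overline v b_t)x+(va_t+\overline u b_t)y\bigr)^{2\ell}.
\]
This is the $2\ell$-th power of the linear form attached to the point $\widetilde g\cdot p_t$ under the projective action of Lemma~\ref{free-disjoint-projective-orbits}. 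Since $2\ell$ is even, $-I$ acts trivially on $\mathcal P_\ell$. Rescaling a representative of a point multiplies the power by a nonzero scalar. Hence the orbit span of the seeds, mapped by $\Phi_n^\ell$, is exactly
\[
\operatorname{span}\{(\alpha x+\beta y)^{2\ell}:[\alpha:\beta]\in\widetilde G\cdot p_t,\ 1\le t\le s_\ell\}.
\]

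\emph{Step 2: count the points.} By Eq.~\eqref{projective-seed-conditions}, each orbit is free, so it contains exactly $|\widetilde G|$ points, and the $s_\ell$ orbits are pairwise disjoint. This yields $s_\ell|\widetilde G|\ge 2\ell+1$ distinct points of $\mathbb{CP}^1$.

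\emph{Step 3: the spanning lemma, which I expect to be the main technical point.} Take any $N=2\ell+1$ pairwise distinct points $[\alpha_i:\beta_i]$. Expand the coefficient matrix of the forms $(\alpha_ix+\beta_iy)^{2\ell}$ in the monomial basis $\binom{2\ell}{k}x^ky^{2\ell-k}$. Its determinant is a homogeneous Vandermonde determinant, which equals, up to sign, $\prod_{i<j}(\alpha_i\beta_j-\alpha_j\beta_i)$. This is nonzero precisely because the points are projectively distinct. The homogeneous form avoids any special treatment of points with $\beta_i=0$, which a dehomogenized argument would need. Therefore these powers form a basis of $\mathcal P_\ell$, which has dimension $2\ell+1$.

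\emph{Step 4: pull back and conclude.} Applying $(\Phi_n^\ell)^{-1}$, which is a linear $SU(2)$-equivariant (hence $G$-equivariant) bijection, gives Eq.~\eqref{rotated-highest-weight-orbit-span}. Lemma~\ref{orbit-KL-extensions} then shows that detecting each $O_{\ell,t}$ implies detecting the full $G$-orbit span of the seeds. That span is all of $\mathcal T_n^\ell$, which proves the consequence.

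\emph{Minimality, for the claim in the main text.} The orbit of any single seed contains at most $|\widetilde G|$ distinct operators, again because $-I$ acts trivially. So $s$ seeds span a space of dimension at most $s|\widetilde G|$, and spanning the $(2\ell+1)$-dimensional space $\mathcal T_n^\ell$ forces $s\ge\lceil(2\ell+1)/|\widetilde G|\rceil=s_\ell$. This matches $s_{\min}(\mathcal T_n^\ell)$ from Theorem~\ref{spanning-set-G-orbit}.
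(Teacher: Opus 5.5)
Your proof is correct and follows the paper's argument. Both transport the problem to $\mathcal P_\ell$ via $\Phi_n^\ell$, use the $s_\ell$ free disjoint orbits to supply at least $2\ell+1$ projectively distinct points, prove the powers $(\alpha x+\beta y)^{2\ell}$ linearly independent by a Vandermonde determinant, and finish with the orbit-extension lemma. The differences are minor: you use the homogeneous Vandermonde $\prod_{i<j}(\alpha_i\beta_j-\alpha_j\beta_i)$ where the paper dehomogenizes with $\widetilde z_i=\widetilde b_i/\widetilde a_i$ and treats $[0:\widetilde b]$ separately, you verify explicitly that the polynomial orbit realizes the projective action (the paper leaves this implicit), and your orbit-size count for minimality is a more direct route to the bound the paper obtains from $\sum_U d_U^2=|\widetilde G|$.
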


\begin{proof}

By Eq.~\eqref{lie-algebra-to-polynomial-isomorphism} and our normalization $T_\ell^{(\ell)}=J_+^\ell$, the $SU(2)$-equivariant isomorphism $\Phi=\Phi_n^\ell:\mathcal T_n^\ell\to\mathcal P_\ell=\operatorname{Sym}^{2\ell}(\mathbb C^2)$ satisfies $\Phi(J_+^\ell)=x^{2\ell}$.

Now, let $U=U(a,b)\in SU(2)$ in the convention above, with $|a|^2+|b|^2=1$. Its action on the basis vectors is $U\cdot x=ax+by$ and $U\cdot y=-\overline b x+\overline a y$, and thus $U\cdot x^{2\ell}=(ax+by)^{2\ell}$ from the action of the representation on the symmetric power. Note that this is equivalent to rotating $J_+^\ell$ by $U$ in the physical space, since by $SU(2)$-equivariance
\begin{equation}
\Phi(U\cdot J_+^\ell)
=U\cdot\Phi(J_+^\ell)
=U\cdot x^{2\ell}
=(ax+by)^{2\ell}.
\label{rotated-highest-weight-polynomial}
\end{equation}

We now have $s_\ell$ disjoint orbits, each with $|\widetilde G|$ elements by the orbit--stabilizer theorem. We denote by
\begin{equation}
\mathcal O_{\ell,G}
=\{\widetilde g\cdot[a_t:b_t]:1\leq t\leq s_\ell,\ \widetilde g\in\widetilde G\}
\end{equation}
the union of all these orbits, and choose any $2\ell+1$ elements $\{[\widetilde a_i:\widetilde b_i]\}$ to form the set $\widetilde{\mathcal O}_{\ell,G}$. We're going to show these are linearly independent, and thus they span the entire operator space. Indeed, suppose that
\begin{equation}
\begin{aligned}
\sum_{[\widetilde a_i:\widetilde b_i]\in\widetilde{\mathcal O}_{\ell,G}}
c_i(\widetilde a_i x+\widetilde b_i y)^{2\ell}=0
&\Longrightarrow
\sum_{[\widetilde a_i:\widetilde b_i]\in\widetilde{\mathcal O}_{\ell,G}}
c_i\sum_{r=0}^{2\ell}\binom{2\ell}{r}
\widetilde a_i^{2\ell-r}\widetilde b_i^r y^r x^{2\ell-r}=0\\
&\Longrightarrow
\sum_{[\widetilde a_i:\widetilde b_i]\in\widetilde{\mathcal O}_{\ell,G}}
c_i\widetilde a_i^{2\ell-r}\widetilde b_i^r=0
\qquad\text{for every }0\leq r\leq2\ell.
\end{aligned}
\label{coherent-power-linear-relation}
\end{equation}

Now, we'll briefly assume that $\widetilde a_i\neq0$ for every $i$. In that case, define $\widetilde z_i=\widetilde b_i/\widetilde a_i$ and rescale $\widetilde c_i=c_i\widetilde a_i^{2\ell}$. The equation becomes
\begin{equation}
\begin{pmatrix}
1&1&\dots&1\\
\widetilde z_1&\widetilde z_2&\dots&\widetilde z_{2\ell+1}\\
\widetilde z_1^2&\widetilde z_2^2&\dots&\widetilde z_{2\ell+1}^2\\
\vdots&\vdots&\ddots&\vdots\\
\widetilde z_1^{2\ell}&\widetilde z_2^{2\ell}&\dots&\widetilde z_{2\ell+1}^{2\ell}
\end{pmatrix}
\begin{pmatrix}
\widetilde c_1\\
\widetilde c_2\\
\widetilde c_3\\
\vdots\\
\widetilde c_{2\ell+1}
\end{pmatrix}
=0.
\label{transposed-Vandermonde-system}
\end{equation}
This matrix is the transpose of the standard square Vandermonde matrix
\begin{equation}
V=
\begin{pmatrix}
1&\widetilde z_1&\widetilde z_1^2&\dots&\widetilde z_1^{2\ell}\\
1&\widetilde z_2&\widetilde z_2^2&\dots&\widetilde z_2^{2\ell}\\
\vdots&\vdots&\ddots&\vdots\\
1&\widetilde z_{2\ell+1}&\widetilde z_{2\ell+1}^2&\dots&\widetilde z_{2\ell+1}^{2\ell}
\end{pmatrix},
\label{standard-Vandermonde-matrix}
\end{equation}
whose determinant is given by
\begin{equation}
\det V=\prod_{1\leq i<j\leq2\ell+1}(\widetilde z_j-\widetilde z_i).
\label{Vandermonde-determinant}
\end{equation}
Since the $\widetilde z_i$ are distinct, we have
\begin{equation}
\det V^T=\det V\neq0.
\end{equation}
Therefore, the only element of the kernel is the all-zero vector, $\widetilde c_i=0$, which implies $c_i=0$ since $\widetilde a_i\neq0$. Thus these elements are all linearly independent and span $\operatorname{Sym}^{2\ell}(\mathbb C^2)$, and therefore $\mathcal T_n^\ell$.

We can apply the same idea if there exists an element $[0:\widetilde b_i]$. Note that there exists only one such equivalence class. Assume without loss of generality that $\widetilde a_{2\ell+1}=0$. Then the kernel equation becomes, after doing the same rescaling for every $i<2\ell+1$,
\begin{equation}
\begin{pmatrix}
1&1&\dots&1&0\\
\widetilde z_1&\widetilde z_2&\dots&\widetilde z_{2\ell}&0\\
\widetilde z_1^2&\widetilde z_2^2&\dots&\widetilde z_{2\ell}^2&0\\
\vdots&\vdots&\ddots&\vdots&\vdots\\
\widetilde z_1^{2\ell-1}&\widetilde z_2^{2\ell-1}&\dots&\widetilde z_{2\ell}^{2\ell-1}&0\\
\widetilde z_1^{2\ell}&\widetilde z_2^{2\ell}&\dots&\widetilde z_{2\ell}^{2\ell}&\widetilde b_{2\ell+1}^{2\ell}
\end{pmatrix}
\begin{pmatrix}
\widetilde c_1\\
\widetilde c_2\\
\widetilde c_3\\
\vdots\\
\widetilde c_{2\ell}\\
c_{2\ell+1}
\end{pmatrix}
=0.
\label{Vandermonde-system-point-at-infinity}
\end{equation}
Taking the determinant down the last column, we see
\begin{equation}
\det V'=\widetilde b_{2\ell+1}^{2\ell}\det V'_{2\ell},
\end{equation}
where $V'_{2\ell}$ is the same matrix with the $(2\ell+1)$st row and column removed. But that's a $2\ell\times2\ell$ Vandermonde determinant, and since $\widetilde b_{2\ell+1}\neq0$, we have that $\det V'\neq0$. So, in both cases, we conclude that the set of operators is linearly independent.

Finally, applying $\Phi^{-1}$ to the binomial expansion of $(a_tx+b_ty)^{2\ell}$ gives the decomposition of each seed into elementary tensors:
\begin{equation}
O_{\ell,t}
=R_n(U_t)J_+^\ell R_n(U_t)^\dagger
=\sum_{\mu=-\ell}^{\ell}
\sqrt{\binom{2\ell}{\ell+\mu}}
a_t^{\ell+\mu}b_t^{\ell-\mu}T_\mu^{(\ell)}.
\label{rotated-seed-tensor-expansion}
\end{equation}
The final result about error detection follows directly from Lemma~\ref{orbit-KL-extensions}.

\end{proof}

\subsection{Levelwise minimality at fixed tensor rank}

\begin{lemma}
Let $N=G\cap\{I,-I\}$ and $\widetilde G=G/N$. For each irrep $U$, let $m_U^{(\ell)}$ denote its multiplicity in $\mathcal T_n^\ell$, and let $d_U=\dim U$. Then
\begin{equation}
\max_{U\in\operatorname{Irr}(G)}
\left\lceil\frac{m_U^{(\ell)}}{d_U}\right\rceil
=
\max_{U\in\operatorname{Irr}(\widetilde G)}
\left\lceil\frac{m_U^{(\ell)}}{d_U}\right\rceil.
\label{normal-subgroup-irreps-su2-equation}
\end{equation}
\label{normal-subgroup-irreps-su2}
\end{lemma}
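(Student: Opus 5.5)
The plan is to show that every irrep $U$ of $G$ with $m_U^{(\ell)}>0$ is trivial on $N$, so it is inflated from an irrep of $\widetilde G$. Then the multisets of pairs $(m_U^{(\ell)},d_U)$ with nonzero multiplicity agree on the two sides of Eq.~\eqref{normal-subgroup-irreps-su2-equation}.

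The first step is to understand how $N$ acts on $\mathcal T_n^\ell$. If $N=\{I\}$ there is nothing to prove, since $\widetilde G=G$. Otherwise $-I\in G$, and under the conjugation action $-I$ acts on $\operatorname{End}(H_n)$ by $A\mapsto R_n(-I)AR_n(-I)^\dagger$. Since $R_n(-I)=(-1)^nI_{H_n}$ is a scalar, this action is the identity. Equivalently, on the polynomial model $\mathcal P_\ell=\operatorname{Sym}^{2\ell}(\mathbb C^2)$, Eq.~\eqref{homogeneous-polynomial-representation} gives $f(-x,-y)=(-1)^{2\ell}f=f$, and this transfers to $\mathcal T_n^\ell$ via the equivariance \eqref{tensor-polynomial-equivariance}. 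Hence $N$ acts trivially on $\mathcal T_n^\ell$, and the $G$-action factors through $\widetilde G$.

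The second step is the standard inflation correspondence. Irreps of $\widetilde G$ pull back along $G\to\widetilde G$ to exactly those irreps of $G$ on which $N$ acts trivially, with the same dimensions. Any irreducible $G$-subrepresentation of $\mathcal T_n^\ell$ has trivial $N$-action by the first step, so it is such an inflated irrep. Conversely, a decomposition of $\mathcal T_n^\ell$ into $\widetilde G$-irreps is also a decomposition into $G$-irreps. Multiplicities are therefore preserved, as one can also check with characters via Lemma~\ref{irrep-decomp}: for $G$-class functions constant on $N$-cosets,
\[
\frac1{|G|}\sum_{g\in G}\chi(g)\overline{\varphi(g)}=\frac1{|\widetilde G|}\sum_{\tilde g\in\widetilde G}\chi(\tilde g)\overline{\varphi(\tilde g)}.
\]

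The third step compares the maxima. Irreps of $G$ not arising from $\widetilde G$ have $m_U^{(\ell)}=0$, so they contribute $\lceil 0\rceil=0$, while the remaining terms coincide one-to-one with equal $m_U^{(\ell)}$ and $d_U$. The maxima are therefore equal. The degenerate case where all multiplicities vanish does not occur, since $\mathcal T_n^\ell\neq0$. No step is a real obstacle; the only point needing care is the sign check that $-I$ acts trivially in the conjugation representation. This holds regardless of the parity of $n$, because conjugation by the scalar $(-1)^nI$ is always trivial.
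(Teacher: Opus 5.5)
Your proposal is correct and follows the paper's own proof. Both arguments observe that $R_n(-I)=(-1)^nI_{H_n}$ is a scalar, so $N$ acts trivially by conjugation on $\mathcal T_n^\ell$, the action factors through $\widetilde G$, inflated irreps keep their dimensions and multiplicities, and every other irrep of $G$ has $m_U^{(\ell)}=0$, so the two maxima agree; your polynomial-model sign check and character inner-product identity are redundant confirmations of steps the paper states directly.
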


\begin{proof}
The action of $G$ on $\mathcal T_n^\ell$ is given by $g\cdot O=R_n(g)OR_n(g)^\dagger$. We have that $-I$ acts on $H_n$ as $R_n(-I)=(-1)^nI_{H_n}$, and thus, for any operator $O\in\mathcal T_n^\ell$, we have $R_n(-I)OR_n(-I)^\dagger=O$. Therefore, $N$ acts trivially on $\mathcal T_n^\ell$, and the representation factors through $\widetilde G=G/N$. Representations of $\widetilde G$ are naturally identified with representations of $G$ on which $N$ acts trivially~\cite[p.~20, footnote~1]{fultonRepresentationTheory2004}. Corresponding irreps have the same dimensions and multiplicities in $\mathcal T_n^\ell$, while every other irrep of $G$ has multiplicity zero. Therefore, the nonzero ratios $m_U^{(\ell)}/d_U$ are the same in the two maxima, which proves the result.
\end{proof}

\begin{corollary}
Let $G$ be a finite subgroup of $SU(2)$, let $\widetilde G=G/(G\cap\{I,-I\})$, and let $1\leq\ell\leq n$. For each irrep $U$, let $m_U^{(\ell)}$ denote its multiplicity in $\mathcal T_n^\ell$, and let $d_U=\dim U$. Then the number of seeds in Theorem~\ref{minimum-seeds-su2} is minimal. In particular,
\begin{equation}
\max_{U\in\operatorname{Irr}(G)}
\left\lceil\frac{m_U^{(\ell)}}{d_U}\right\rceil
=
\max_{U\in\operatorname{Irr}(\widetilde G)}
\left\lceil\frac{m_U^{(\ell)}}{d_U}\right\rceil
=
\left\lceil\frac{2\ell+1}{|\widetilde G|}\right\rceil.
\label{su2-minimum-matches-general-equation}
\end{equation}
\label{su2-minimum-matches-general}
\end{corollary}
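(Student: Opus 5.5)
The first equality in Eq.~\eqref{su2-minimum-matches-general-equation} is exactly Lemma~\ref{normal-subgroup-irreps-su2}, so it remains to show that $\max_{U\in\operatorname{Irr}(\widetilde G)}\lceil m_U^{(\ell)}/d_U\rceil=\lceil(2\ell+1)/|\widetilde G|\rceil$. Minimality of the seed count in Theorem~\ref{minimum-seeds-su2} then follows from the lower bound in Theorem~\ref{spanning-set-G-orbit}. That theorem says no fewer than $s_{\min}(\mathcal T_n^\ell)$ orbit generators can span $\mathcal T_n^\ell$, and Theorem~\ref{minimum-seeds-su2} already exhibits $s_\ell$ seeds that do span it.

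\textbf{Upper bound.} Theorem~\ref{minimum-seeds-su2} produces $s_\ell$ seeds whose $G$-orbits span $\mathcal T_n^\ell$. Theorem~\ref{spanning-set-G-orbit} says every spanning family of orbit generators has at least $s_{\min}$ members. Together these give $\max_U\lceil m_U^{(\ell)}/d_U\rceil\leq s_\ell$ with no computation.

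\textbf{Lower bound.} I will show that some irrep $U$ of $\widetilde G$ has $\lceil m_U/d_U\rceil\geq\lceil(2\ell+1)/|\widetilde G|\rceil$. The natural route is a counting argument via the regular representation. Suppose every irrep satisfied $m_U\leq s\,d_U$. Then $\mathcal T_n^\ell\cong\bigoplus_U U^{\oplus m_U}$ embeds in $\bigoplus_U U^{\oplus s d_U}$, which is $s$ copies of the regular representation of $\widetilde G$. Taking dimensions gives $2\ell+1\leq s|\widetilde G|$. Now set $s=\max_U\lceil m_U/d_U\rceil$. Each $m_U\leq s\,d_U$ holds by definition, so $s\geq(2\ell+1)/|\widetilde G|$, and since $s$ is an integer, $s\geq\lceil(2\ell+1)/|\widetilde G|\rceil$.

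Combining the two bounds gives equality. Here $\mathcal T_n^\ell$ is a genuine $\widetilde G$-representation because $-I$ acts trivially on it, as shown in Lemma~\ref{normal-subgroup-irreps-su2}.

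\textbf{Possible obstacle.} The only place needing care is making sure the lower bound uses $|\widetilde G|=\sum_U d_U^2$ over the irreps of $\widetilde G$, not of $G$. Working with $\widetilde G$ throughout, which is justified by Lemma~\ref{normal-subgroup-irreps-su2}, avoids a spurious factor of two when $-I\in G$. I expect no real difficulty: the main work, the explicit construction of $s_\ell$ spanning seeds via the Vandermonde argument, has already been done in Theorem~\ref{minimum-seeds-su2}.
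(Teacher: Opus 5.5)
Your proposal is correct and follows the paper's proof. Both combine the Vale--Waldron seed count with the lemma that $-I$ acts trivially, so $s_{\min}=\max_U\lceil m_U^{(\ell)}/d_U\rceil$ over irreps of $\widetilde G$. Both then derive $2\ell+1=\sum_U m_U^{(\ell)}d_U\leq s_{\min}\sum_U d_U^2=s_{\min}|\widetilde G|$, which is exactly what your embedding into $s$ copies of the regular representation says, and use the explicit rotated-highest-weight construction for the reverse inequality.
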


\begin{proof}
Let $s_{\min}$ denote the minimum number of seeds whose $G$-orbits span $\mathcal T_n^\ell$. By Theorem~\ref{spanning-set-G-orbit} and Lemma~\ref{normal-subgroup-irreps-su2},
\begin{equation}
s_{\min}
=\max_{U\in\operatorname{Irr}(G)}
\left\lceil\frac{m_U^{(\ell)}}{d_U}\right\rceil
=\max_{U\in\operatorname{Irr}(\widetilde G)}
\left\lceil\frac{m_U^{(\ell)}}{d_U}\right\rceil.
\end{equation}
It follows that $m_U^{(\ell)}\leq s_{\min}d_U$ for every irrep $U$ of $\widetilde G$. Therefore,
\begin{equation}
2\ell+1
=\dim\mathcal T_n^\ell
=\sum_{U\in\operatorname{Irr}(\widetilde G)}m_U^{(\ell)}d_U
\leq s_{\min}\sum_{U\in\operatorname{Irr}(\widetilde G)}d_U^2
=s_{\min}|\widetilde G|,
\end{equation}
where the final equality follows from $|\widetilde G|=\sum_{U\in\operatorname{Irr}(\widetilde G)}d_U^2$. Thus,
\begin{equation}
s_{\min}\geq\left\lceil\frac{2\ell+1}{|\widetilde G|}\right\rceil.
\end{equation}
Conversely, Theorem~\ref{minimum-seeds-su2} constructs $\lceil(2\ell+1)/|\widetilde G|\rceil$ seeds whose $G$-orbits span $\mathcal T_n^\ell$, and therefore
\begin{equation}
s_{\min}\leq\left\lceil\frac{2\ell+1}{|\widetilde G|}\right\rceil.
\end{equation}
Combining the two inequalities proves the result.
\end{proof}

\begin{remark}
The construction is rank-preserving: every seed in Eq.~\eqref{rotated-highest-weight-seeds} is an $SU(2)$ rotation of $J_+^\ell$ and therefore remains in $\mathcal T_n^\ell$. Corollary~\ref{su2-minimum-matches-general} proves minimality separately at each tensor rank. It does not prove that the combined number of seeds is globally minimal for $\bigoplus_{\ell=0}^k\mathcal T_n^\ell$, since a seed with components in several tensor ranks could generate more than one level at once.
\end{remark}

\subsection{Finite-group code-construction procedure}

We can now summarize the construction of a general $G$-invariant loss-correcting code. Equivalently, this constructs an arbitrary subrepresentation of $H_n$ that satisfies the desired KL conditions.

\begin{enumerate}
\item \textbf{Choose the group and decompose the physical representation.} Choose a finite subgroup $G\subset SU(2)$. For each irrep $C$ that appears in $H_n$ with multiplicity space $M_C$, construct a $G$-equivariant isometry
\begin{equation}
S_C:M_C\otimes C\longrightarrow H_n
\end{equation}
that embeds the $C$-isotypic component. The goal is to find Hermitian projectors $P_C$ on the multiplicity spaces $M_C$ such that
\begin{equation}
P=\sum_C S_C(P_C\otimes I_C)S_C^\dagger
\label{constructed-general-invariant-projector}
\end{equation}
detects the desired loss-error space. If $\kappa_C=\operatorname{rank}P_C$, then the resulting code carries the $G$-representation
\begin{equation}
\bigoplus_C C^{\oplus\kappa_C},
\qquad
d_L=\operatorname{rank}P=\sum_C\kappa_Cd_C,
\label{constructed-code-representation-content}
\end{equation}
where $d_C=\dim C$. Thus a prescribed logical dimension can be imposed through the ranks $\kappa_C$.

\item \textbf{Reduce the loss-error space.} To correct loss up to order $k$, we need to satisfy the error-detecting conditions for
\begin{equation}
\mathcal K_k\cong\mathcal T_n^k\oplus\mathcal T_n^{k-1}\oplus\dots\oplus\mathcal T_n^0.
\end{equation}
Thus, it suffices to impose the error-detecting conditions separately on each $\mathcal T_n^\ell$, with $0\leq\ell\leq k$. This follows from Lemmas~\ref{loss-kl-representation} and \ref{lower-loss-KL-inclusion}, together with Lemma~\ref{Clebsch-gordan} and Lemma~\ref{su2-self-dual}.

\item \textbf{Determine the number of seeds.} For each $1\leq\ell\leq k$, define
\begin{equation}
s_\ell=\left\lceil\frac{2\ell+1}{|\widetilde G|}\right\rceil,
\qquad
\widetilde G=G/(G\cap\{I,-I\}),
\end{equation}
which, by Theorem~\ref{minimum-seeds-su2} and Corollary~\ref{su2-minimum-matches-general}, is the minimum number of seeds whose $G$-orbits can span $\mathcal T_n^\ell$.

\item \textbf{Choose the seed operators.} Choose rotations $U_{\ell,t}\in SU(2)$ as in Theorem~\ref{minimum-seeds-su2}, and define
\begin{equation}
O_{\ell,t}=R_n(U_{\ell,t})J_+^\ell R_n(U_{\ell,t})^\dagger,
\qquad 1\leq t\leq s_\ell.
\end{equation}
The $G$-orbits of these seed operators span $\mathcal T_n^\ell$. Therefore, by Lemma~\ref{orbit-KL-extensions}, it suffices to impose the KL conditions on the operators $O_{\ell,t}$.

\item \textbf{Determine the multiplicity-space projectors.} Find the projectors $P_C$ by solving the KL conditions. This can be done in either of the following two ways.

\begin{enumerate}[(a)]
\item \textbf{Direct formulation.} Solve
\begin{equation}
PO_{\ell,t}P=\lambda_{\ell,t}P
\end{equation}
for the projectors $P_C$ and scalars $\lambda_{\ell,t}$, for every $1\leq\ell\leq k$ and $1\leq t\leq s_\ell$, subject to any prescribed ranks $\kappa_C$.

\item \textbf{Using multiplicity matrices.} Take the $G$-invariant KL-operator space to be
\begin{equation}
W=\bigoplus_{\ell=0}^k\mathcal T_n^\ell
\end{equation}
and construct the matrices $B_{j,\alpha}^{U;C,D}$ as above. By Theorem~\ref{general-invariant-code-KL}, the necessary and sufficient conditions are
\begin{equation}
P_CB_{j,\alpha}^{U;C,D}P_D
=
\begin{cases}
\lambda_\alpha P_C, & C=D,\ U=\mathbf 1,\ j=1,\\
0, & \text{otherwise},
\end{cases}
\label{construction-general-invariant-KL-conditions}
\end{equation}
for every $U,C,D,\alpha,j$. In particular, these equations include all cross-irrep blocks. The scalar $\lambda_\alpha$ must be the same for every $C$ with $P_C\neq0$.
\end{enumerate}

\item \textbf{Construct the physical codewords.} For each $C$, choose an orthonormal basis $\{\ket{s_{C,a}}:1\leq a\leq\kappa_C\}$ of $\operatorname{ran}P_C$ and an orthonormal basis $\{\ket{c_{C,\mu}}:1\leq\mu\leq d_C\}$ of $C$. The physical codewords are
\begin{equation}
\ket{\overline{C,a,\mu}}
=S_C(\ket{s_{C,a}}\otimes\ket{c_{C,\mu}}),
\qquad
1\leq a\leq\kappa_C,
\quad
1\leq\mu\leq d_C.
\label{constructed-general-physical-codewords}
\end{equation}
\end{enumerate}

For the single-irrep constructions used below, only one irrep $C$ is occupied and $P_C=P_S=\ket{s}\bra{s}$ has rank one, where $\ket{s}\in M_C$ is normalized. In this case,
\begin{equation}
P=S_C(P_S\otimes I_C)S_C^\dagger
\label{constructed-single-irrep-projector}
\end{equation}
defines one copy of $C$. The direct formulation reduces to solving
\begin{equation}
PO_{\ell,t}P=\lambda_{\ell,t}P
\end{equation}
for every $1\leq\ell\leq k$ and $1\leq t\leq s_\ell$. Alternatively, for each seed $O_{\ell,t}$, expand $S_C^\dagger O_{\ell,t}S_C$ in the representation-factor basis $F_{j,\mu}^{U;C,C}$ and denote the corresponding multiplicity matrices by $\widetilde B_{j,\mu}^U(O_{\ell,t})$. For every shared nontrivial irrep $U$, they are calculated by
\begin{equation}
\widetilde B_{j,\mu}^U(O_{\ell,t})
=\operatorname{Tr}_C\!\left[
\left(I_{M_C}\otimes(F_{j,\mu}^{U;C,C})^\dagger\right)
S_C^\dagger O_{\ell,t}S_C
\right].
\label{seed-specific-multiplicity-matrix}
\end{equation}
The necessary and sufficient conditions for every shared nontrivial irrep are
\begin{equation}
\bra{s}\widetilde B_{j,\mu}^U(O_{\ell,t})\ket{s}=0
\end{equation}
for every $j$ and $\mu$, together with $\braket{s|s}=1$. This is the rank-one specialization of Corollary~\ref{single-isotypic-KL}. The trivial-irrep condition is automatic because $P_S$ has rank one. Once a normalized vector $\ket{s}$ satisfying these conditions is found, choose an orthonormal basis $\{\ket{c_\mu}\}$ of $C$ and define
\begin{equation}
\ket{\overline\mu}=S_C(\ket{s}\otimes\ket{c_\mu}).
\label{constructed-physical-codewords}
\end{equation}

\begin{remark}
The rank-zero component $\mathcal T_n^0$ is proportional to the identity and gives no additional condition.
\end{remark}

\begin{remark}
Theorem~\ref{minimum-seeds-su2} attains the minimum number of seeds necessary, which agrees with the general minimum in Theorem~\ref{spanning-set-G-orbit} by Corollary~\ref{su2-minimum-matches-general}.
\end{remark}

\begin{remark}
If a single occupied irrep $C$ has $m_C=1$, there is no multiplicity vector to find, so the unique copy of $C$ either satisfies the seed conditions or it does not.
\end{remark}

\section{Hermitian reductions and the constraint-balanced polyhedral family}

\subsection{Hermitian adapted operator bases}

We now prove a lemma that will be crucial in simplifying our constraint count and supports our conjectured construction.

Throughout this section, when a single irrep $C$ is fixed, we suppress the repeated $C,C$ labels and write $F_{j,\mu}^U=F_{j,\mu}^{U;C,C}$, $B_{j,\alpha}^U=B_{j,\alpha}^{U;C,C}$, and $r_U=r_U^{C,C}$.

\begin{lemma}
Suppose the $G$-invariant operator space $W$ satisfies $W^\dagger=W$, and write $W\cong\bigoplus_U U^{\oplus m_U}$ as usual. Fix an irrep $C$, and assume that every irrep shared between $W$ and $\operatorname{End}(C)$ is real. Then, for each such shared irrep $U$, there exist Hermitian, $G$-adapted operator bases $T_\mu^{U,\alpha}\in W$ and Hermitian representation-factor bases $F_{j,\mu}^U\in\operatorname{End}(C)$ such that
\begin{equation}
S_C^\dagger T_\mu^{U,\alpha}S_C
=\sum_{j=1}^{r_U}B_{j,\alpha}^U\otimes F_{j,\mu}^U,
\qquad
(B_{j,\alpha}^U)^\dagger=B_{j,\alpha}^U,
\label{Hermitian-adapted-B-decomposition}
\end{equation}
where $S_C:M_C\otimes C\to V$ is the usual isometric embedding and $r_U=\operatorname{mult}_U(\operatorname{End}(C))$. For nontrivial $U$, the $F_{j,\mu}^U$ can also be chosen Hilbert--Schmidt orthonormal. For $U=\mathbf 1$, we retain the normalization $F_{1,1}^{\mathbf 1}=I_C$.
\label{real-BF-lemma}
\end{lemma}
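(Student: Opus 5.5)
The plan is to use the adjoint as an antilinear $G$-equivariant involution on both $W$ and $\operatorname{End}(C)$, and the reality of $U$ to choose bases fixed by it.

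First, recall from the proof of Lemma~\ref{adjoint-KL-rep-lemma} that $(g\cdot A)^\dagger=g\cdot A^\dagger$. Hence $\dagger$ is an antilinear involution on $W$ (using $W^\dagger=W$) and on $\operatorname{End}(C)$ that commutes with the $G$ action. Since $\dagger$ preserves isotypic components, the Hermitian operators $W_h=\{A\in W:A^\dagger=A\}$ form a real $G$-representation with $W=W_h\oplus iW_h$. Restricting to the $U$-isotypic component, $W[U]=W_h[U]\otimes_{\mathbb R}\mathbb C$. Because $U$ is real (Frobenius--Schur indicator $+1$), the real $G$-module $W_h[U]$ is a direct sum of $m_U$ copies of a real form $U_{\mathbb R}$ of $U$, with $\operatorname{End}_G(U_{\mathbb R})=\mathbb R$. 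We fix a basis $\ket{\mu}$ of $U$ with $\pi_U(g)$ real orthogonal, which is possible for a real unitary irrep. Choosing real equivariant maps $U_{\mathbb R}\to W_h[U]$ then gives Hermitian $T_\mu^{U,\alpha}$ satisfying Eq.~\eqref{G-adapted-KL-basis} with real matrices $\pi_U(g)$. The same argument applied to $\operatorname{End}(C)$ gives Hermitian $F_{j,\mu}^U$ for the $r_U$ copies of $U$. For nontrivial $U$, these can be orthonormalized by the symmetric Gram procedure of Eq.~\eqref{cross-intertwiner-Gram-matrix}: the Gram matrix is real, since Hilbert--Schmidt inner products of Hermitian operators are real, so $\Gamma^{-1/2}$ is real and Hermiticity is preserved. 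For $U=\mathbf 1$, the element $I_C$ is already Hermitian.

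Next, we show that $B$ is Hermitian. With these choices, take the adjoint of Eq.~\eqref{general-compressed-basis-operator}. The left side $S_C^\dagger T_\mu^{U,\alpha}S_C$ is Hermitian, while the right side becomes $\sum_j (B_{j,\alpha}^U)^\dagger\otimes F_{j,\mu}^U$, using $F^\dagger=F$. The $\mu$-independence of the new coefficients holds because the expansion is the same intertwiner decomposition with the same $\Phi_j$; alternatively it is immediate, since the adjoint of the equivariant map $X_\alpha$ is again equivariant when $\pi_U$ is real. Uniqueness of the $B$'s in Theorem~\ref{general-invariant-code-KL} then gives $(B_{j,\alpha}^U)^\dagger=B_{j,\alpha}^U$. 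A more concrete route is to read the formula Eq.~\eqref{general-B-matrix-formula}: a partial trace of a product of Hermitian-derived factors, divided by a real denominator. The uniqueness argument is cleaner, however.

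The main obstacle is the existence of Hermitian adapted bases transforming under a common real orthogonal $\pi_U$, which must hold simultaneously for $W$ and for $\operatorname{End}(C)$. The delicate point is that the real structure $\dagger$ on $W[U]$ must be compatible with the real structure of $U$, rather than with a quaternionic one. I would handle this through Schur's lemma. Any antilinear equivariant involution on $U\otimes\mathbb C^{m_U}$ has the form $J\otimes K$, where $J$ is the standard real structure of $U$ and $K$ is an antilinear involution on $\mathbb C^{m_U}$. This uses the fact that the antilinear commutant of a real irrep is $\mathbb C J$, together with $J^2=+1$. Choosing a $K$-fixed basis of the multiplicity space then yields the Hermitian $T_\mu^{U,\alpha}$.
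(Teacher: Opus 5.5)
Your proposal is correct and follows essentially the same route as the paper. The antilinear involution $K$ on the multiplicity space in your $J\otimes K$ factorization is exactly the paper's involution $\Phi\mapsto\Phi^\#$, $\Phi^\#(u)=\Phi(K_Uu)^\dagger$, on $\operatorname{Hom}_G(U,W)$ and $\operatorname{Hom}_G(U,\operatorname{End}(C))$; after that, both arguments pick a fixed-point basis, orthonormalize with a real Gram matrix, and compare the decomposition with its adjoint using the linear independence of the $F_{j,\mu}$. One small precision: in general $\dagger$ maps $W[U]$ onto $W[U^*]$, so your claim that it preserves isotypic components holds here only because the shared irreps $U$ are self-dual.
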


\begin{proof}
Fix a shared real irrep $U$. Choose a basis of $U$ in which every representation matrix is real, and let $K_U:U\to U$ denote the conjugate-linear map given by conjugation in that basis. For $\Phi\in\operatorname{Hom}_G(U,\operatorname{End}(C))$, we define
\begin{equation}
\Phi^\#(u)=\Phi(K_Uu)^\dagger.
\label{adjoint-involution-on-intertwiners}
\end{equation}
We first show this is also a map in $\operatorname{Hom}_G(U,\operatorname{End}(C))$. Indeed,
\begin{equation}
\begin{aligned}
\Phi^\#(g\cdot u)
&=\Phi^\#(\pi_U(g)u)
=\Phi(K_U(\pi_U(g)u))^\dagger\\
&=\Phi(\pi_U(g)(K_Uu))^\dagger
=\left[\pi_C(g)\Phi(K_Uu)\pi_C(g)^\dagger\right]^\dagger\\
&=\pi_C(g)\Phi(K_Uu)^\dagger\pi_C(g)^\dagger
=g\cdot\Phi^\#(u),
\end{aligned}
\label{adjoint-involution-equivariance}
\end{equation}
as desired. Thus, we can decompose each intertwiner as
\begin{equation}
\Phi=\frac{\Phi+\Phi^\#}{2}
+i\frac{\Phi-\Phi^\#}{2i}.
\label{intertwiner-real-form-decomposition}
\end{equation}
We also have $(\Phi^\#)^\#=\Phi$. Thus, we can choose a basis $\{\widetilde\Phi_j^U\}_{j=1}^{r_U}$ such that $(\widetilde\Phi_j^U)^\#=\widetilde\Phi_j^U$. Now, for a real basis $\ket{\mu}$ of $U$, we define $\widetilde F_{j,\mu}^U=\widetilde\Phi_j^U(\ket{\mu})$, which will be Hermitian.

For nontrivial $U$, let $\widetilde M_j^U=M[\widetilde\Phi_j^U]$. In the previous derivation of these matrices, we saw that
\begin{equation}
\begin{aligned}
\Gamma_{ab}
&=\frac{1}{d_U}\operatorname{Tr}\left[(\widetilde M_a^U)^\dagger\widetilde M_b^U\right]\\
&=\frac{1}{d_U}\sum_{\mu=1}^{d_U}
\operatorname{Tr}\left[(\widetilde F_{a,\mu}^U)^\dagger
\widetilde F_{b,\mu}^U\right]
=\frac{1}{d_U}\sum_{\mu=1}^{d_U}
\operatorname{Tr}\left[\widetilde F_{a,\mu}^U
\widetilde F_{b,\mu}^U\right]
\in\mathbb R.
\end{aligned}
\label{Hermitian-intertwiner-Gram-matrix}
\end{equation}
Since the intertwiners are linearly independent, $\Gamma$ is positive definite, and because it is real and symmetric, $\Gamma^{-1/2}$ is real. Thus, we can use the same orthonormalization as before,
\begin{equation}
M_a^U=\sum_b\widetilde M_b^U(\Gamma^{-1/2})_{ba},
\label{Hermitian-intertwiner-orthonormalization}
\end{equation}
and because $\Gamma^{-1/2}$ is real, the corresponding $F_{j,\mu}^U$ are both Hermitian and Hilbert--Schmidt orthonormal, as desired. For the trivial irrep, the established choice $F_{1,1}^{\mathbf 1}=I_C$ is already Hermitian.

Finally, since $W=W^\dagger$, we can likewise find a Hermitian basis $T_\mu^{U,\alpha}$ for each shared real irrep $U$ by applying the exact same argument to $\operatorname{Hom}_G(U,W)$. We then write
\begin{equation}
\begin{aligned}
S_C^\dagger T_\mu^{U,\alpha}S_C
&=\sum_{j=1}^{r_U}B_{j,\alpha}^U\otimes F_{j,\mu}^U,\\
(S_C^\dagger T_\mu^{U,\alpha}S_C)^\dagger
&=\sum_{j=1}^{r_U}(B_{j,\alpha}^U)^\dagger\otimes F_{j,\mu}^U.
\end{aligned}
\label{Hermitian-B-matrix-comparison}
\end{equation}
Since the $T_\mu^{U,\alpha}$ and $F_{j,\mu}^U$ are Hermitian and the $F_{j,\mu}^U$ are linearly independent, we conclude that
\begin{equation}
B_{j,\alpha}^U=(B_{j,\alpha}^U)^\dagger,
\label{Hermitian-B-matrices}
\end{equation}
which was to be shown.
\end{proof}

\subsection{Real symmetry-reduced constraint count}

The orbit theorem minimizes the number of physical operators that must be tested and requires no reality assumptions. To count the resulting real equations on the multiplicity-space projector, we now use additional structure.

\begin{lemma}
Let
\begin{equation}
P=S_C(P_S\otimes I_C)S_C^\dagger,
\label{real-count-single-isotypic-projector}
\end{equation}
where $P_S$ is a rank-$\kappa$ projector on the multiplicity space $M_C$. Suppose that $W^\dagger=W$, let $m_U=\operatorname{mult}_U(W)$ and $r_U=\operatorname{mult}_U(\operatorname{End}(C))$, and suppose that each irrep $U$ shared between $W$ and $\operatorname{End}(C)$ is real. Then the formal real-equation count for the symmetry-reduced KL conditions is
\begin{equation}
N_{\operatorname{sym}}^{(\kappa)}
=\kappa^2\sum_{U\neq\mathbf 1}m_Ur_U
+(\kappa^2-1)m_{\mathbf 1}.
\label{real-constraint-count-equation}
\end{equation}
\label{real-constraint-count}
\end{lemma}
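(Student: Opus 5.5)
Throughout, the plan is to work directly from Corollary~\ref{single-isotypic-KL} together with the Hermitian bases supplied by Lemma~\ref{real-BF-lemma}. Since $W^\dagger=W$ and every shared irrep is real, Lemma~\ref{real-BF-lemma} lets us choose Hermitian adapted operators $T_\mu^{U,\alpha}$ and Hermitian factors $F_{j,\mu}^U$. As a result, every multiplicity matrix $B_{j,\alpha}^U$ is Hermitian on $M_C$. Corollary~\ref{single-isotypic-KL} then states that $P$ detects $W$ iff two families of conditions hold. The first is $P_SB_{j,\alpha}^UP_S=0$ for each shared nontrivial $U$, each $1\le\alpha\le m_U$, and each $1\le j\le r_U$. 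The second is $P_SB_{1,\alpha}^{\mathbf 1}P_S=\lambda_\alpha P_S$ for $1\le\alpha\le m_{\mathbf 1}$, where $r_{\mathbf 1}=1$ by Eq.~\eqref{trivial-intertwiner-space}. Irreps $U$ not shared with $\operatorname{End}(C)$ have $r_U=0$ and contribute nothing, by the selection rule Eq.~\eqref{reducible-selection-rule}. This justifies writing the sum over all $U\neq\mathbf 1$.

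The next step is to count real equations per condition. I will parametrize $P_S=VV^\dagger$ with an isometry $V:\mathbb C^\kappa\to M_C$. Because $V$ is injective, $P_SBP_S=0$ is equivalent to the compressed matrix $V^\dagger BV=0$. For Hermitian $B$, this compressed matrix is a $\kappa\times\kappa$ Hermitian matrix. The space of such matrices has real dimension $\kappa^2$, so each nontrivial condition imposes $\kappa^2$ real equations. Summing over $U\neq\mathbf 1$, $\alpha$, and $j$ gives $\kappa^2\sum_{U\neq\mathbf 1}m_Ur_U$.

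For the trivial irrep, the condition reads $V^\dagger B_{1,\alpha}^{\mathbf 1}V=\lambda_\alpha I_\kappa$, with $\lambda_\alpha$ free. The scalar $\lambda_\alpha$ is real, since it equals $\operatorname{Tr}(V^\dagger BV)/\kappa$. The condition therefore says that the traceless part of a Hermitian $\kappa\times\kappa$ matrix vanishes, which is $\kappa^2-1$ real equations per $\alpha$. Summing over the $m_{\mathbf 1}$ trivial copies gives Eq.~\eqref{real-constraint-count-equation}.

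The main subtlety is making precise what "formal" means. The count is of equations, not of independent constraints. I also have to check that the Hermitian choice in Lemma~\ref{real-BF-lemma} is what makes the count real. Without the reality assumption, $V^\dagger BV$ would be an arbitrary complex $\kappa\times\kappa$ matrix, contributing $2\kappa^2$ real equations. So I will state explicitly that the Hermitian structure halves this count. I will also confirm that the count does not depend on the chosen bases: changing bases within $\operatorname{Hom}_G(U,W)$ and $\operatorname{Hom}_G(U,\operatorname{End}(C))$ by real invertible transformations preserves Hermiticity and the number of equations. Finally, I will note that for $\kappa=1$ the formula reduces to $\sum_{U\neq\mathbf 1}m_Ur_U$, matching the remark after Corollary~\ref{trivial-irrep-qecc} that the trivial conditions are automatic.
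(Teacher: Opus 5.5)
Your proposal is correct and follows the paper's proof essentially step for step. You take Hermitian multiplicity matrices from the Hermitian-basis lemma and the conditions from the single-isotypic corollary, then count $\kappa^2$ real equations for each nontrivial matrix and $\kappa^2-1$ for each trivial one, summed over $\sum_{U\neq\mathbf 1}m_Ur_U$ and $m_{\mathbf 1}$ copies; compressing by an isometry $V$ instead of working in an orthonormal basis of $\operatorname{ran}P_S$ is the same computation, and your remarks on the reality of $\lambda_\alpha$ and basis independence are harmless additions.
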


\begin{proof}
Indeed, from Lemma~\ref{real-BF-lemma}, we know that all the $B_{j,\alpha}^U$ can be made Hermitian. Now, we need
\begin{equation}
P_SB_{j,\alpha}^UP_S=0
\quad(U\neq\mathbf 1),
\qquad
P_SB_{1,\alpha}^{\mathbf 1}P_S=\lambda_\alpha P_S.
\label{Hermitian-reduced-KL-equations}
\end{equation}
Since $P_S$ is Hermitian, $P_SB_{j,\alpha}^UP_S$ is also Hermitian. In an orthonormal basis of $\operatorname{ran}P_S$, each such matrix has $\kappa$ real diagonal entries and $\kappa(\kappa-1)/2$ complex off-diagonal entries, giving a total of
\begin{equation}
\kappa+\kappa(\kappa-1)=\kappa^2
\label{Hermitian-matrix-real-dimension}
\end{equation}
real expressions. Thus, the condition $P_SB_{j,\alpha}^UP_S=0$ for $U\neq\mathbf 1$ gives us $\kappa^2$ equations for each $B$ matrix. The trivial-irrep equation gives us $\kappa^2-1$ equations, since we have a free constant $\lambda_\alpha$. The number of nontrivial $B$ matrices is $\sum_{U\neq\mathbf 1}m_Ur_U$, while the number of trivial $B$ matrices is $m_{\mathbf 1}$. This gives Eq.~\eqref{real-constraint-count-equation}, as desired.
\end{proof}

The equations counted in Lemma~\ref{real-constraint-count} are not required to be independent or to have a solution, since the KL conditions are quadratic. Thus, Eq.~\eqref{real-constraint-count-equation} is a formal equation count and not an existence theorem. We also get the following corollary.

\begin{corollary}
If $\kappa=1$ in the setting of Lemma~\ref{real-constraint-count}, then
\begin{equation}
N_{\operatorname{sym}}^{(1)}
=\sum_{U\neq\mathbf 1}m_Ur_U.
\label{rank-one-real-constraint-count}
\end{equation}
\label{real-constraint-count-rank-1}
\end{corollary}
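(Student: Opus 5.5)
This corollary is a direct specialization of Lemma~\ref{real-constraint-count}, so the plan is to substitute $\kappa=1$ into Eq.~\eqref{real-constraint-count-equation} and then check that the resulting count is consistent with the rank-one structure noted earlier. Setting $\kappa=1$, the nontrivial term becomes $1^2\sum_{U\neq\mathbf 1}m_Ur_U$. The trivial term becomes $(1^2-1)m_{\mathbf 1}=0$. Together these give Eq.~\eqref{rank-one-real-constraint-count}.

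As an independent check, I would recount directly from the mechanism in the proof of Lemma~\ref{real-constraint-count}. For $P_S=\ket{s}\bra{s}$, Eq.~\eqref{rank-one-multiplicity-compression} gives $P_SBP_S=\braket{s|B|s}P_S$.

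\textbf{Trivial irrep.} The condition $P_SB_{1,\alpha}^{\mathbf 1}P_S=\lambda_\alpha P_S$ is satisfied automatically by choosing $\lambda_\alpha=\braket{s|B_{1,\alpha}^{\mathbf 1}|s}$. This matches the remark following Corollary~\ref{trivial-irrep-qecc}, and it contributes zero equations, consistent with $\kappa^2-1=0$.

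\textbf{Nontrivial irreps.} For each shared nontrivial $U$, Lemma~\ref{real-BF-lemma} lets us take $B_{j,\alpha}^U$ Hermitian. Hence $\braket{s|B_{j,\alpha}^U|s}$ is real, and each condition $\braket{s|B_{j,\alpha}^U|s}=0$ from Eq.~\eqref{pure-irrep-nontrivial-KL-condition} is exactly one real equation. There are $m_Ur_U$ such matrices for each $U$.

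Summing over $U\neq\mathbf 1$ reproduces Eq.~\eqref{rank-one-real-constraint-count}. There is no real obstacle here. The only point needing care is that reality of the diagonal expectation values relies on the Hermitian choice in Lemma~\ref{real-BF-lemma}, which is exactly the hypothesis inherited from Lemma~\ref{real-constraint-count}.
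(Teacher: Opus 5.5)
Your proposal is correct and matches the paper, which gives no separate proof and obtains the corollary simply by setting $\kappa=1$ in the real-equation count of Lemma~\ref{real-constraint-count}, so that the trivial-irrep term $(\kappa^2-1)m_{\mathbf 1}$ vanishes. Your direct recount via $P_SBP_S=\braket{s|B|s}P_S$, which gives one real equation per Hermitian nontrivial $B_{j,\alpha}^U$, is a valid consistency check but is not needed.
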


To apply this count to the loss tensor spaces, we first use the following fact.

\begin{proposition}
The spaces of irreducible $SU(2)$ tensor operators $\mathcal T_n^\ell$ satisfy
\begin{equation}
(\mathcal T_n^\ell)^\dagger=\mathcal T_n^\ell.
\label{tensor-space-adjoint-closure}
\end{equation}
\label{tensor-spaces-adjoint-closed}
\end{proposition}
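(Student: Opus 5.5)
The plan is to show that the adjoint map sends $\mathcal T_n^\ell$ into an $SU(2)$-invariant subspace of $\operatorname{End}(H_n)$ isomorphic to $S_\ell$. The uniqueness of the spin-$\ell$ sector recorded in Eq.~\eqref{end-spin-decomposition} then forces that subspace to be $\mathcal T_n^\ell$.

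First, the adjoint commutes with the conjugation action. For any $U\in SU(2)$ and $A\in\operatorname{End}(H_n)$, unitarity of $R_n(U)$ gives
\begin{equation}
(U\cdot A)^\dagger=\bigl(R_n(U)AR_n(U)^\dagger\bigr)^\dagger=R_n(U)A^\dagger R_n(U)^\dagger=U\cdot A^\dagger .
\end{equation}
This is exactly the computation in part~1 of Lemma~\ref{adjoint-KL-rep-lemma}, applied with $V=W=H_n$. Hence $(\mathcal T_n^\ell)^\dagger$ is an $SU(2)$-stable complex subspace of $\operatorname{End}(H_n)$. The map $A\mapsto A^\dagger$ is conjugate-linear but injective, and its image is a complex subspace, so $\dim(\mathcal T_n^\ell)^\dagger=2\ell+1$.

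Second, I identify the representation carried by $(\mathcal T_n^\ell)^\dagger$. Lemma~\ref{adjoint-KL-rep-lemma}(1) gives $(\mathcal T_n^\ell)^\dagger\cong(\mathcal T_n^\ell)^*\cong S_\ell^*$. Lemma~\ref{su2-self-dual} then gives $S_\ell^*\cong S_\ell$, and irreducibility is inherited because the isomorphism is equivariant. By Eq.~\eqref{end-spin-decomposition}, each $S_\ell$ occurs exactly once in $\operatorname{End}(H_n)$. Its isotypic component is therefore the unique invariant subspace isomorphic to $S_\ell$, namely $\mathcal T_n^\ell$. It follows that $(\mathcal T_n^\ell)^\dagger=\mathcal T_n^\ell$.

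As a cross-check, I would also give an independent explicit argument. Taking adjoints of the commutation relations~\eqref{spherical-tensor-commutators}, and using $J_z^\dagger=J_z$ and $J_\pm^\dagger=J_\mp$, one finds that the operators $(-1)^\mu (T_{-\mu}^{(\ell)})^\dagger$ satisfy the same relations~\eqref{spherical-tensor-commutators}. Alternatively, $(J_+^\ell)^\dagger=J_-^\ell$ is a lowest-weight vector of weight $-\ell$ under $\operatorname{ad}J_z$. Its orbit span under $\operatorname{ad}J_\pm$ is therefore a spin-$\ell$ multiplet, which must be $\mathcal T_n^\ell$ by uniqueness.

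The only delicate point is that the adjoint is conjugate-linear rather than linear. I will handle this through the explicit unitary matrices $\overline{D}$ in the proof of Lemma~\ref{adjoint-KL-rep-lemma}, which show that $(\mathcal T_n^\ell)^\dagger$ is a genuine complex representation isomorphic to the dual. I do not expect any other obstacle.
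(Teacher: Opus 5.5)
Your proof is correct. Its main line is more structural than the paper's, which is a one-liner resting on the standard conjugation property of spherical tensors: in any standard phase convention $(T_\mu^{(\ell)})^\dagger$ is a nonzero multiple of $T_{-\mu}^{(\ell)}$, so the adjoint permutes the spanning set up to scalars.

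You instead use three facts already established in the paper:
\begin{enumerate}
\item the adjoint commutes with the conjugation action (part 1 of the adjoint lemma), so $(\mathcal T_n^\ell)^\dagger$ is an invariant complex subspace of dimension $2\ell+1$ carrying $S_\ell^*$;
\item $SU(2)$ irreps are self-dual, so $S_\ell^*\cong S_\ell$;
\item $\operatorname{End}(H_n)\cong\bigoplus_{\ell'=0}^{n}S_{\ell'}$ is multiplicity-free, which forces this subspace to be $\mathcal T_n^\ell$.
\end{enumerate}

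Your route needs no phase convention and no computation. It also proves something more general: for any unitary representation $V$, the adjoint maps the $U$-isotypic component of $\operatorname{End}(V)$ onto the $U^*$-isotypic component, so self-dual isotypic components are always adjoint-closed. The paper's route, in exchange, tells you concretely how $\dagger$ acts on the elementary tensors.

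Your cross-check is essentially the paper's argument with its unproved proportionality actually derived. The computation showing that $(-1)^\mu (T_{-\mu}^{(\ell)})^\dagger$ obeys the same commutation relations is correct. The relations fix a multiplet only up to a common scalar, and that again relies on multiplicity one. Together these give $(T_{-\mu}^{(\ell)})^\dagger = c\,(-1)^\mu T_\mu^{(\ell)}$ for a single nonzero constant $c$.
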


\begin{proof}
For any standard phase convention for the elementary tensors, $(T_\mu^{(\ell)})^\dagger$ is a nonzero scalar multiple of $T_{-\mu}^{(\ell)}$. Since $\mathcal T_n^\ell$ is the span of these irreducible tensors, it is unchanged under the adjoint, and the result follows.
\end{proof}

The preceding results suggest a scalable polyhedral family. Define
\begin{equation}
n_d=\left\lceil\frac{3d^2+1}{4}\right\rceil.
\label{n-of-d}
\end{equation}
For this photon number, define the KL-operator space
\begin{equation}
W_d=\bigoplus_{\ell=0}^{d-1}\mathcal T_{n_d}^\ell.
\label{family-KL-space}
\end{equation}
Proposition~\ref{tensor-spaces-adjoint-closed} implies that $W_d^\dagger=W_d$.
This photon-number sequence was conjectured for the odd-distance binary-dihedral family in Refs.~\cite{kubischtaPermutationInvariantQuantumCodes2025,bondPermutationinvariantCodesNumerical2026}. The following propositions identify specific $2O$ and $2T$ two-dimensional irreps for which the formal real-equation count exactly matches the free-variable count at each corresponding $(n_d,d)$. This count motivates, but does not prove, the existence of the codes.
Before we proceed with our proposed sequences, we note that for a given group (e.g., $2O, \, 2I,\, 2T$), we use $\rho_i$ to denote its $i$-th irrep. The numbering follows the ordering in the GAP-exported representation data supplied with our code. We also note that the numbering generally matches that from the supplemental material of Ref.~\cite{grossDesigningCodesInteractions2021}, with the exception that the three-dimensional irreps $\rho_4, \, \rho_5$ of $2I$ are interchanged. 

\subsection{Even-distance binary-octahedral construction}

\begin{proposition}
For even distance, we have $d=2r$ and $n_d=3r^2+1$. Take $G=2O$, and consider a two-dimensional irrep code using the irrep
\begin{equation}
C_r=
\begin{cases}
\rho_3, & r\text{ odd},\\
\rho_5, & r\equiv2\pmod 4,\\
\rho_4, & r\equiv0\pmod 4.
\end{cases}
\label{even-family-irrep-choice}
\end{equation}
The multiplicity of the irrep is
\begin{equation}
m_r=\left\lfloor\frac{r^2}{4}\right\rfloor+1,
\label{even-family-irrep-multiplicity}
\end{equation}
so a rank-one projector $P_S=\ket{s}\bra{s}$, with normalized $\ket{s}\in M_{C_r}$, has $2m_r-2$ degrees of freedom, matching the formal number of real equations $N_{\operatorname{sym}}^{(1)}$ from Corollary~\ref{real-constraint-count-rank-1}.
\label{even-distance-2O-family}
\end{proposition}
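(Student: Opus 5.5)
The plan is to make both sides of the claimed identity explicit by character theory and then compare them. The degree-of-freedom side is immediate: a normalized $\ket{s}\in M_{C_r}\cong\mathbb C^{m_r}$ modulo a global phase (which does not change $P_S$) has $2m_r-1-1=2m_r-2$ real parameters. So the work is (i) proving the multiplicity formula for $m_r$, and (ii) evaluating $N^{(1)}_{\operatorname{sym}}=\sum_{U\neq\mathbf 1}m_Ur_U$ from Corollary~\ref{real-constraint-count-rank-1} for $W_d$. Its hypotheses hold: $W_d^\dagger=W_d$ by Proposition~\ref{tensor-spaces-adjoint-closed}, and I still need to check that the shared irreps are real.

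First, I will fix the structure of $2O$ from the character table. The element $-I$ acts on $H_n$ as $(-1)^n$, and $n_d=3r^2+1$ is even for $r$ odd and odd for $r$ even. For $r$ odd, $H_n$ therefore only contains irreps factoring through $O\cong 2O/\{\pm I\}$, and the relevant two-dimensional irrep $\rho_3$ is the one pulled back from $O\to S_3$. For $r$ even, only spinorial irreps occur, and $\rho_4,\rho_5$ are the two spinorial two-dimensional irreps. I will identify these labels with the GAP data by their characters.

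Next, I compute $m_r=\langle\chi_{n/2},\chi_{C_r}\rangle$ using the spin-$j$ character $\chi_j(\theta)=\sin((2j+1)\theta/2)/\sin(\theta/2)$ summed over the classes of $2O$. Its rotation angles are $0,2\pi$, $\pi$ (two classes), $2\pi/3,4\pi/3$, and $\pi/2,3\pi/2$. The result is a quasi-polynomial in $n$ with period $24$ in $j$; heuristically its leading term is $2(n+1)/24\approx r^2/4$. I will substitute $n=3r^2+1$ and reduce to the cases $r\bmod 4$ (and, where needed, $r\bmod 12$), since $r^2\bmod 8$ and $r^2\bmod 3$ control the periodic part. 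In each case I will verify $m_r=\lfloor r^2/4\rfloor+1$, which also shows why the choice between $\rho_4$ and $\rho_5$ alternates with $r\bmod 4$.

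Then comes the constraint side. For spinorial $C$ (that is, $r$ even), $C\otimes C^*\cong\mathbf 1\oplus T_1$, where $T_1$ is the three-dimensional vector irrep of $O$. For $\rho_3=E$ (that is, $r$ odd), $E\otimes E\cong A_1\oplus A_2\oplus E$. All irreps of $O$ are real, so Lemma~\ref{real-BF-lemma} applies, and every $r_U=1$. Thus $N^{(1)}_{\operatorname{sym}}=\sum_{\ell=0}^{2r-1}\operatorname{mult}_{T_1}(S_\ell)$ for $r$ even, and $N^{(1)}_{\operatorname{sym}}=\sum_{\ell=0}^{2r-1}\bigl[\operatorname{mult}_{A_2}(S_\ell)+\operatorname{mult}_E(S_\ell)\bigr]$ for $r$ odd. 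The integer-spin multiplicities of $O$-irreps in $S_\ell$ are classical and periodic in $\ell$ with period $12$, with linear growth. Both sums have leading term $\tfrac{3}{24}\sum_{\ell<2r}(2\ell+1)=r^2/2$, which matches $2m_r-2$ to leading order.

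The main obstacle is the exact bookkeeping of the periodic corrections. I would write $\operatorname{mult}_U(S_\ell)=\bigl((2\ell+1)d_U+c_U(\ell\bmod 12)\bigr)/24$ with explicit tables $c_U$, sum over $\ell<2r$ by splitting $2r$ into full periods of $12$ plus a remainder, and check equality with $2\lfloor r^2/4\rfloor$ separately for each residue of $r$ modulo $12$ (and modulo $4$ for the irrep choice). I expect this to be a finite case check, ideally confirmed symbolically. Any mismatch would indicate that a particular $\rho_4/\rho_5$ assignment is wrong, which I would then fix by the character computation of $m_r$.
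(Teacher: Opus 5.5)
Your proposal is correct and follows essentially the same route as the paper's proof. Both use the decompositions $\operatorname{End}(\rho_4)\cong\operatorname{End}(\rho_5)\cong\mathbf 1\oplus\rho_6$ and $\operatorname{End}(\rho_3)\cong\mathbf 1\oplus\rho_2\oplus\rho_3$, the reality of the shared irreps, and a sum of multiplicities over $\ell<2r$. The only difference is that the paper reads the multiplicities off Gross's tables, while you derive them from character inner products. For $m_r$ the paper splits by $r^2\bmod 8$; for the constraint side it uses $\operatorname{mult}_{\rho_2}+\operatorname{mult}_{\rho_3}=\lfloor(\ell+2)/4\rfloor$ and four-rank block sums $4j+2$ for $\rho_6$.

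Your bookkeeping is also lighter than you expect. The characters of $\rho_2\oplus\rho_3$ and of $\rho_6$ vanish on the $2\pi/3$ rotation classes, so the relevant multiplicities in $\mathcal T^\ell$ are periodic mod $4$ rather than $12$. Likewise, $n_d\bmod 24$ depends only on $r\bmod 4$, so no mod-$12$ case split is needed.
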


\begin{proof}
Firstly, using Gross's irrep numbering (See Supplemental material of \cite{grossDesigningCodesInteractions2021}), we have
\begin{equation}
\operatorname{End}(\rho_4)
\cong\operatorname{End}(\rho_5)
\cong\mathbf 1\oplus\rho_6,
\qquad
\operatorname{End}(\rho_3)
\cong\mathbf 1\oplus\rho_2\oplus\rho_3.
\label{2O-endomorphism-decompositions}
\end{equation}
One can check from the characters that all three of $\rho_2$, $\rho_3$, and $\rho_6$ have Frobenius--Schur indicator $1$, and thus they can all be made real, satisfying the lemma's conditions.

Thus, it suffices to show that the counts match. Firstly, $r^2$ is congruent to $0$, $1$, or $4$ modulo $8$. For each possible residue class of $r^2$, Gross's Table VII~\cite{grossDesigningCodesInteractions2021} gives
\begin{equation}
\begin{array}{c|c|c}
\text{condition} & n_d+1 & \text{relevant multiplicity}\\ \hline
r^2=8h+1 & 24h+5 & \operatorname{mult}_{\rho_3}=2h+1\\
r^2=8h+4 & 24h+14 & \operatorname{mult}_{\rho_5}=2h+2\\
r^2=8h & 24h+2 & \operatorname{mult}_{\rho_4}=2h+1
\end{array}
\label{multiplicity-table-2O-family}
\end{equation}
for a nonnegative integer $h$. So, if we solve for the multiplicity in terms of $r$, we see that in every case we get $m_r=\lfloor r^2/4\rfloor+1$. This gives us the count $2m_r-2$ of free variables in each case, since each entry of $\ket{s}$ is complex, and we subtract two degrees of freedom for normalization and an arbitrary global phase.

Now, we need to find $\operatorname{mult}_U(W_d)$ for each relevant $U\in\{\rho_2,\rho_3,\rho_6\}$. We notice that for $\rho_2$ and $\rho_3$,
\begin{equation}
\operatorname{mult}_{\rho_2}(\mathcal T_{n_d}^\ell)
+\operatorname{mult}_{\rho_3}(\mathcal T_{n_d}^\ell)
=\left\lfloor\frac{\ell+2}{4}\right\rfloor.
\label{2O-rho2-rho3-level-multiplicity}
\end{equation}
Thus, keeping in mind that in this case $r$ is odd, so $r=2p+1$ and $2r-1=4p+1$, we see that
\begin{equation}
\begin{aligned}
\operatorname{mult}_{\rho_2}(W_d)
+\operatorname{mult}_{\rho_3}(W_d)
&=\sum_{\ell=0}^{d-1}
\left[
\operatorname{mult}_{\rho_2}(\mathcal T_{n_d}^\ell)
+\operatorname{mult}_{\rho_3}(\mathcal T_{n_d}^\ell)
\right]\\
&=\sum_{\ell=0}^{2r-1}\left\lfloor\frac{\ell+2}{4}\right\rfloor
=\sum_{\ell=0}^{4p+1}\left\lfloor\frac{\ell+2}{4}\right\rfloor.
\end{aligned}
\label{2O-odd-r-constraint-sum}
\end{equation}
Each of the ranks $\ell=4j-2,4j-1,4j,4j+1$ contributes $j$ to the sum. The sum goes up to $\ell=4p+1$. So, we get
\begin{equation}
\begin{aligned}
N_{\operatorname{sym}}^{(1)}
&=4\sum_{j=0}^{p}j
=4\frac{p(p+1)}{2}
=2p(p+1)\\
&=\frac{(r-1)(r+1)}{2}
=\frac{r^2-1}{2}
=2\left(\left\lfloor\frac{r^2}{4}\right\rfloor+1\right)-2
=2m_r-2,
\end{aligned}
\label{2O-odd-r-constraint-balance}
\end{equation}
as desired.

For $\rho_6$, we have $r\equiv0\pmod 2$, $d=2r$, and so $d=4p$ and $r=2p$. From the table we see that
\begin{equation}
\sum_{s=0}^3
\operatorname{mult}_{\rho_6}(\mathcal T_{n_d}^{4j+s})
=4j+2.
\label{2O-rho6-block-multiplicity}
\end{equation}
Thus, since $d=4p$, we see
\begin{equation}
\operatorname{mult}_{\rho_6}(W_{4p})
=\sum_{j=0}^{p-1}(4j+2)
=2p(p-1)+2p
=2p^2
=2\frac{r^2}{4}
=2m_r-2,
\label{2O-even-r-constraint-balance}
\end{equation}
exactly matching our free variables, as desired.
\end{proof}

\subsection{Odd-distance binary-tetrahedral construction}

We similarly show the following.

\begin{proposition}
For odd distance $d=2r+1$, $n_d=3r(r+1)+1$. Take $G=2T$, and consider the two-dimensional irrep $C_r=\rho_4$. Its multiplicity in $H_{n_d}$ is
\begin{equation}
m_r=\frac{r(r+1)}{2}+1,
\label{odd-family-irrep-multiplicity}
\end{equation}
so a rank-one projector has $2m_r-2=r(r+1)$ real degrees of freedom, matching the formal real symmetry-reduced KL equation count from Corollary~\ref{real-constraint-count-rank-1}.
\label{odd-distance-2T-family}
\end{proposition}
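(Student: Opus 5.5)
The plan follows the proof of Proposition~\ref{even-distance-2O-family}. First I check that Lemma~\ref{real-constraint-count} and Corollary~\ref{real-constraint-count-rank-1} apply, using $W_d^\dagger=W_d$ from Proposition~\ref{tensor-spaces-adjoint-closed}. Then I compute the two integers to be compared, namely $m_r=\operatorname{mult}_{\rho_4}(H_{n_d})$ and $N_{\operatorname{sym}}^{(1)}$, separately, and check that both give $r(r+1)$.

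\emph{Step 1: structure of $\operatorname{End}(\rho_4)$.} Let $\rho_4$ be the two-dimensional irrep of $2T$ whose character is real (the restriction of the defining representation). Its character squared gives $\operatorname{End}(\rho_4)\cong\rho_4\otimes\rho_4^*\cong\mathbf 1\oplus\rho_7$, where $\rho_7$ is the three-dimensional irrep pulled back from $\widetilde G=T\cong A_4$. This is the analogue of Eq.~\eqref{2O-endomorphism-decompositions}. The character of $\rho_7$ is real and its Frobenius--Schur indicator is $1$, since it is the rotation representation of the tetrahedral group. Hence the only shared nontrivial irrep is real, with $r_{\rho_7}=1$, and Corollary~\ref{real-constraint-count-rank-1} gives
\begin{equation*}
N_{\operatorname{sym}}^{(1)}=\operatorname{mult}_{\rho_7}(W_d)=\sum_{\ell=0}^{2r}\operatorname{mult}_{\rho_7}(\mathcal T_{n_d}^\ell).
\end{equation*}

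\emph{Step 2: the constraint count.} Since $\mathcal T_{n_d}^\ell\cong S_\ell$ and $-I$ acts trivially, I restrict $S_\ell$ to $A_4$. The character is $\chi_\ell(\theta)=\sin((2\ell+1)\theta/2)/\sin(\theta/2)$, evaluated on the classes of rotation angle $0,\pi,2\pi/3$. The inner product with the character $(3,-1,0,0)$ of $\rho_7$ then gives
\begin{equation*}
\operatorname{mult}_{\rho_7}(S_\ell)=\tfrac14\bigl(2\ell+1-\chi_\ell(\pi)\bigr)=\left\lfloor\tfrac{\ell+1}{2}\right\rfloor,
\end{equation*}
using $\chi_\ell(\pi)=(-1)^\ell$. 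Note that $n_d$ does not enter. Pairing the ranks $\ell=2j-1,2j$, each contributing $j$, yields
\begin{equation*}
N_{\operatorname{sym}}^{(1)}=\sum_{\ell=0}^{2r}\left\lfloor\tfrac{\ell+1}{2}\right\rfloor=2\sum_{j=1}^r j=r(r+1).
\end{equation*}

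\emph{Step 3: the multiplicity $m_r$.} Here $n_d=3r(r+1)+1$ is odd, so $H_{n_d}$ is spinorial with $j=n_d/2$. I compute $\langle\chi_j,\chi_{\rho_4}\rangle$ over the seven classes of $2T$: the identity, $-I$, the order-$4$ class, and two classes each of orders $3$ and $6$. The values of $\chi_j$ on these classes depend only on $2j+1=n_d+1 \bmod 12$. Because $r(r+1)$ is even, $n_d+1=3r(r+1)+2$ lies in $\{2,8\}\bmod 12$ according as $r(r+1)/2$ is even or odd. The alternative is to read these two residue classes off Gross's Table for $2T$, exactly as in Eq.~\eqref{multiplicity-table-2O-family}. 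In each case I expect a formula of the form $\frac{1}{24}\bigl(2(n_d+1)+\text{bounded correction}\bigr)$, which should simplify to $r(r+1)/2+1$.

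Normalization and global phase then leave $2m_r-2=r(r+1)$ real degrees of freedom, which equals $N_{\operatorname{sym}}^{(1)}$. The main obstacle is Step 3, specifically getting the correct phases on the order-$3$ and order-$6$ classes of the spinorial characters. As a safeguard I will check the formula against small cases: $r=1$, $n=7$ should give $m=2$, and $r=2$, $n=19$ should give $m=4$.
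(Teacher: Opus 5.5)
Your proposal follows the paper's proof. Both decompose $\operatorname{End}(\rho_4)\cong\mathbf 1\oplus\rho_7$ with $\rho_7$ real, sum $\operatorname{mult}_{\rho_7}(\mathcal T^\ell_{n_d})=\lfloor(\ell+1)/2\rfloor$ over $0\le\ell\le 2r$ to get $r(r+1)$, and then use $\operatorname{mult}_{\rho_4}(H_{n_d})=Q+1$ with $Q=r(r+1)/2$. The paper reads both multiplicities off Gross's tables. Your Step 2 instead derives the $\rho_7$ count from the $A_4$ character, which is correct and self-contained.

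Step 3, however, is left as an expectation, and that expectation is wrong. The $-I$ class is not a bounded correction: for odd $n_d$,
$$\chi_{\rho_4}(-I)\chi_j(-I)=(-2)\bigl(-(n_d+1)\bigr)=2(n_d+1).$$
So the leading term is $4(n_d+1)/24$, not $2(n_d+1)/24$. The latter would give $m_r\approx r^2/4$, which already contradicts the target $r(r+1)/2+1$.

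The phase obstacle you anticipate does not arise. $\chi_j$ depends only on the rotation angle, so it takes the same value on the two order-$6$ classes and on the two order-$3$ classes. $\chi_{\rho_4}$ is real: it equals $1$ on the order-$6$ classes, $-1$ on the order-$3$ classes, and $0$ on the order-$4$ class. Writing $N=n_d+1$, the character inner product is
\begin{equation*}
m_r=\frac{1}{24}\Bigl[4N+8\Bigl(\frac{\sin(N\pi/3)}{\sin(\pi/3)}-\frac{\sin(2N\pi/3)}{\sin(2\pi/3)}\Bigr)\Bigr].
\end{equation*}
Since $N=6Q+2\equiv 2\pmod 6$, the inner bracket equals $1-(-1)=2$. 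This gives $m_r=(N+4)/6=Q+1$, matching your checks $m=2$ at $n=7$ and $m=4$ at $n=19$.

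Because the order-$4$ class drops out, only $N \bmod 6$ matters, so your split into residues $\{2,8\}\bmod 12$ is unnecessary. With this computation, or with the table lookup you mention as the alternative (which is exactly what the paper does), the proof is complete.
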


\begin{proof}
We have
\begin{equation}
\operatorname{End}(\rho_4)\cong\mathbf 1\oplus\rho_7,
\label{2T-endomorphism-decomposition}
\end{equation}
and $\rho_7$ is real, as can be checked by its Frobenius--Schur indicator. From the table in Ref.~\cite{grossDesigningCodesInteractions2021}, we see that
\begin{equation}
\operatorname{mult}_{\rho_7}(\mathcal T_{n_d}^\ell)
=\left\lfloor\frac{\ell+1}{2}\right\rfloor.
\label{2T-rho7-level-multiplicity}
\end{equation}
So, we see that
\begin{equation}
\operatorname{mult}_{\rho_7}(W_{2r+1})
=\sum_{\ell=0}^{2r}\left\lfloor\frac{\ell+1}{2}\right\rfloor
=2\sum_{j=0}^{r}j
=r(r+1).
\label{2T-constraint-count}
\end{equation}

Likewise, we consider $\rho_4$ for $n_d=3r^2+3r+1$. Define $Q=r(r+1)/2$. We have $n_d=6Q+1$, and from Gross's table, we see that $\operatorname{mult}_{\rho_4}(H_{6Q+1})=Q+1$. Therefore,
\begin{equation}
2m_r-2
=2(Q+1)-2
=2Q
=r(r+1)
=\operatorname{mult}_{\rho_7}(W_{2r+1}),
\label{2T-constraint-balance}
\end{equation}
which was to be shown.
\end{proof}

\subsection{Family conjectures and logical status}

Propositions~\ref{even-distance-2O-family} and \ref{odd-distance-2T-family}, together with our constructed examples, support the following conjecture.

\begin{conjecture}[Binary-polyhedral family]
For every integer $d\geq2$, let $n_d$ and $W_d$ be defined by Eqs.~\eqref{n-of-d} and \eqref{family-KL-space}. There exists a unit multiplicity vector $\ket{s_d}\in M_{C_d}$ such that the projector
\begin{equation}
P_d
=S_{C_d,n_d}\left(\ket{s_d}\bra{s_d}\otimes I_{C_d}\right)
S_{C_d,n_d}^\dagger
\label{binary-polyhedral-family-projector}
\end{equation}
detects $W_d$, where
\begin{equation}
S_{C_d,n_d}:M_{C_d}\otimes C_d\longrightarrow H_{n_d}
\label{family-isotypic-embedding}
\end{equation}
is an isometric embedding of the $C_d$-isotypic component and
\begin{equation}
(G_d,C_d)=
\begin{cases}
(2O,C_r), & d=2r,\\
(2T,\rho_4), & d=2r+1.
\end{cases}
\label{binary-polyhedral-family-data}
\end{equation}
Here $C_r$ is the binary-octahedral irrep defined in Proposition~\ref{even-distance-2O-family}. Consequently, $P_d$ is a rank-two loss code of distance at least $d$.
\label{binary-polyhedral-family-conjecture}
\end{conjecture}

\begin{conjecture}[Minimum photon number]
For every integer $d\geq2$, the value $n_d$ in Eq.~\eqref{n-of-d} is the minimum photon number at which a rank-two loss code of distance at least $d$ exists.
\label{binary-polyhedral-minimum-photon-conjecture}
\end{conjecture}

The minimum-photon statement matches the conjecture of Ref.~\cite{bondPermutationinvariantCodesNumerical2026}. A similar counting argument was also made in Ref.~\cite{kubischtaPermutationInvariantQuantumCodes2025} for the respective odd-distance binary-dihedral codes. Of course, matching the formal number of KL equations to the number of free variables does not guarantee that the nonlinear equations have a solution. Nevertheless, we have numerically constructed the corresponding members through $d=10$, up to the KL tolerances reported below; this finite set is evidence for the family-existence conjecture. Lastly, the uniform odd-distance family is formulated using $2T:\rho_4$, but we also see an occasional $2I$ symmetry enhancement, as noted below.

\section{Constructed codes and numerical classification}

\subsection{Construction and verification conventions}

For each member of the binary-polyhedral family, we fix the group $G_d$ and irrep $C_d$ in Eq.~\eqref{binary-polyhedral-family-data}, construct the isotypic embedding $S_{C_d,n_d}$, and solve the symmetry-reduced KL equations for a rank-one multiplicity projector. We impose the equations rank by rank using the levelwise-minimal seeds constructed in Theorem~\ref{minimum-seeds-su2}, and use any remaining degrees of freedom to minimize the KL residual at the first uncorrected rank. This is a nonconvex, multistart numerical optimization, so failure to solve the next rank is not by itself a proof that no other code can do so.

After rescaling each rank-$\ell$ multiplet introduced above by its common Hilbert--Schmidt norm, we retain the notation $\{T_\mu^{(\ell)}:-\ell\leq\mu\leq\ell\}$ for the resulting Hilbert--Schmidt-orthonormal basis. After constructing the physical rank-two projector $P$, we verify the KL conditions against this complete basis at every corrected tensor rank, rather than only against the orbit seeds. We report the maximum normalized residual
\begin{equation}
\epsilon_{\mathrm{KL}}(P;d)
=\max_{\substack{0\leq\ell<d\\-\ell\leq\mu\leq\ell}}
\norm{
P T_\mu^{(\ell)}P
-\frac{\Tr\!\left(P T_\mu^{(\ell)}\right)}{2}P
}_{\mathrm F},
\qquad
\norm{T_\mu^{(\ell)}}_{\mathrm F}=1,
\label{numerical-KL-residual}
\end{equation}
where $\norm{A}_{\mathrm F}=\sqrt{\Tr(A^\dagger A)}$ is the Frobenius norm. Double-precision reconstructions of all the family members in Table~\ref{tab:constraint-balanced-family-codes} satisfy $\epsilon_{\mathrm{KL}}(P;d)<10^{-15}$. For each code, the maximum residual at the first uncorrected tensor rank $\ell=d$ lies between $7.3\times10^{-2}$ and $6.1\times10^{-1}$. For $d=2$ and $d=3$, the KL conditions vanish exactly by the representation-theoretic construction, so their residuals provide a useful double-precision baseline. The comparable residuals for the higher-distance members are consistent with algebraic vanishing obscured by floating-point roundoff, but do not by themselves constitute rigorous certificates.

Here and in the Letter, ``exact'' and ``constructed'' refer to codes obtained through the finite-group isotypic construction, whose projector form and Knill--Laflamme reduction are exact. Unless a separate algebraic verification is stated, a particular code's claimed loss distance is checked numerically using the expression  Eq.~\eqref{numerical-KL-residual}. Most constructions have normalized KL residuals at the level of double-precision roundoff, and all constructions have residuals below $3.3\times10^{-13}$. The exact MacWilliams--Farkas bounds reported below have a different logical status: they are rigorous upper bounds on the attainable distance (we obtain exact rational certificates), whereas numerical KL satisfaction supplies numerical evidence for the corresponding lower bound.

\subsection{Constructed family members through distance ten}

Table~\ref{tab:constraint-balanced-family-codes} collects the members of Conjecture~\ref{binary-polyhedral-family-conjecture} that we have constructed through distance $10$. The Farkas bound $d_{\mathrm F}$ is the largest distance permitted by the exact MacWilliams LP at the displayed $n_d$. Thus, when $d=d_{\mathrm F}$, the numerical construction attains the exact fixed-$n_d$ upper bound up to the KL residual reported in the table. An exact algebraic verification of the KL conditions is additionally required to turn this numerical saturation into a rigorous fixed-$n_d$ optimality statement.

\begin{table}[tb]
    \centering
    \small
    \renewcommand{\arraystretch}{1.12}
    \setlength{\tabcolsep}{4pt}
    \begin{tabular}{c c l l c c}
        \hline
        $d$ & $n_d$ & \shortstack{Family\\construction} & \shortstack{Additional\\realization} & $\epsilon_{\mathrm{KL}}$ & $d_{\mathrm F}$ \\
        \hline
        $2$  & $4$  & $2O:\rho_3$ & --- & $2.886\times10^{-16}$ & $2$ \\
        $3$  & $7$  & $2T:\rho_4$ & Same projector: $2I:\rho_3$ & $9.734\times10^{-16}$ & $3$ \\
        $4$  & $13$ & $2O:\rho_5$ & --- & $2.958\times10^{-16}$ & $4$ \\
        $5$  & $19$ & $2T:\rho_4$ & --- & $8.156\times10^{-16}$ & $5$ \\
        $6$  & $28$ & $2O:\rho_3$ & --- & $3.249\times10^{-16}$ & $6$ \\
        $7$  & $37$ & $2T:\rho_4$ & \shortstack[l]{Same projector: $2I:\rho_3$;\\distinct solution: $2O:\rho_5$} & $4.722\times10^{-16}$ & $7$ \\
        $8$  & $49$ & $2O:\rho_4$ & --- & $2.265\times10^{-16}$ & $8$ \\
        $9$  & $61$ & $2T:\rho_4$ & --- & $2.285\times10^{-16}$ & $9$ \\
        $10$ & $76$ & $2O:\rho_3$ & --- & $5.585\times10^{-16}$ & $11$ \\
        \hline
    \end{tabular}
    \caption{Numerically constructed members of the constraint-balanced binary-polyhedral families. The residual $\epsilon_{\mathrm{KL}}$ is defined in Eq.~\eqref{numerical-KL-residual}. For the odd-family members at $n=7$ and $n=37$, the quoted residuals were evaluated using the saved $2I:\rho_3$ realizations. The exact LP upper bound is attained numerically through $d=9$; the distance-$10$ construction at $n=76$ lies one below its LP bound.}
    \label{tab:constraint-balanced-family-codes}
\end{table}
\FloatBarrier

The $2T$ family can occasionally acquire a larger symmetry. We can also construct the $n=7$ and $n=37$ projectors from the irrep $\rho_3$ of $2I$. The separately constructed $2O:\rho_5$ code at $n=37$ is a distinct distance-$7$ solution. We nevertheless formulate the uniform odd-distance family using $2T:\rho_4$, since this is the symmetry and irrep for which the constraint count of Proposition~\ref{odd-distance-2T-family} applies at every $r$.

The equalities $d=d_{\mathrm F}$ in Table~\ref{tab:constraint-balanced-family-codes} concern optimality only at the displayed fixed $n_d$; they do not prove that $n_d$ is the smallest photon number attaining distance $d$. Indeed, the MacWilliams relaxation first permits distance $6$ at $n=26$ and distance $8$ at $n=43$, below $n_6=28$ and $n_8=49$. At $n=76$, our binary-polyhedral construction has $d=10$, while the MacWilliams LP remains feasible at $d=11$ and is exactly infeasible at $d=12$, giving $d_{\mathrm F}=11$.

\subsection{Exact algebraic examples at photon numbers 13 and 28}

\begin{itemize}
    \item $n=13$, $2O$, $d=4$ construction. 
    
    We first show how Gross's binary-octahedral (2O) $n = 13$ code can be made $d=4$ with an appropriate choice of the free phase $\phi$. Indeed, Gross defines the 2O generators as \cite{grossDesigningCodesInteractions2021}
\begin{equation}
    \begin{aligned}
        S &= \frac{1}{\sqrt{2}}(I_2 - i \sigma_z) = \begin{pmatrix}
        e^{-i \pi/4} & 0 \\ 
        0& e^{i \pi/4}
    \end{pmatrix} \\ 
    H &= \frac{1}{\sqrt{2}}(-i \sigma_x - i \sigma_z) = -\frac{i}{\sqrt{2}} \begin{pmatrix}
        1 & 1 \\
        1 & -1
    \end{pmatrix}
    \end{aligned} \label{2O-generators}
\end{equation}

The qubit code is defined by the codeword (in the two-mode Fock basis)
\begin{equation}
    \begin{aligned}
        \ket{\overline{0}_{5/2}} &= \frac{\sqrt{910}}{56} \ket{13,0} - \frac{3 \sqrt{154}}{56} \ket{9,4} - \frac{\sqrt{770}}{56} \ket{5,8} + \frac{\sqrt{70}}{56} \ket{1,12} \\ 
        \ket{\overline{0}_{-13/6}} &= \frac{\sqrt{231}}{84} \ket{13,0} + \frac{\sqrt{1365}}{84} \ket{9, 4} - \frac{\sqrt{273}}{28} \ket{5,8} - \frac{\sqrt{3003}}{84} \ket{1,12} \\ 
        \ket{\overline{0}_{\phi}} &= \frac{\sqrt{91}}{14} \ket{\overline{0}_{5/2}} + e^{i\phi} \frac{\sqrt{105}}{14} \ket{\overline{0}_{-13/6}}, 
    \end{aligned}
\end{equation}
where we've corrected an interchange of the subscripts and corresponding prefactors in Ref.~\cite{grossDesigningCodesInteractions2021}, as follows directly from the $J_z$ expectation values. With this correction, the code satisfies the error-detection conditions up to rank $2$. Note that the logical $1$ state here is obtained by simply exchanging the Fock number occupations in the two modes. 
We will first show that enforcing the error-detection condition on $J_+^3$ suffices to impose the full rank-$3$ condition. Let $\widetilde G=2O/\{I,-I\}\cong O$. Under the identification in Theorem~\ref{minimum-seeds-su2}, $J_+^3$ is represented by $x^6$, and hence by $[1:0]\in\mathbb{CP}^1$. One can check that the projective orbit of $[1:0]$ has $6$ distinct elements, namely 
\begin{equation}
   \widetilde G \cdot [1:0] = \left\{ [0:1],\, [1:-i],\, [1:1],\, [1:-1],\, [1:i],\, [1:0] \right\}.
\end{equation}
So, $[1:0]$ clearly has a nontrivial $\widetilde G$ stabilizer. Nevertheless, it suffices to impose the KL condition. Indeed, the six corresponding coherent powers are linearly independent by the Vandermonde argument in the proof of Theorem~\ref{minimum-seeds-su2}, so the orbit span has dimension $6$. The rank-$3$ space satisfies $\mathcal{T}_{13}^3 \cong \rho_2 \oplus \rho_6 \oplus \rho_7$ in Gross's notation, where $\dim \rho_2 = 1$ and $\dim \rho_6 = \dim \rho_7 = 3$. Therefore, the orbit span is isomorphic to $\rho_6 \oplus \rho_7$. Since the code carries $\rho_5$, its operator space satisfies $\operatorname{End}(\rho_5) \cong \rho_1 \oplus \rho_6$. So, the only irrep shared by $\mathcal T_{13}^3$ and $\operatorname{End}(\rho_5)$ is $\rho_6$, which is contained in the orbit span. Lemma~\ref{orbit-KL-extensions} and the selection rule therefore show that $J_+^3$ is a sufficient seed for the full rank-$3$ KL conditions. 

Now, recall that $J_+ = \hat{a}_1^\dagger \otimes \hat{a}_2$ in the Schwinger model. Thus,
\begin{equation}
    J_+^3 \ket{m ,13-m} = \sqrt{\frac{(m+3)!(13-m)!}{m!(10-m)!}} \ket{m+3, 10-m}, \quad 0 \leq m \leq 10
\end{equation}
and $0$ otherwise. Therefore, it's clear that the $\braket{\overline{0}_\phi|J_+^3|\overline{0}_\phi} = \braket{\overline{1}_\phi|J_+^3|\overline{1}_\phi} = \braket{\overline{0}_\phi|J_+^3|\overline{1}_\phi} = 0$ from the support of the codewords. Thus, it suffices to satisfy $\braket{\overline{1}_\phi|J_+^3|\overline{0}_\phi} = 0$. First, we calculate 
\begin{equation}
    \begin{aligned}
        J_+^3 \ket{9,4} &= 24\sqrt{55} \ket{12, 1} \\
        J_+^3 \ket{5,8} &= 336\ket{8, 5} \\
        J_+^3 \ket{1, 12} &= 24 \sqrt{55}  \ket{4,9}
    \end{aligned}
\end{equation}
which are the only terms that contribute. Then, we find the matrix elements 
\begin{equation}
    \begin{aligned}
        \braket{\overline{1}_{5/2}|J_+^3|\overline{0}_{5/2}} &=- \frac{2 \cdot 3\cdot 24}{56^2}\sqrt{154 \cdot 55 \cdot 70}  + \frac{770 \cdot 336}{56^2}  = \frac{330}{7} \\
        \braket{\overline{1}_{-13/6}|J_+^3|\overline{0}_{5/2}} &= \frac{99}{7}\sqrt{70} \cdot \frac{\sqrt{3003}}{84} + \frac{336\sqrt{770}}{56} \cdot \frac{\sqrt{273}}{28} + \frac{15\sqrt{154}}{7} \cdot \frac{\sqrt{1365}}{84} = \frac{20 \sqrt{4290}}{7} \\ 
        \braket{\overline{1}_{5/2}|J_+^3|\overline{0}_{-13/6}} &= \frac{20 \sqrt{4290}}{7} \\
        \braket{\overline{1}_{-13/6}|J_+^3|\overline{0}_{-13/6}} &= -2\frac{24\sqrt{1365 \cdot 55 \cdot 3003}}{84^2} + 336 \cdot \frac{273}{28^2} = \frac{104}{7}
\end{aligned}
\end{equation}
Combining these, we see that imposing the KL condition gives

\begin{equation}
    \begin{aligned}
        \braket{\overline{1}_\phi|J_+^3|\overline{0}_\phi} &= 196^{-1} \left( \frac{91 \cdot 330}{7} + \frac{105 \cdot 104}{7} + \frac{20 \sqrt{4290}}{7} \sqrt{91 \cdot 105} \left(e^{i\phi} + e^{-i\phi} \right) \right) \\ 
        &= \frac{975}{98}\left(3 + 4\sqrt{22} \cos \phi \right) \implies \phi = \arccos\left(-\frac{3}{4\sqrt{22}} \right) \pmod {2 \pi} 
    \end{aligned}
\end{equation}

which was to be found. This concludes our proof for the construction of the $(n,K,d) = (13,2,4)$ code.

\item $n=28,\, 2O,\, d=6$ construction: 

We further show how to analytically construct our $n = 28$ $2O$ code with distance $6$. The full construction is implemented symbolically in Python, but we outline the construction here. 

The code uses the irrep $\rho_3$ of $2O$ in the notation of Ref.~\cite{grossDesigningCodesInteractions2021}. This is not a faithful representation of $2O$; its kernel is the quaternion group $Q_8 := 2D_2 = \{\pm I,\, \pm i \sigma_x,\, \pm i \sigma_y,\, \pm i \sigma_z \}$, and we have $2O\slash Q_8 \cong S_3$, where $S_3$ is the symmetric group on $3$ elements \cite{NotesCoxeterTransformations2008}. We first note how the generators $S,H$ of equation \eqref{2O-generators} act by conjugation on the unoriented Pauli axes, i.e., on $\{\mathbb{C} \sigma_x,\, \mathbb{C} \sigma_y,\, \mathbb{C} \sigma_z \}$. Note that each of these axes is invariant under conjugation by $Q_8$.  We see that 
\begin{equation}
    \begin{aligned}
        S \sigma_z S^\dagger &= \sigma_z,\, S \sigma_xS^\dagger = \sigma_y,\, S \sigma_y S^\dagger = -\sigma_x \\
        H \sigma_z H^\dagger &= \sigma_x,\, H\sigma_x H^\dagger = \sigma_z,\, H \sigma_y H^\dagger = - \sigma_y 
    \end{aligned}
\end{equation}
So, we see that $S$ permutes the $x,y$ axes, whereas $H$ permutes the $x,z$ axes (the minus signs are irrelevant). Thus, using the standard two-dimensional  representation of $S_3$, which we'll denote as $\pi_2: S_3 \to GL(V)$ on the plane $V = \{(x,y,z) \in \mathbb{C}^3:  x+ y + z = 0 \}$ \cite{fultonRepresentationTheory2004}, and the homomorphism $h: 2O \to S_3 \cong 2O/Q_8$,  we can represent $\pi_2 \circ h (S) = (12)$, and $\pi_2 \circ h(H) = (13)$. Here $(12)$ represents the permutation $(x,y,z) \to (y,x,z)$ acting on $V$ and similarly for $(13)$. The representation $\pi_2 \circ h$ corresponds to our representation $\rho_3$. 

Now, let's define a basis for $V$ given by 
\begin{equation}
        b_0 = \frac{(1,1,-2)}{\sqrt{6}}, \quad b_1 = \frac{(1, -1, 0)}{\sqrt{2}}
\end{equation}
We see that, applying $H,S$ gives
\begin{equation}
    \begin{aligned}
        Hb_0  &= -\frac{1}{2}b_0 - \frac{\sqrt{3}}{2}b_1, \quad Hb_1  = - \frac{\sqrt{3}}{2}b_0 + \frac{1}{2}b_1 \\ 
        Sb_0 &= b_0, \quad Sb_1 = - b_1
    \end{aligned}
\end{equation}
Thus, if we map $b_0 \to \ket{0}, \, b_1 \to \ket{1}$, we can write 
\begin{equation}
    \rho_3(S) = \begin{pmatrix}
        1 &0 \\
        0 & -1
    \end{pmatrix} = \sigma_z,\, \rho_3(H) = \begin{pmatrix}
        -\frac{1}{2} & - \frac{\sqrt{3}}{2} \\
        -\frac{\sqrt{3}}{2} & \frac{1}{2}
    \end{pmatrix} = -\frac{1}{2}\sigma_z - \frac{\sqrt{3}}{2}\sigma_x \label{2O-SH}
\end{equation}
for the desired representation. Now, we wish to find the relevant subspace in the physical representation. Firstly, $\rho_3$ factors through the quotient by $Q_8$, which is generated by $i\sigma_z,\, i\sigma_x$. So, we see that since we can write (in terms of the $\mathfrak{su}(2)$ generators $X_a = -i\sigma_a/2$)
\begin{equation}
    \begin{aligned}
        R_{28}(-i\sigma_z) &= R_{28}(e^{-\pi i \sigma_z/2}) = e^{\pi dR_{28}(X_z)} = e^{-i \pi J_z} \\ 
        R_{28}(-i \sigma_x) &= R_{28}(e^{-\pi i \sigma_x/2}) = e^{\pi dR_{28}(X_x)} = e^{-i\pi J_x},
    \end{aligned}
\end{equation}

and these acts on the Fock states as
\begin{equation}
    R_{28}(-i \sigma_z) \ket{m, 28-m} =  (-1)^m \ket{m, 28-m}, \quad R_{28}(-i \sigma_x) \ket{m, 28-m} = \ket{28-m, m},
\end{equation}

we can define a basis 
\begin{equation}
    f_m = \begin{cases}
        \frac{\ket{m, 28-m} + \ket{28-m, m}}{\sqrt{2}} &  m=0,2,4,\dots, 12 \\
        \ket{14,14}, & m = 14
    \end{cases}
\end{equation}
fixed by the quaternion representatives. The relevant $\rho_3$ subspace is therefore a subspace of $\operatorname{span} \{ f_m \}$. However, the $Q_8$-invariant subspace of $H_{28}$ can be decomposed as $\rho_1 \oplus \rho_2 \oplus \rho_3^{\oplus 3}$ in the notation of Ref.~\cite{grossDesigningCodesInteractions2021}. We therefore need to distinguish our desired $\rho_3^{\oplus 3}$ subspace from the trivial ($\rho_1$) and sign ($\rho_2$) representation vectors. We note that the $\rho_2$ representation corresponds to the sign of the permutation in the $\rho_3$ representation, giving us $\rho_2(S) = \rho_2(H) = -1$. We see
\begin{equation}
    R_{28}(S) \ket{m, 28-m} = (e^{-i\pi/4})^{m - 28 + m} \ket{m, 28-m} = -e^{-i m \pi/2} \ket{m, 28-m}. 
\end{equation}
But, we know in the $\rho_3$ representation that $\rho_3(S)$ has a $+1$ and $-1$ eigenvector. We will construct a basis of $+1$  $R_{28}(S)$ eigenvectors from $f_m$ and obtain the $-1$ eigenvectors from them, giving us a full basis. Indeed, we get $-e^{-im\pi/2} = 1$ for $m = 2,6,10,14$. So the subspace $\operatorname{span} \{f_2, f_6, f_{10}, f_{14}\}$ corresponds to the span of the $+1$ eigenvectors of $\rho_3$ and the trivial representation direction $\rho_1$. Note that we know this subspace doesn't include the $\rho_2$ representation subspace, since $S$ is an odd permutation. 

It's easy to see that, since the projector on the trivial representation (in the $\operatorname{span} \{f_2, f_6, f_{10}, f_{14} \}$ subspace) is given by $P_{\mathbf{1}} = \frac{1}{3}\left(I + R_{28}(H) + R_{28}(S) R_{28}(H)\right)$,
\begin{equation}
    t_0 = f_{14} + R_{28}(H) f_{14} +  R_{28}(S) R_{28}(H) f_{14}, \quad t = \frac{t_0}{\braket{t_0, f_{14}}}
\end{equation}
is the trivial representation vector. So, we can now take 
\begin{equation}
    c_m = -\braket{f_m, t}, \quad v_m = f_m + c_m f_{14}, \quad w_m = -\frac{v_m  + 2R_{28}(H)v_m}{\sqrt{3}}, \quad m = 2,\,6,\,10 \label{rho-3-basis-28}
\end{equation}
The $(v_m, w_m)$ provide a basis for the $\rho_3$ isotypic component. Indeed, we first check that all the $v_m$ satisfy $\braket{v_m, t} = \braket{f_m, t} - \braket{f_m, t} \braket{f_{14}, t} = 0 $, i.e., they are orthogonal to the trivial representation (the same follows for $w_m$ by unitarity of the representation). We can re-write this orthogonality condition as $P_{\mathbf{1}} v_m  = 0 \implies R_{28}(S) R_{28}(H) v_m = -v_m - R_{28}(H) v_m  $.  So, we easily see
\begin{equation}
    \begin{aligned}
        R_{28}(S) v_m &= v_m, \quad R_{28}(H) v_m =- \frac{\sqrt{3}}{2} w_m - \frac{1}{2}v_m  \\  
        R_{28}(S) w_m &=  - \frac{v_m - 2v_m - 2 R_{28}(H) v_m }{\sqrt{3}} = - w_m, \\ 
         R_{28}(H) w_m &= - \frac{R_{28}(H) v_m + 2 v_m }{\sqrt{3}} = \frac{1}{2}w_m - \frac{\sqrt{3}}{2}v_m 
    \end{aligned}
\end{equation}

Exactly matching our representation \eqref{2O-SH}. It's also easy to see that $-\braket{v_a,w_b} = \braket{v_a, -w_b} =  \braket{R_{28}(S)v_a, R_{28}(S) w_b} = \braket{v_a, w_b} \implies \braket{v_a, w_b}= 0 $.  This also gives us that $\braket{R_{28}(H) v_a, v_b} = - \frac{1}{2}\braket{v_a, v_b}$.

The pairs $(v_m,w_m)$ give an equivariant basis, but the $v_m$ are not mutually orthogonal. We use their Gram matrix to define an isometry $S_{\rho_3}$ in the convention of our general formalism. Let
\begin{equation}
    V = \begin{pmatrix}
        v_2 & v_{6} & v_{10} 
    \end{pmatrix} \quad W = \begin{pmatrix}
        w_2 & w_{6} & w_{10} 
    \end{pmatrix} \quad c = \begin{pmatrix}
        c_{2} \\ c_6 \\ c_{10}
    \end{pmatrix}
\end{equation}
Now, define maps $\tilde{E}_{m}: \mathbb{C}^2 \to H_{28}, \ \tilde{E}_m \ket{0} = v_m,\, \tilde{E}_m \ket{1} = w_m$, which is by definition $2O$-equivariant. We have that by Schur's lemma
\begin{equation}
    \tilde{E}_{m'}^\dagger \tilde{E}_m = \Gamma_{m'm} I_{2}
\end{equation}
We find that
\begin{equation}
    \Gamma_{m'm} = \braket{v_{m'}, v_m} = \delta_{m' m} + c_{m'} c_m  \implies \tilde{E}_{m'}^\dagger \tilde{E}_m = (I +  c c^T)_{m'm} I_2
\end{equation}
where we note that the $c$ are real since $R_{28}(H)$ is real and $R_{28}(S)$ restricted to $f_m$ is also real. Then the inverse square root is given by \cite{dieciContinuationSingularValue2005}
\begin{equation}
    \Gamma^{-1/2} = I - \beta cc^T, \quad \beta = \frac{\alpha }{1 + \alpha c^T c}, \quad \alpha = \frac{\sqrt{1 + c^T c}-1}{c^T c} 
\end{equation}
which we can use to make the bases orthonormal and define the desired $2O$-equivariant isometry as
\begin{equation}
    E_m = \sum_{m'} \tilde{E}_{m'} (\Gamma^{-1/2})_{m'm}, \quad S_{\rho_3}: \mathbb{C}^3 \otimes \rho_3 \to H_{28}, \, S_{\rho_3} (\ket{m} \otimes u) = E_m u
\end{equation}
where $\ket{m}$ labels the multiplicity space basis state.

For the symbolic Knill--Laflamme calculation, we keep the original paired basis $(v_m, w_m)$. Let $a=(a_2,a_6,a_{10})^T\in\mathbb C^3$ be nonzero; this defines the logical states as $\ket{0_L} \propto \sum_m a_m v_m, \,  \ket{1_L} \propto \sum_m a_m w_m $. Its corresponding unit multiplicity vector $\ket{s}$ in the orthonormal basis is
\begin{equation}
    s=\frac{\Gamma^{1/2}a}{\sqrt{a^\dagger\Gamma a}},\qquad
    S_{\rho_3}(\ket{s}\otimes u)
    =\frac{1}{\sqrt{a^\dagger\Gamma a}}\sum_{m\in\{2,6,10\}}a_m\tilde E_m u
    \quad (u\in\rho_3).
\end{equation}
Thus, for any operator $O$ on $H_{28}$, the compression to this copy of $\rho_3$ satisfies
\begin{equation}
    (\bra{s}\otimes I_{\rho_3})S_{\rho_3}^\dagger O S_{\rho_3}(\ket{s}\otimes I_{\rho_3})
    =\frac{1}{a^\dagger\Gamma a}
    \sum_{m,m'\in\{2,6,10\}}\overline{a_{m'}}a_m\tilde E_{m'}^\dagger O\tilde E_m.
\end{equation}
The denominator is positive and common to all representation-factor components. Hence their nontrivial Knill--Laflamme equations can be solved using the numerator, as in the symbolic script, with the scale fixed by $a_2=1$. The normalized codewords are then $Va/\sqrt{a^\dagger\Gamma a}$ and $Wa/\sqrt{a^\dagger\Gamma a}$. 

Finally, we need to algebraicallly find a basis of operators $F_{j,\mu}^{U; \rho_3, \rho_3}$. In this case, $\operatorname{End}(\rho_3) \cong \rho_1 \oplus \rho_2 \oplus \rho_3$, so we need to construct these for $U \in \{\rho_1, \rho_2, \rho_3\}$ (note that we have $j = 1$). For the trivial representation we simply have $F^{(\mathbf{1})} = I_2$. 

For the sign representation $\rho_2$, we choose $F^{(\mathbf{2})} = \frac{\sigma_y}{\sqrt{2}}$. It's easy to see that 

\begin{equation}
    \begin{aligned}
        \rho_3(S) F^{(\mathbf{2})} \rho_3(S)^\dagger &= \frac{\sigma_z \sigma_y \sigma_z}{\sqrt{2}} = -\frac{\sigma_y}{\sqrt{2}}  = - F^{\mathbf{2}} \\ 
        \rho_3(H) F^{(\mathbf{2})} \rho_3(H)^\dagger &= \frac{1}{\sqrt{2}}(\frac{1}{2}\sigma_z + \frac{\sqrt{3}}{2}\sigma_x) \sigma_y (\frac{1}{2}\sigma_z + \frac{\sqrt{3}}{2}\sigma_x) = - \frac{\sigma_y}{\sqrt{2}} = - F^{\mathbf{2}}
    \end{aligned}
\end{equation}
so this indeed transforms as the sign representation, since both generators correspond to odd representations. Finally, for $\rho_3$, we define the operators $F^{(\mathbf{3})}_1 = \frac{\sigma_z}{\sqrt{2}}, \, F^{(\mathbf{3})}_2 = - \frac{\sigma_x}{\sqrt{2}}$. We see that 
\begin{equation}
    \begin{aligned}
        \rho_3(S) F^{(\mathbf{3})}_1 \rho_3(S)^\dagger &= \frac{\sigma_z \sigma_z \sigma_z}{\sqrt{2}}  = \frac{\sigma_z}{\sqrt{2}} = F^{(\mathbf{3})}_1 \\ 
        \rho_3(S) F^{(\mathbf{3})}_2 \rho_3(S)^\dagger &= -\frac{\sigma_z \sigma_x \sigma_z}{\sqrt{2}}  = \frac{\sigma_x}{\sqrt{2}}  =-F^{(\mathbf{3})}_2
    \end{aligned}
\end{equation}
and likewise 
\begin{equation}
    \begin{aligned}
        \rho_3(H) F^{(\mathbf{3})}_1 \rho_3(H)^\dagger &= (\frac{1}{2}\sigma_z + \frac{\sqrt{3}}{2}\sigma_x) \frac{\sigma_z }{\sqrt{2}} (\frac{1}{2}\sigma_z + \frac{\sqrt{3}}{2}\sigma_x)  = -\frac{F_1^{(\mathbf{3})}}{2}  - \frac{\sqrt{3}}{2} F^{(\mathbf{3})}_2 \\
        \rho_3(H) F^{(\mathbf{3})}_2 \rho_3(H)^\dagger &= -(\frac{1}{2}\sigma_z + \frac{\sqrt{3}}{2}\sigma_x) \frac{\sigma_x }{\sqrt{2}} (\frac{1}{2}\sigma_z + \frac{\sqrt{3}}{2}\sigma_x)  = \frac{F_2^{(\mathbf{3})}}{2}  - \frac{\sqrt{3}}{2} F^{(\mathbf{3})}_1 \\
    \end{aligned}
\end{equation}
matching the $\rho_3$ representation we derived in equation \eqref{2O-SH}. Finally, the KL condition is trivial for ranks $0,1$, and for ranks $2-5$ we use the seed rotation 
\begin{equation}
    U = \frac{1}{\sqrt{5}} \begin{pmatrix}
        1 & -2 \\ 
        2 & 1
    \end{pmatrix}
\end{equation}
which we symbolically verify to be sufficient. We construct all these objects symbolically in python, and solve the relevant KL equations, providing the coefficients of the logical states in the $(v_m, w_m)$ basis. We verify symbolically that the resulting code detects all elementary tensors up to rank $5$ but fails at rank $6$, proving our $d = 6$ claim. The SymPy code used for the symbolic derivation and verification is available for reproducibility purposes. 
\end{itemize}

\subsection{Numerical symmetry-classification procedure}

We classify all optimized numerical codes in the range $n = 4-25$. Firstly, let's define the commutator loss of a projector $P$ with a group $G$ as

\begin{equation}
    \operatorname{Comm}_L(P, G) = \frac{1}{2d_L|G|} \sum_g \norm{[P,\pi(g)]}_{\mathrm F}^2 \label{commutator-loss}
\end{equation}

We start by proving the following lemma

\begin{lemma}
    Let $\pi:G \to U(V)$ a unitary representation of a finite group $G$ with representatives $U_g = \pi(g)$, and $P \in \operatorname{End}(V)$ a rank-$d_L$ orthogonal projector.  Then the commutator loss $\operatorname{Comm}_L(P,G)$ of equation \eqref{commutator-loss} satisfies 
    \begin{equation}
        0 \leq \operatorname{Comm}_L(P,G) \leq 1
    \end{equation}
    and $\operatorname{Comm}_L(P,G) = 0$ iff $P$ commutes with every $\pi(g)$. 
\end{lemma}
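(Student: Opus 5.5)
The plan is to expand each Frobenius norm termwise and reduce everything to a single trace overlap between $P$ and its conjugates. Write $U_g=\pi(g)$ and $P_g=U_gPU_g^\dagger$. Each $P_g$ is again a rank-$d_L$ orthogonal projector. Using $U_g^\dagger U_g=I$, cyclicity of the trace, and $P^2=P$, I would first establish the identity
\begin{equation}
\norm{[P,U_g]}_{\mathrm F}^2=\operatorname{Tr}\!\left([P,U_g]^\dagger[P,U_g]\right)=2\operatorname{Tr}(P)-2\operatorname{Tr}(PU_gPU_g^\dagger)=2d_L-2\operatorname{Tr}(PP_g).
\end{equation}
The key steps here are that the cross terms give $\operatorname{Tr}(PU_g^\dagger PU_g)=\operatorname{Tr}(PU_gPU_g^\dagger)$ after cycling, and that the diagonal terms each contribute $\operatorname{Tr}(P)=d_L$. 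Equivalently, the identity reads $\norm{[P,U_g]}_{\mathrm F}^2=\norm{PU_g-U_gP}_{\mathrm F}^2=\norm{P-P_g}_{\mathrm F}^2$, since right multiplication by the unitary $U_g^\dagger$ preserves the Frobenius norm.

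Summing over $g$ and inserting the normalization then gives
\begin{equation}
\operatorname{Comm}_L(P,G)=1-\frac{1}{d_L|G|}\sum_{g\in G}\operatorname{Tr}(PP_g).
\end{equation}
Both bounds follow from $0\leq\operatorname{Tr}(PP_g)\leq d_L$. The lower bound on the trace holds because $\operatorname{Tr}(PP_g)=\norm{PP_g}_{\mathrm F}^2\geq0$, using that both factors are projectors. The upper bound holds because $\operatorname{Tr}(PP_g)=\operatorname{Tr}(PP_gP)\leq\operatorname{Tr}(P)=d_L$, since $PP_gP\leq P$ in operator order. From this, $0\leq\operatorname{Comm}_L\leq1$. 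The lower bound $\operatorname{Comm}_L\geq0$ is also immediate directly, as a sum of squared norms.

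For the characterization of equality, I would use that $\operatorname{Comm}_L(P,G)=0$ exactly when each nonnegative term $\norm{[P,U_g]}_{\mathrm F}^2$ vanishes. By definiteness of the Frobenius norm, this happens iff $[P,U_g]=0$ for all $g$. Conversely, if $P$ commutes with every $U_g$, every term is zero.

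I expect no real obstacle. The only point needing care is the operator inequality $PP_gP\leq P$, which follows from $P_g\leq I$.
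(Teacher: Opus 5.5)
Your proposal is correct and follows essentially the same route as the paper: both expand $\norm{[P,U_g]}_{\mathrm F}^2$ to $2d_L-2\operatorname{Tr}(PU_gPU_g^\dagger)$ and bound this trace between $0$ and $d_L$ via $0\leq PP_gP\leq P$. Your version is slightly more explicit in two places, namely the lower bound via $\operatorname{Tr}(PP_g)=\norm{PP_g}_{\mathrm F}^2$ and the equality case via definiteness of the Frobenius norm, which the paper only asserts.
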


\begin{proof}
We note that 

\begin{equation}
    \begin{aligned}
        \operatorname{Comm}_L(P, G) &= \frac{1}{2d_L|G|} \sum_g \norm{[P,\pi(g)]}_{\mathrm F}^2 \\
        &= \frac{1}{2d_L|G|} \sum_g \norm{PU_g-U_gP}_{\mathrm F}^2 \\ 
        &= \frac{1}{2d_L|G|} \sum_g \Tr\!\left[(PU_g - U_gP)^\dagger (PU_g - U_g P)\right] \\ 
        &= \frac{1}{2d_L|G|} \sum_g \Tr\!\left[(U_g^\dagger P - P U_g^\dagger) (PU_g - U_g P) \right] \\ 
        &= \frac{1}{2d_L|G|} \sum_g \left( \Tr\!\left[U_g^\dagger P U_g\right] - \Tr\!\left[U_g^\dagger P U_g P\right] - \Tr\!\left[PU_g^\dagger P U_g\right] + \Tr\!\left[P U_g^\dagger U_g P\right]  \right) \\ 
        &= \frac{1}{d_L|G|} \sum_g \left(d_L -  \Tr\!\left[U_gPU_g^\dagger P  \right]    \right)
    \end{aligned}
\end{equation}

Now, $U_g$ is unitary, so we can see that $Q = U_gP U_g^\dagger$ is also a projector. Indeed,

\begin{equation}
    \begin{aligned}
        Q^\dagger &= (U_g P U_g^\dagger)^\dagger = U_g P U_g^\dagger = Q, \\
        \text{and}\qquad (Q)^2 &= (U_g P U_g^\dagger)(U_g P U_g^\dagger) = U_g P U_g^\dagger  = Q.
    \end{aligned}
\end{equation}

Furthermore, $P$ is PSD since we have

\begin{equation}
    \braket{x, Px} = \braket{x, P^2x} = \braket{Px, Px} = \norm{Px}^2 \geq 0,
\end{equation}

and likewise for $Q$. We also have that $PQP \geq 0$ since

\begin{equation}
    \braket{x, PQPx} = \braket{Px, QPx} = \braket{Px, Q^2Px} = \braket{QPx, QPx} = \norm{QPx}^2 \geq 0.
\end{equation}

We also show that $0 \leq PQP \leq P$. Indeed, $P - PQP  = P(I-Q)P$, and

\begin{equation}
    \begin{aligned}
        (I-Q)^2 &= (I-Q)(I-Q) = I - Q - Q + Q^2 = I-Q, \\
        \text{and}\qquad (I-Q)^\dagger &= I-Q,
    \end{aligned}
\end{equation}

i.e., $I-Q$ is also a projector. So, we see

\begin{equation}
    \braket{w, P(I-Q)Pw} = \braket{Pw, (I-Q)Pw}= \braket{(I-Q)Pw, (I-Q)Pw} = \norm{(I-Q)Pw}^2 \geq 0,
\end{equation}

so we get

\begin{equation}
    0 \leq PQP \leq P \implies  0 \leq \Tr\!\left[P U_g P U_g^\dagger P\right] \leq \Tr\!\left[P\right] \implies 0 \leq \Tr\!\left[U_g P U_g^\dagger P\right] \leq d_L.
\end{equation}

So, we see that the difference $0 \leq d_L - \Tr\!\left[U_g P U_g^\dagger P\right] \leq d_L$, and thus we conclude that $0 \leq \operatorname{Comm}_L(P, G) \leq 1$, as desired. So we see that this is indeed a normalized measure of non-invariance of the projector with the group, and it's equal to zero iff the projector commutes with the group. 
\end{proof}

Now, for a projector $P$ and each irrep $C$ of group $G$, define $X_C = S_C^\dagger P S_C$, and 

\begin{equation}
    A_C = \frac{1}{d_C} \Tr_C(X_C) = \frac{1}{d_C} \Tr_C \left(S_C^\dagger P S_C\right) \label{numerical-multiplicity-operator}
\end{equation}

where $S_C$ the usual embedding of the isotypic component of $C$ into $V$. Also, denote the eigenvalues of $A_C$ as $\lambda_{C,1} \geq \lambda_{C,2} \geq \dots \geq \lambda_{C, m_C}$. Further let $\mathcal{P}_{d_L}^G$ the set of rank-$d_L$ $G$-invariant projectors. We now prove the following. 

\begin{proposition}
    The maximum overlap of a rank-$d_L$ projector $P$ with a rank-$d_L$, $G$-invariant projector is 

    \begin{equation}
        \max_{Q \in \mathcal{P}_{d_L}^G} \Tr\!\left[PQ\right] = \max_{\{r_C \}} \left\{ \sum_{C} d_C \left( \sum_{j = 1}^{r_C} \lambda_{C,j} \right): \quad 0 \leq r_C \leq m_C,\, \sum_C r_C d_C = d_L \right\}
    \end{equation}
    where $m_C$ the multiplicity of $C$ in the decomposition of $V$. 
\end{proposition}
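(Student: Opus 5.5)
By Corollary~\ref{invariant-projector-form}, every $Q\in\mathcal P_{d_L}^G$ has the form $Q=\sum_C S_C(Q_C\otimes I_C)S_C^\dagger$, where each $Q_C$ is a projector on $M_C$ of some rank $r_C$. The constraint $\operatorname{rank}Q=d_L$ is exactly $\sum_C r_C d_C=d_L$, with $0\le r_C\le m_C$. The plan is to compute $\Tr[PQ]$ in this parametrization and then maximize sector by sector.

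First, expand the overlap using cyclicity of the trace:
\begin{equation}
\Tr[PQ]=\sum_C \Tr\!\left[S_C^\dagger P S_C\,(Q_C\otimes I_C)\right]=\sum_C \Tr_{M_C}\!\left[Q_C\,\Tr_C(X_C)\right]=\sum_C d_C\,\Tr[Q_C A_C],
\end{equation}
where the middle equality is the defining property of the partial trace over the $C$ factor. Here $A_C=\tfrac{1}{d_C}\Tr_C(S_C^\dagger P S_C)$ is Hermitian and positive semidefinite, because $S_C^\dagger P S_C\ge 0$ and partial traces preserve positivity. This reduces the problem to choosing the ranks $\{r_C\}$ and, for each fixed rank, the projector $Q_C$.

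For fixed $r_C$, Ky Fan's maximum principle \cite{fanTheoremWeylConcerning1949} gives
\begin{equation}
\max_{\operatorname{rank}Q_C=r_C}\Tr[Q_CA_C]=\sum_{j=1}^{r_C}\lambda_{C,j},
\end{equation}
and the maximum is attained by the projector onto the top $r_C$ eigenvectors of $A_C$. The sectors $C$ are independent once the ranks are fixed, and every admissible rank tuple is realized by some $Q$, namely the direct sum of these eigenprojectors. So we take the sector-wise maxima first and then maximize over $\{r_C\}$. This yields the claimed formula.

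The only point needing care is the converse direction: every tuple $\{r_C\}$ with $\sum_C r_C d_C=d_L$ and $0\le r_C\le m_C$ must give a genuine $G$-invariant projector of rank $d_L$. This holds because $S_C^\dagger S_D=\delta_{CD}I$, which makes $\sum_C S_C(Q_C\otimes I_C)S_C^\dagger$ a projector of rank $\sum_C r_C d_C$. Its $G$-invariance follows from the equivariance of the $S_C$. The maximum over this finite set of tuples is attained, and if no admissible tuple exists the set $\mathcal P_{d_L}^G$ is empty. I expect no step to be a real obstacle. The main thing to verify carefully is the partial-trace identity, in particular that the factor $d_C$ comes out correctly from the normalization in $A_C$.
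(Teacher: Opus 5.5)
Your proposal is correct and follows the same route as the paper: parametrize $Q$ through the isotypic form $\sum_C S_C(Q_C\otimes I_C)S_C^\dagger$, reduce the overlap via the partial trace to $\sum_C d_C\Tr[Q_C A_C]$, apply Ky Fan's principle at fixed ranks, and then maximize over admissible rank allocations. The paper writes $Q_C=V_CV_C^\dagger$ and applies Ky Fan to isometries, which is the same argument. You also check explicitly that every admissible rank tuple is realized by a genuine $G$-invariant projector, a point the paper leaves implicit.
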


\begin{proof}
    We know that any $Q \in \mathcal{P}_{d_L}^G$ has the form 

    \begin{equation}
        Q = \sum_C S_C (Q_C \otimes I_C) S_C^\dagger
    \end{equation}

    from Corollary~\ref{invariant-projector-form}, where $Q_C$ projectors. If $\operatorname{rank}Q = d_L$, and each $\operatorname{rank}Q_C = r_C$, then it's clear that we need $\sum_C r_C d_C = d_L$. And of course, we have the upper bound $r_C \leq m_C$. 

    Now we expand, taking $Q_C = V_CV_C^\dagger$ isometries $V_C: \mathbb{C}^{r_C} \to M_C$, 

    \begin{equation}
        \begin{aligned}
            \Tr\!\left[PQ\right] &= \sum_C \Tr \left[S_C(Q_C \otimes I_C) S_C^\dagger  P  \right] \\ 
            &= \sum_C \Tr\!\left[ (Q_C \otimes I_C) (S_C^\dagger P S_C)\right] \\ 
            &= \sum_C \Tr\!\left[ (Q_C \otimes I_C) X_C\right] \\ 
            &= \sum_C \left(\sum_{c,m} \bra{m} \otimes \bra{c} \left[ (Q_C \otimes I_C) X_C \right] \ket{m} \otimes \ket{c} \right) \\ 
            &= \sum_C \sum_m \bra{m} \Big( \sum_c I_{M_C} \otimes \bra{c} \left[ (Q_C \otimes I_C) X_C \right] I_{M_C} \otimes \ket{c} \Big)  \ket{m}  \\ 
            &= \sum_C \sum_m \bra{m}  (Q_C d_C A_C)  \ket{m}  \\ 
            &= \sum_C d_C \Tr\!\left[V_C V_C^\dagger A_C\right] = \sum_C d_C \Tr\!\left[V_C^\dagger A_C V_C\right]
         \end{aligned}
    \end{equation}

    Now, $V_C^\dagger V_C = I_{r_C}$ (since it's an isometry), so, from the Ky Fan maximum principle \cite{fanTheoremWeylConcerning1949}, we get that 

    \begin{equation}
        \max_{Q \in \mathcal{P}_{d_L}^G,\ \operatorname{rank}Q_C=r_C} \Tr\!\left[PQ\right] = \sum_C d_C (\sum_{j = 1}^{r_C} \lambda_{C,j})
    \end{equation}

    as desired. Finally, this gives us the maximum for a specific allocation of ranks $\{ r_C \}$; to find the maximum over all possible codes, we consider this quantity for every possible allocation $0 \leq r_C \leq m_C,\, \sum_C r_C d_C = d_L$.

\end{proof}

\begin{remark}
    Suppose that $Q_* \in \mathcal{P}_{d_L}^G$ attains this maximum. Then we have that 
    \begin{equation}
    \norm{P-Q_*}_{\mathrm F}^2 = \Tr\!\left[(P - Q_*)^\dagger (P-Q_*)\right] =  \Tr\!\left[P - PQ_* - Q_*P + Q_*\right] = 2(d_L - \Tr\!\left[PQ_*\right]),
    \end{equation}
    so we see that it minimizes the Frobenius distance from the physical projector. 
\end{remark}

In practice, we first minimize the commutator loss for each candidate group over $SU(2)$ rotations and complex conjugations of the projector in the number basis (which provably preserve performance), and choose the group with minimal loss; this allows us to identify the group most compatible with the physical code. We then use dynamic programming to find the optimal allocation $\{r_C\}$. We report the optimized commutator loss $\operatorname{Comm}_L(P,G)$, as well as the normalized overlap 

\begin{equation}
    \Omega_G(P) = \frac{1}{d_L}\Tr\!\left[P\widehat Q_G\right] \label{best-overlap-projector}
\end{equation}

with the nearest $G$-invariant projector $\widehat Q_G$. Note that $0 \leq \Omega_G(P)  \leq 1$, since $0 \leq P\widehat Q_GP \leq P \implies 0 \leq \Tr\!\left[P\widehat Q_GP\right] \leq d_L \implies 0 \leq \frac{\Tr\!\left[P\widehat Q_G\right]}{d_L} \leq 1 $. 

Finally, if a code is invariant under a group $G$, it's also invariant for all its subgroups. In general, if there is more than one group that aligns well ($>0.99$ overlap) with a numerical code, we choose to represent the code with the larger group, even if the overlap is technically a bit smaller. This is because the above metric is naturally more stringent for larger groups, while at the same time  larger groups tend to capture the code structure better.

\begingroup
\setlength{\intextsep}{6pt plus 1pt minus 1pt}
\setlength{\textfloatsep}{8pt plus 2pt minus 2pt}
\captionsetup[figure]{skip=4pt,belowskip=0pt}
\captionsetup[subfigure]{skip=2pt,belowskip=0pt}
\subsection{High-transmissivity classification for photon numbers 4 through 25}

We present the best saved-fidelity numerical code in the reproducibility sweep at each displayed parameter point for $4\leq n\leq25$. We use $\eta=0.90$ except at $n=7,15$, where we use $\eta=0.95$. All $22$ selected codes show finite-group structure: $11$ are associated with two-dimensional irreducible representations and $11$ with reducible representations. Among the latter, two have non-Abelian symmetry and the others use one-dimensional irreps of cyclic groups. Table~\ref{tab:numerical-code-symmetry-classification} reports the group commutator loss \eqref{commutator-loss} and nearest-invariant-projector overlap \eqref{best-overlap-projector} for these same numerical files. The bi-convex search does not certify global fidelity optimality, and differences comparable to decoder tolerances should not be interpreted as rigorous performance orderings.

\begin{table}[!t]
    \centering
    \small
    \renewcommand{\arraystretch}{1.08}
    \begin{tabular}{c c l c c}
        \hline
        $n$ & $\eta$ & Group and irrep content & $\operatorname{Comm}_L(P,G)$ & $\Omega_G(P)$ \\
        \hline
        $4$ & $0.90$ & $2O:\rho_{3}$ & $1.179501\times 10^{-2}$ & $0.9940$ \\
        $5$ & $0.90$ & $C_{6}:\rho_{2}\oplus\rho_{6}$ & $3.238326\times 10^{-12}$ & $0.999999999998$ \\
        $6$ & $0.90$ & $C_{6}:\rho_{1}\oplus\rho_{3}$ & $3.648599\times 10^{-8}$ & $0.99999998$ \\
        $7$ & $0.95$ & $2I:\rho_{3}$ & $4.789343\times 10^{-7}$ & $0.9999997$ \\
        $8$ & $0.90$ & $C_{6}:\rho_{1}\oplus\rho_{3}$ & $7.243801\times 10^{-10}$ & $0.9999999996$ \\
        $9$ & $0.90$ & $C_{6}:\rho_{2}\oplus\rho_{4}$ & $1.798730\times 10^{-9}$ & $0.9999999991$ \\
        $10$ & $0.90$ & $2D_{4}:\rho_{5}$ & $2.741807\times 10^{-4}$ & $0.9998$ \\
        $11$ & $0.90$ & $C_{4}:\rho_{3}\oplus\rho_{4}$ & $3.989942\times 10^{-9}$ & $0.999999998$ \\
        $12$ & $0.90$ & $2T:\rho_{1}\oplus\rho_{2}$ & $1.592746\times 10^{-2}$ & $0.9919$ \\
        $13$ & $0.90$ & $2O:\rho_{5}$ & $1.013267\times 10^{-6}$ & $0.9999994$ \\
        $14$ & $0.90$ & $C_{6}:\rho_{1}\oplus\rho_{5}$ & $2.902749\times 10^{-9}$ & $0.999999998$ \\
        $15$ & $0.95$ & $2T:\rho_{5}$ & $4.888450\times 10^{-2}$ & $0.9751$ \\
        $16$ & $0.90$ & $2D_{3}:\rho_{5}$ & $8.770528\times 10^{-9}$ & $0.999999995$ \\
        $17$ & $0.90$ & $2D_{3}:\rho_{6}$ & $3.459936\times 10^{-9}$ & $0.999999998$ \\
        $18$ & $0.90$ & $2D_{2}:\rho_{3}\oplus\rho_{4}$ & $1.472431\times 10^{-2}$ & $0.9925$ \\
        $19$ & $0.90$ & $2T:\rho_{4}$ & $9.657661\times 10^{-6}$ & $0.999995$ \\
        $20$ & $0.90$ & $C_{6}:\rho_{1}\oplus\rho_{5}$ & $1.086515\times 10^{-2}$ & $0.9945$ \\
        $21$ & $0.90$ & $2T:\rho_{6}$ & $1.145272\times 10^{-9}$ & $0.9999999994$ \\
        $22$ & $0.90$ & $2D_{5}:\rho_{7}$ & $3.583196\times 10^{-4}$ & $0.9998$ \\
        $23$ & $0.90$ & $C_{4}:\rho_{3}\oplus\rho_{4}$ & $2.625900\times 10^{-3}$ & $0.9986$ \\
        $24$ & $0.90$ & $C_{4}:\rho_{1}\oplus\rho_{2}$ & $9.012402\times 10^{-4}$ & $0.9995$ \\
        $25$ & $0.90$ & $2O:\rho_{4}$ & $1.109436\times 10^{-3}$ & $0.9994$ \\
        \hline
    \end{tabular}
    \caption{Symmetry classification of the selected high-transmissivity sweep winners. We use $\eta=0.90$ except at $n=7,15$, where $\eta=0.95$. When comparably accurate subgroup descriptions exist, we report the larger group. At $n=15$, the $2T$ description is approximate, with $\Omega_{2T}(P)=0.975127087$; the closer $C_6$ subgroup fit has overlap $0.997486117$. Overlap entries are truncated after the first decimal digit other than $9$, retaining at least four decimal places.}
    \label{tab:numerical-code-symmetry-classification}
\end{table}

\subsubsection{Numerical comparisons with pure-irrep constructions}

We compare pure-irrep constructions at photon numbers $n=4,7,10,13,15,16,17,19,21,22,25$ with the selected numerical codes. Each pair is aligned by an $SU(2)$ rotation and, when required, complex conjugation in the occupation-number basis. These transformations preserve optimized fidelity. The plots show the actual numerical and constructed projectors, including their residual structural differences, and the captions report their fidelities at the same noise strength. All caption fidelities are evaluated at the stated transmissivity using the saved encoder and recovery; when the saved result uses a different transmissivity, we reweight its normalized photon-number-sector fidelities by the corresponding loss probabilities.

The $n=4$ construction is the two-mode Chuang--Leung--Yamamoto (CLY) code~\cite{chuangBosonicQuantumCodes1997}. Its binary-octahedral $2O$ symmetry was also noted in a recent work on bosonic cyclic codes~\cite{wetherbeeBosonicCyclicCodes2026}.
\FloatBarrier
\begin{figure}[H]
    \centering
    \begin{subfigure}{0.48\textwidth}
        \centering
        \includegraphics[width=\linewidth]{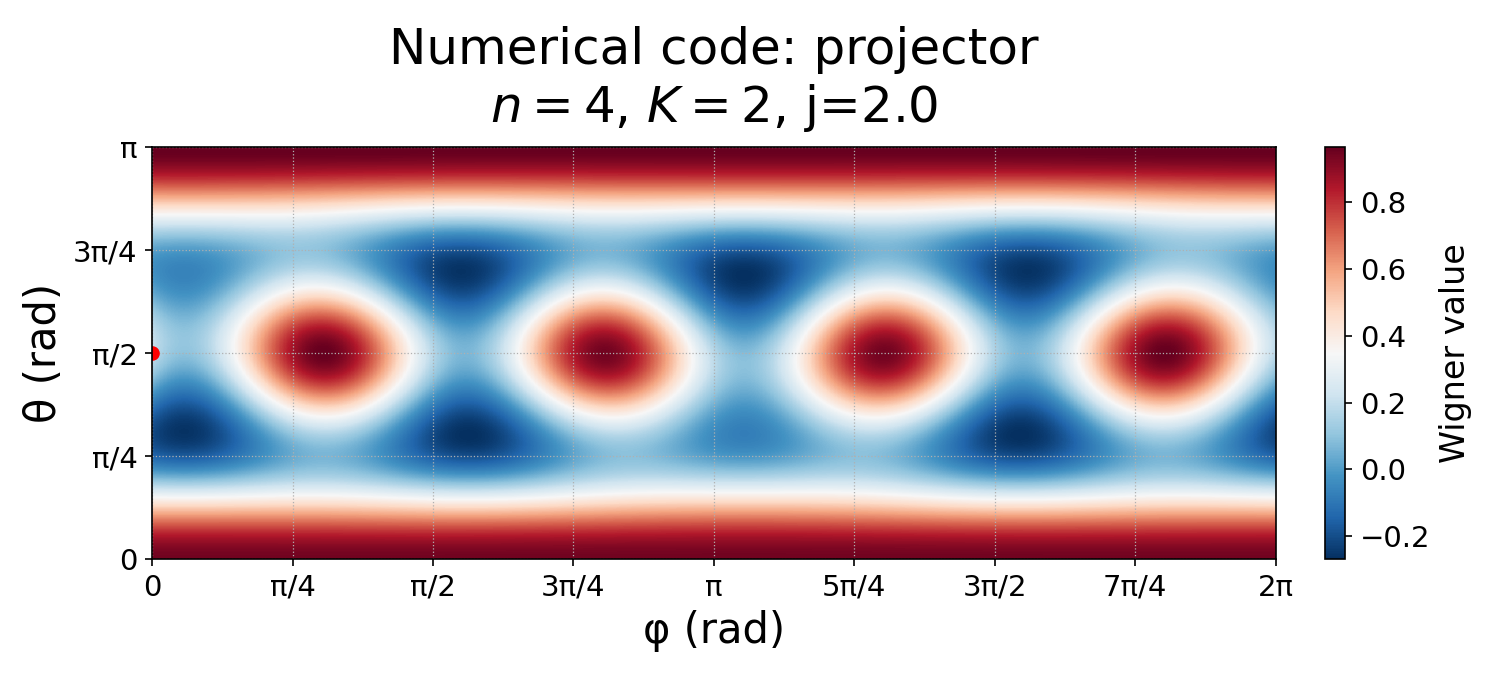}
        \caption{Numerically optimized code.}
    \end{subfigure}
    \hfill
    \begin{subfigure}{0.48\textwidth}
        \centering
        \includegraphics[width=\linewidth]{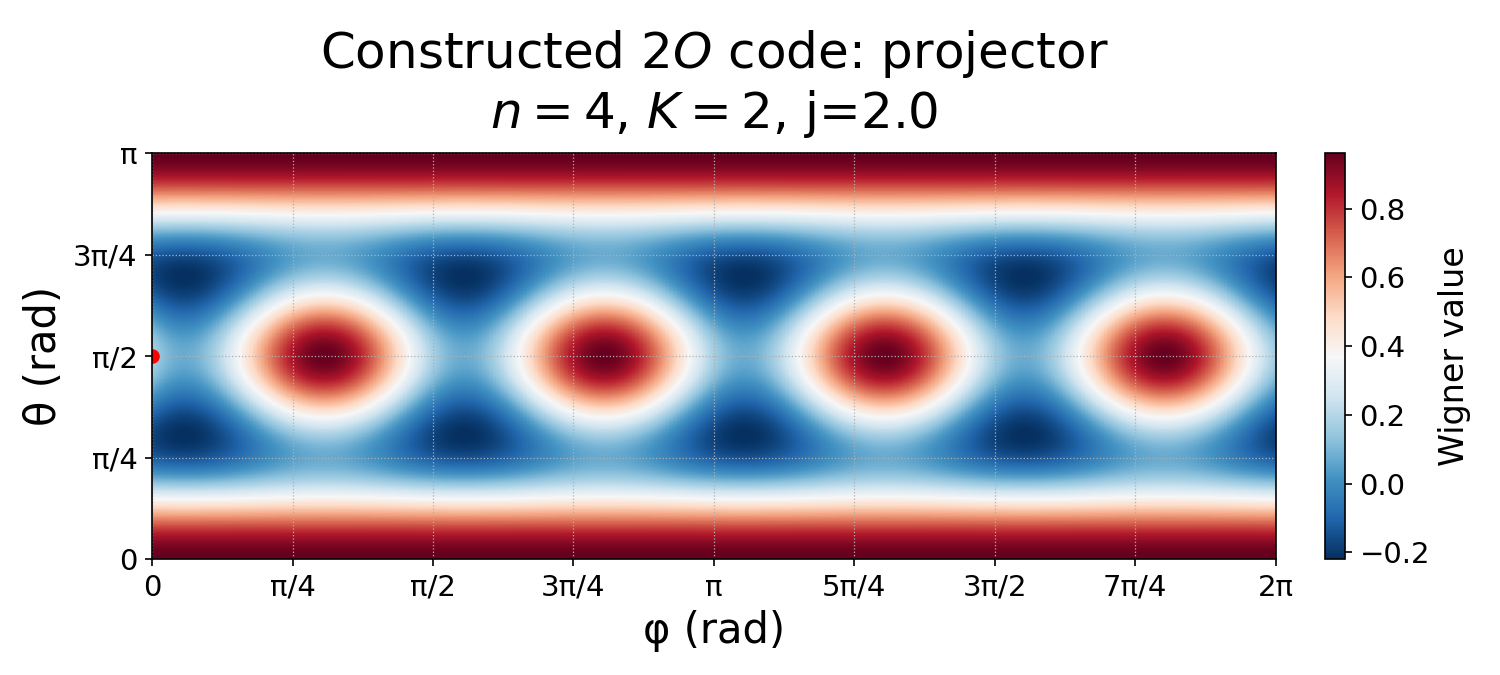}
        \caption{Constructed CLY ($2O$) code.}
    \end{subfigure}
    \caption{Best numerical code and constructed $2O$ reference at $n=4$. At $\eta=0.90$, their respective entanglement fidelities are $F_e^{\mathrm{num}}=0.973273$ and $F_e^{\mathrm{con}}=0.972925$.}
    \label{fig:pure-irrep-n4}
\end{figure}

\begin{figure}[H]
    \centering
    \begin{subfigure}{0.48\textwidth}
        \centering
        \includegraphics[width=\linewidth]{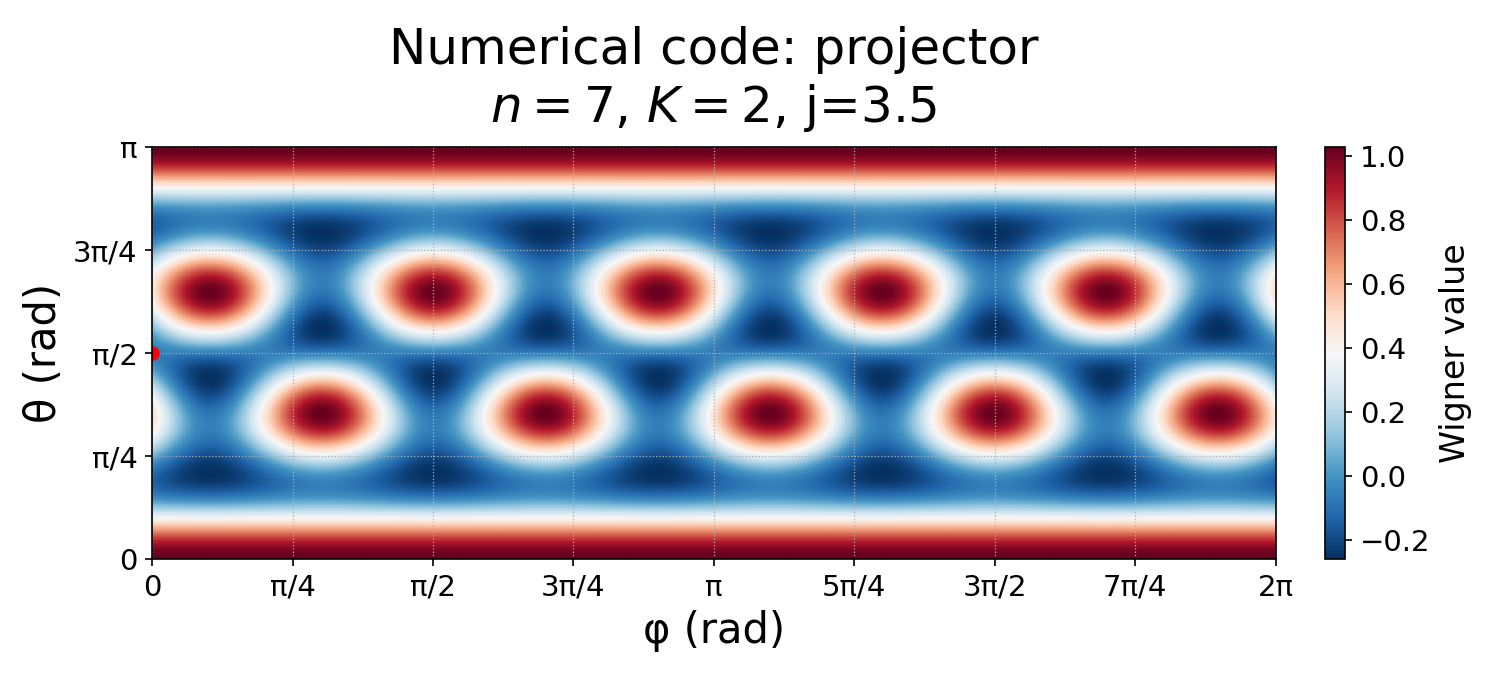}
        \caption{Numerically optimized code.}
    \end{subfigure}
    \hfill
    \begin{subfigure}{0.48\textwidth}
        \centering
        \includegraphics[width=\linewidth]{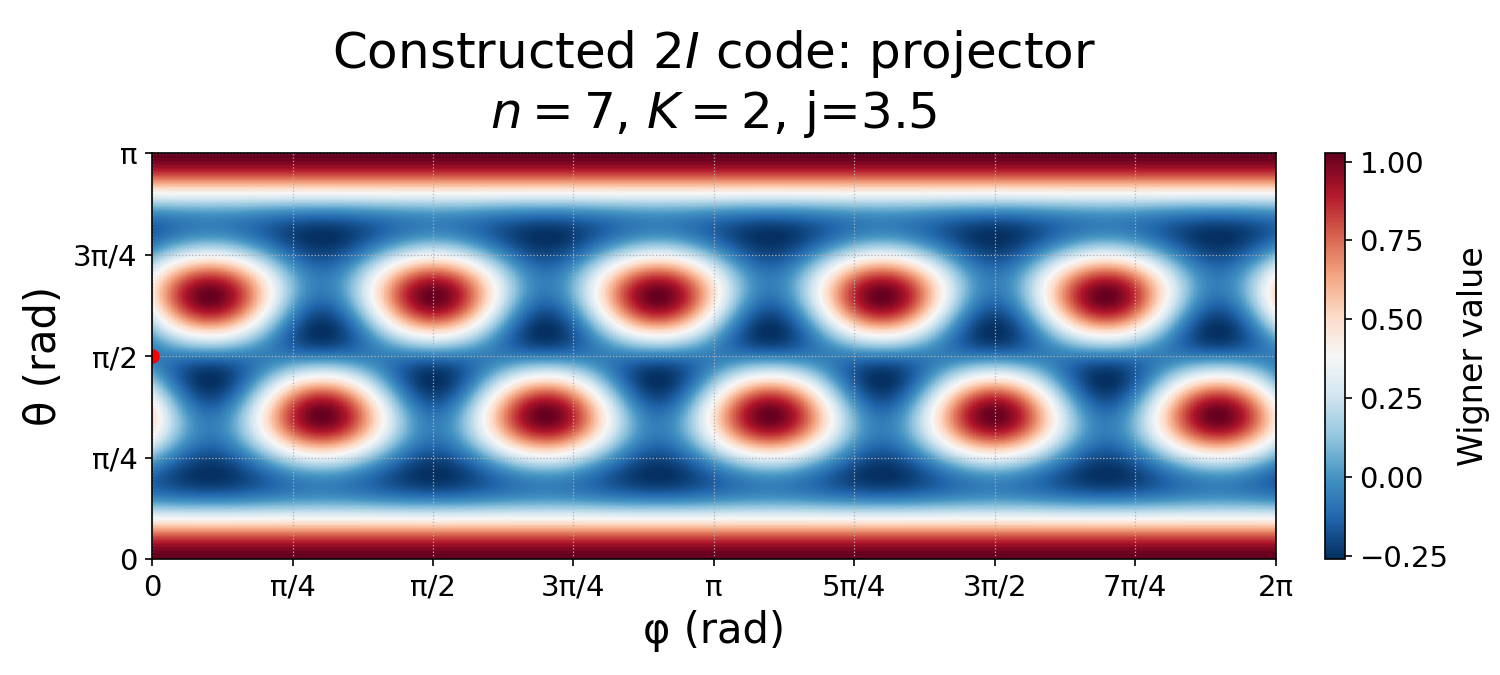}
        \caption{Constructed $2I$ code.}
    \end{subfigure}
    \caption{Best numerical code and constructed $2I$ reference at $n=7$. At $\eta=0.95$, their respective entanglement fidelities are $F_e^{\mathrm{num}}=0.998547$ and $F_e^{\mathrm{con}}=0.998547$.}
    \label{fig:pure-irrep-n7}
\end{figure}

\begin{figure}[H]
    \centering
    \begin{subfigure}{0.48\textwidth}
        \centering
        \includegraphics[width=\linewidth]{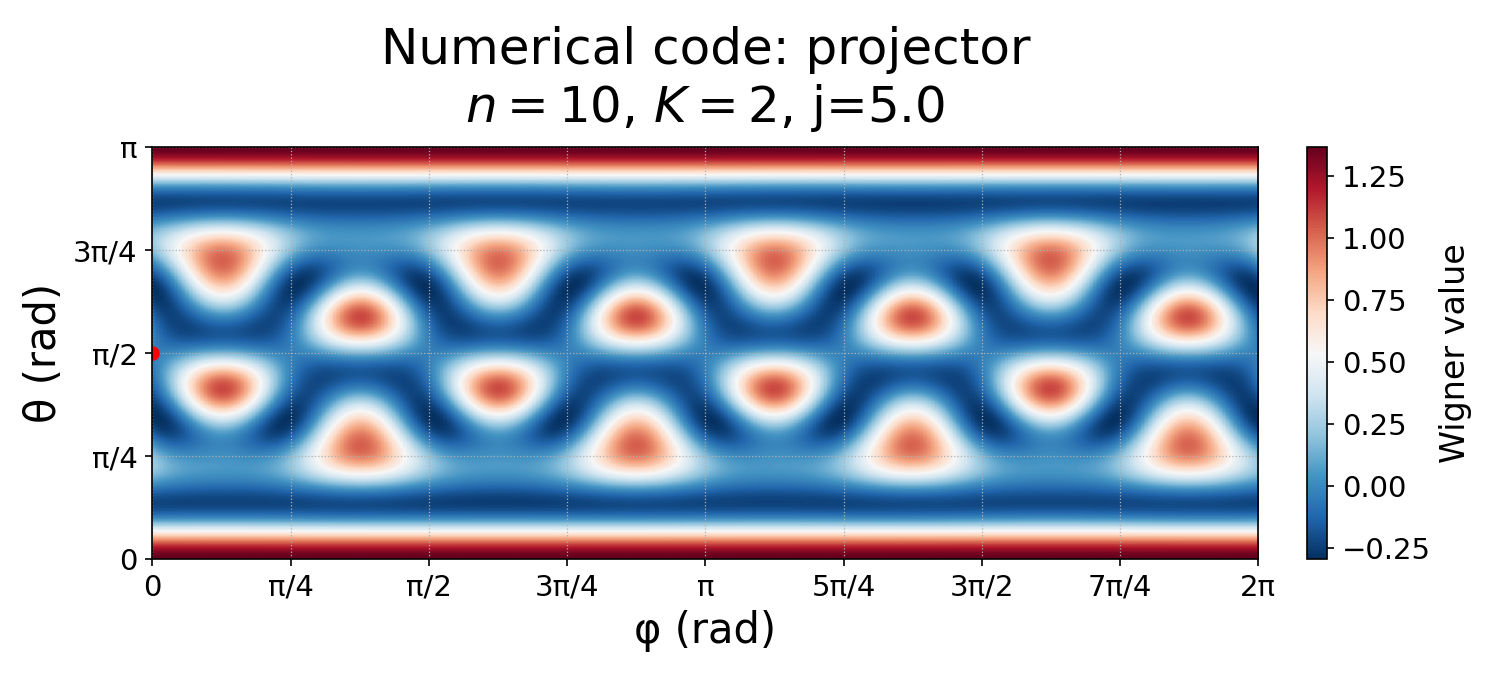}
        \caption{Numerically optimized code.}
    \end{subfigure}
    \hfill
    \begin{subfigure}{0.48\textwidth}
        \centering
        \includegraphics[width=\linewidth]{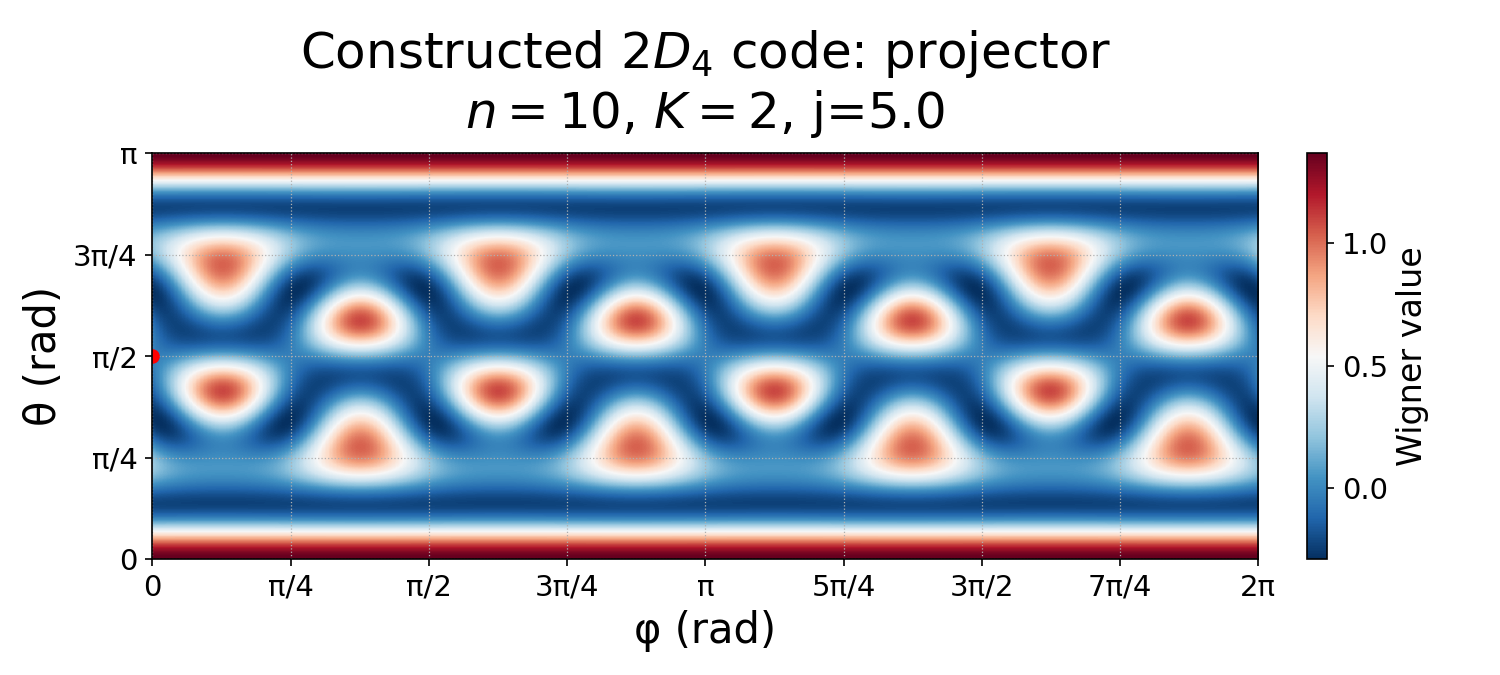}
        \caption{Constructed $2D_4$ code.}
    \end{subfigure}
    \caption{Best numerical code and constructed $2D_{4}$ reference at $n=10$. At $\eta=0.90$, their respective entanglement fidelities are $F_e^{\mathrm{num}}=0.994345$ and $F_e^{\mathrm{con}}=0.994343$.}
    \label{fig:pure-irrep-n10}
\end{figure}

\begin{figure}[H]
    \centering
    \begin{subfigure}{0.48\textwidth}
        \centering
        \includegraphics[width=\linewidth]{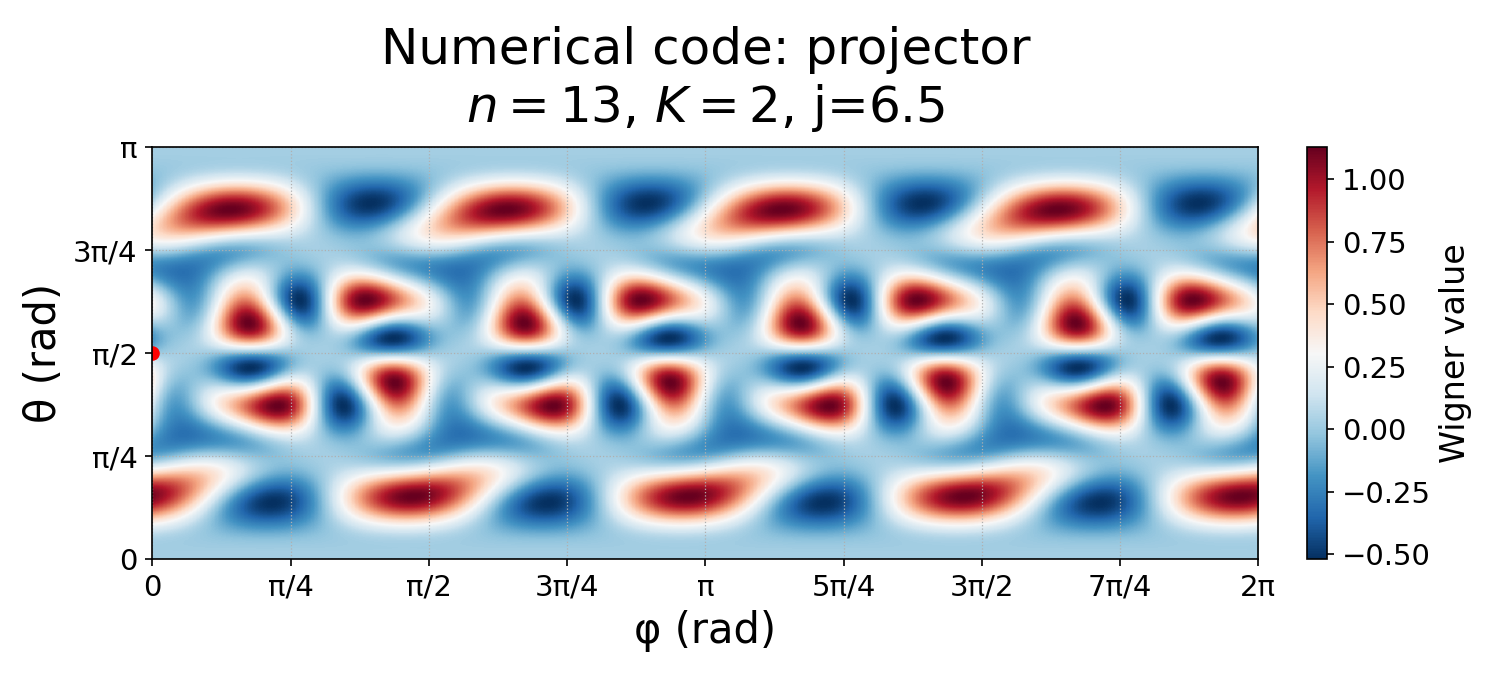}
        \caption{Numerically optimized code.}
    \end{subfigure}
    \hfill
    \begin{subfigure}{0.48\textwidth}
        \centering
        \includegraphics[width=\linewidth]{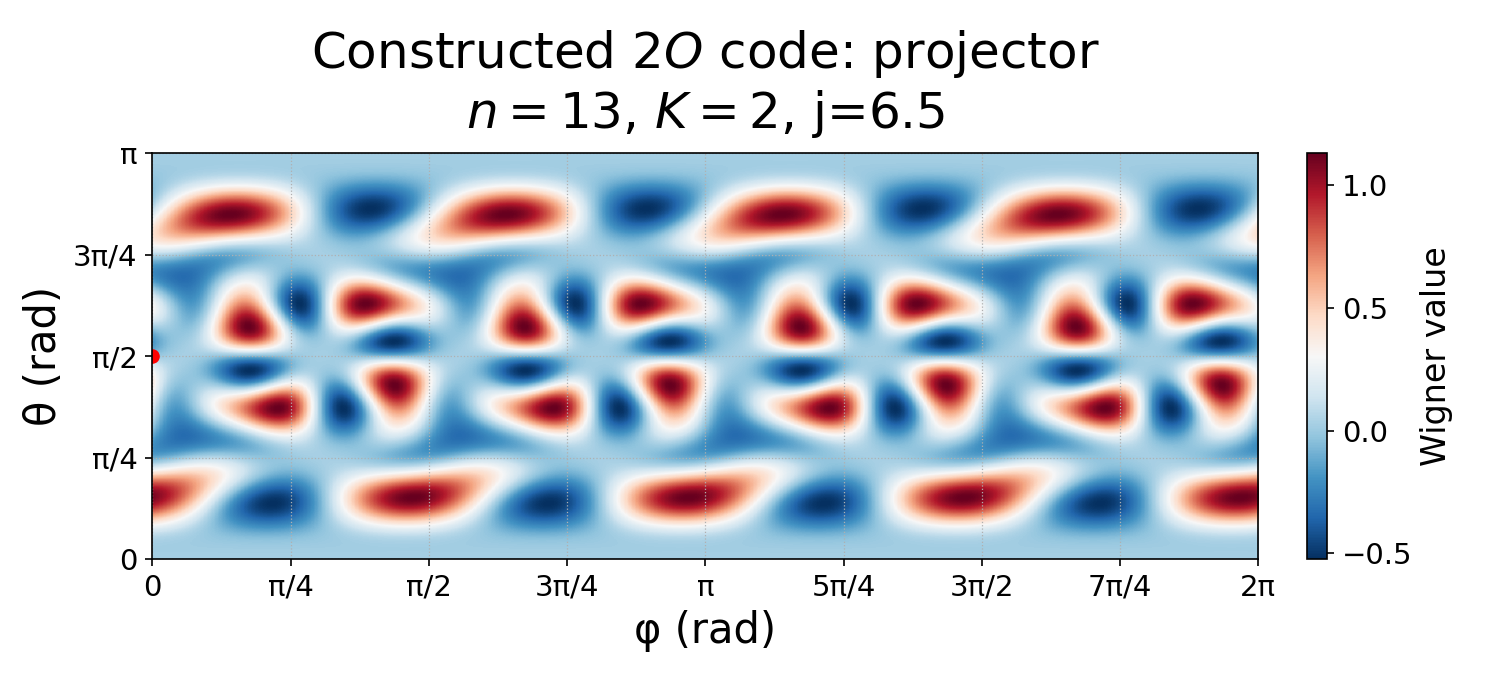}
        \caption{Constructed $2O$ code.}
    \end{subfigure}
    \caption{Best numerical code and constructed $2O$ reference at $n=13$. At $\eta=0.90$, their respective entanglement fidelities are $F_e^{\mathrm{num}}=0.997469$ and $F_e^{\mathrm{con}}=0.997466$.}
    \label{fig:pure-irrep-n13}
\end{figure}

\begin{figure}[H]
    \centering
    \begin{subfigure}{0.48\textwidth}
        \centering
        \includegraphics[width=\linewidth]{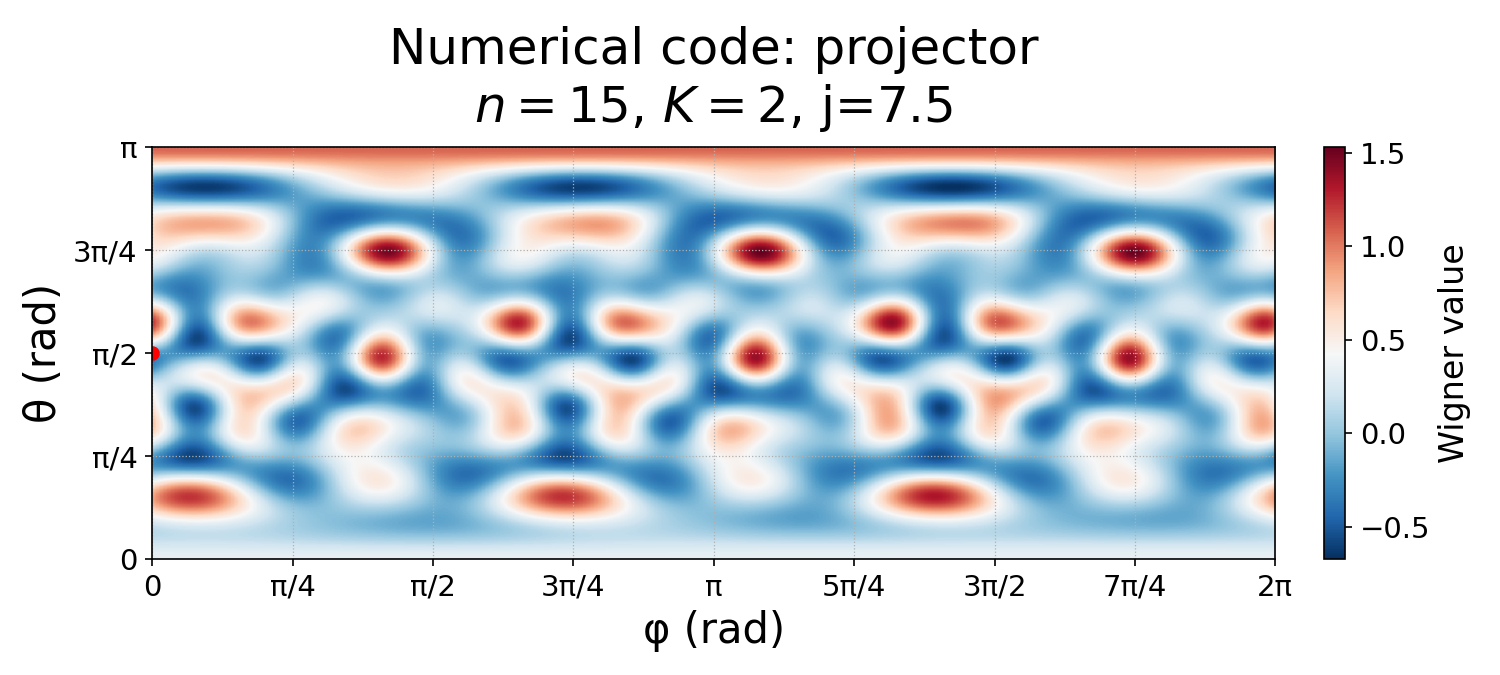}
        \caption{Best random-start numerical code.}
    \end{subfigure}
    \hfill
    \begin{subfigure}{0.48\textwidth}
        \centering
        \includegraphics[width=\linewidth]{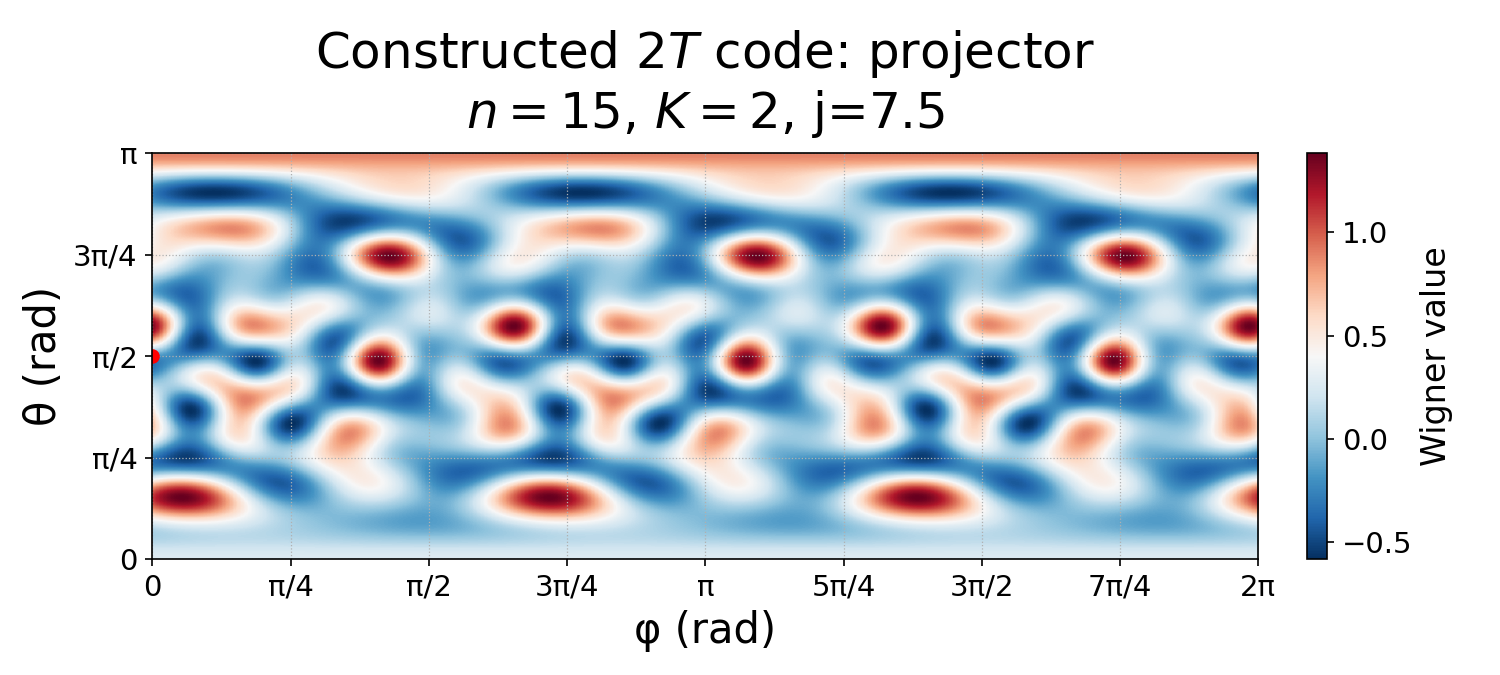}
        \caption{Constructed $2T$ code.}
    \end{subfigure}
    \caption{Best numerical code and constructed $2T$ reference at $n=15$. At $\eta=0.95$, their respective entanglement fidelities are $F_e^{\mathrm{num}}=0.999869004$ and $F_e^{\mathrm{con}}=0.999869001$. The numerical projector differs from the exact construction; the fidelity difference is within numerical decoder tolerances.}
    \label{fig:pure-irrep-n15}
\end{figure}

\begin{figure}[H]
    \centering
    \begin{subfigure}{0.48\textwidth}
        \centering
        \includegraphics[width=\linewidth]{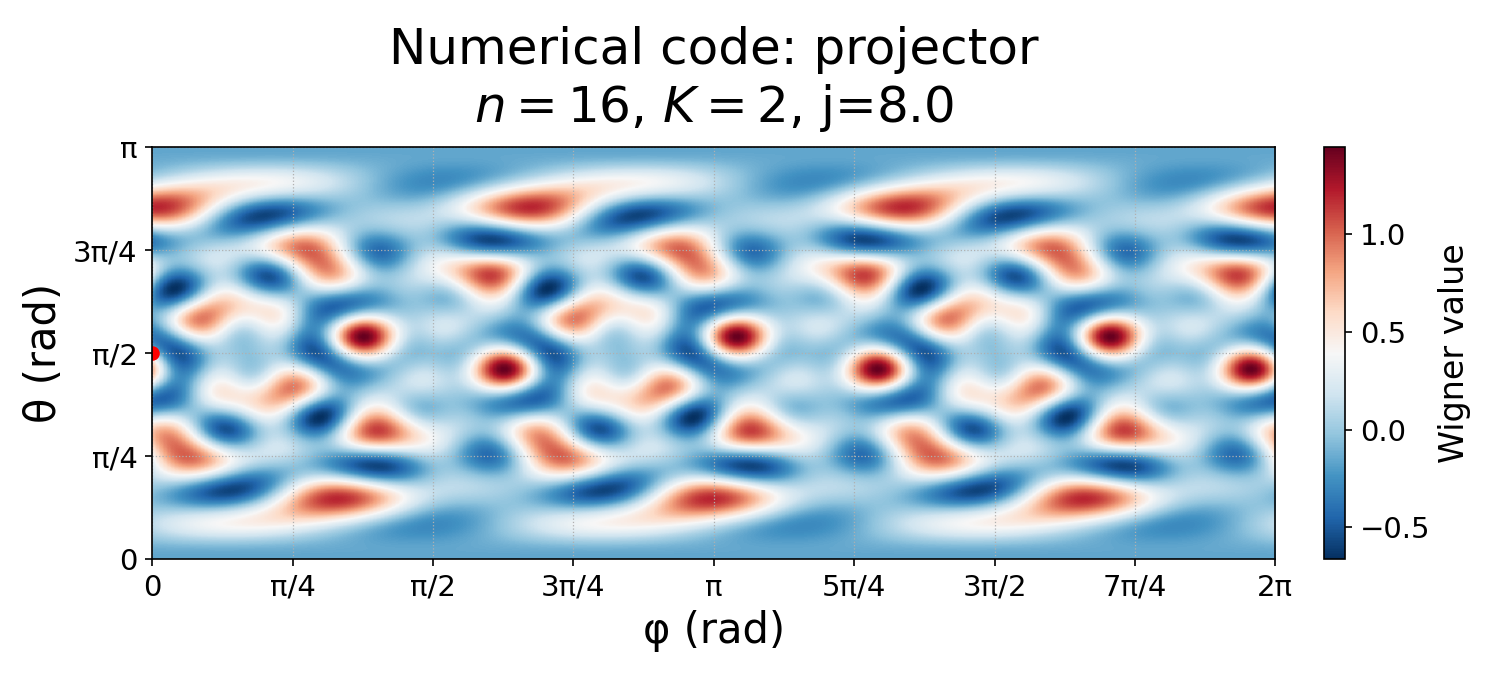}
        \caption{Numerically optimized code.}
    \end{subfigure}
    \hfill
    \begin{subfigure}{0.48\textwidth}
        \centering
        \includegraphics[width=\linewidth]{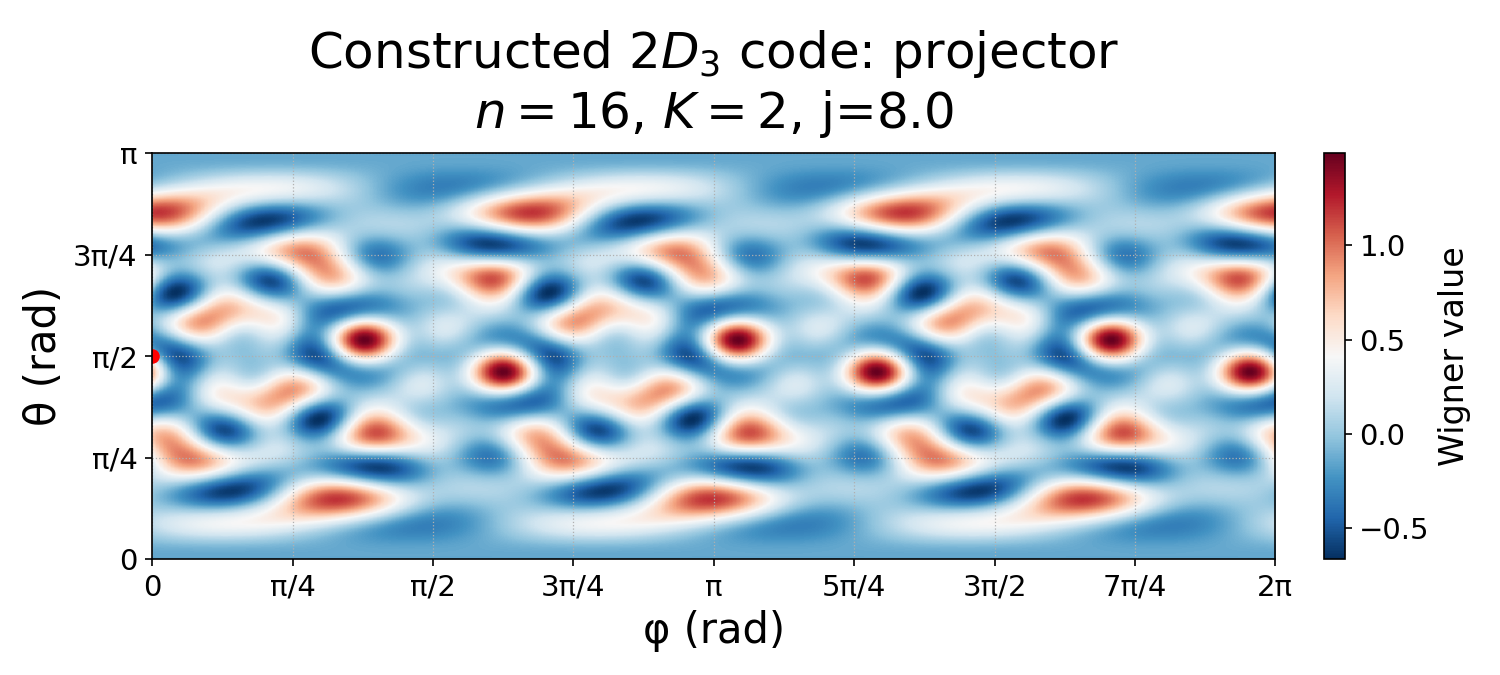}
        \caption{Constructed $2D_3$ code.}
    \end{subfigure}
    \caption{Best numerical code and constructed $2D_{3}$ reference at $n=16$. At $\eta=0.90$, their respective entanglement fidelities are $F_e^{\mathrm{num}}=0.998315$ and $F_e^{\mathrm{con}}=0.998257$.}
    \label{fig:pure-irrep-n16}
\end{figure}

\begin{figure}[H]
    \centering
    \begin{subfigure}{0.48\textwidth}
        \centering
        \includegraphics[width=\linewidth]{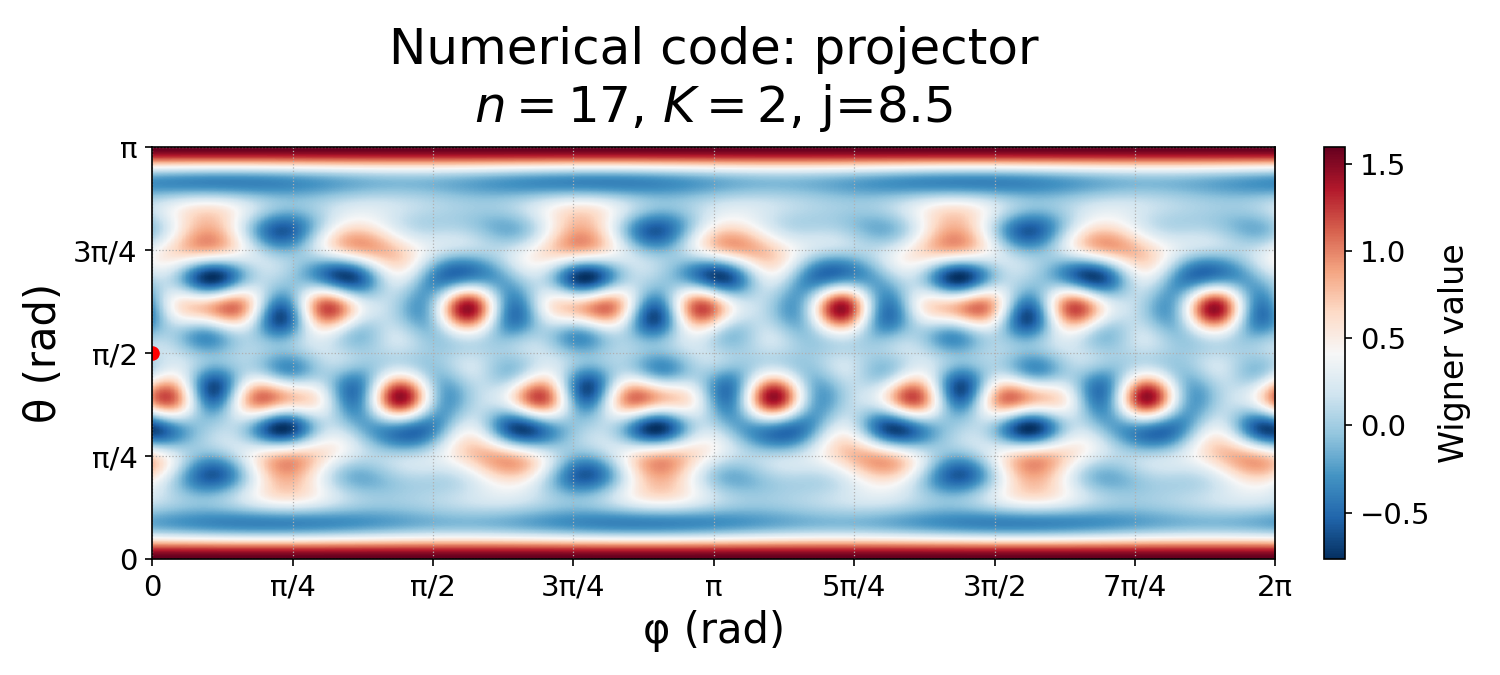}
        \caption{Numerically optimized code.}
    \end{subfigure}
    \hfill
    \begin{subfigure}{0.48\textwidth}
        \centering
        \includegraphics[width=\linewidth]{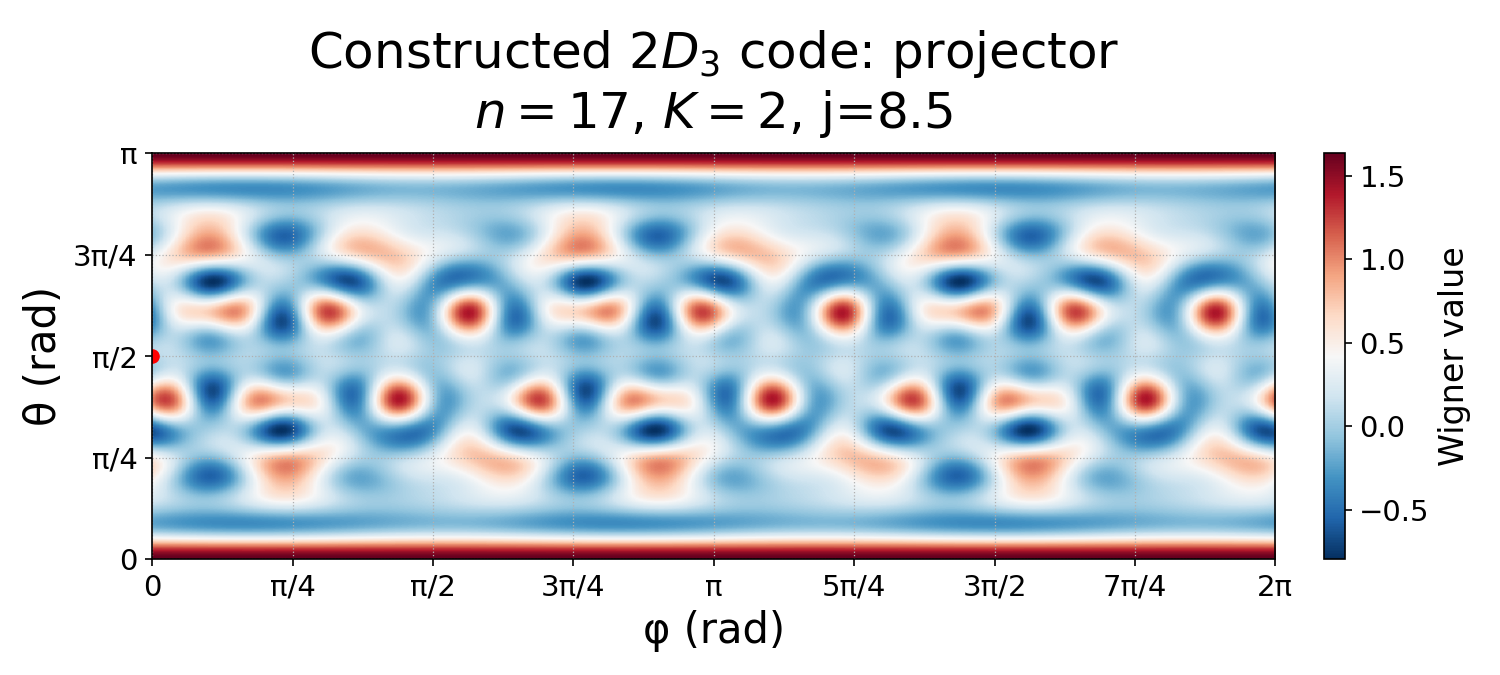}
        \caption{Constructed $2D_3$ code.}
    \end{subfigure}
    \caption{Best numerical code and constructed $2D_{3}$ reference at $n=17$. At $\eta=0.90$, their respective entanglement fidelities are $F_e^{\mathrm{num}}=0.998527$ and $F_e^{\mathrm{con}}=0.998488$.}
    \label{fig:pure-irrep-n17}
\end{figure}

\begin{figure}[H]
    \centering
    \begin{subfigure}{0.48\textwidth}
        \centering
        \includegraphics[width=\linewidth]{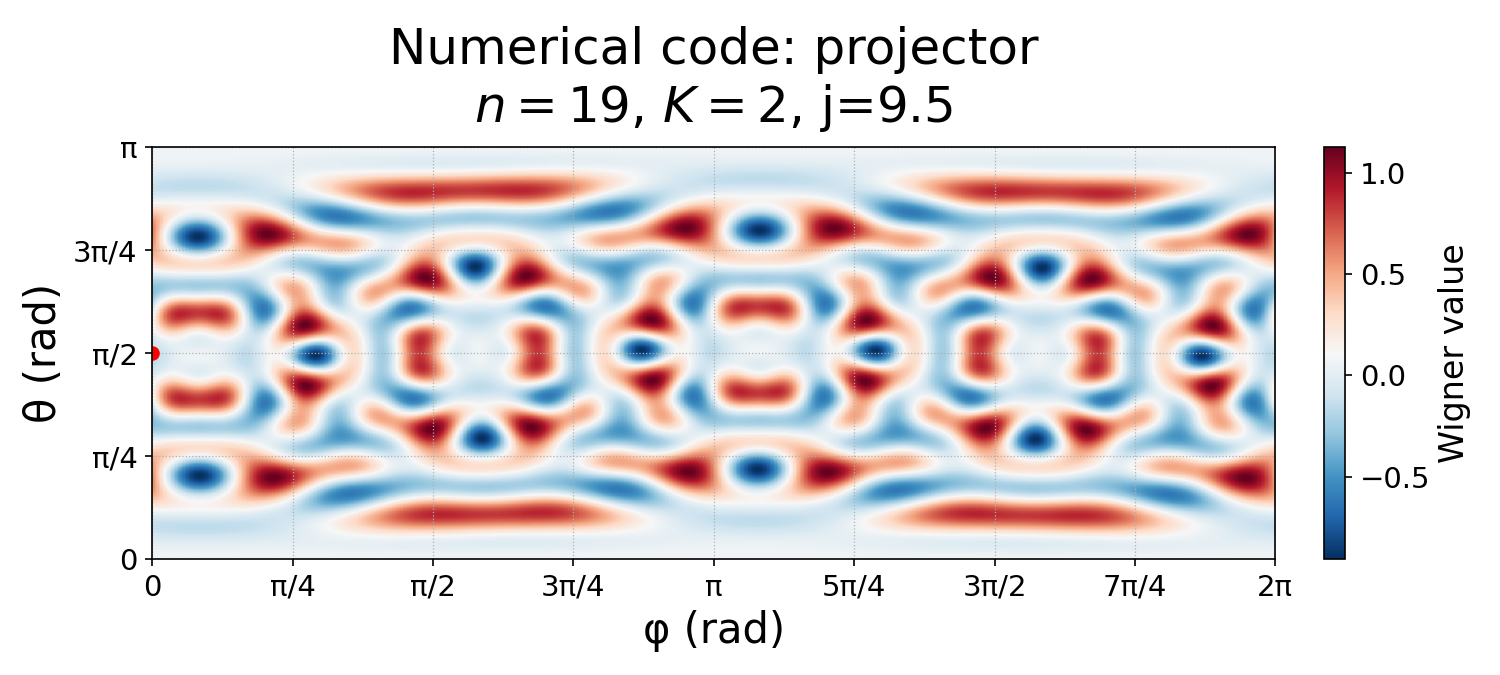}
        \caption{Numerically optimized code.}
    \end{subfigure}
    \hfill
    \begin{subfigure}{0.48\textwidth}
        \centering
        \includegraphics[width=\linewidth]{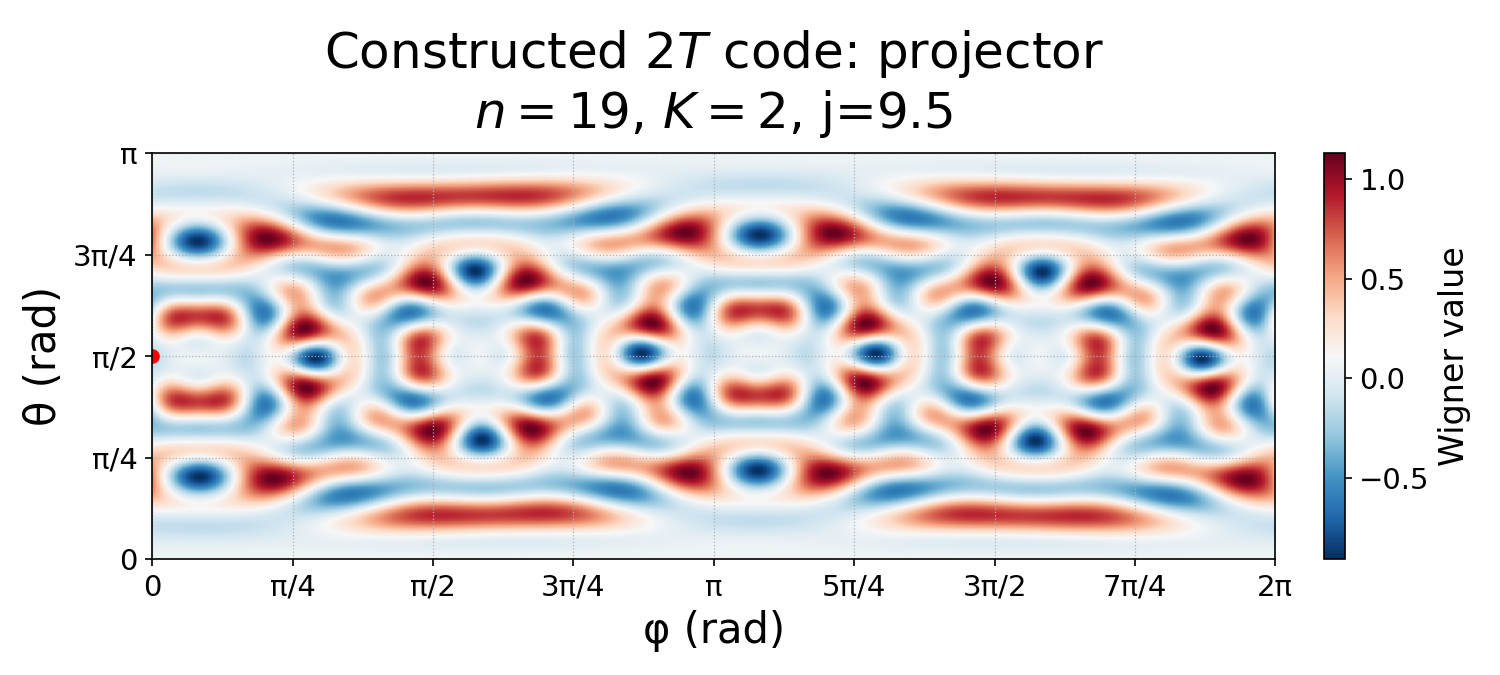}
        \caption{Constructed $2T$ code.}
    \end{subfigure}
    \caption{Best numerical code and constructed $2T$ reference at $n=19$. At $\eta=0.90$, their respective entanglement fidelities are $F_e^{\mathrm{num}}=0.999078$ and $F_e^{\mathrm{con}}=0.999077$.}
    \label{fig:pure-irrep-n19}
\end{figure}

\begin{figure}[H]
    \centering
    \begin{subfigure}{0.48\textwidth}
        \centering
        \includegraphics[width=\linewidth]{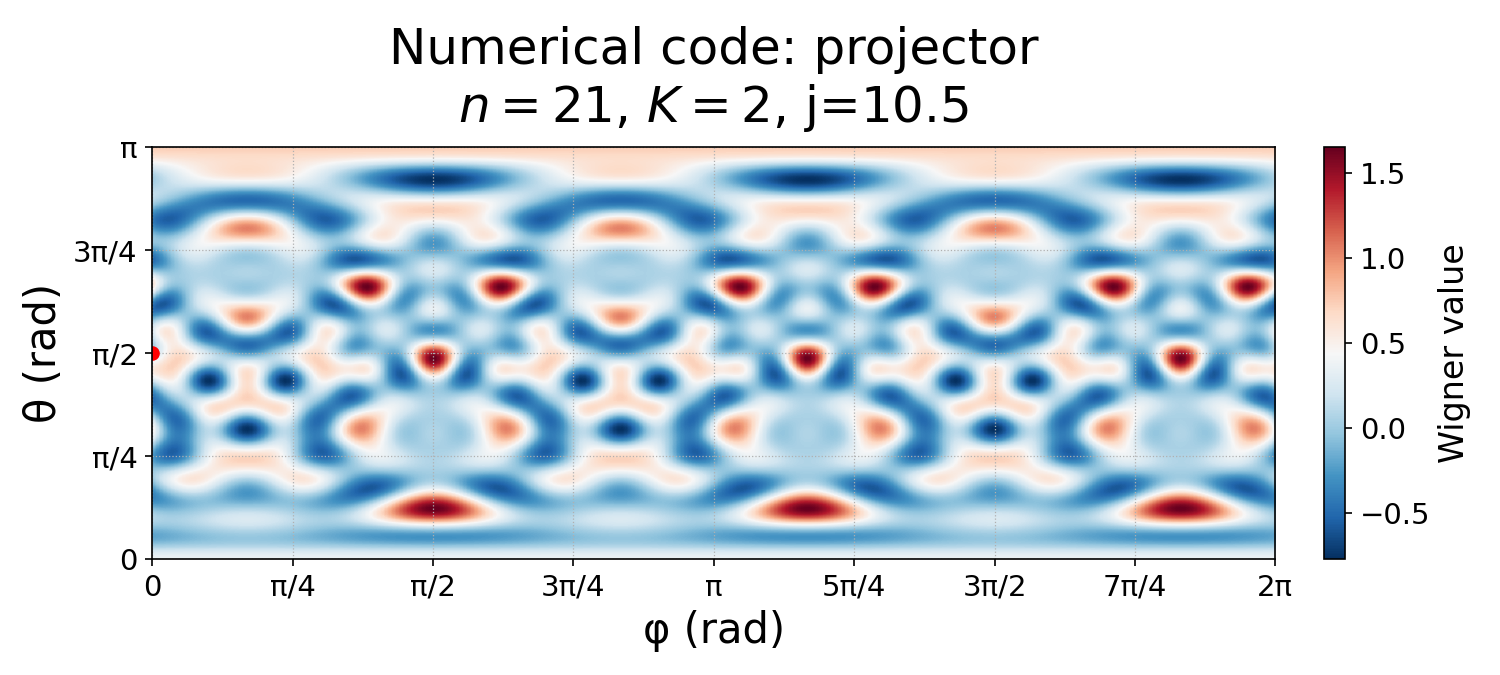}
        \caption{Numerically optimized code.}
    \end{subfigure}
    \hfill
    \begin{subfigure}{0.48\textwidth}
        \centering
        \includegraphics[width=\linewidth]{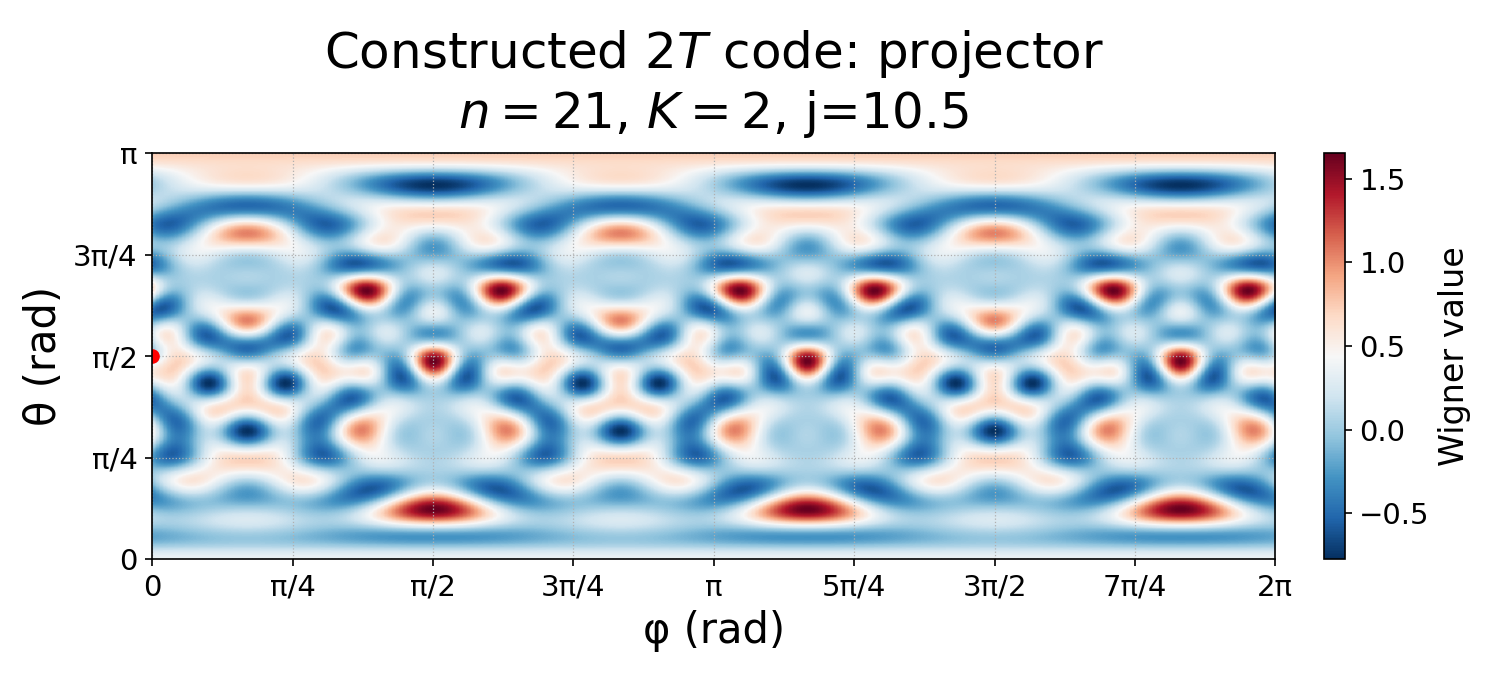}
        \caption{Constructed $2T$ code.}
    \end{subfigure}
    \caption{Best numerical code and constructed $2T$ reference at $n=21$. At $\eta=0.90$, their respective entanglement fidelities are $F_e^{\mathrm{num}}=0.999217$ and $F_e^{\mathrm{con}}=0.999217$.}
    \label{fig:pure-irrep-n21}
\end{figure}

\begin{figure}[H]
    \centering
    \begin{subfigure}{0.48\textwidth}
        \centering
        \includegraphics[width=0.85\linewidth]{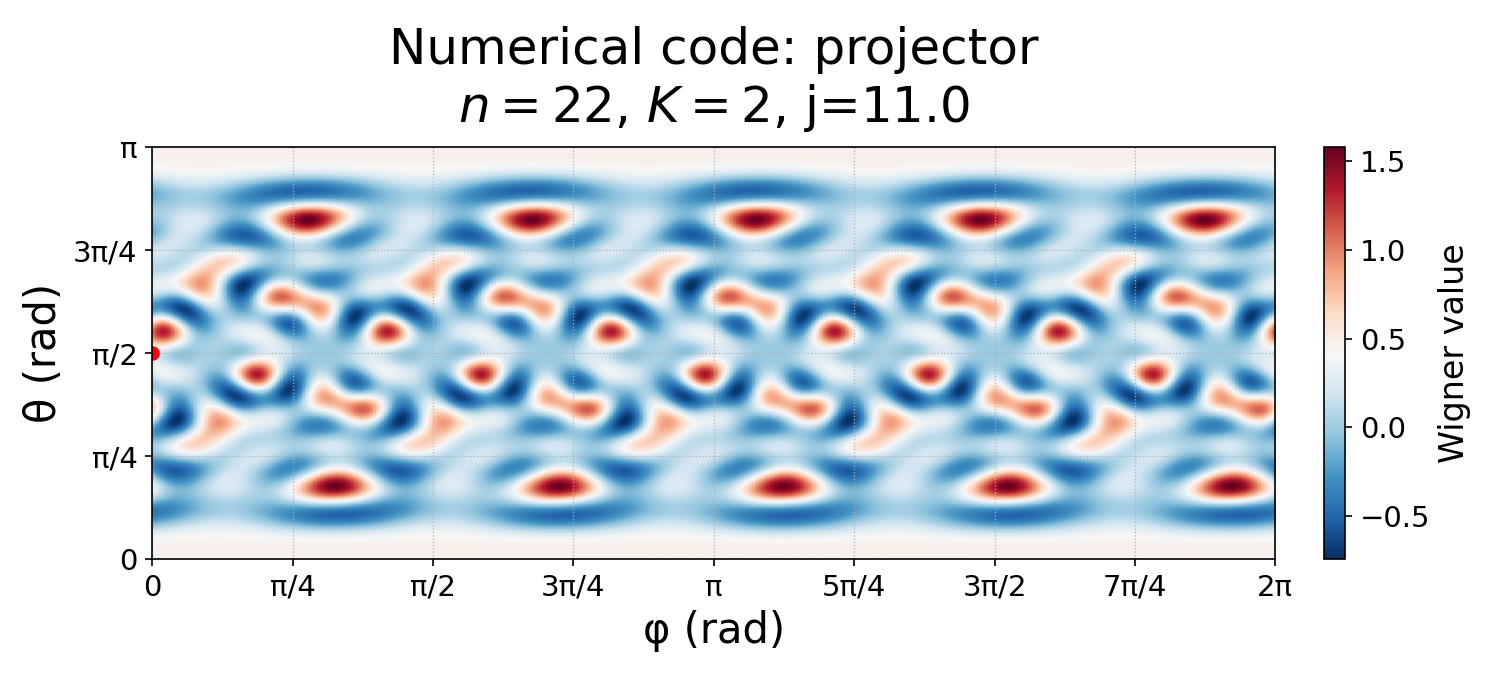}
        \caption{Numerically optimized code.}
    \end{subfigure}
    \hfill
    \begin{subfigure}{0.48\textwidth}
        \centering
        \includegraphics[width=0.85\linewidth]{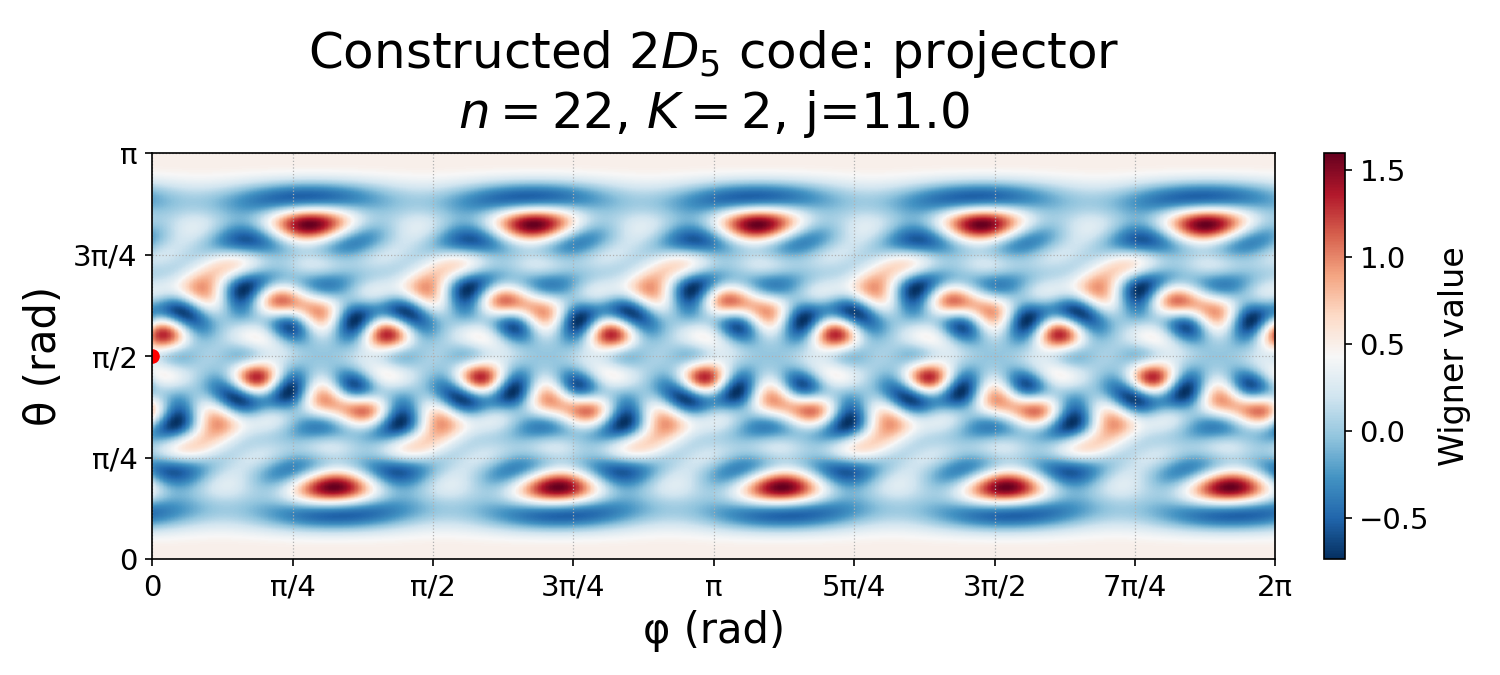}
        \caption{Constructed $2D_5$ code.}
    \end{subfigure}
    \caption{Best numerical code and constructed $2D_{5}$ reference at $n=22$. At $\eta=0.90$, their respective entanglement fidelities are $F_e^{\mathrm{num}}=0.999355$ and $F_e^{\mathrm{con}}=0.999349$.}
    \label{fig:pure-irrep-n22}
\end{figure}

\begin{figure}[H]
    \centering
    \begin{subfigure}{0.48\textwidth}
        \centering
        \includegraphics[width=0.85\linewidth]{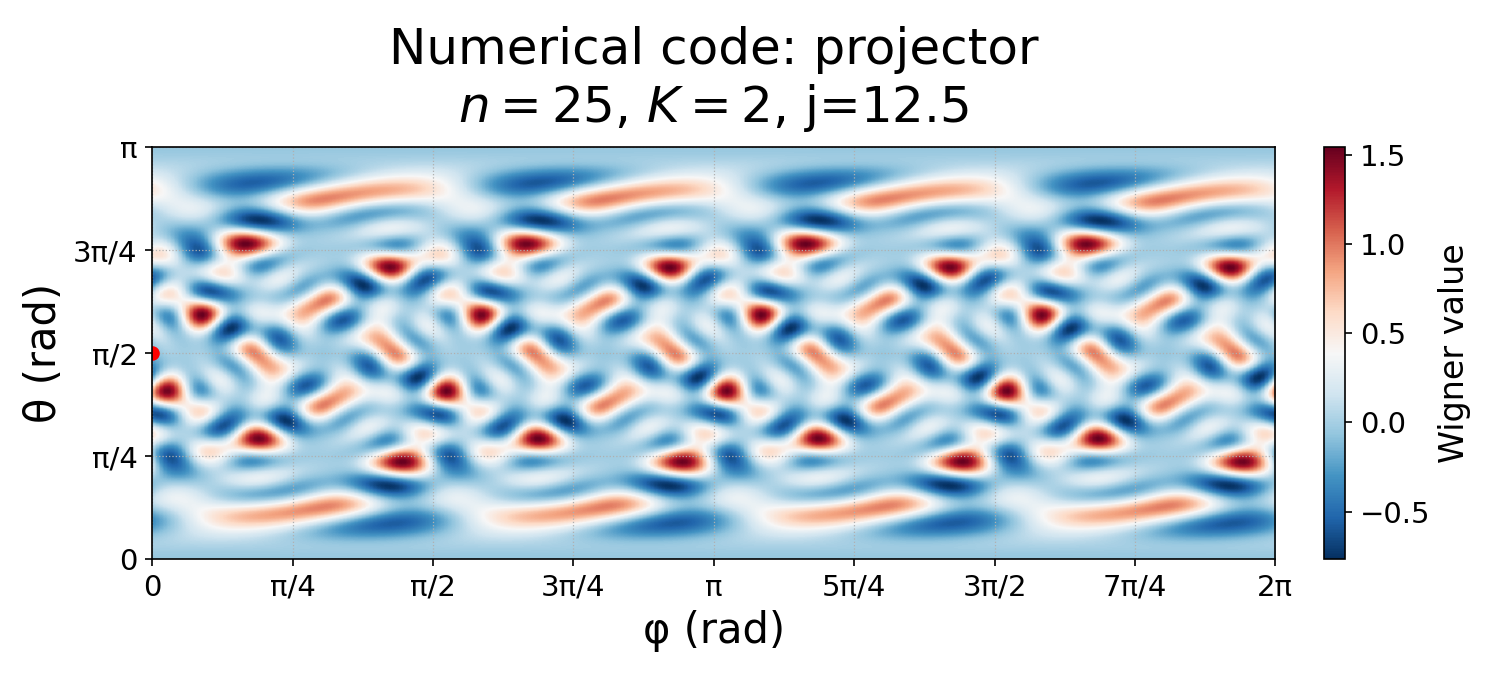}
        \caption{Numerically optimized code.}
    \end{subfigure}
    \hfill
    \begin{subfigure}{0.48\textwidth}
        \centering
        \includegraphics[width=0.85\linewidth]{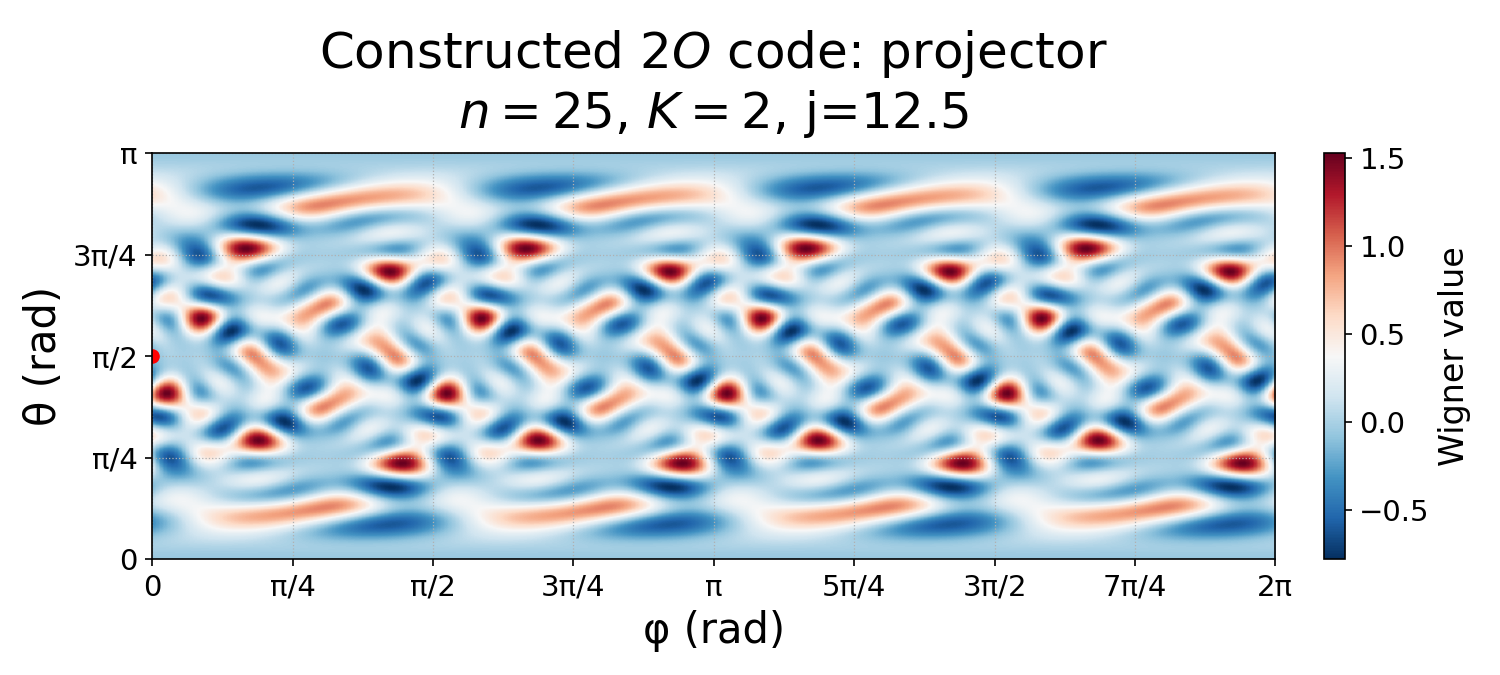}
        \caption{Constructed $2O$ code.}
    \end{subfigure}
    \caption{Best numerical code and constructed $2O$ reference at $n=25$. At $\eta=0.90$, their respective entanglement fidelities are $F_e^{\mathrm{num}}=0.999561$ and $F_e^{\mathrm{con}}=0.999547$.}
    \label{fig:pure-irrep-n25}
\end{figure}

We summarize the novelty of these codes as follows. Beyond the $n=4$ CLY code described above, the $n=7$ binary icosahedral code is known and was first constructed in \cite{grossDesigningCodesInteractions2021}. Likewise, our $n=13$ $2O$ code is an instance of the octahedral code in \cite{grossDesigningCodesInteractions2021}. However, in the past, it's been described as a distance-$3$ code (equivalently correcting up to $2$ losses); here, we show that it can actually be made distance $4$, which also makes it achieve distance $4$ at the smallest possible $n$, according to our Farkas certificates. An $n=25$ $2O$ code that detects spin tensors through rank $4$ was also constructed in \cite{omanakuttanMultispinCliffordCodes2023}. Nevertheless, our multiplicity-space vector choice is different, and results in a significant reduction of finite-$\eta$ infidelity. Specifically, at $\eta = 0.9$, our construction has fidelity $F_e = 0.999547$, whereas the Omanakuttan-Gross code (with their listed approximate codewords) gives $F_e = 0.999114 $. This is a small difference in absolute terms, but corresponds to a $\approx 49\%$ infidelity reduction, and the corresponding codes have a best-found overlap of only $52.2\%$. Finally,  the $2T$ codes at $n=15,19,21$, as well as the specific binary dihedral codes ($2D_3$ at $n=16,17$, $2D_4$ at $n = 10$, and $2D_5$ at $n=22$), are novel to the best of our knowledge.

\subsubsection{More pure-irrep codes}

We also present a few additional pure-irrep constructions that appear in Table~\ref{tab:farkas-constructed-distances}, as well as the $n =37$ instance of our binary polyhedral family. Unlike the preceding examples, these constructions need not agree with the finite-$\eta$ numerical optimum at the same photon number (we haven't optimized at $n =37$; this statement applies to the $n \leq 25$ codes). However, some of them are used to achieve the optimal distance at their $n$, so they're still significant. Their projector spin-Wigner functions are shown in Figure~\ref{fig:additional-pure-irrep-codes}. 

\begin{figure}[H]
    \centering
    \begin{subfigure}[t]{0.48\textwidth}
        \centering
        \includegraphics[width=\linewidth]{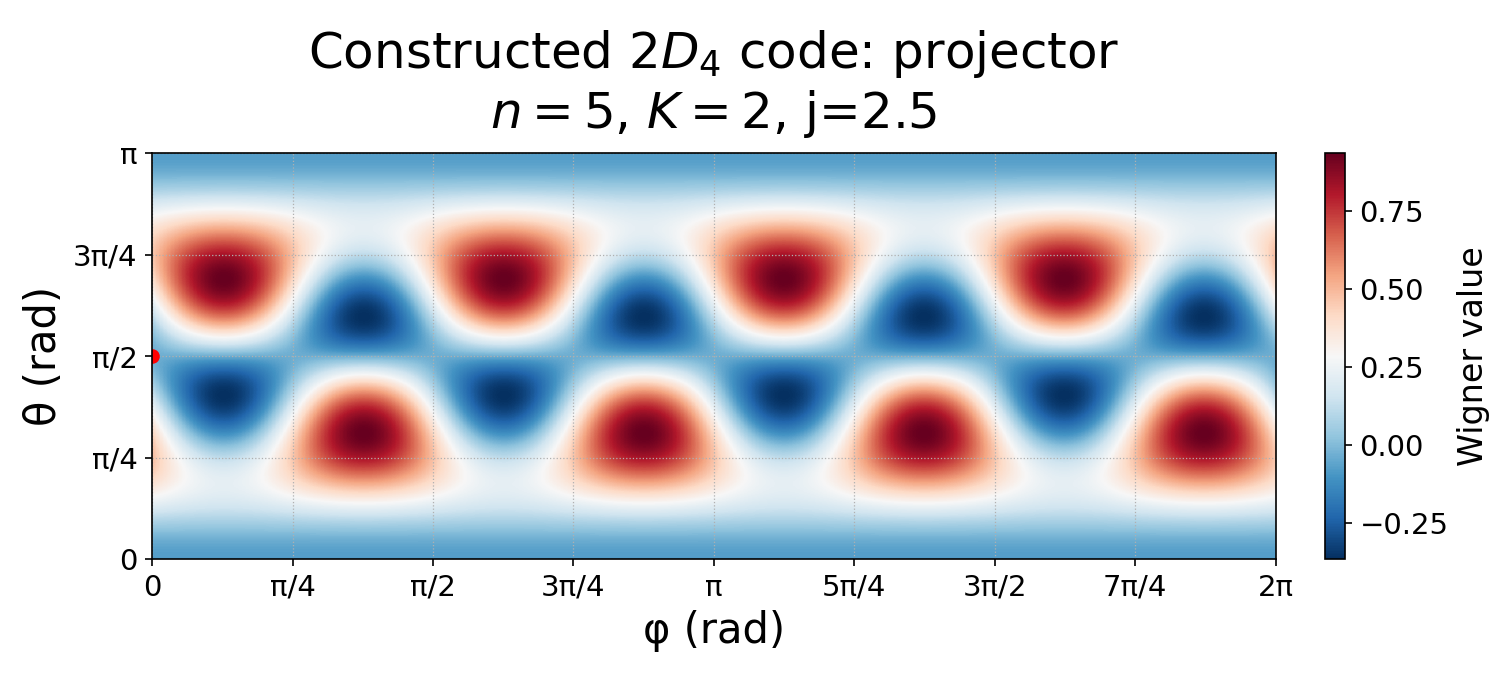}
        \caption{$n=5$, $2D_4:\rho_7$. At $\eta=0.9$, $F_e=0.977109$.}
    \end{subfigure}
    \hfill
    \begin{subfigure}[t]{0.48\textwidth}
        \centering
        \includegraphics[width=\linewidth]{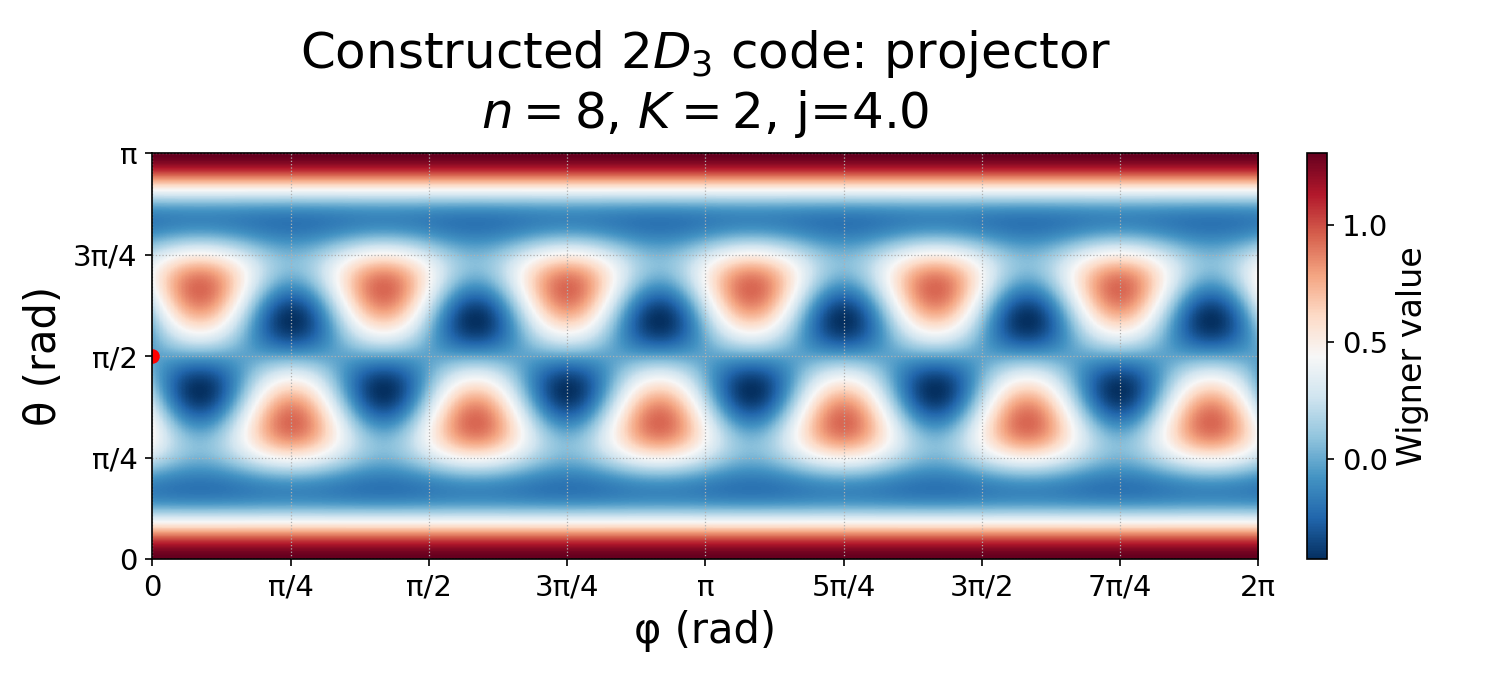}
        \caption{$n=8$, $2D_3:\rho_5$. At $\eta=0.9$, $F_e=0.990021$.}
    \end{subfigure}

    \medskip

    \begin{subfigure}[t]{0.48\textwidth}
        \centering
        \includegraphics[width=\linewidth]{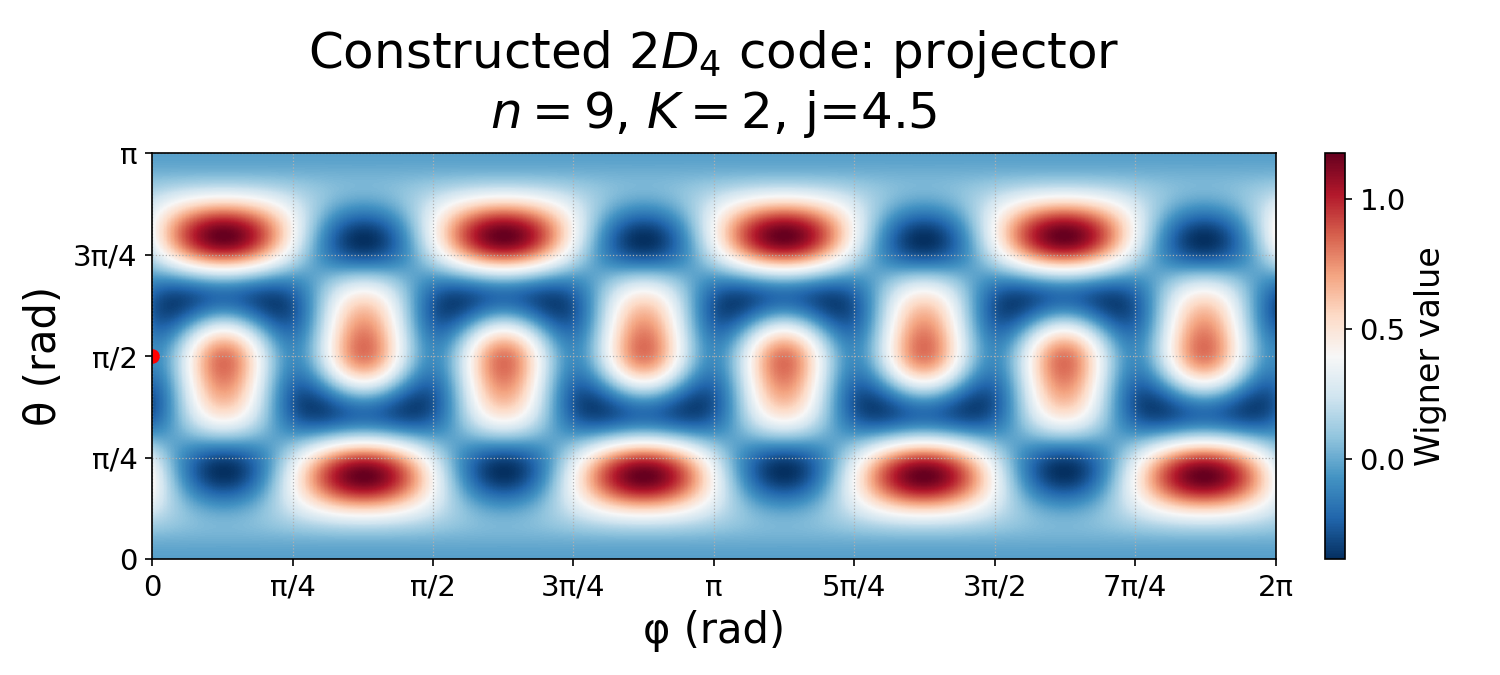}
        \caption{$n=9$, $2D_4:\rho_6$. At $\eta=0.9$, $F_e=0.991155$.}
    \end{subfigure}
    \hfill
    \begin{subfigure}[t]{0.48\textwidth}
        \centering
        \includegraphics[width=\linewidth]{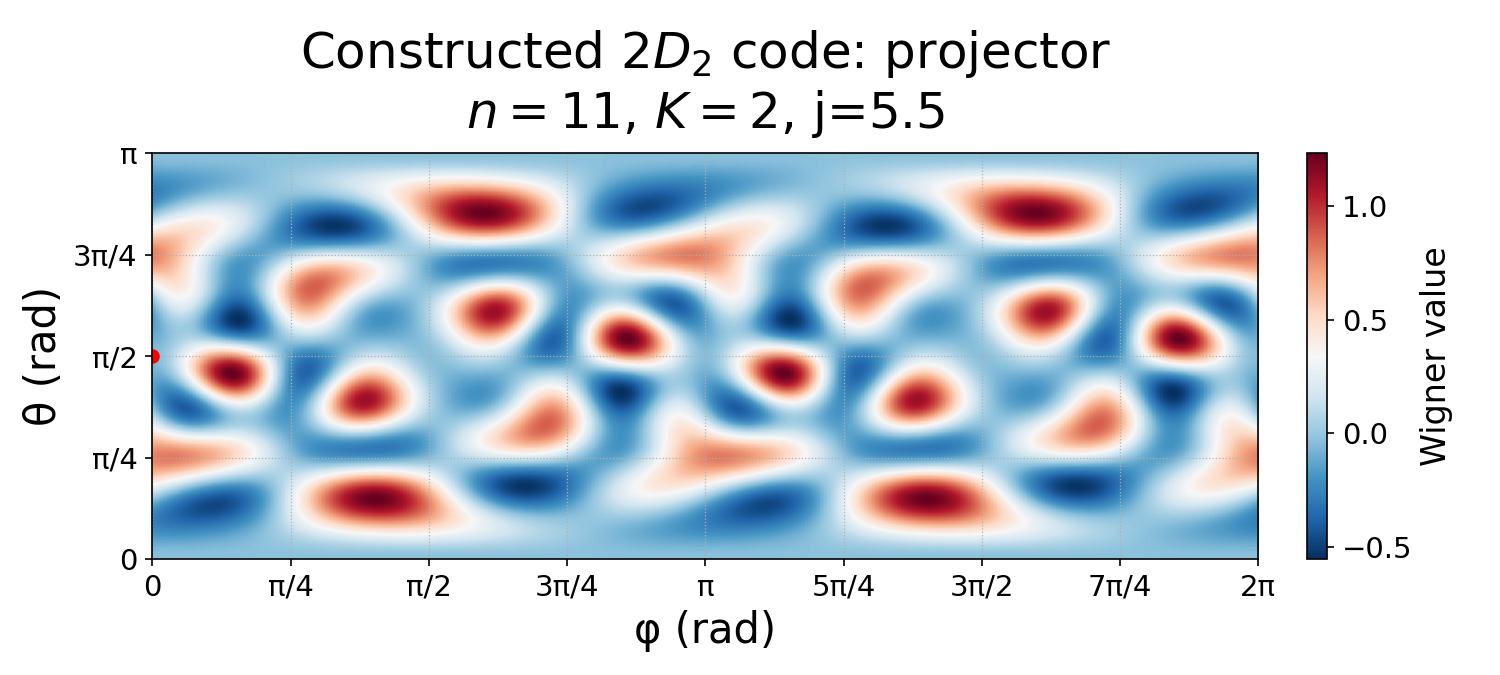}
        \caption{$n=11$, $2D_2:\rho_5$. At $\eta=0.95$, $F_e=0.999459$.}
    \end{subfigure}
    \caption{Projector spin-Wigner functions of additional pure-irrep constructions in their canonical symmetry orientations.}
    \label{fig:additional-pure-irrep-codes}
\end{figure}

\begin{figure}[H]
    \ContinuedFloat
    \centering

    \begin{subfigure}[t]{0.48\textwidth}
        \centering
        \includegraphics[width=\linewidth]{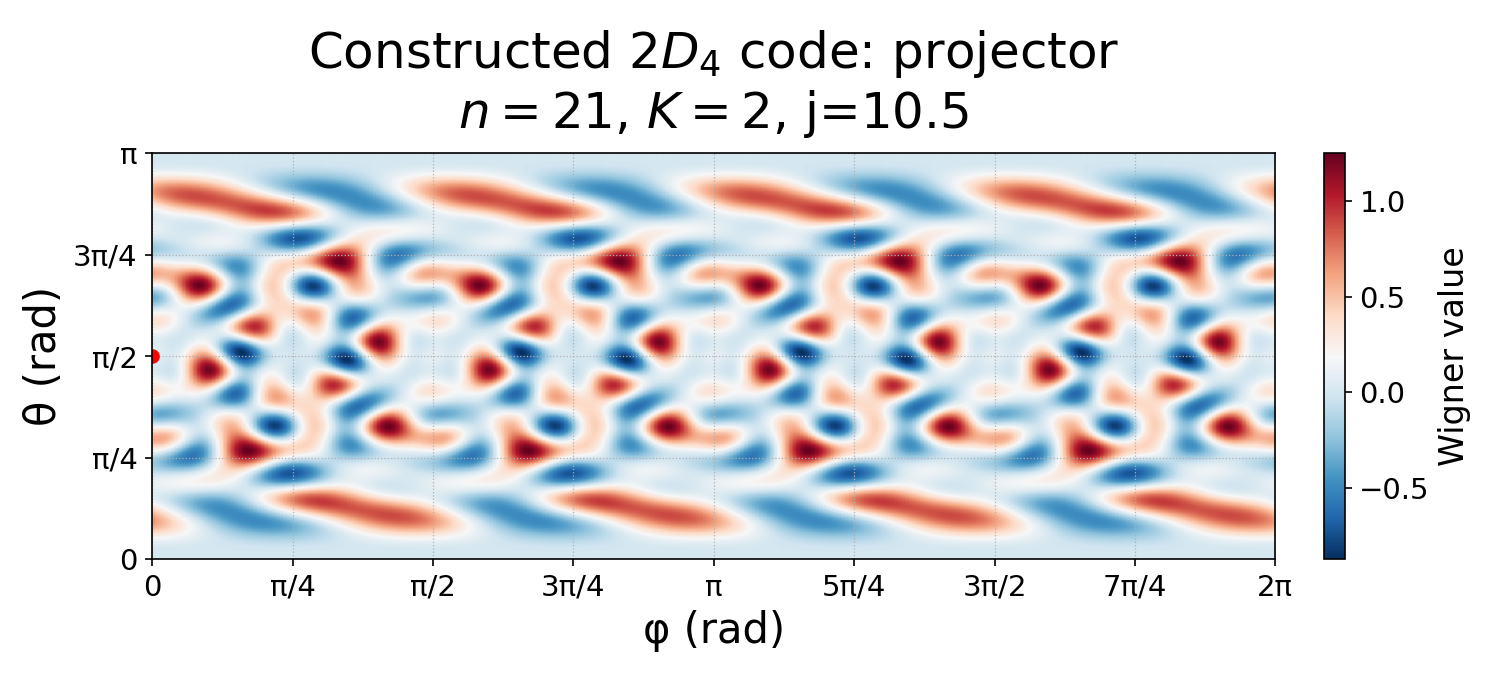}
        \caption{$n=21$, $2D_4:\rho_7$. At $\eta=0.9$, $F_e=0.999179$.}
    \end{subfigure}
    \hfill
    \begin{subfigure}[t]{0.48\textwidth}
        \centering
        \includegraphics[width=\linewidth]{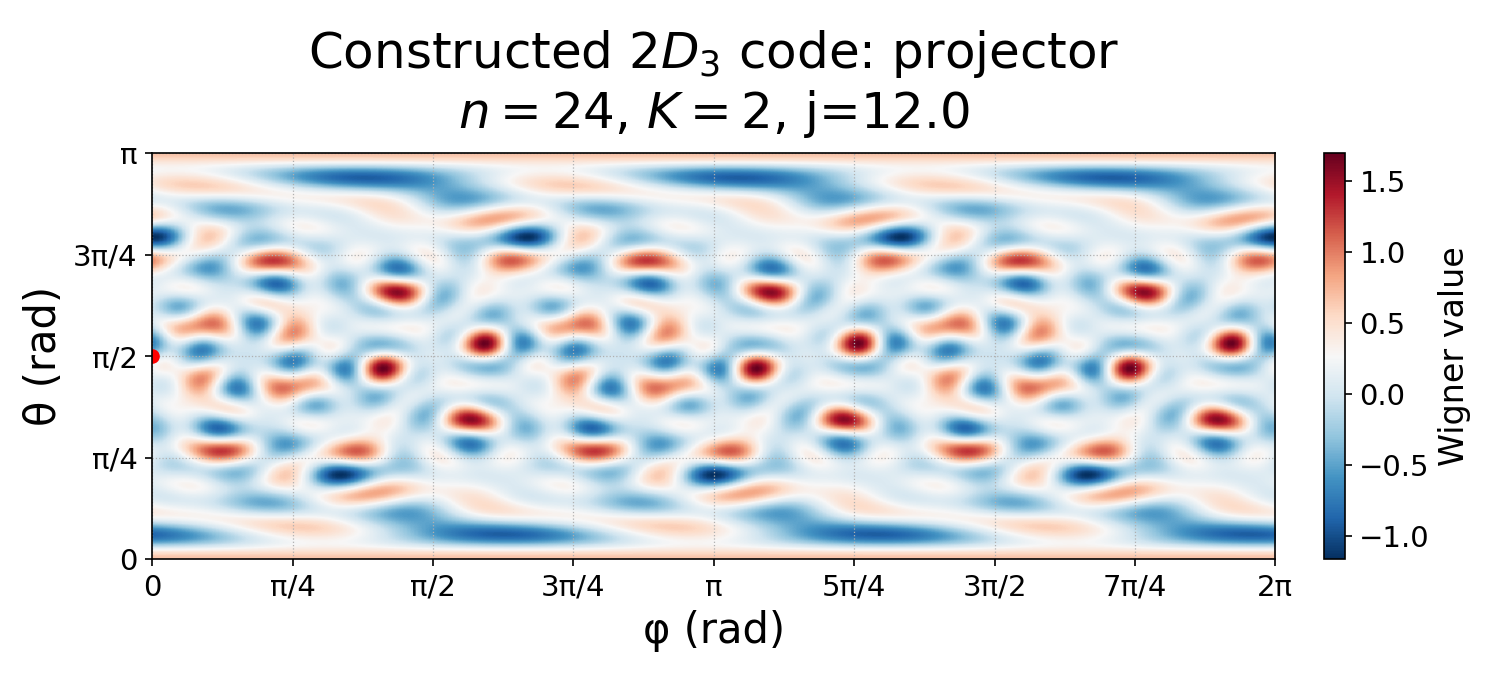}
        \caption{$n=24$, $2D_3:\rho_5$. At $\eta=0.9$, $F_e=0.999300$.}
    \end{subfigure}
    \medskip

    \begin{subfigure}[t]{0.48\textwidth}
        \centering
        \includegraphics[width=\linewidth,height=0.19\textheight,keepaspectratio]{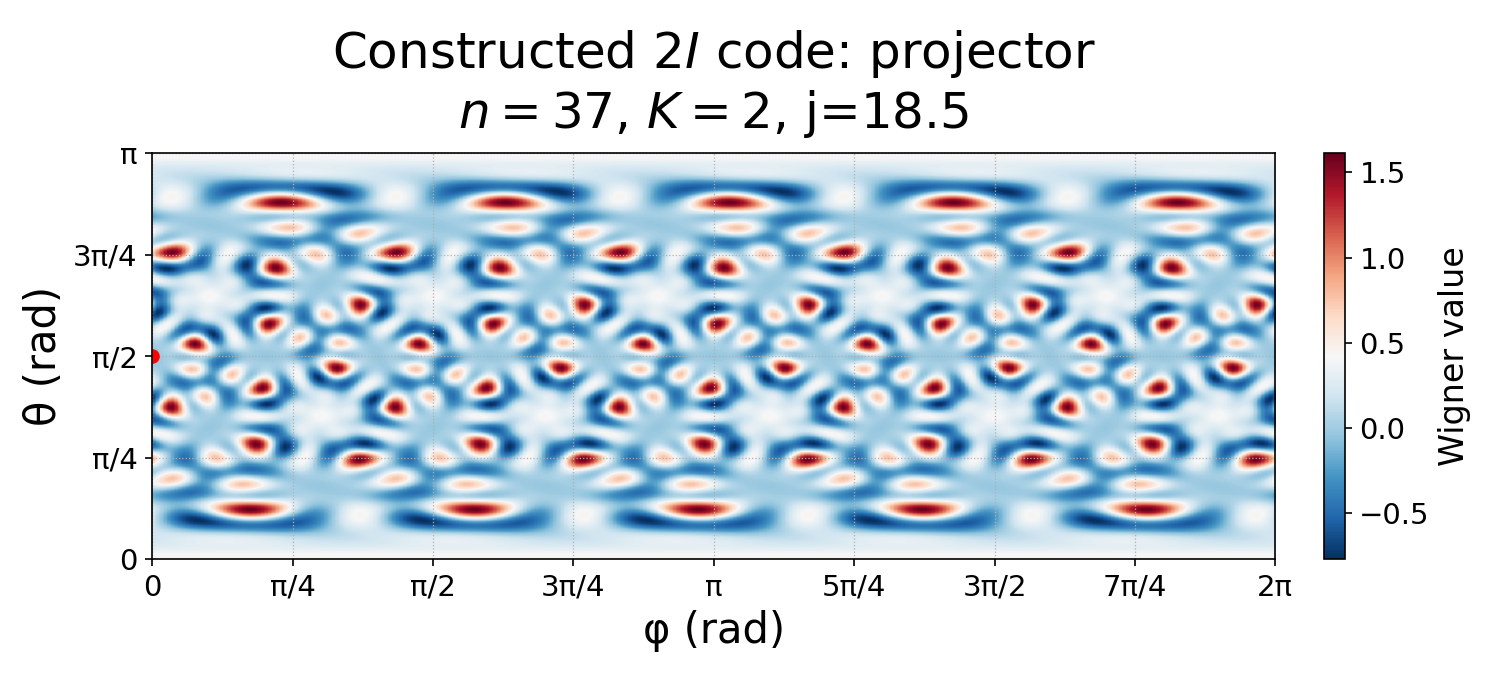}
        \caption{$n=37$, $2I:\rho_3$. At $\eta=0.9$, $F_e=0.999873$.}
    \end{subfigure}
    \hfill
    \begin{subfigure}[t]{0.48\textwidth}
        \centering
        \includegraphics[width=\linewidth,height=0.19\textheight,keepaspectratio]{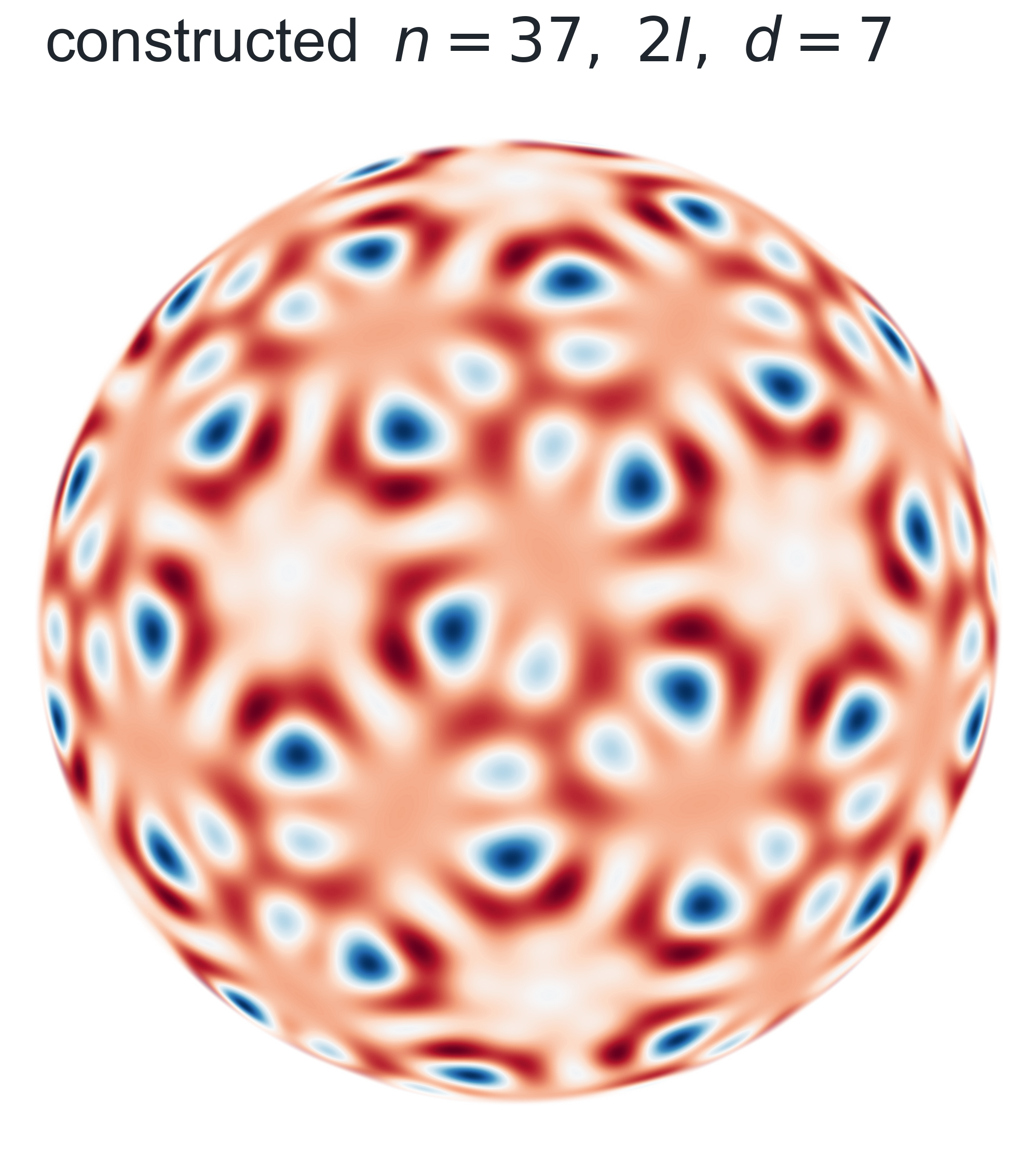}
        \caption{Spherical view of the same $n=37$, $2I:\rho_3$ projector; blue denotes larger Wigner values.}
    \end{subfigure}
    \caption[]{(Continued.) The $n=37$ construction is shown in both plane and spherical views. The spherical panel reverses the color convention of the plane panels.}
\end{figure}

\subsubsection{Numerical comparisons with reducible constructions}
We compare the remaining selected numerical codes with the corresponding reducible constructions. Each analytic reference is shown in its canonical symmetry frame, and the numerical projector is aligned to it. At $n=11$, we align the common $C_4$ symmetry axes; maximizing overlap without this restriction tilts the axes apart. These visual alignments do not imply equality of the codes; non-negligible differences remain, particularly for some cyclic-group constructions. Irrep labels in Table~\ref{tab:numerical-code-symmetry-classification} refer to the recorded classification frame and can change under an equivalent rotation or conjugation.

\begin{figure}[H]
    \centering
    \begin{subfigure}{0.48\textwidth}
        \centering
        \includegraphics[width=\linewidth]{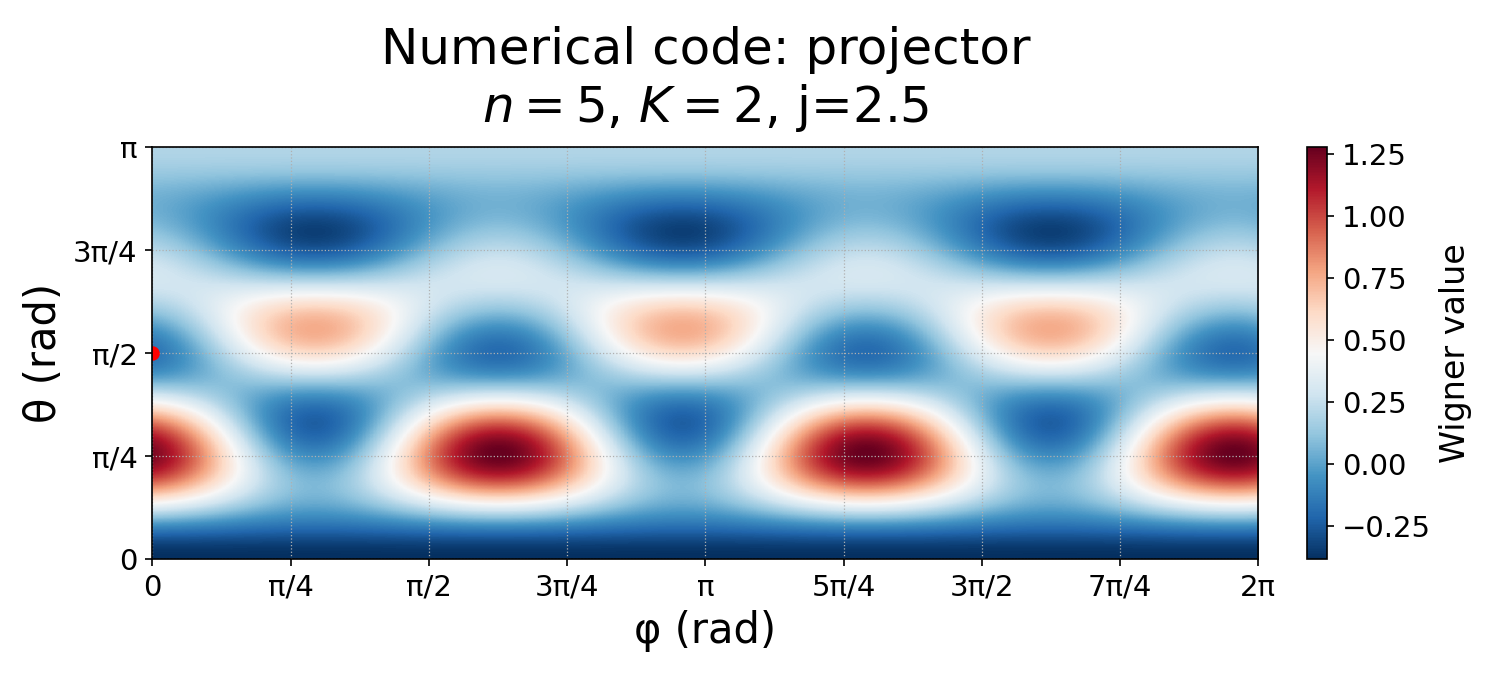}
        \caption{Numerically optimized code.}
    \end{subfigure}
    \hfill
    \begin{subfigure}{0.48\textwidth}
        \centering
        \includegraphics[width=\linewidth]{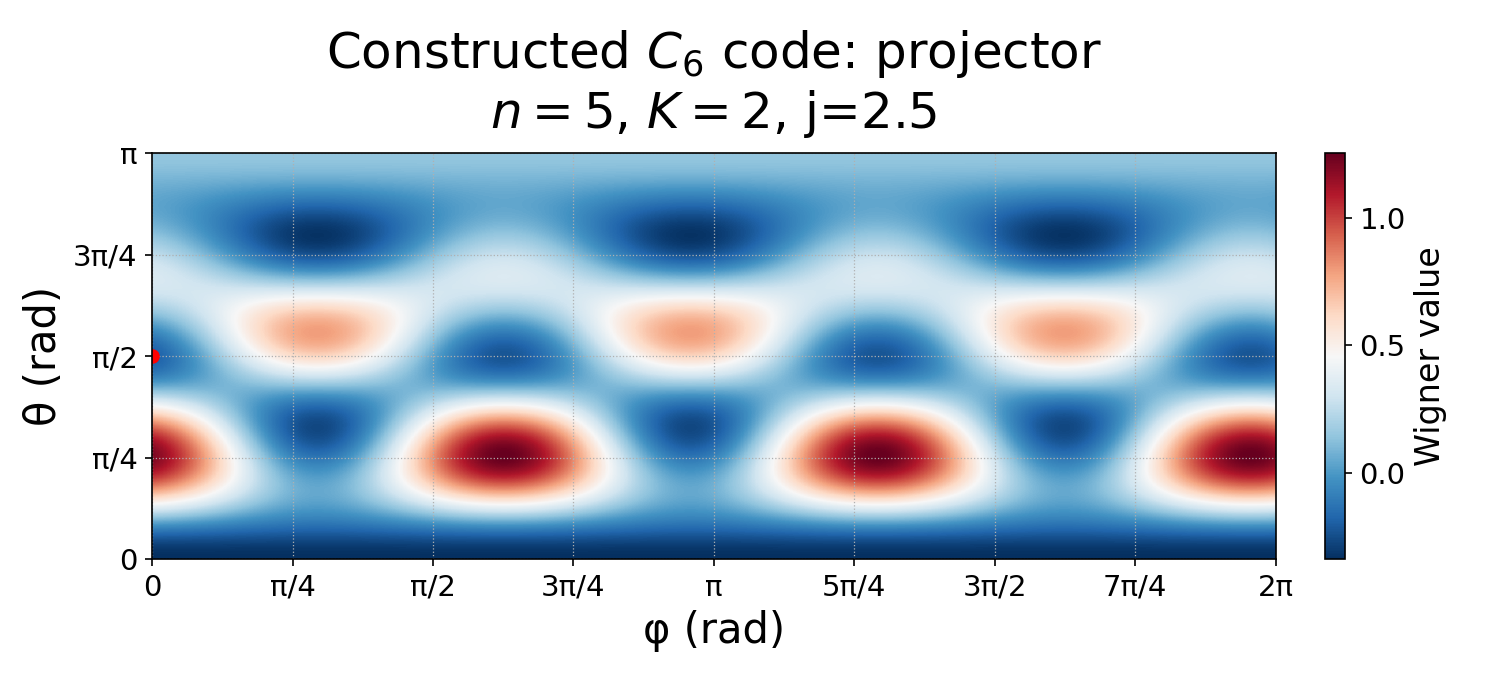}
        \caption{Constructed $C_6:\rho_2\oplus\rho_6$ code.}
    \end{subfigure}
    \caption{Best numerical code and constructed $C_{6}$ reference at $n=5$. At $\eta=0.90$, their respective entanglement fidelities are $F_e^{\mathrm{num}}=0.979710$ and $F_e^{\mathrm{con}}=0.979326$.}
    \label{fig:reducible-n5}
\end{figure}

\begin{figure}[H]
    \centering
    \begin{subfigure}{0.48\textwidth}
        \centering
        \includegraphics[width=\linewidth]{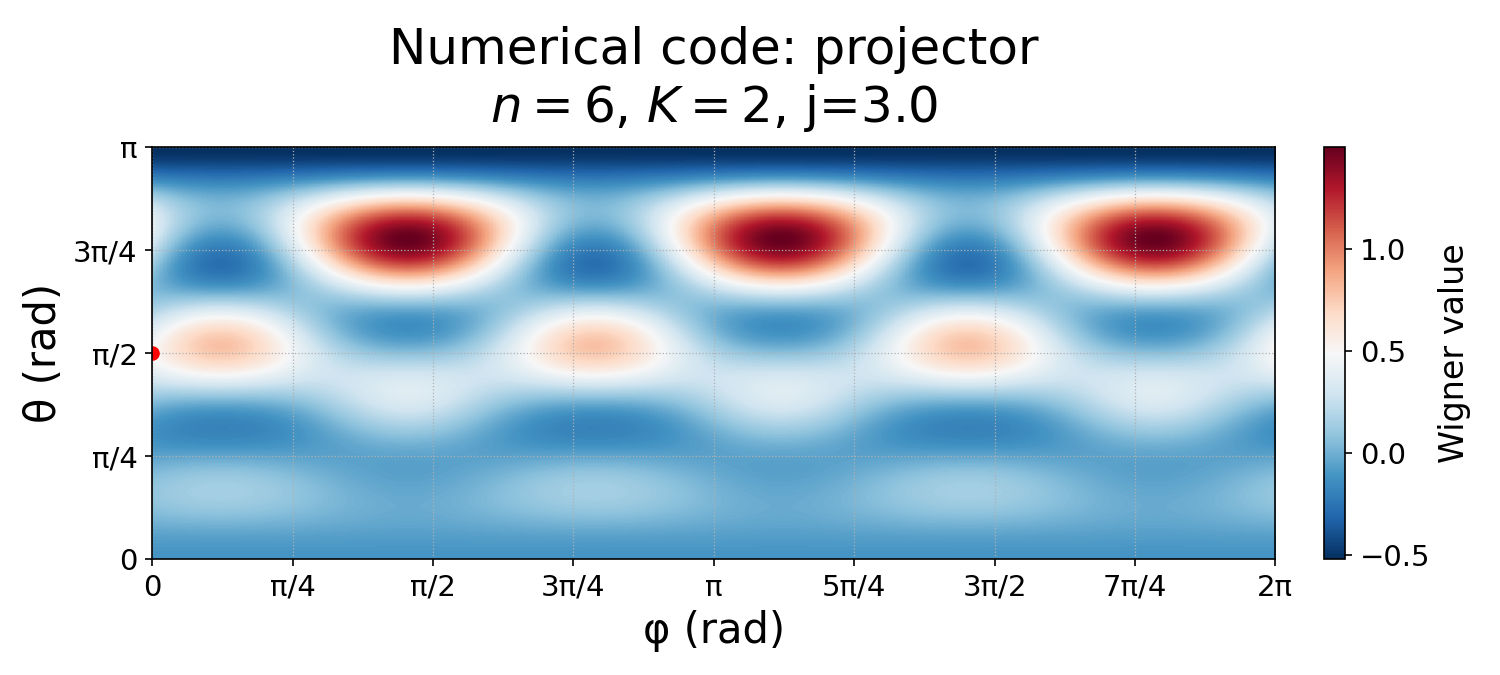}
        \caption{Numerically optimized code.}
    \end{subfigure}
    \hfill
    \begin{subfigure}{0.48\textwidth}
        \centering
        \includegraphics[width=\linewidth]{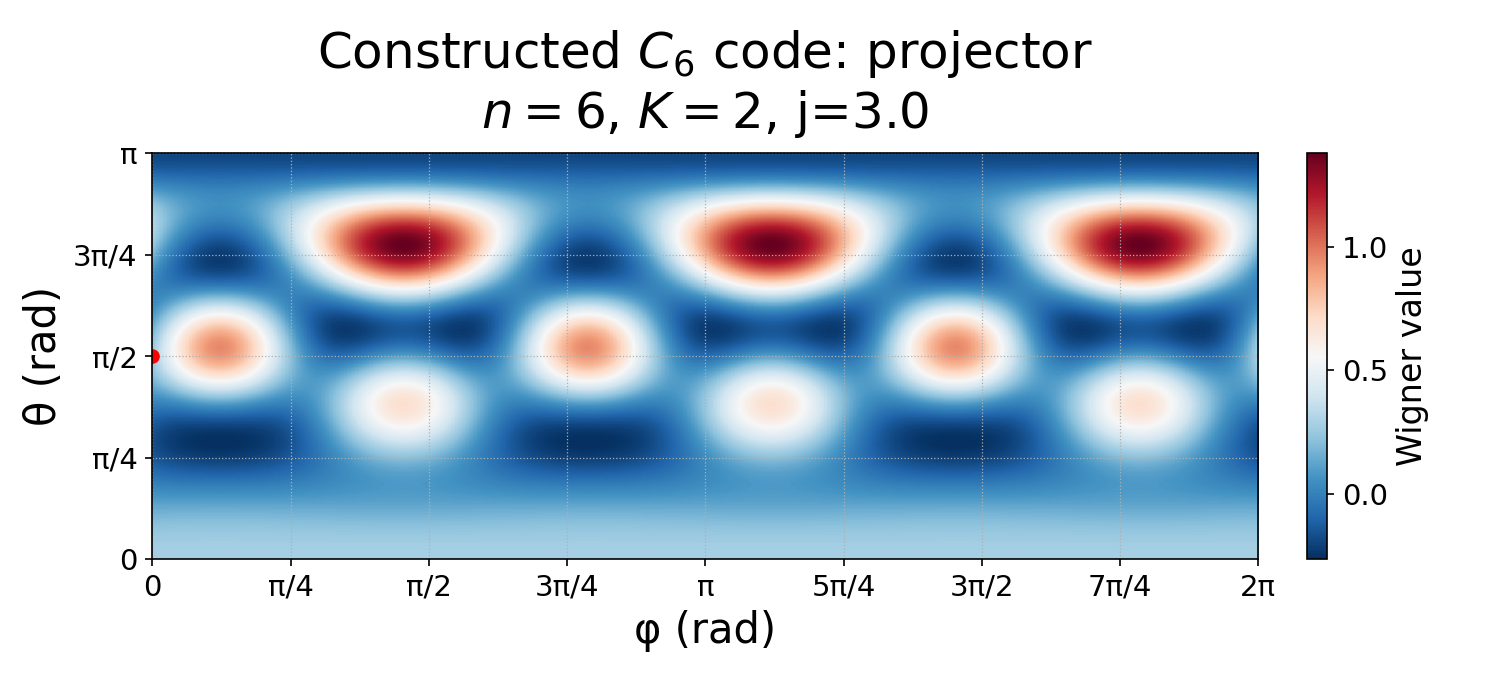}
        \caption{Constructed $C_6:\rho_1\oplus\rho_3$ code.}
    \end{subfigure}
    \caption{Best numerical code and constructed $C_{6}$ reference at $n=6$. At $\eta=0.90$, their respective entanglement fidelities are $F_e^{\mathrm{num}}=0.981849$ and $F_e^{\mathrm{con}}=0.979284$.}
    \label{fig:reducible-n6}
\end{figure}

\begin{figure}[H]
    \centering
    \begin{subfigure}{0.48\textwidth}
        \centering
        \includegraphics[width=\linewidth]{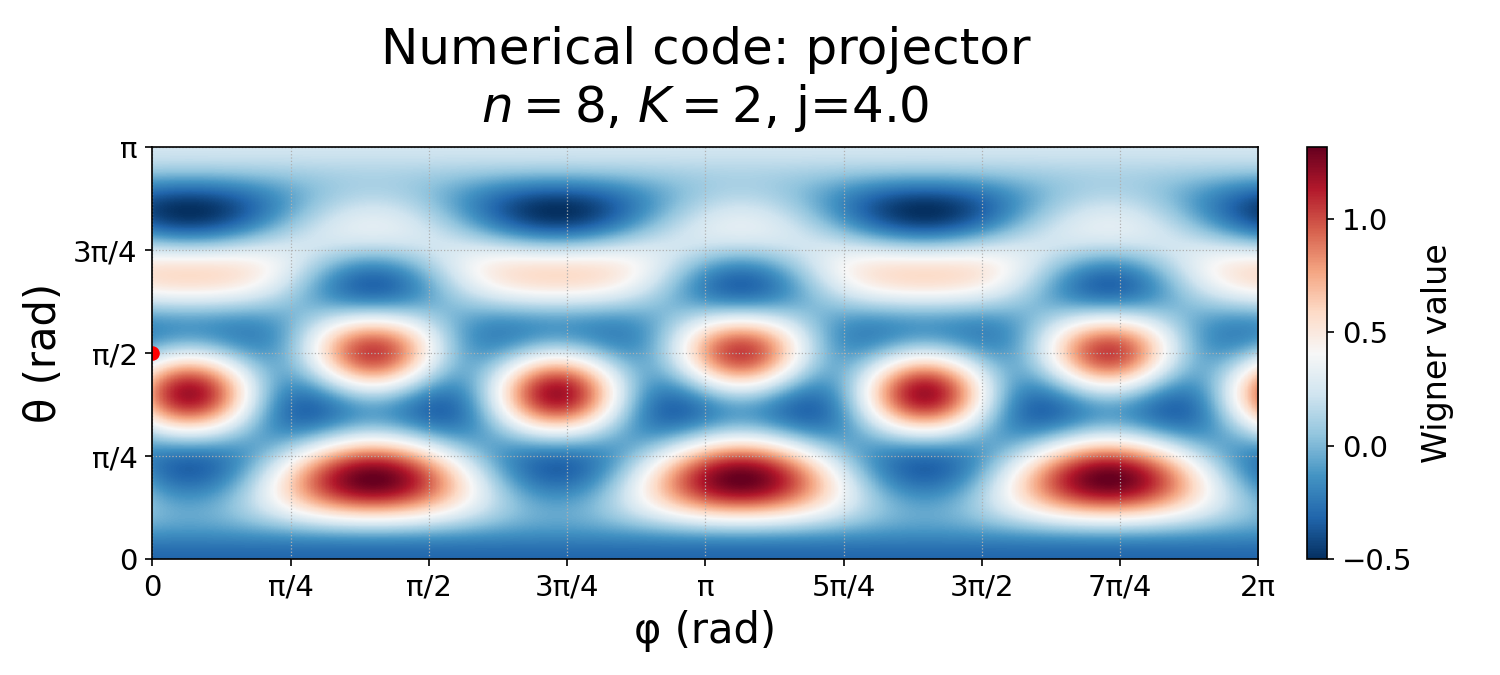}
        \caption{Numerically optimized code.}
    \end{subfigure}
    \hfill
    \begin{subfigure}{0.48\textwidth}
        \centering
        \includegraphics[width=\linewidth]{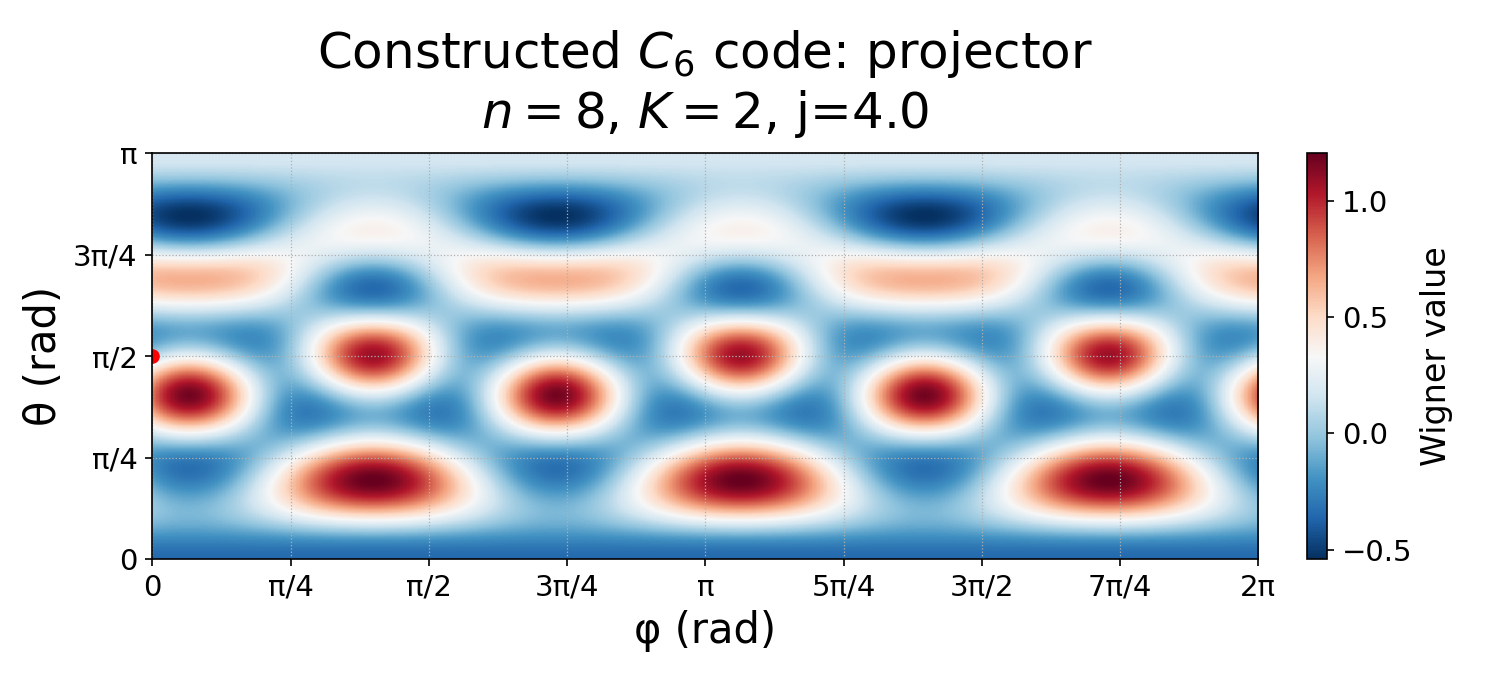}
        \caption{Constructed $C_6:\rho_1\oplus\rho_3$ code.}
    \end{subfigure}
    \caption{Best numerical code and constructed $C_{6}$ reference at $n=8$. At $\eta=0.90$, their respective entanglement fidelities are $F_e^{\mathrm{num}}=0.992062$ and $F_e^{\mathrm{con}}=0.991885$.}
    \label{fig:reducible-n8}
\end{figure}

\begin{figure}[H]
    \centering
    \begin{subfigure}{0.48\textwidth}
        \centering
        \includegraphics[width=\linewidth]{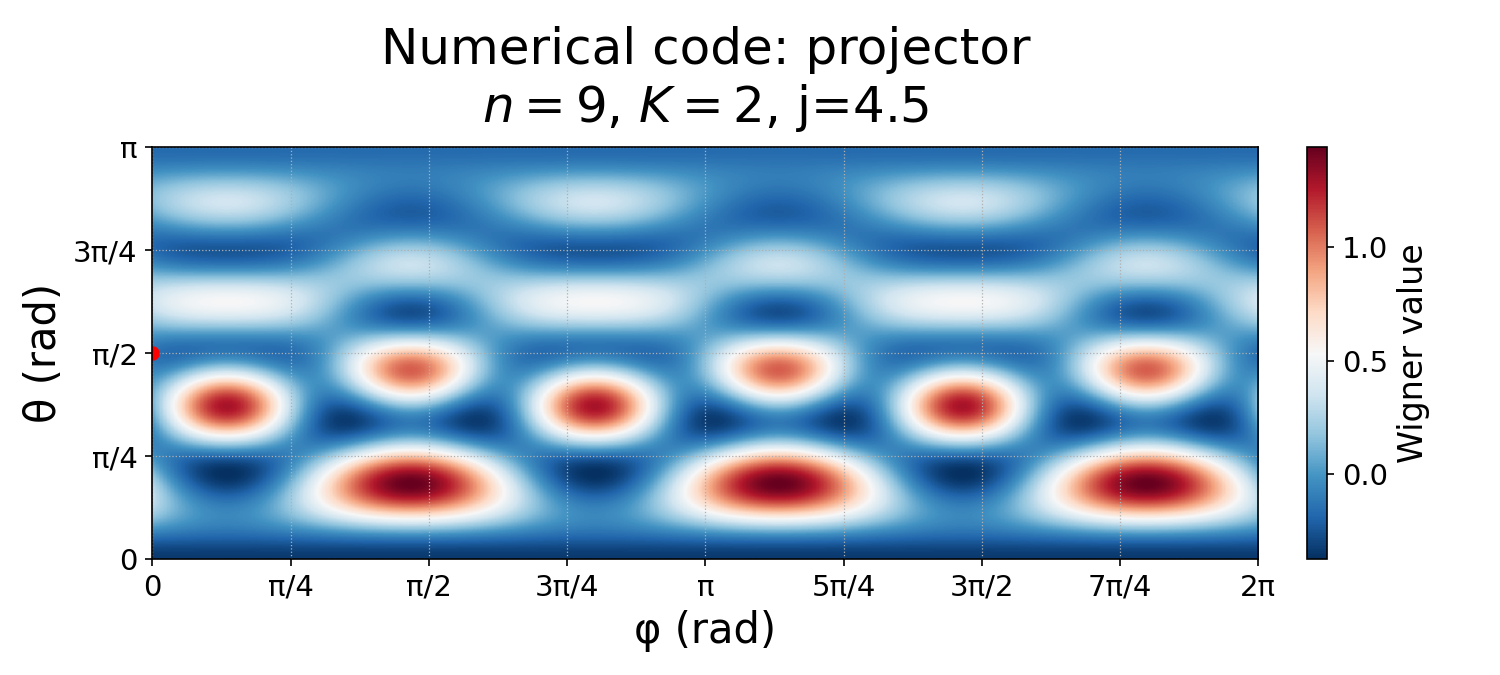}
        \caption{Numerically optimized code.}
    \end{subfigure}
    \hfill
    \begin{subfigure}{0.48\textwidth}
        \centering
        \includegraphics[width=\linewidth]{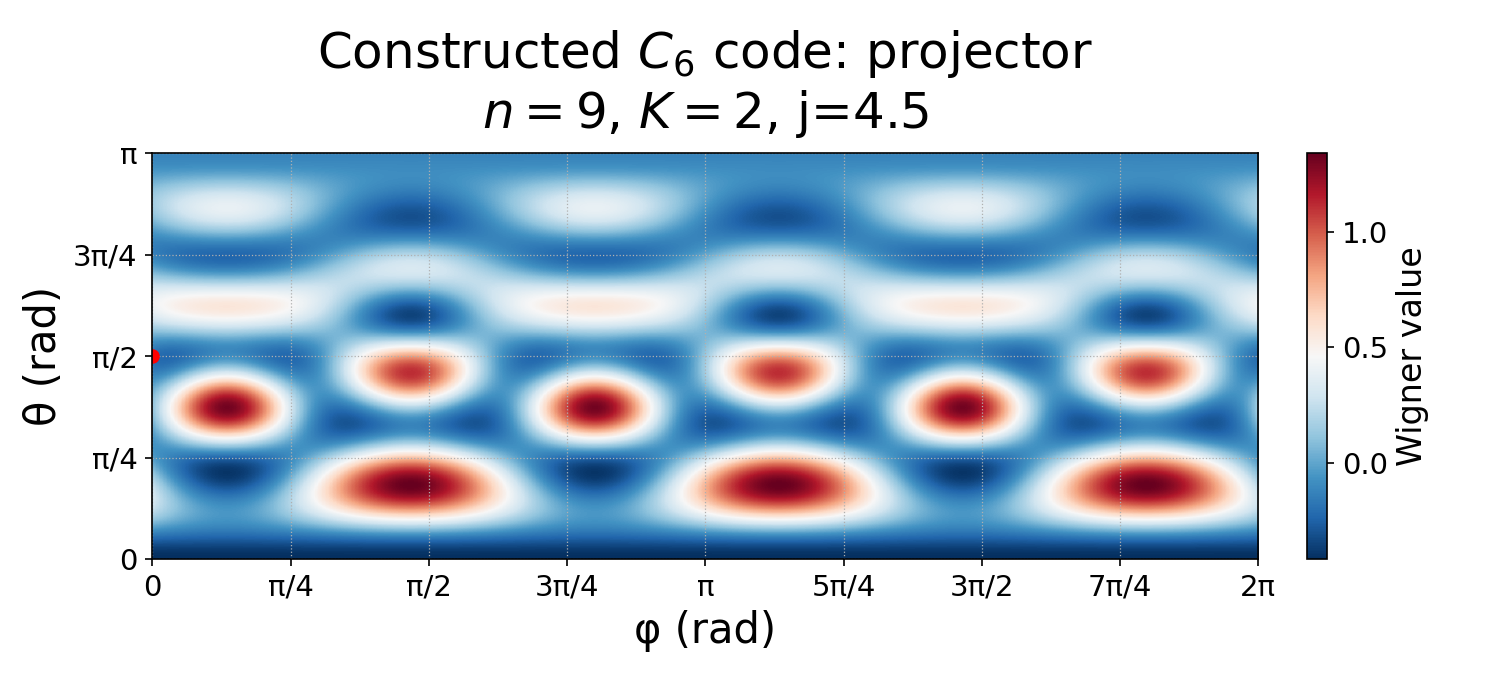}
        \caption{Constructed $C_6:\rho_2\oplus\rho_4$ code.}
    \end{subfigure}
    \caption{Best numerical code and constructed $C_{6}$ reference at $n=9$. At $\eta=0.90$, their respective entanglement fidelities are $F_e^{\mathrm{num}}=0.992889$ and $F_e^{\mathrm{con}}=0.992701$.}
    \label{fig:reducible-n9}
\end{figure}

\begin{figure}[H]
    \centering
    \begin{subfigure}{0.48\textwidth}
        \centering
        \includegraphics[width=\linewidth]{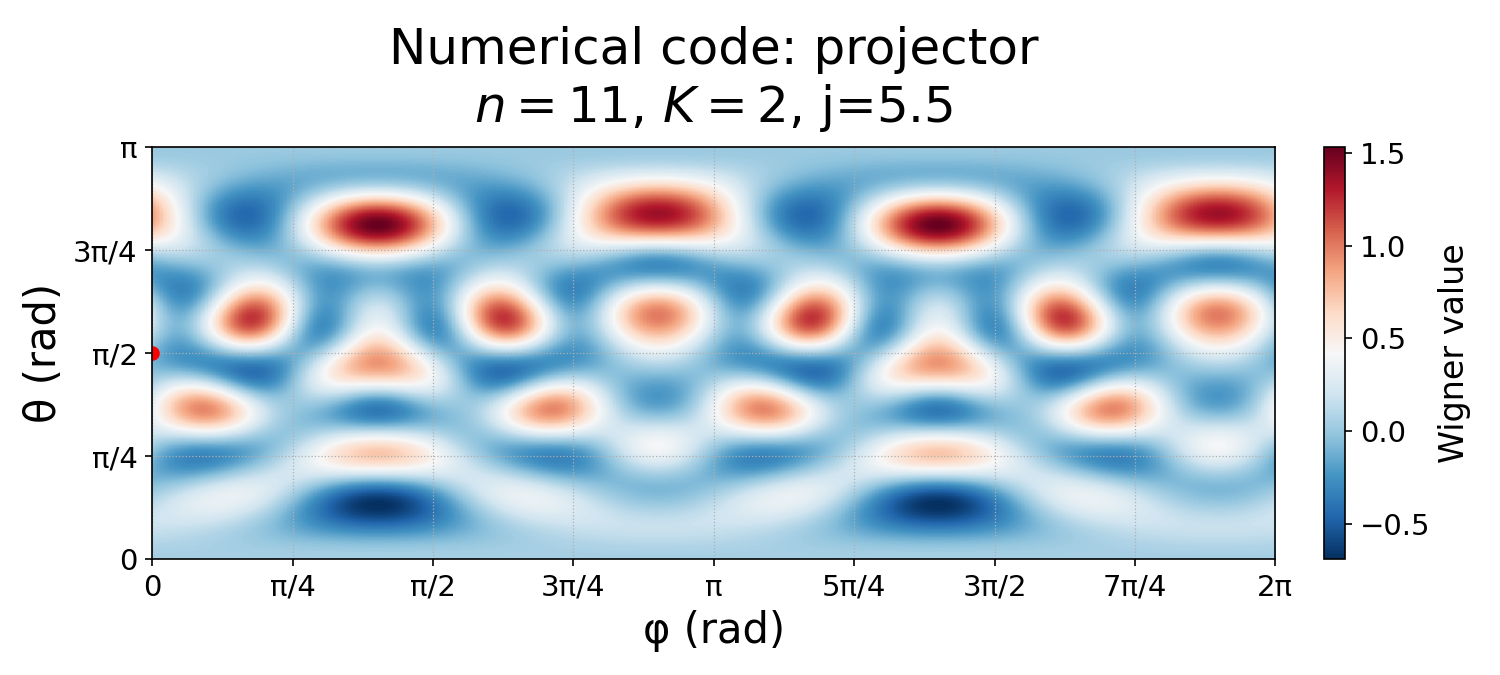}
        \caption{Numerically optimized code.}
    \end{subfigure}
    \hfill
    \begin{subfigure}{0.48\textwidth}
        \centering
        \includegraphics[width=\linewidth]{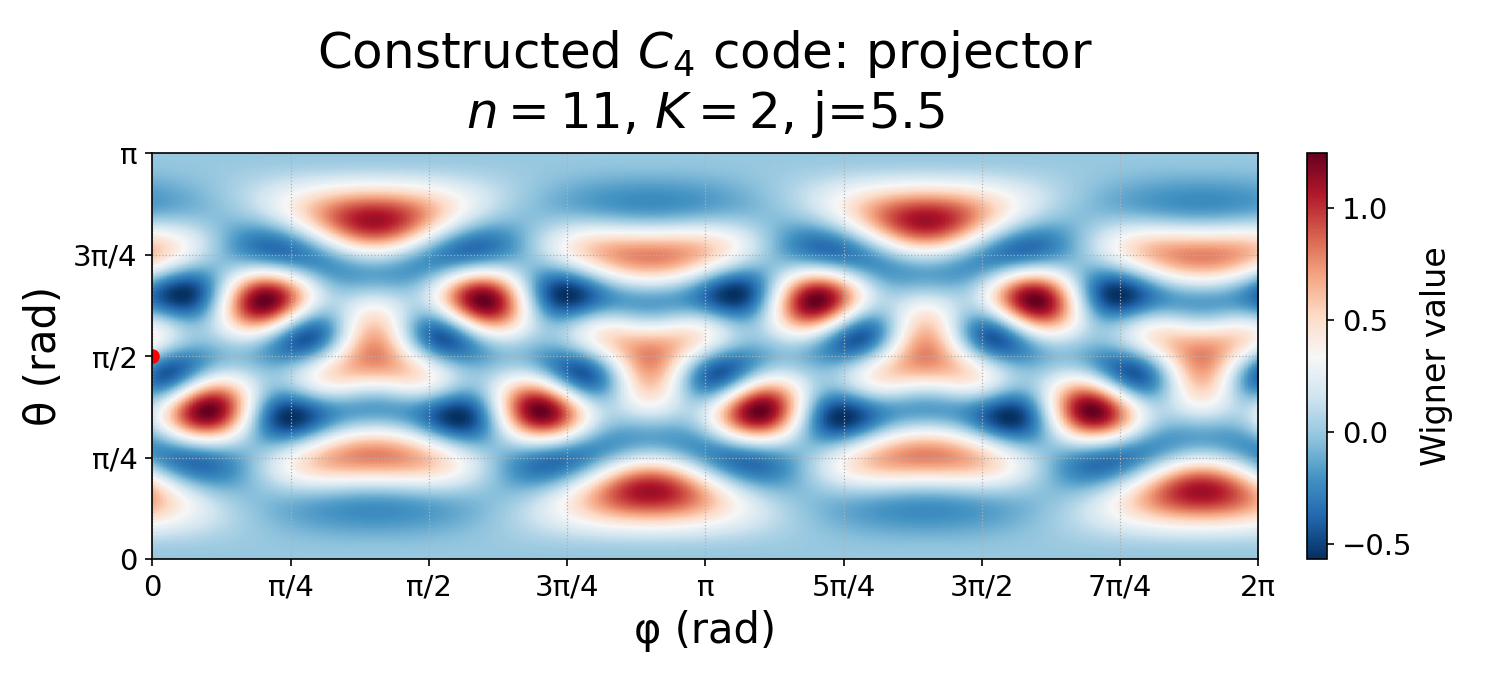}
        \caption{Constructed $C_4:\rho_3\oplus\rho_4$ code.}
    \end{subfigure}
    \caption{Best numerical code and constructed $C_{4}$ reference at $n=11$. At $\eta=0.90$, their respective entanglement fidelities are $F_e^{\mathrm{num}}=0.995075$ and $F_e^{\mathrm{con}}=0.994748$.}
    \label{fig:reducible-n11}
\end{figure}

\begin{figure}[H]
    \centering
    \begin{subfigure}{0.48\textwidth}
        \centering
        \includegraphics[width=\linewidth]{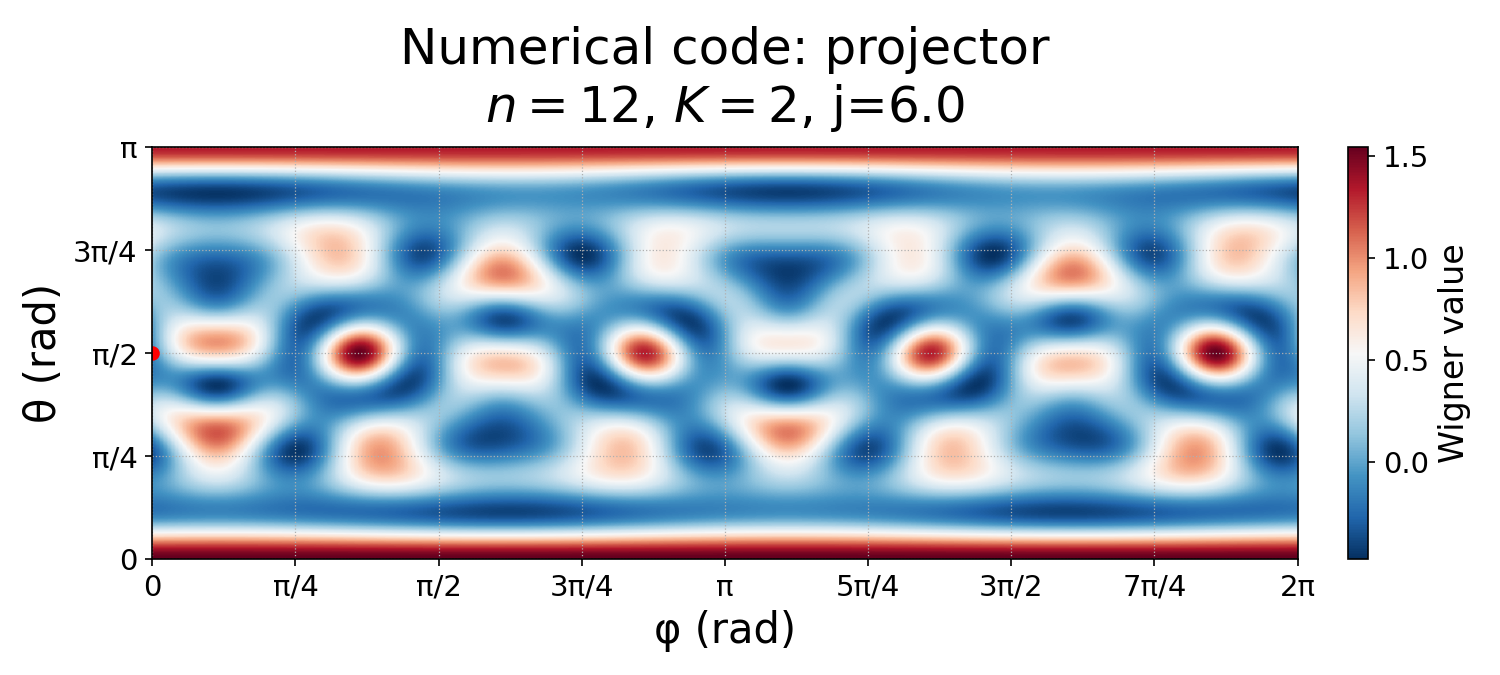}
        \caption{Numerically optimized code.}
    \end{subfigure}
    \hfill
    \begin{subfigure}{0.48\textwidth}
        \centering
        \includegraphics[width=\linewidth]{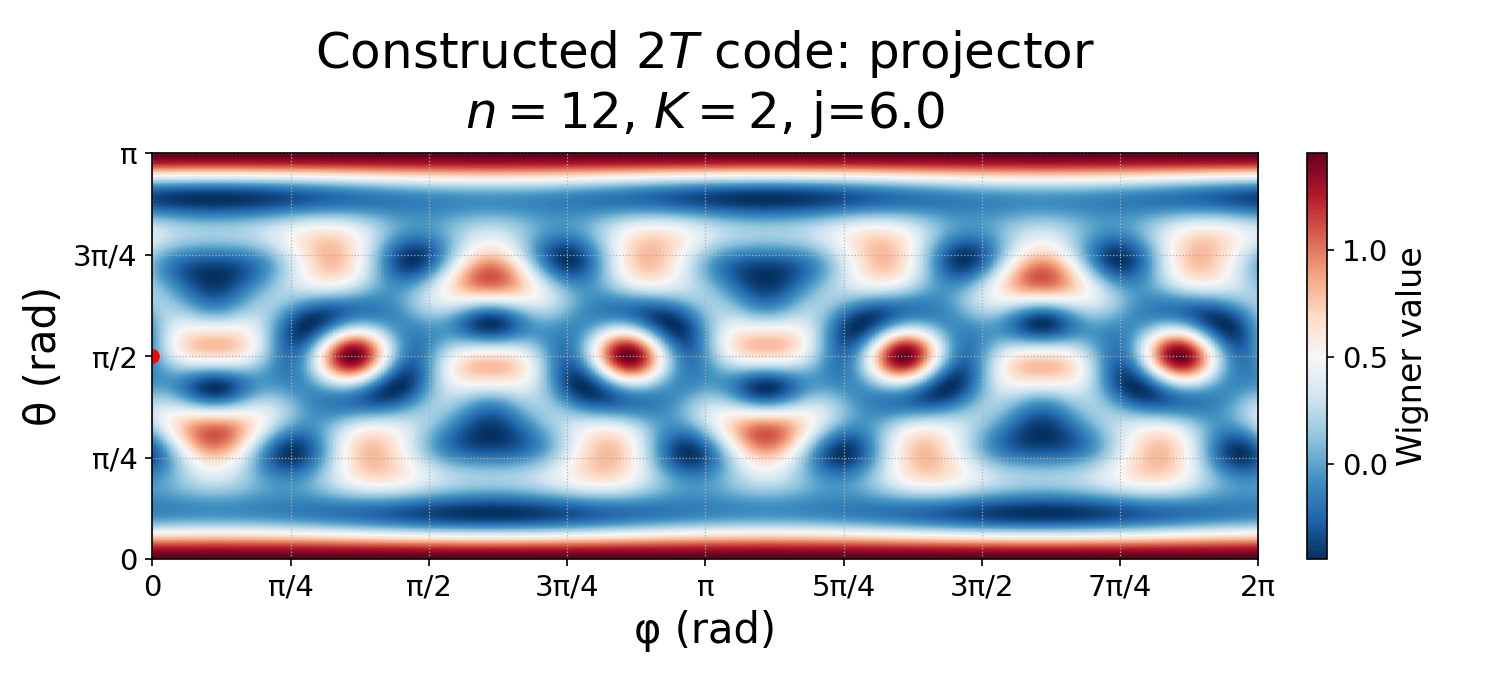}
        \caption{Constructed $2T:\rho_1\oplus\rho_2$ code.}
    \end{subfigure}
    \caption{Best numerical code and constructed $2T$ reference at $n=12$. At $\eta=0.90$, their respective entanglement fidelities are $F_e^{\mathrm{num}}=0.995795$ and $F_e^{\mathrm{con}}=0.995762$.}
    \label{fig:reducible-n12}
\end{figure}

\begin{figure}[H]
    \centering
    \begin{subfigure}{0.48\textwidth}
        \centering
        \includegraphics[width=\linewidth]{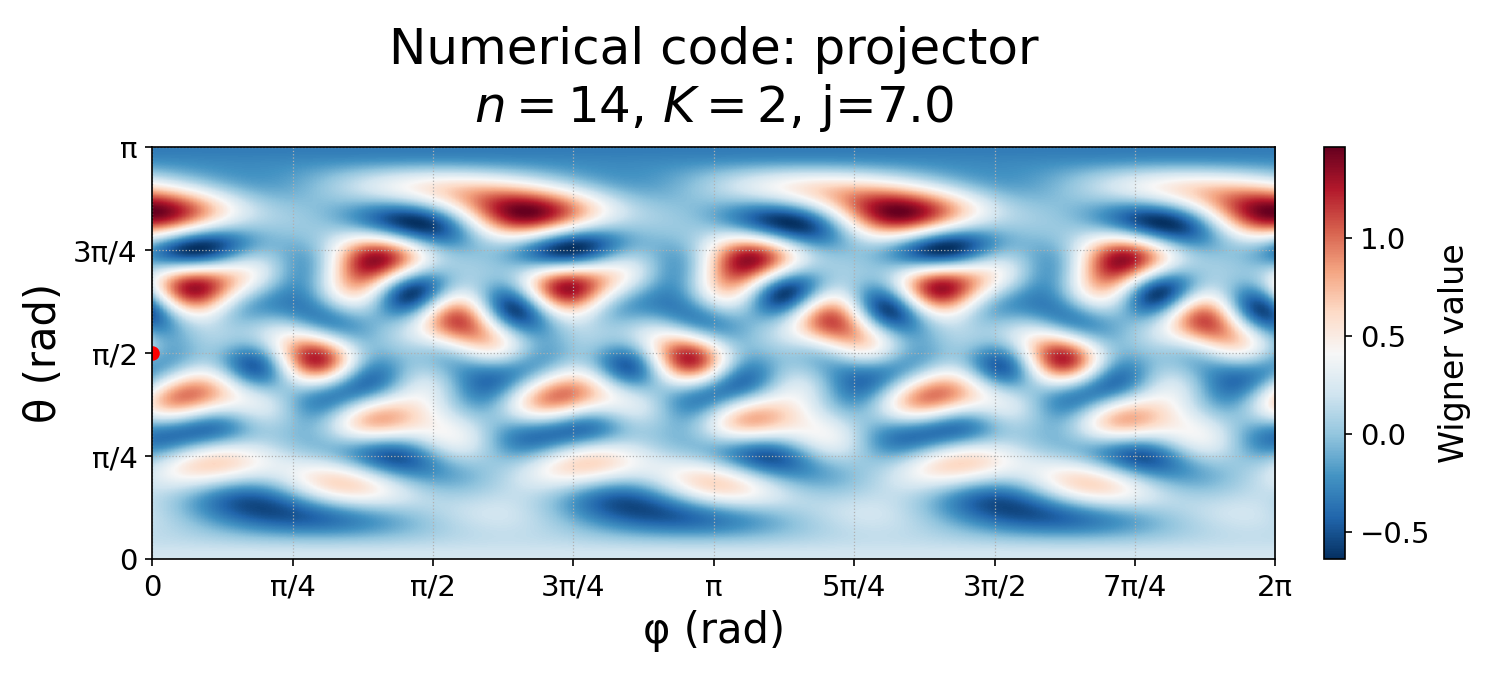}
        \caption{Numerically optimized code.}
    \end{subfigure}
    \hfill
    \begin{subfigure}{0.48\textwidth}
        \centering
        \includegraphics[width=\linewidth]{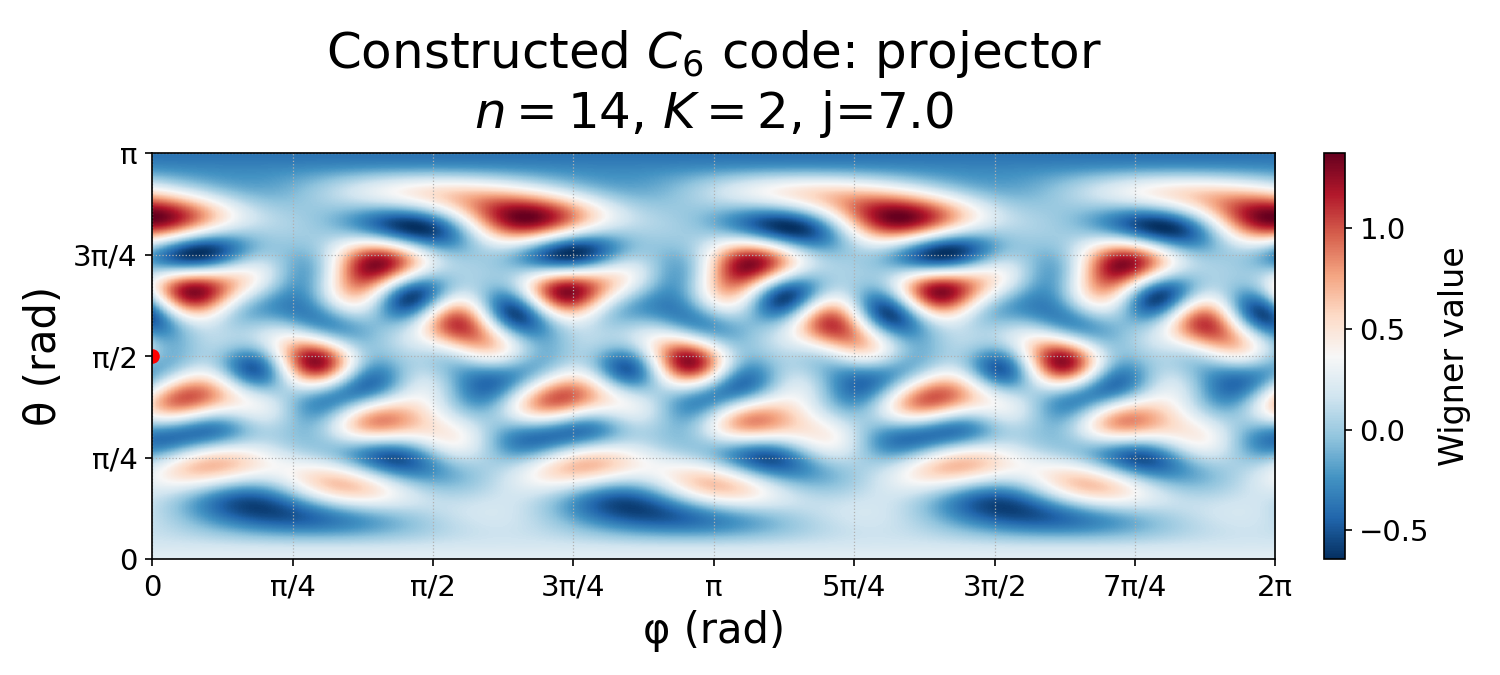}
        \caption{Constructed $C_6:\rho_1\oplus\rho_5$ code.}
    \end{subfigure}
    \caption{Best numerical code and constructed $C_{6}$ reference at $n=14$. At $\eta=0.90$, their respective entanglement fidelities are $F_e^{\mathrm{num}}=0.997669$ and $F_e^{\mathrm{con}}=0.997642$.}
    \label{fig:reducible-n14}
\end{figure}

\begin{figure}[H]
    \centering
    \begin{subfigure}{0.48\textwidth}
        \centering
        \includegraphics[width=\linewidth]{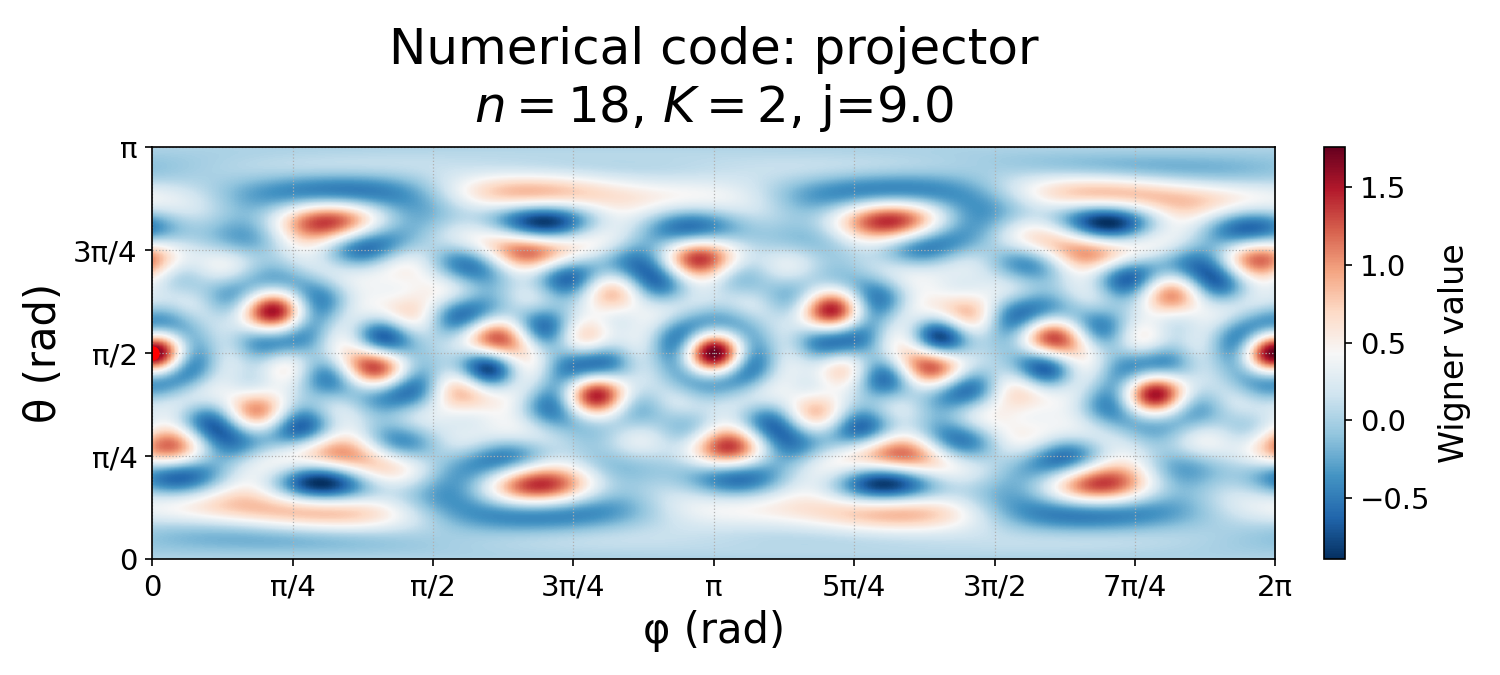}
        \caption{Numerically optimized code.}
    \end{subfigure}
    \hfill
    \begin{subfigure}{0.48\textwidth}
        \centering
        \includegraphics[width=\linewidth]{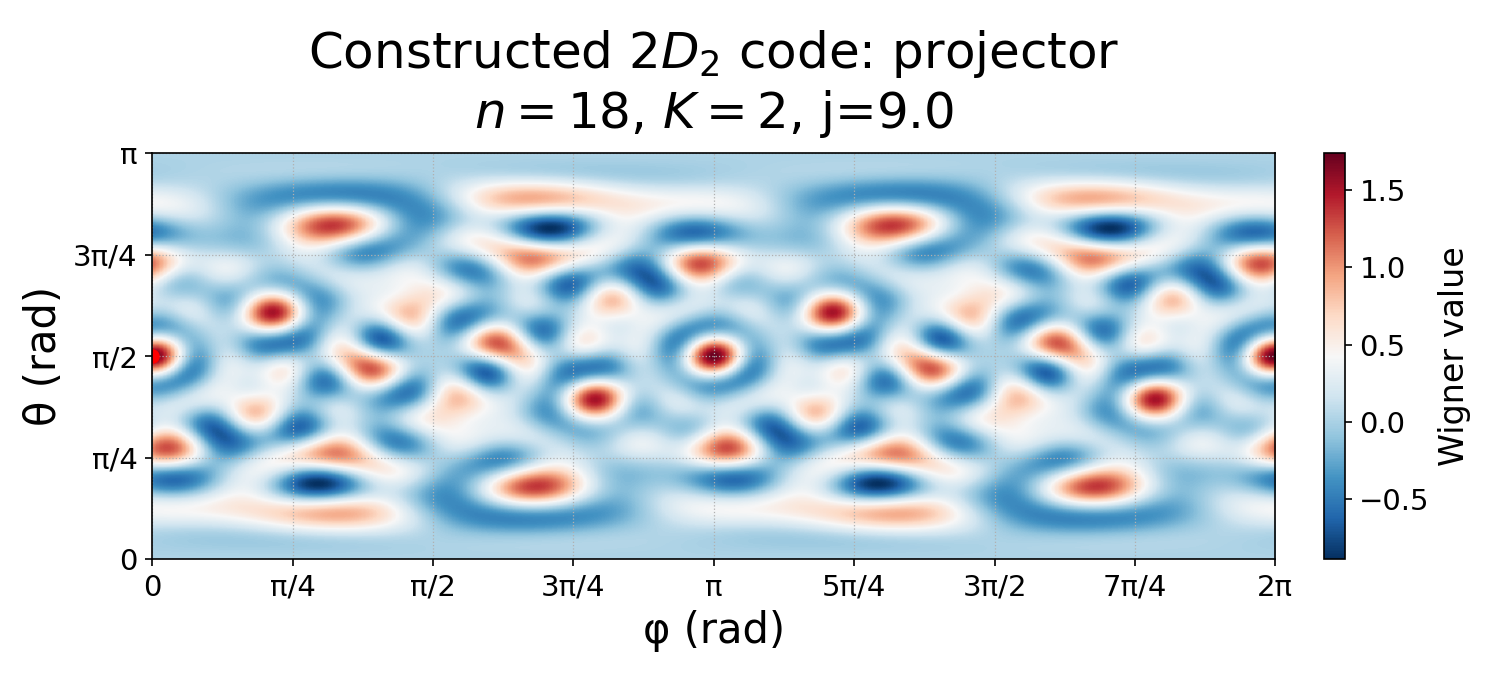}
        \caption{Constructed $2D_2:\rho_3\oplus\rho_4$ code.}
    \end{subfigure}
    \caption{Best numerical code and constructed $2D_{2}$ reference at $n=18$. At $\eta=0.90$, their respective entanglement fidelities are $F_e^{\mathrm{num}}=0.998739$ and $F_e^{\mathrm{con}}=0.998679$.}
    \label{fig:reducible-n18}
\end{figure}

\begin{figure}[H]
    \centering
    \begin{subfigure}{0.48\textwidth}
        \centering
        \includegraphics[width=\linewidth]{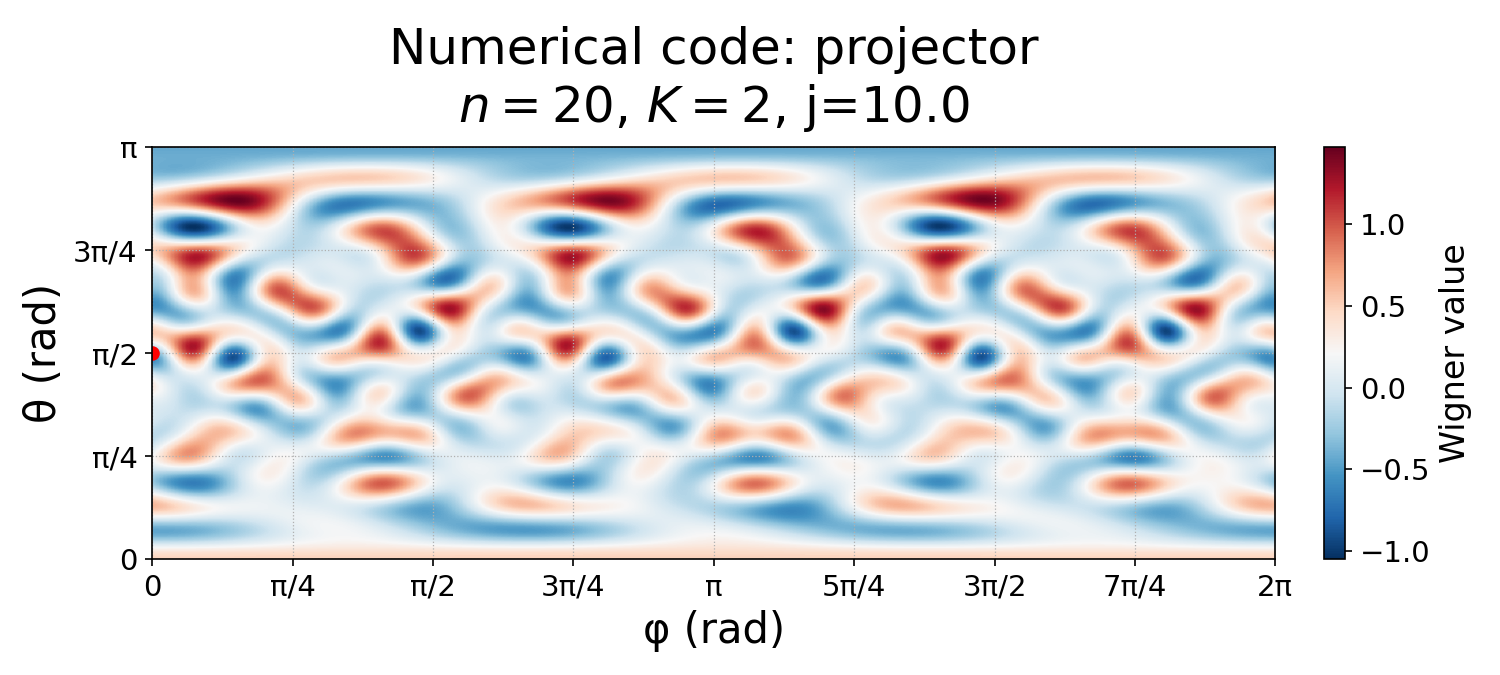}
        \caption{Numerically optimized code.}
    \end{subfigure}
    \hfill
    \begin{subfigure}{0.48\textwidth}
        \centering
        \includegraphics[width=\linewidth]{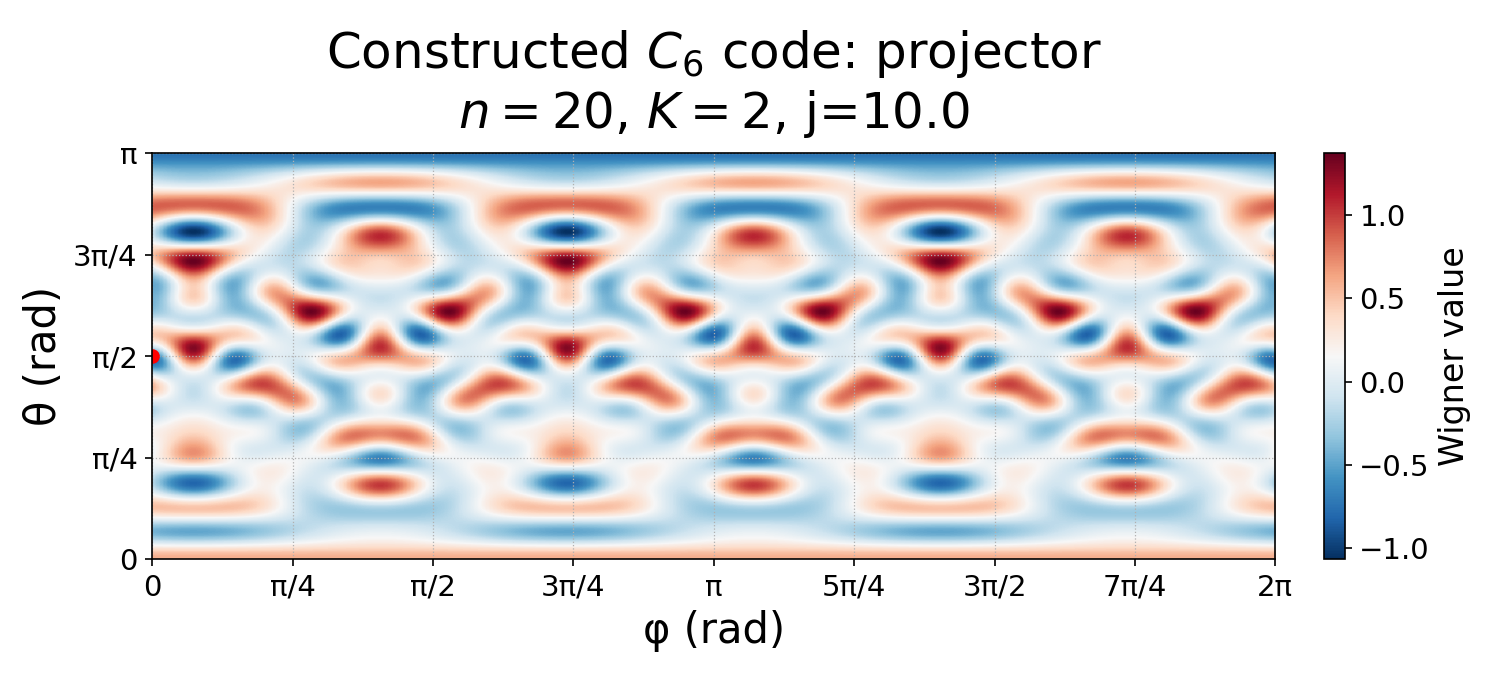}
        \caption{Constructed $C_6:\rho_1\oplus\rho_5$ code.}
    \end{subfigure}
    \caption{Best numerical code and constructed $C_{6}$ reference at $n=20$. At $\eta=0.90$, their respective entanglement fidelities are $F_e^{\mathrm{num}}=0.999063$ and $F_e^{\mathrm{con}}=0.999046$.}
    \label{fig:reducible-n20}
\end{figure}

\begin{figure}[H]
    \centering
    \begin{subfigure}{0.48\textwidth}
        \centering
        \includegraphics[width=\linewidth]{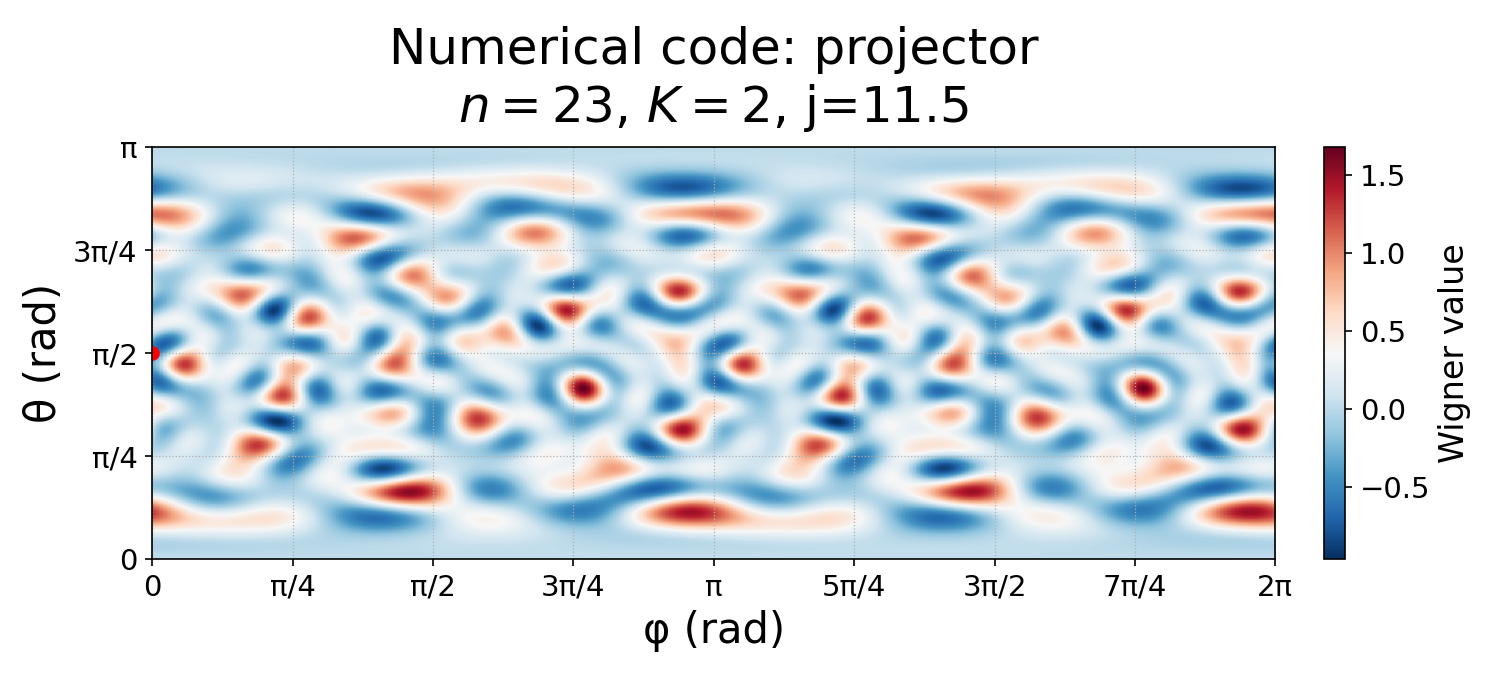}
        \caption{Numerically optimized code.}
    \end{subfigure}
    \hfill
    \begin{subfigure}{0.48\textwidth}
        \centering
        \includegraphics[width=\linewidth]{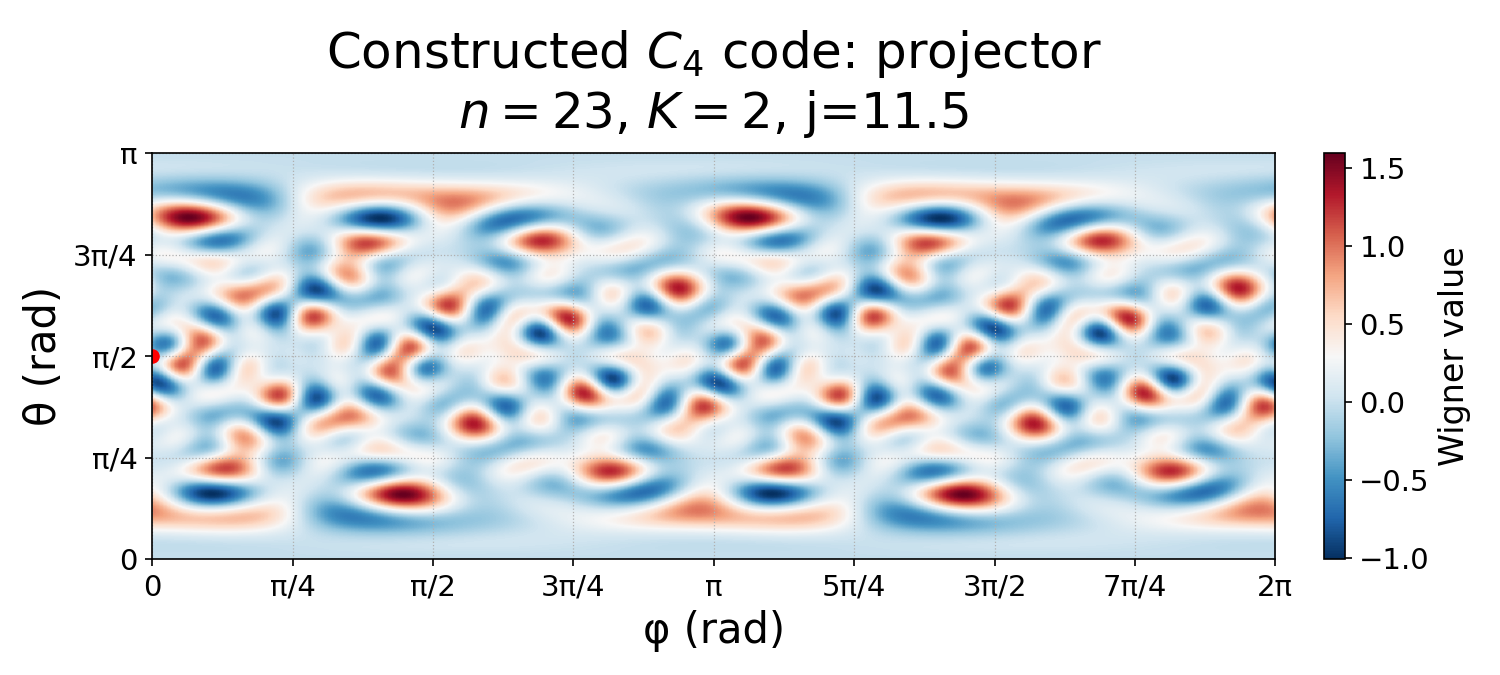}
        \caption{Constructed $C_4:\rho_3\oplus\rho_4$ code.}
    \end{subfigure}
    \caption{Best numerical code and constructed $C_{4}$ reference at $n=23$. At $\eta=0.90$, their respective entanglement fidelities are $F_e^{\mathrm{num}}=0.999359$ and $F_e^{\mathrm{con}}=0.999338$.}
    \label{fig:reducible-n23}
\end{figure}

\begin{figure}[H]
    \centering
    \begin{subfigure}{0.48\textwidth}
        \centering
        \includegraphics[width=\linewidth]{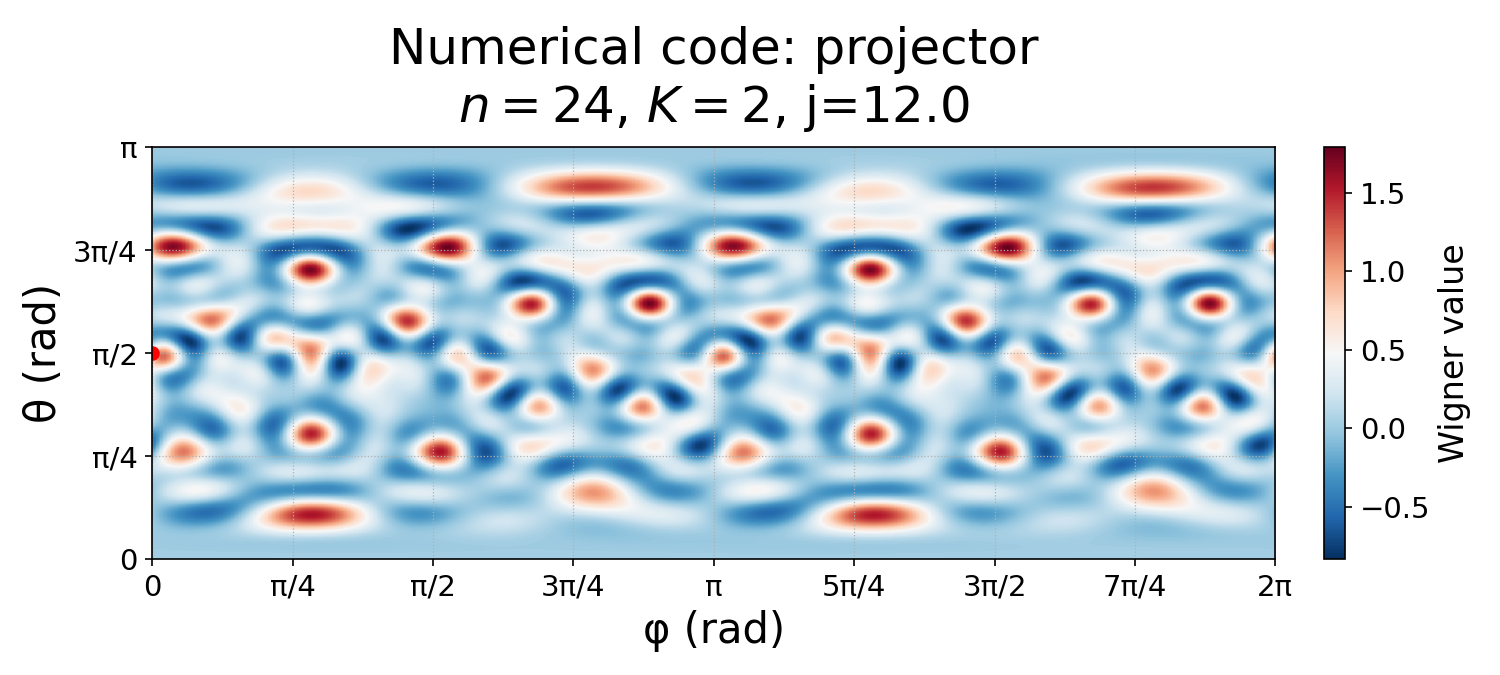}
        \caption{Numerically optimized code.}
    \end{subfigure}
    \hfill
    \begin{subfigure}{0.48\textwidth}
        \centering
        \includegraphics[width=\linewidth]{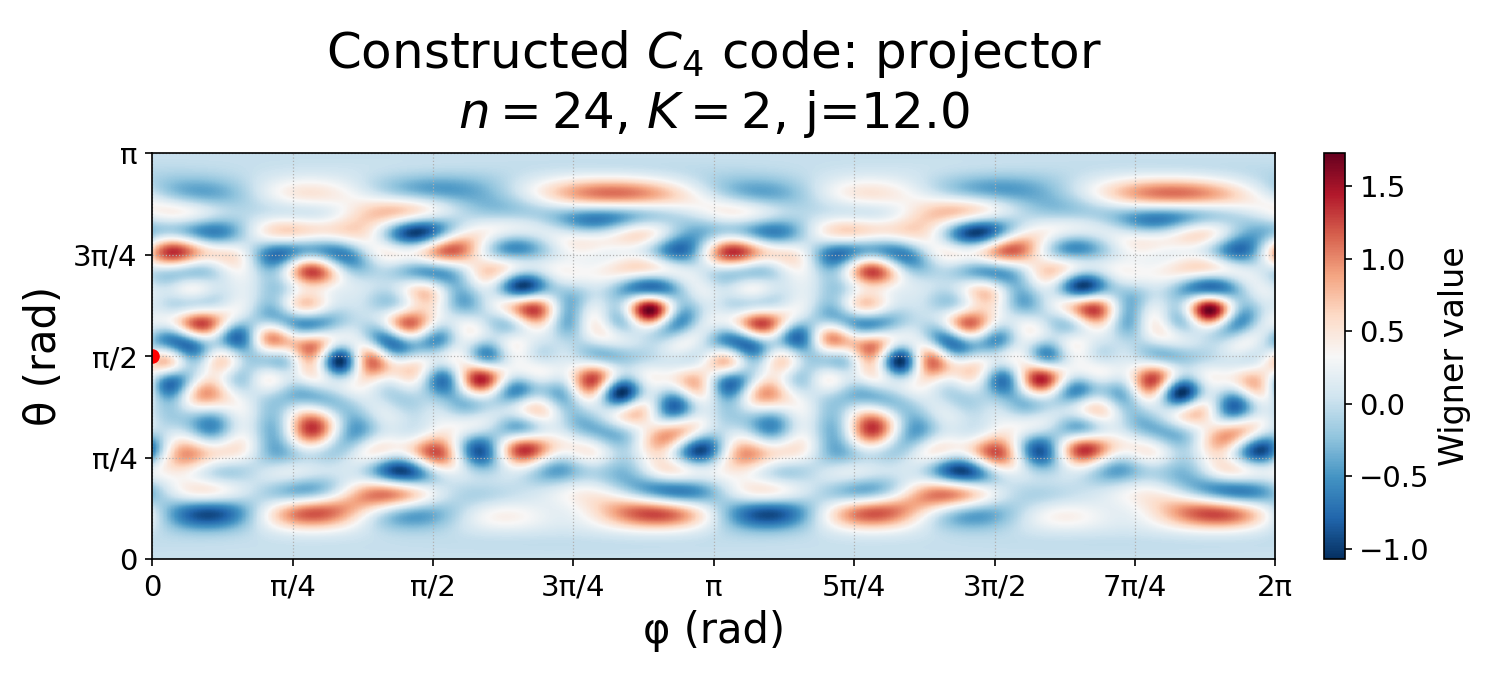}
        \caption{Constructed $C_4:\rho_1\oplus\rho_2$ code.}
    \end{subfigure}
    \caption{Best numerical code and constructed $C_{4}$ reference at $n=24$. At $\eta=0.90$, their respective entanglement fidelities are $F_e^{\mathrm{num}}=0.999422$ and $F_e^{\mathrm{con}}=0.999391$.}
    \label{fig:reducible-n24}
\end{figure}

\FloatBarrier

\subsection{Lower-transmissivity symmetry plateaus}

For $6\leq n\leq13$, we also examine selected parameter points in the range $0.70\leq\eta\leq0.90$. At each displayed point, we use the highest saved-fidelity run from our optimization. Table~\ref{tab:lower-eta-numerical-code-symmetry-classification} records the resulting symmetry assignments, including changes of the preferred group as $\eta$ varies.

\begin{table}[H]
    \centering
    \footnotesize
    \renewcommand{\arraystretch}{1.08}
    \begin{tabular}{c c l c c}
        \hline
        $n$ & $\eta$ & Group and irrep content & $\operatorname{Comm}_L(P,G)$ & $\Omega_G(P)$ \\
        \hline
        $6$ & $0.70$ & $C_{4}:\rho_{1}\oplus\rho_{2}$ & $3.481143\times 10^{-10}$ & $0.9999999998$ \\
        $6$ & $0.80$ & $C_{6}:\rho_{1}\oplus\rho_{5}$ & $3.059112\times 10^{-12}$ & $0.999999999998$ \\
        $7$ & $0.70$ & $C_{4}:\rho_{3}\oplus\rho_{4}$ & $5.132143\times 10^{-12}$ & $0.999999999997$ \\
        $7$ & $0.75$ & $C_{6}:\rho_{4}\oplus\rho_{6}$ & $1.141150\times 10^{-12}$ & $0.9999999999994$ \\
        $7$ & $0.80$ & $C_{4}:\rho_{3}\oplus\rho_{4}$ & $9.810392\times 10^{-11}$ & $0.99999999995$ \\
        $7$ & $0.85$ & $C_{10}:\rho_{6}\oplus\rho_{8}$ & $1.089124\times 10^{-6}$ & $0.9999994$ \\
        $8$ & $0.70$ & $C_{6}:\rho_{1}\oplus\rho_{5}$ & $2.079371\times 10^{-12}$ & $0.999999999998$ \\
        $9$ & $0.70$ & $C_{6}:\rho_{2}\oplus\rho_{6}$ & $2.646346\times 10^{-11}$ & $0.99999999998$ \\
        $10$ & $0.70$ & $C_{6}:\rho_{3}\oplus\rho_{5}$ & $3.510255\times 10^{-9}$ & $0.999999998$ \\
        $10$ & $0.80$ & $C_{4}:\rho_{1}\oplus\rho_{2}$ & $7.645625\times 10^{-12}$ & $0.999999999996$ \\
        $11$ & $0.70$ & $C_{4}:\rho_{3}\oplus\rho_{4}$ & $1.814410\times 10^{-12}$ & $0.9999999999990$ \\
        $11$ & $0.75$ & $C_{10}:\rho_{4}\oplus\rho_{6}$ & $1.595834\times 10^{-2}$ & $0.9919$ \\
        $12$ & $0.70$ & $C_{10}:\rho_{5}\oplus\rho_{7}$ & $1.811684\times 10^{-3}$ & $0.9990$ \\
        $12$ & $0.80$ & $C_{8}:\rho_{2}\oplus\rho_{4}$ & $3.053948\times 10^{-12}$ & $0.999999999998$ \\
        $13$ & $0.70$ & $C_{10}:\rho_{8}\oplus\rho_{10}$ & $7.876081\times 10^{-3}$ & $0.9960$ \\
        \hline
    \end{tabular}
    \caption{Symmetry classification of the sweep winners at the displayed lower-transmissivity points. We include the $\eta=0.70$ endpoint for every $6\leq n\leq13$ and selected additional symmetry branches. Reversing a cyclic symmetry axis or conjugating the code can change the displayed irrep labels; the labels here use the recorded classification frame. Overlap entries are truncated after the first decimal digit other than $9$, retaining at least four decimal places.}
    \label{tab:lower-eta-numerical-code-symmetry-classification}
\end{table}

We finally present the projector spin-Wigner functions of all lower-transmissivity numerical projectors listed in Table~\ref{tab:lower-eta-numerical-code-symmetry-classification} in Figures~\ref{fig:low-eta-c4}--\ref{fig:low-eta-remaining}. 

\begin{figure}[H]
    \centering
    \begin{subfigure}[t]{0.48\textwidth}
        \centering
        \includegraphics[width=\linewidth]{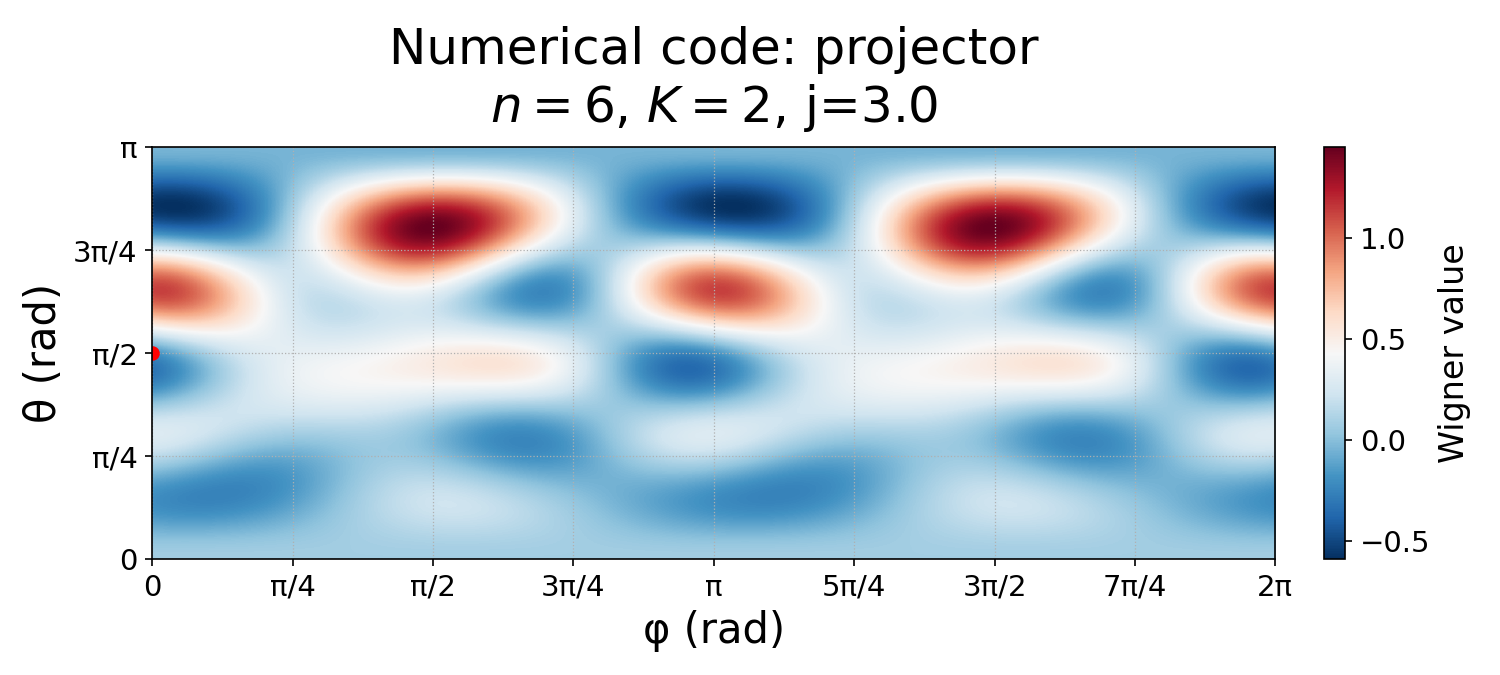}
        \caption{$n=6$, $\eta=0.70$, $C_{4}:\rho_{1}\oplus\rho_{2}$, $F_e=0.858964$.}
    \end{subfigure}
    \hfill
    \begin{subfigure}[t]{0.48\textwidth}
        \centering
        \includegraphics[width=\linewidth]{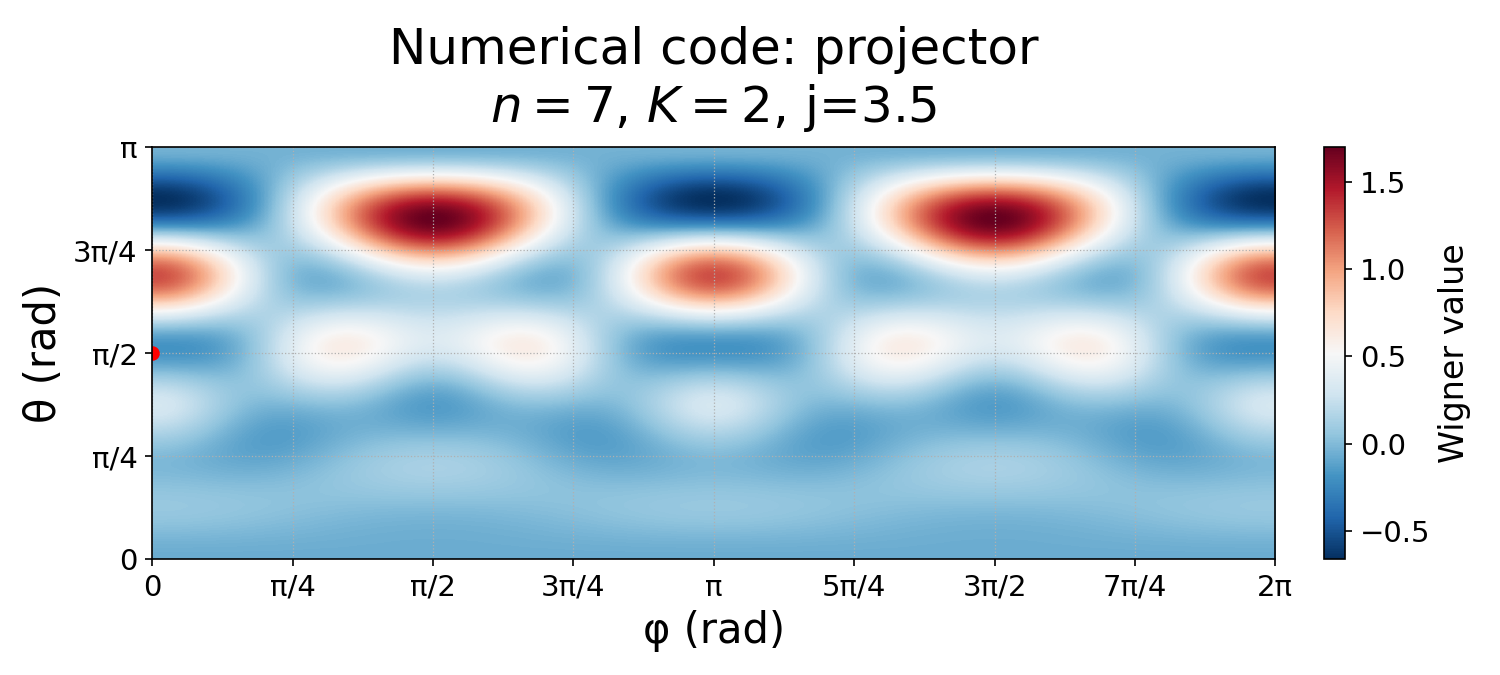}
        \caption{$n=7$, $\eta=0.70$, $C_{4}:\rho_{3}\oplus\rho_{4}$, $F_e=0.865803$.}
    \end{subfigure}

    \medskip

    \begin{subfigure}[t]{0.48\textwidth}
        \centering
        \includegraphics[width=\linewidth]{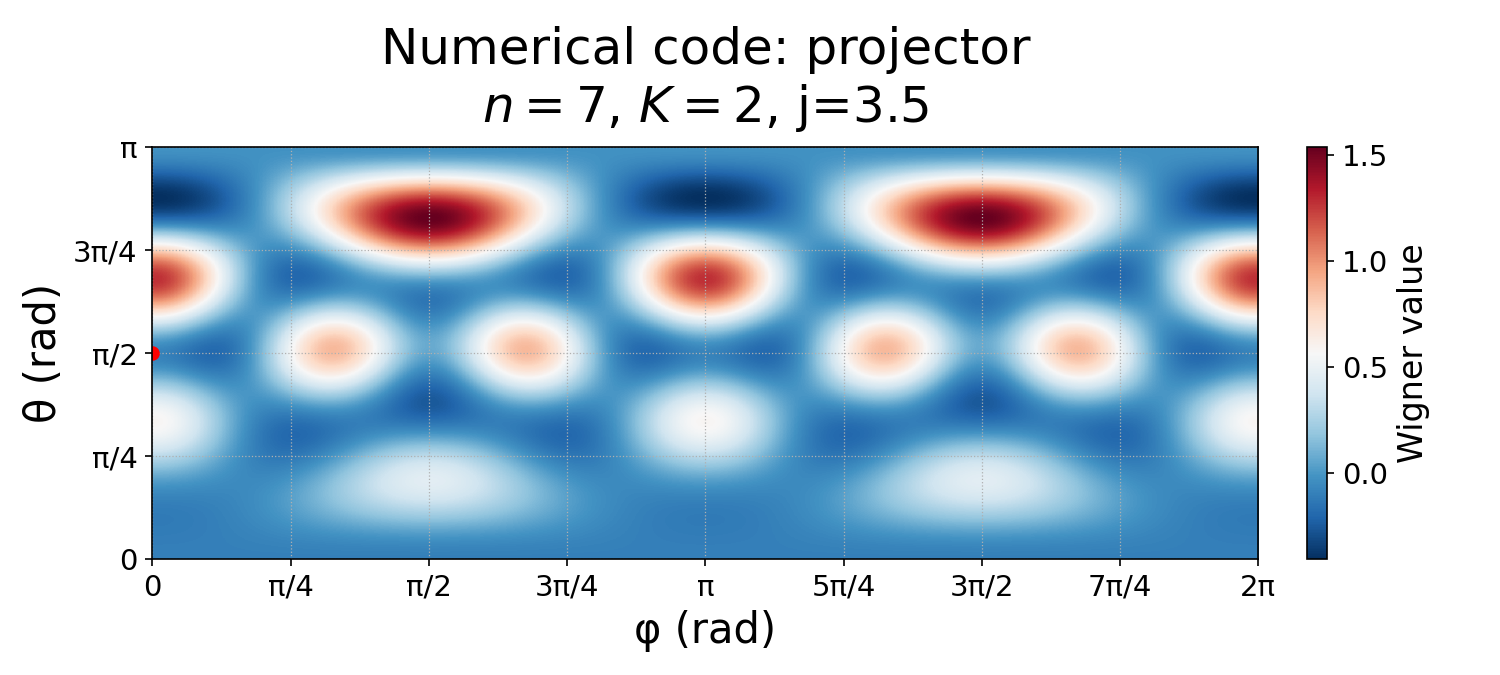}
        \caption{$n=7$, $\eta=0.80$, $C_{4}:\rho_{3}\oplus\rho_{4}$, $F_e=0.941701$.}
    \end{subfigure}
    \hfill
    \begin{subfigure}[t]{0.48\textwidth}
        \centering
        \includegraphics[width=\linewidth]{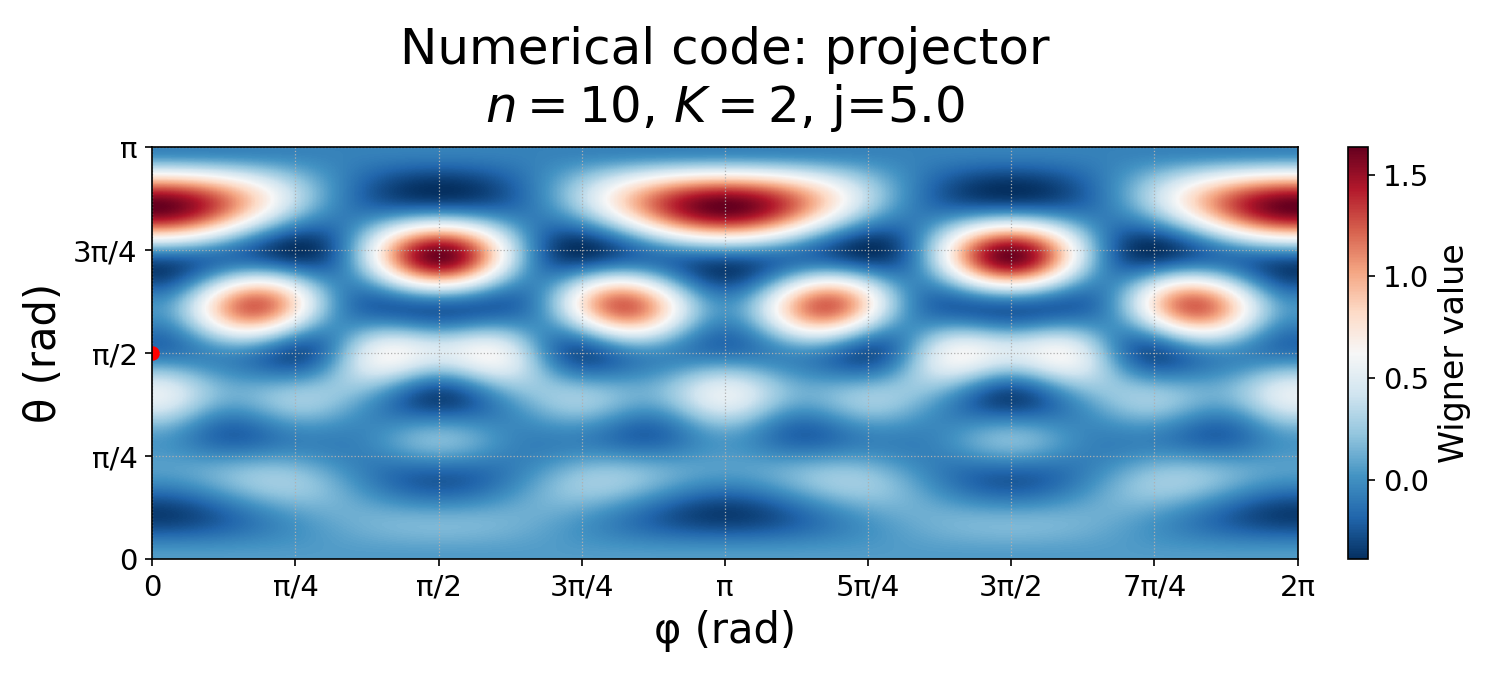}
        \caption{$n=10$, $\eta=0.80$, $C_{4}:\rho_{1}\oplus\rho_{2}$, $F_e=0.956684$.}
    \end{subfigure}
    \caption{Group-aligned projector spin-Wigner functions of four selected low-$\eta$ numerical codes with $C_4$ symmetry.}
    \label{fig:low-eta-c4}
\end{figure}

\begin{figure}[H]
    \centering
    \begin{subfigure}[t]{0.48\textwidth}
        \centering
        \includegraphics[width=\linewidth]{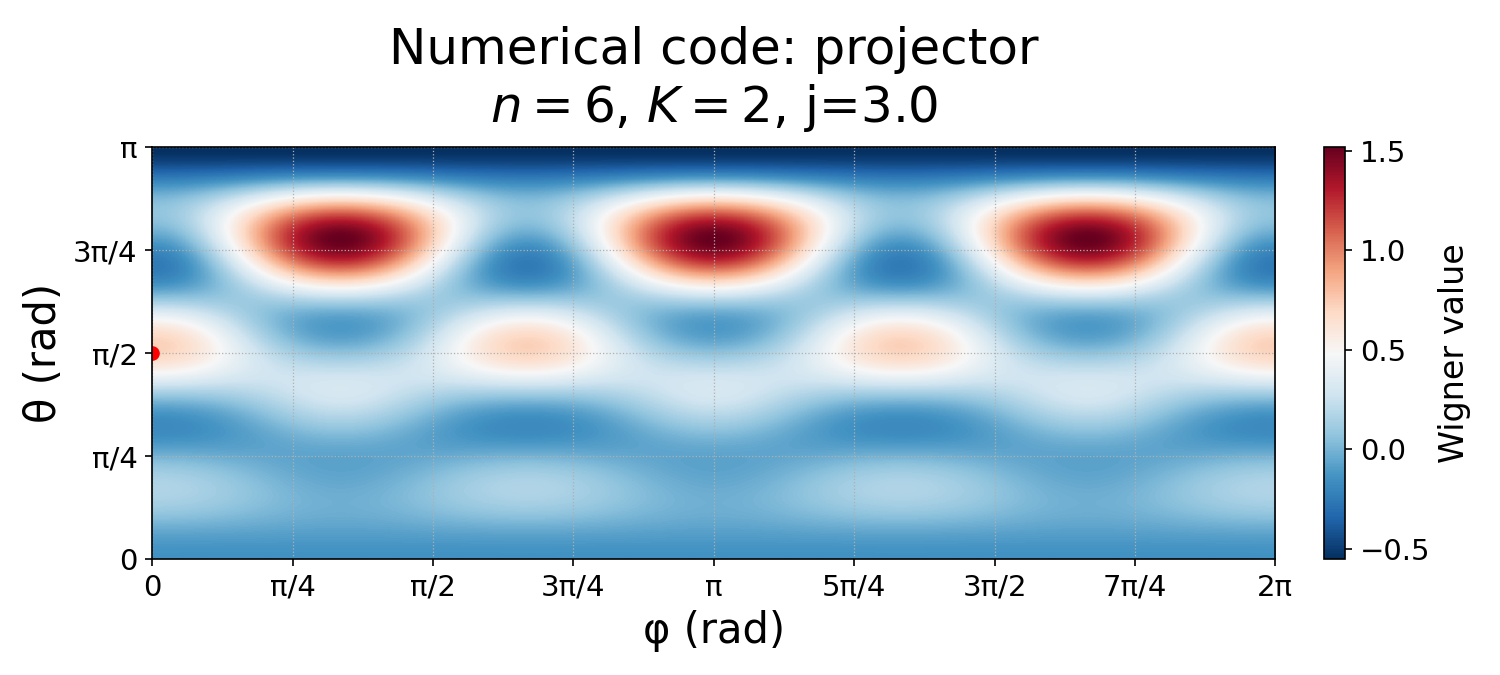}
        \caption{$n=6$, $\eta=0.80$, $C_{6}:\rho_{1}\oplus\rho_{5}$, $F_e=0.931881$.}
    \end{subfigure}
    \hfill
    \begin{subfigure}[t]{0.48\textwidth}
        \centering
        \includegraphics[width=\linewidth]{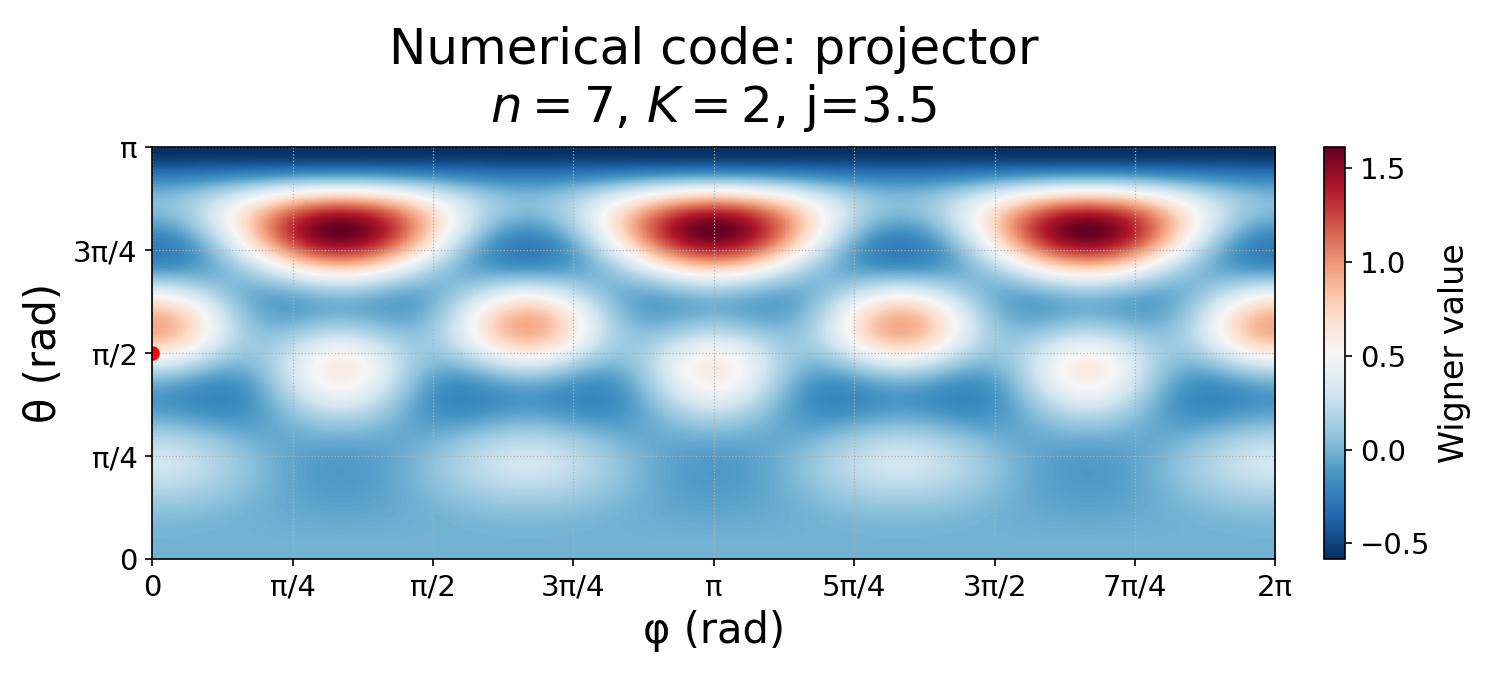}
        \caption{$n=7$, $\eta=0.75$, $C_{6}:\rho_{4}\oplus\rho_{6}$, $F_e=0.905719$.}
    \end{subfigure}

    \medskip

    \begin{subfigure}[t]{0.48\textwidth}
        \centering
        \includegraphics[width=\linewidth]{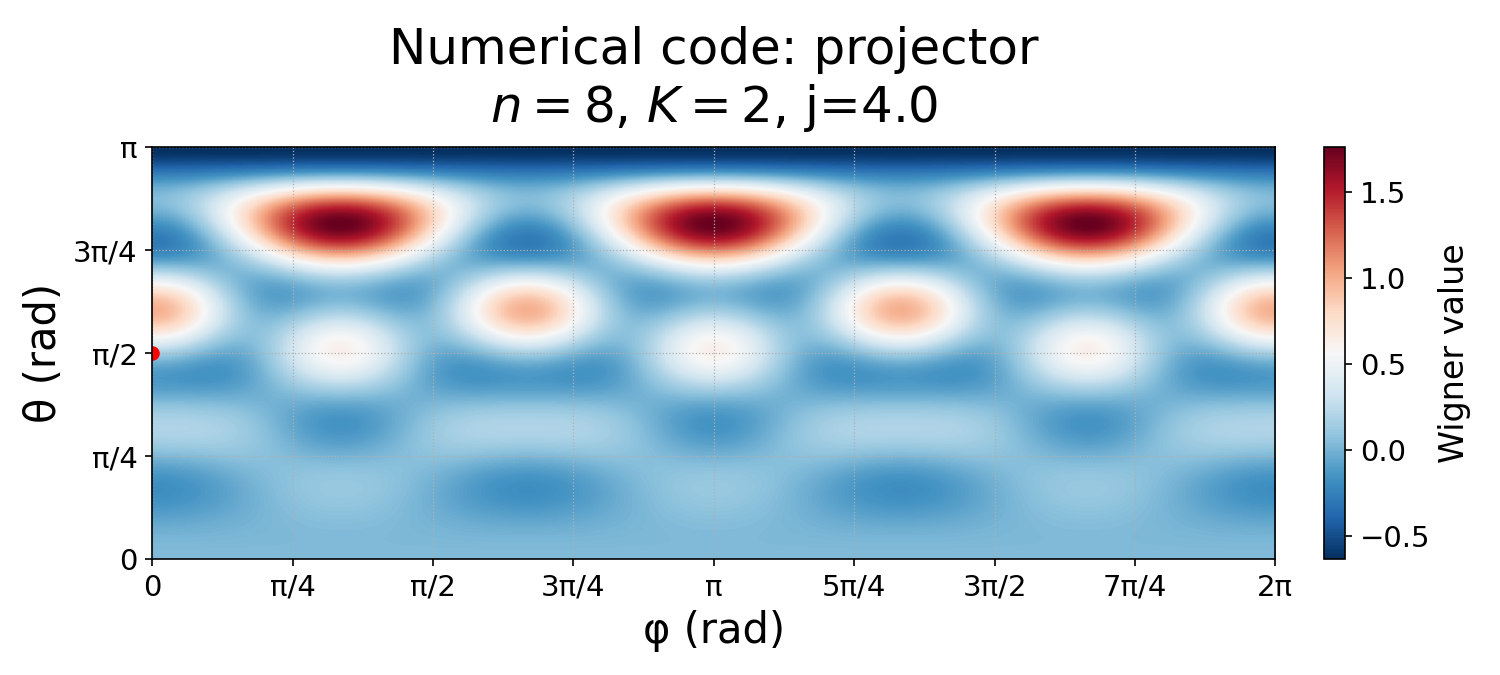}
        \caption{$n=8$, $\eta=0.70$, $C_{6}:\rho_{1}\oplus\rho_{5}$, $F_e=0.872531$.}
    \end{subfigure}
    \hfill
    \begin{subfigure}[t]{0.48\textwidth}
        \centering
        \includegraphics[width=\linewidth]{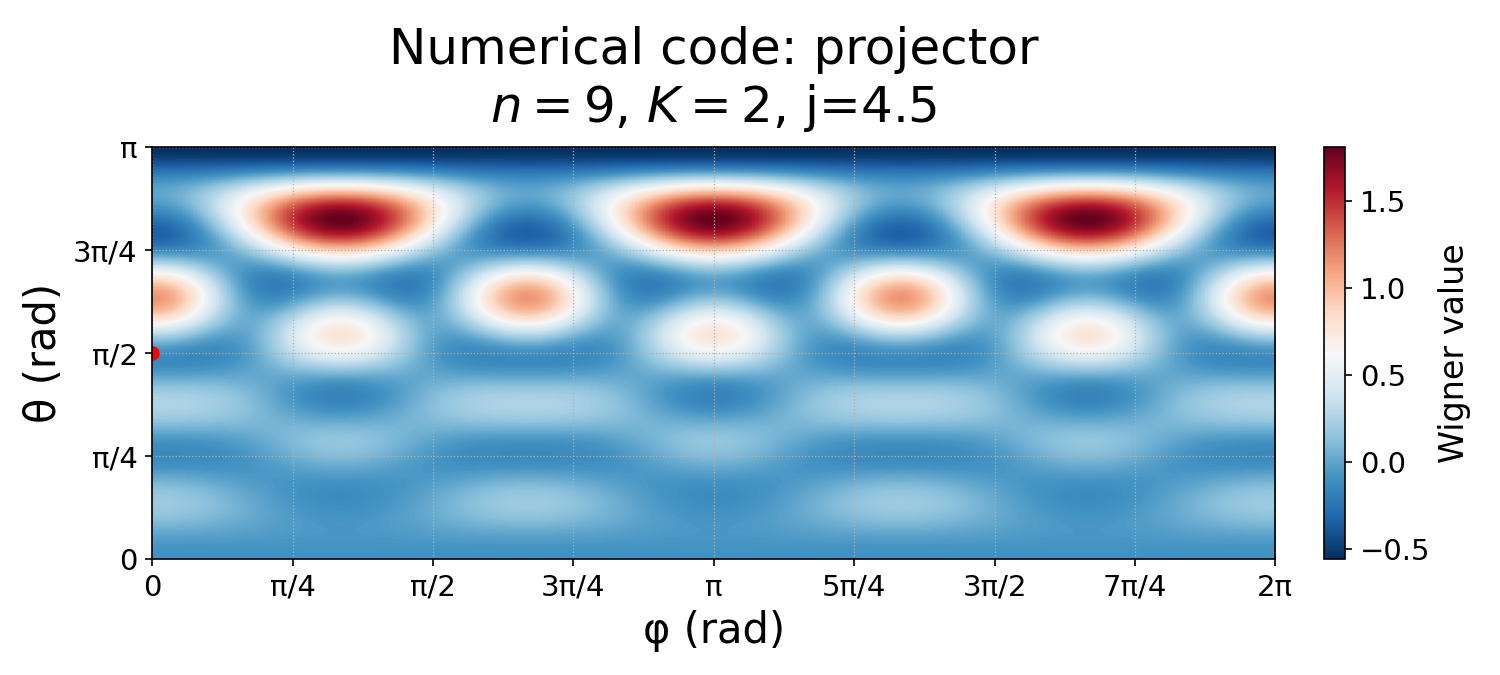}
        \caption{$n=9$, $\eta=0.70$, $C_{6}:\rho_{2}\oplus\rho_{6}$, $F_e=0.878327$.}
    \end{subfigure}
    \caption{Group-aligned projector spin-Wigner functions of four selected low-$\eta$ numerical codes with $C_6$ symmetry.}
    \label{fig:low-eta-c6}
\end{figure}

\begin{figure}[H]
    \centering
    \begin{subfigure}[t]{0.48\textwidth}
        \centering
        \includegraphics[width=\linewidth]{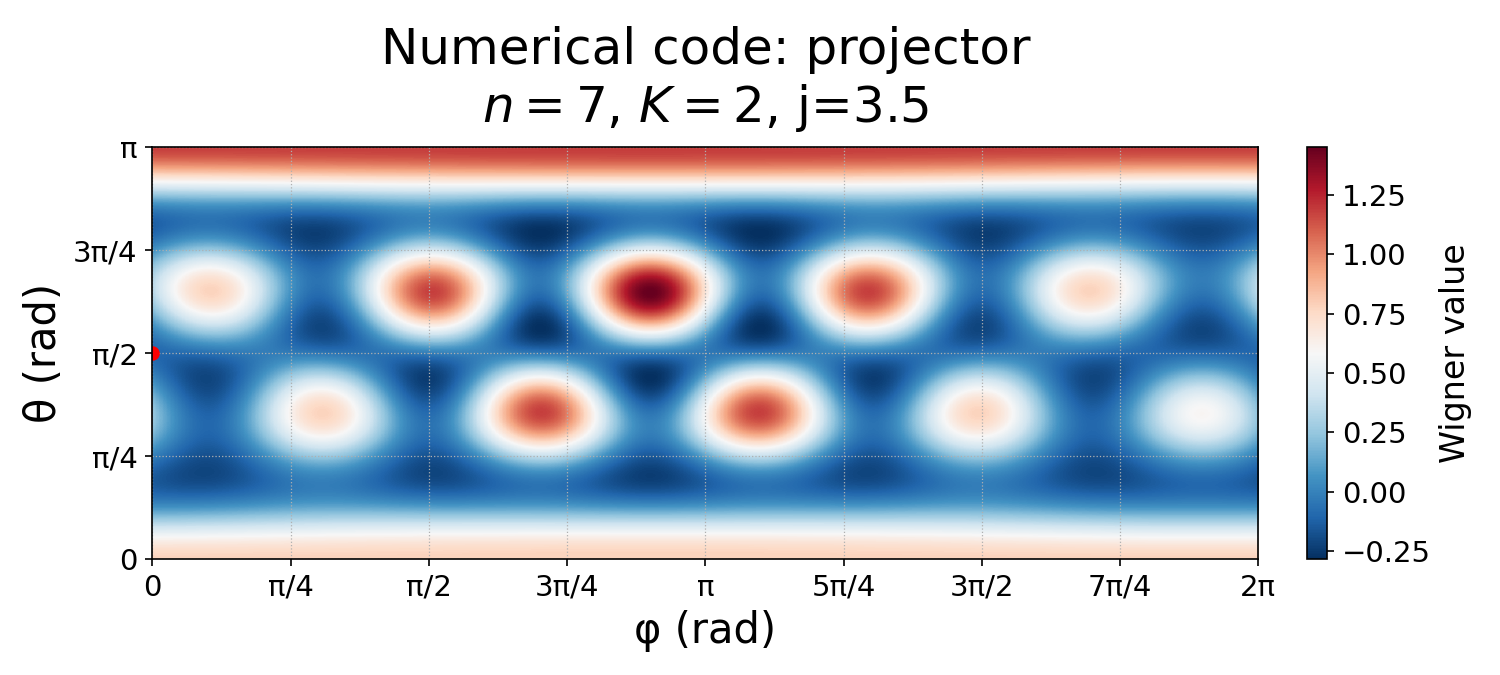}
        \caption{$n=7$, $\eta=0.85$, $C_{10}:\rho_{6}\oplus\rho_{8}$, $F_e=0.970657$.}
    \end{subfigure}
    \hfill
    \begin{subfigure}[t]{0.48\textwidth}
        \centering
        \includegraphics[width=\linewidth]{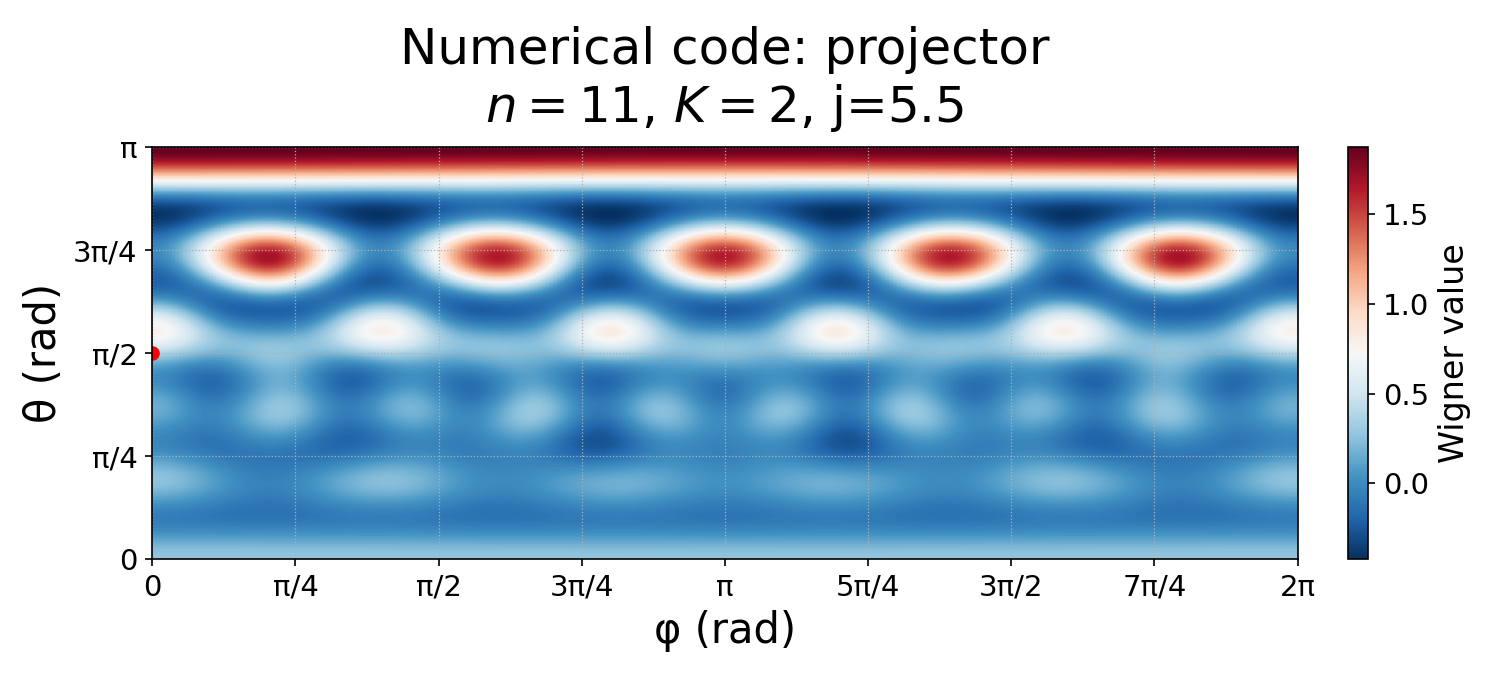}
        \caption{$n=11$, $\eta=0.75$, $C_{10}:\rho_{4}\oplus\rho_{6}$, $F_e=0.926577$.}
    \end{subfigure}

    \medskip

    \begin{subfigure}[t]{0.48\textwidth}
        \centering
        \includegraphics[width=\linewidth]{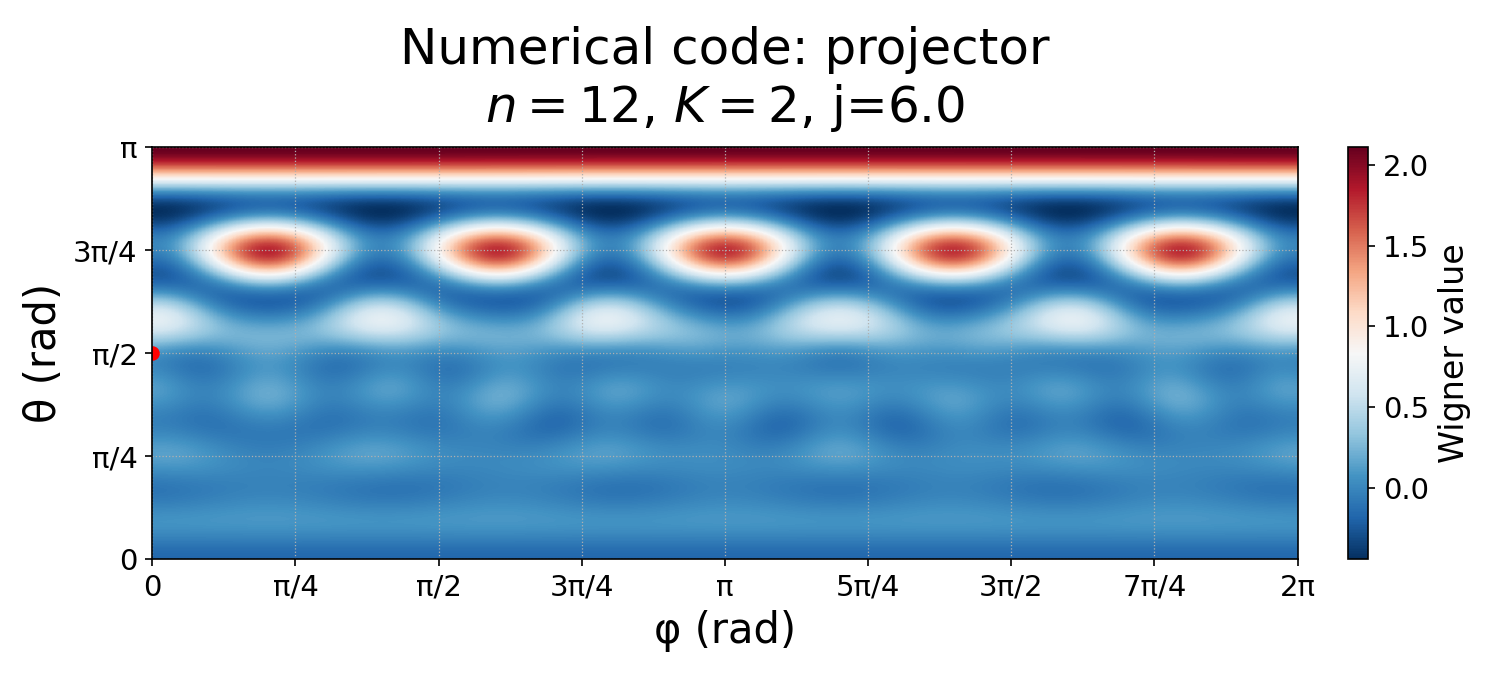}
        \caption{$n=12$, $\eta=0.70$, $C_{10}:\rho_{5}\oplus\rho_{7}$, $F_e=0.888776$.}
    \end{subfigure}
    \hfill
    \begin{subfigure}[t]{0.48\textwidth}
        \centering
        \includegraphics[width=\linewidth]{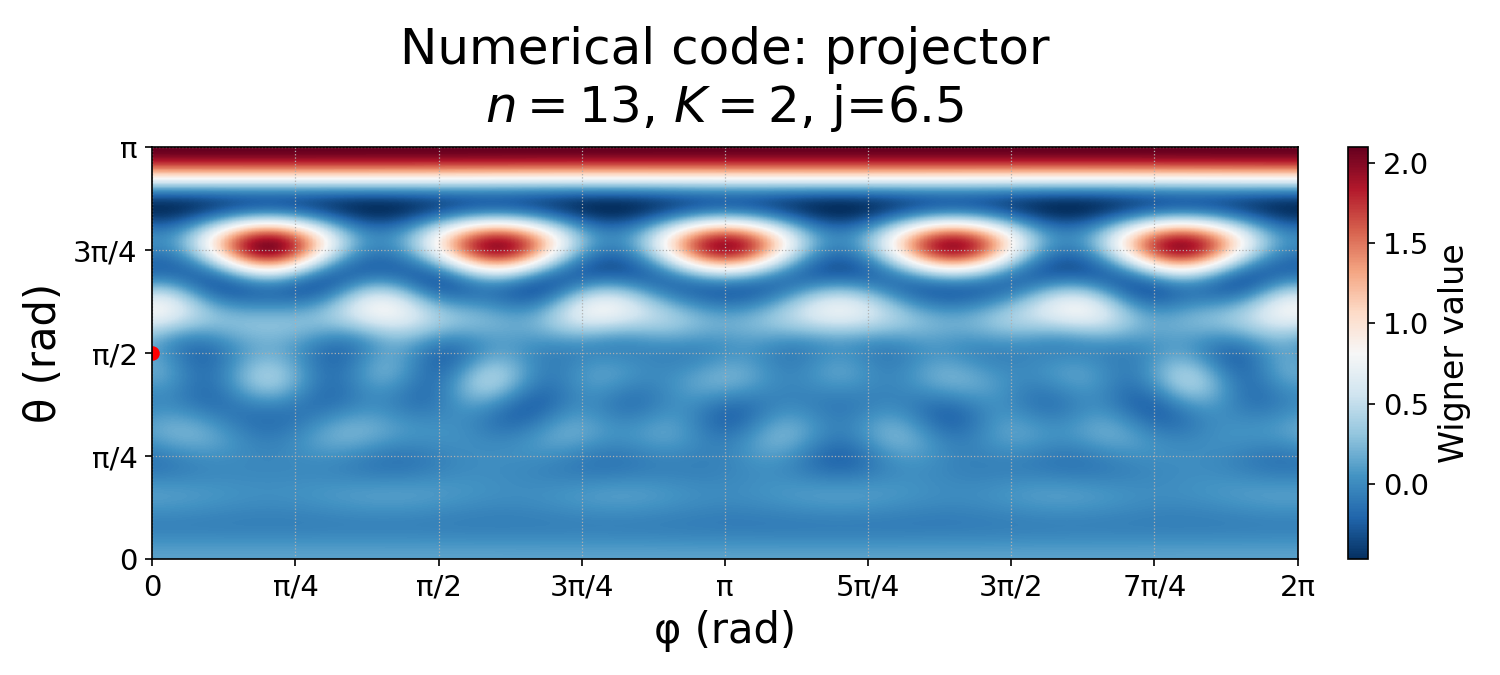}
        \caption{$n=13$, $\eta=0.70$, $C_{10}:\rho_{8}\oplus\rho_{10}$, $F_e=0.891057$.}
    \end{subfigure}
    \caption{Group-aligned projector spin-Wigner functions of the selected low-$\eta$ numerical codes with $C_{10}$ symmetry.}
    \label{fig:low-eta-c10}
\end{figure}

\begin{figure}[H]
    \centering
    \begin{subfigure}[t]{0.48\textwidth}
        \centering
        \includegraphics[width=\linewidth]{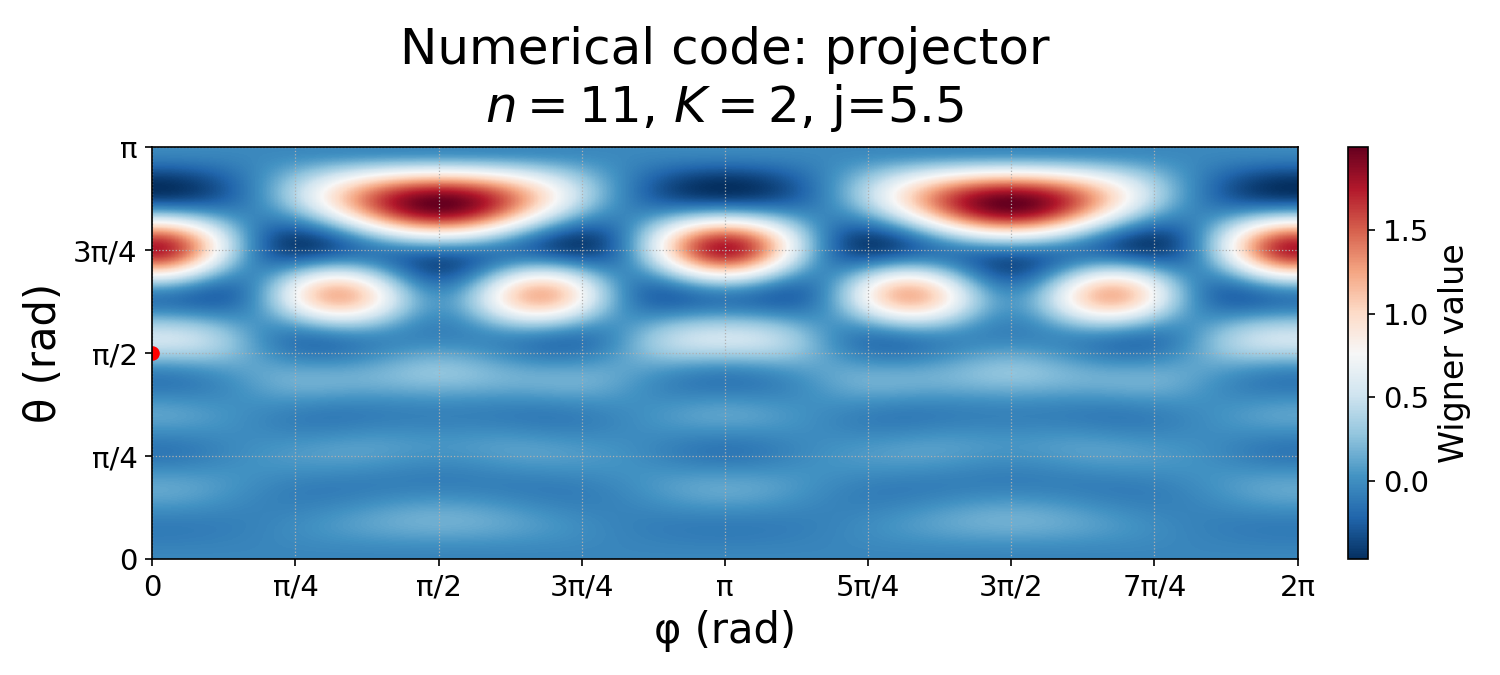}
        \caption{$n=11$, $\eta=0.70$, $C_{4}:\rho_{3}\oplus\rho_{4}$, $F_e=0.886079$.}
    \end{subfigure}
    \hfill
    \begin{subfigure}[t]{0.48\textwidth}
        \centering
        \includegraphics[width=\linewidth]{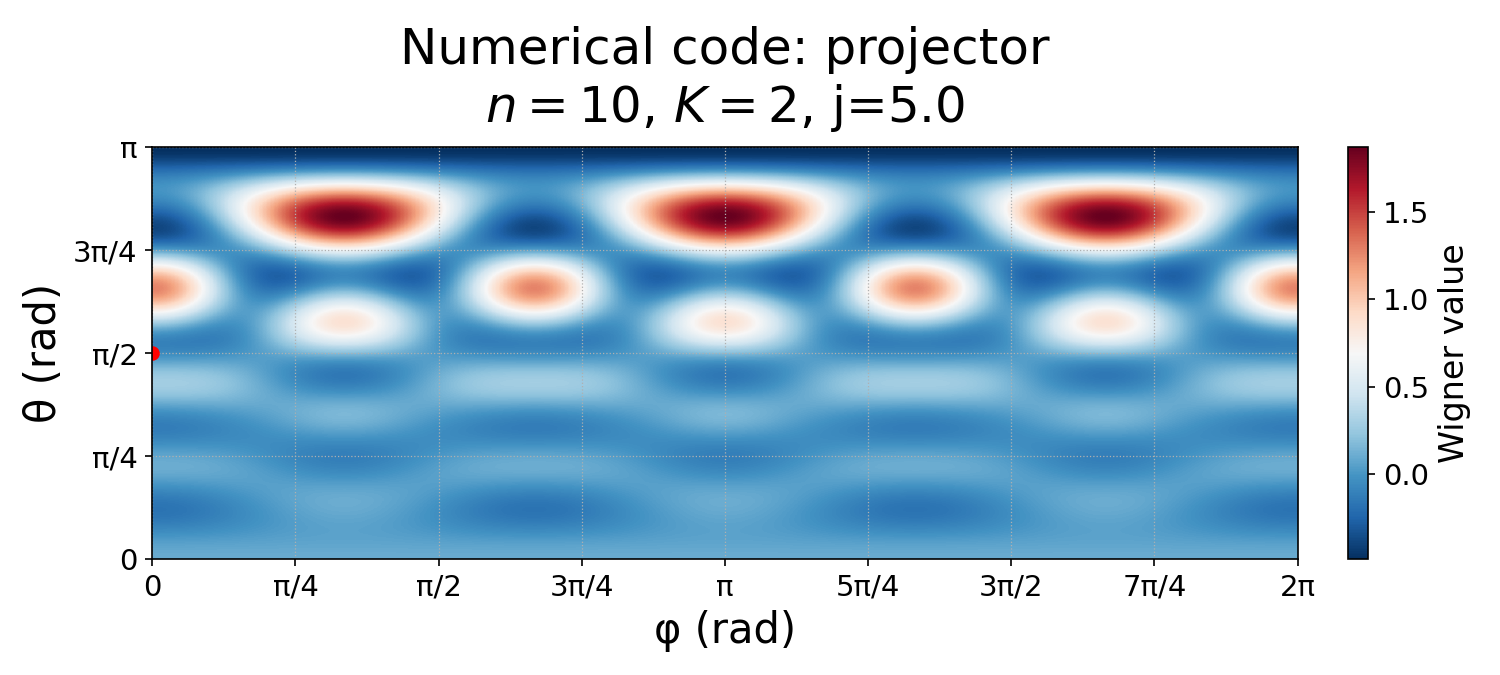}
        \caption{$n=10$, $\eta=0.70$, $C_{6}:\rho_{3}\oplus\rho_{5}$, $F_e=0.882753$.}
    \end{subfigure}

    \medskip

    \begin{subfigure}[t]{0.48\textwidth}
        \centering
        \includegraphics[width=\linewidth]{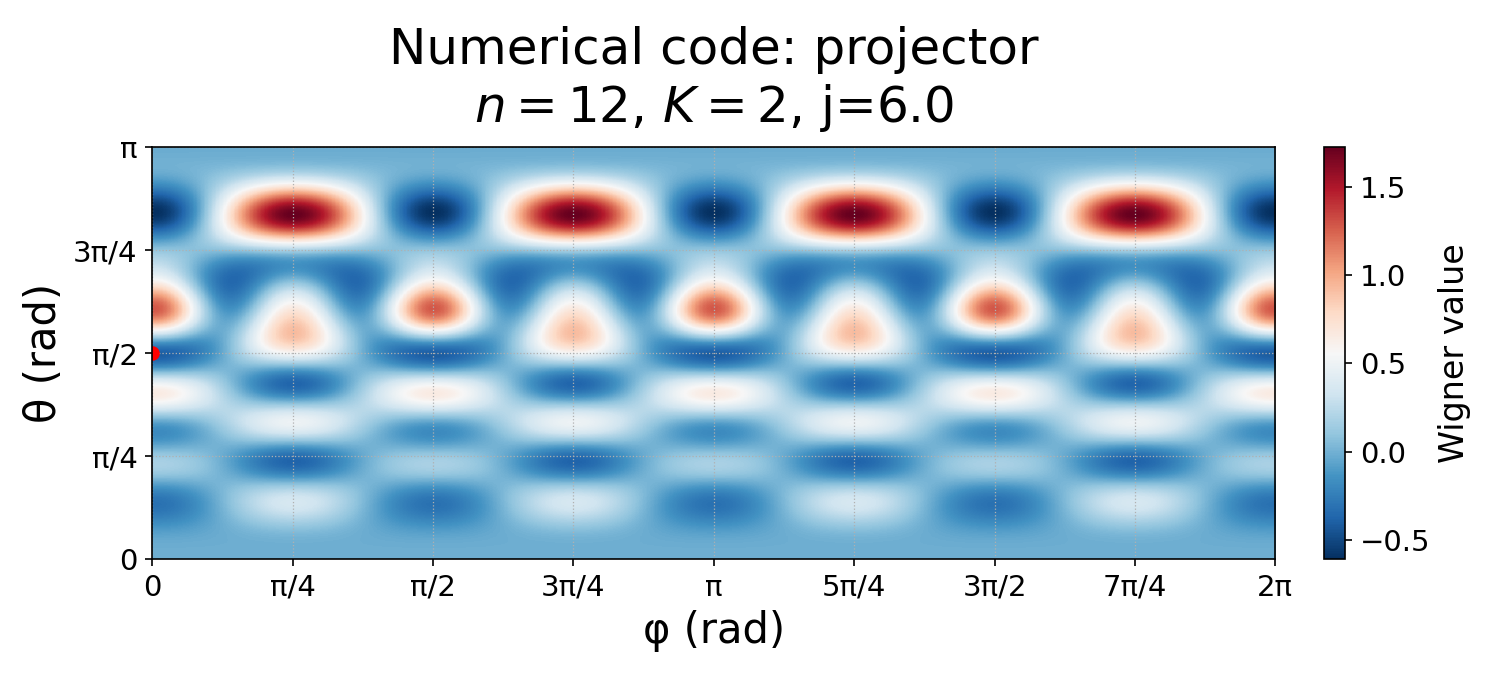}
        \caption{$n=12$, $\eta=0.80$, $C_{8}:\rho_{2}\oplus\rho_{4}$, $F_e=0.962246$.}
    \end{subfigure}
    \caption{Group-aligned projector spin-Wigner functions of the remaining selected low-$\eta$ numerical codes.}
    \label{fig:low-eta-remaining}
\end{figure}

\FloatBarrier

\subsection{Secondary basins}
\label{sec:secondary-basins}

Different random starts can sometimes reach distinct code structures at the same $(n,\eta)$. At $n=11$ and $\eta=0.90$, three of the five starts yield the $C_4$ branch shown in Figure~\ref{fig:reducible-n11}, while two yield the $C_8$ branch in Figure~\ref{fig:secondary-n11}. The two $C_8$ numerical projectors have best-found normalized overlap $0.9999982$ after $SU(2)$ alignment and occupation-basis conjugation, whereas the best saved representatives of the $C_4$ and $C_8$ branches have best-found overlap $0.7592$. Both $C_8$ runs are saved timeout checkpoints, so their agreement demonstrates reproducible structure without certifying convergence to a local optimum.

\begin{figure}[H]
    \centering
    \begin{subfigure}{0.48\textwidth}
        \centering
        \includegraphics[width=\linewidth]{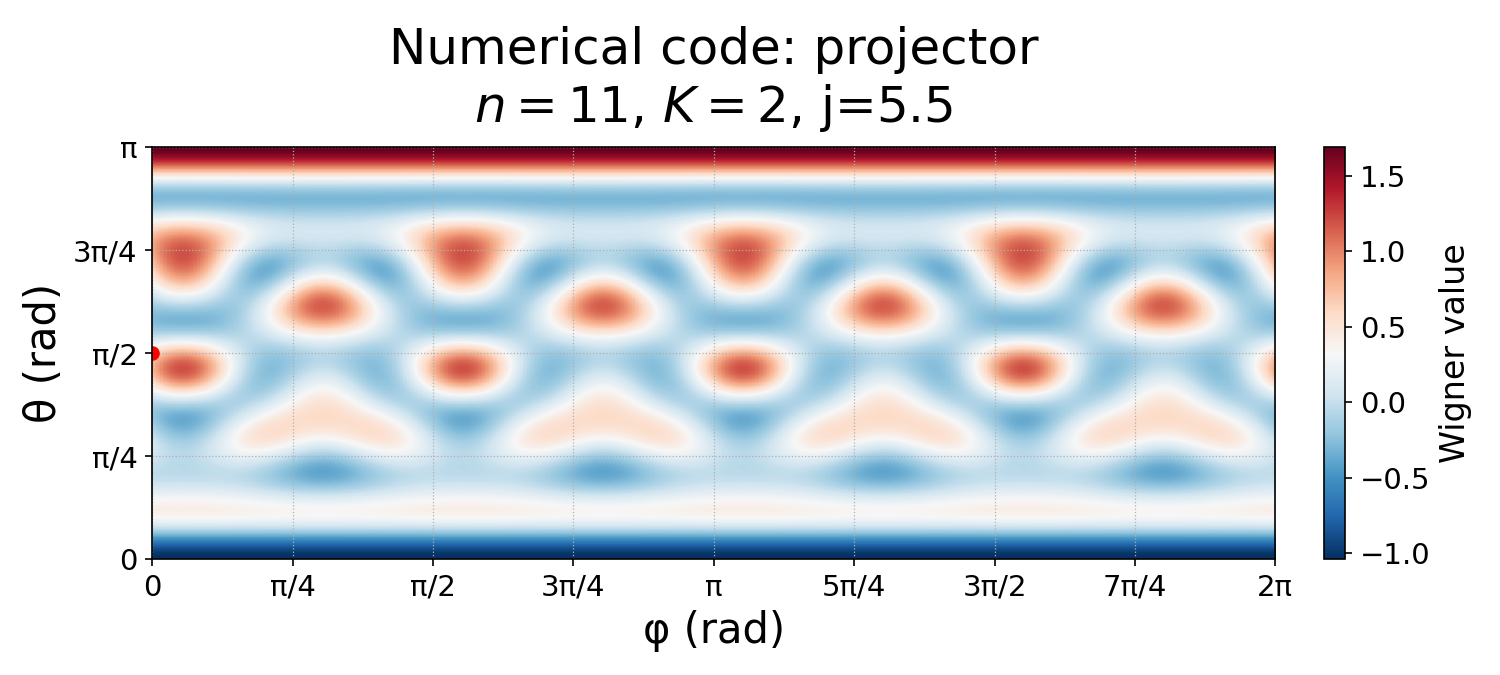}
        \caption{Numerical $C_8$ branch.}
    \end{subfigure}
    \hfill
    \begin{subfigure}{0.48\textwidth}
        \centering
        \includegraphics[width=\linewidth]{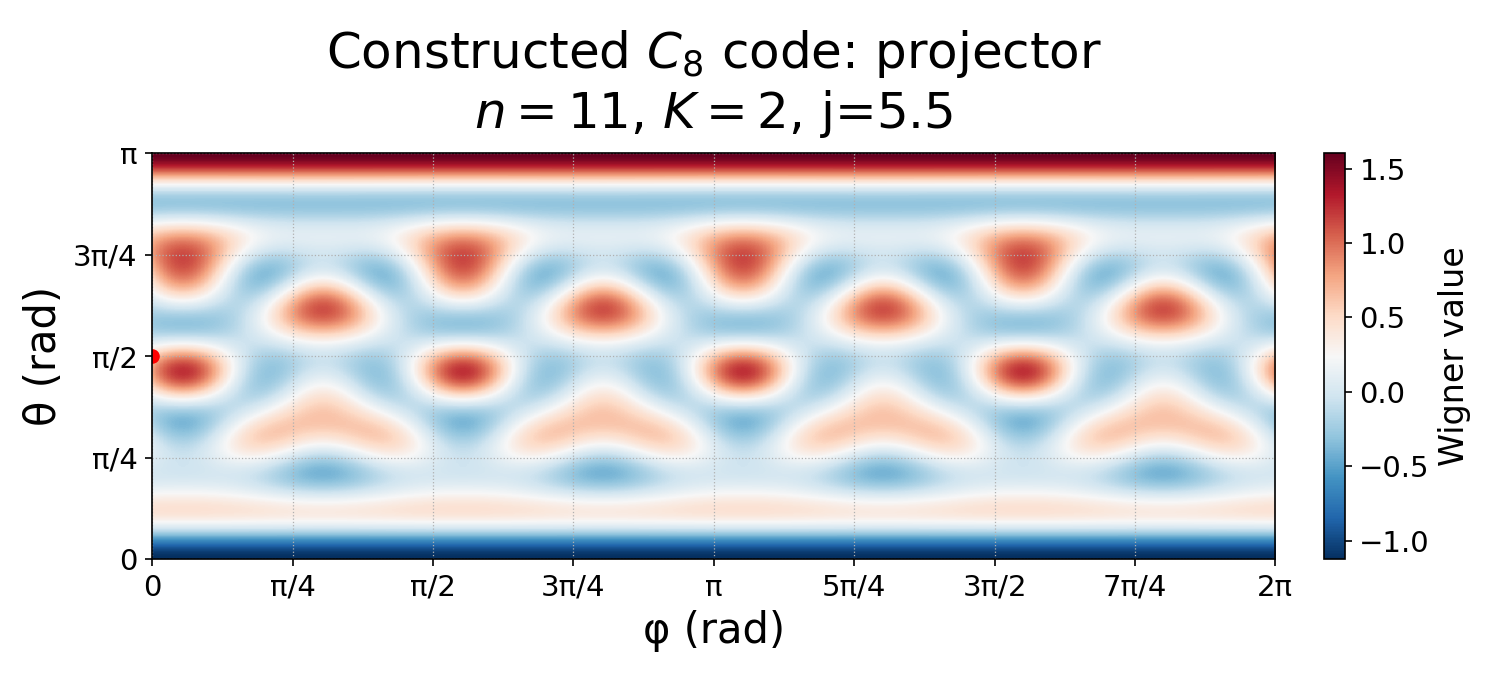}
        \caption{Constructed $C_8:\rho_5\oplus\rho_6$ code.}
    \end{subfigure}
    \caption{Secondary numerical branch and its analytic construction at $n=11$, $\eta=0.90$. The numerical panel uses the higher saved-fidelity member of the $C_8$ branch (seed $0$); the respective saved fidelities are $F_e^{\mathrm{num}}=0.995072$ and $F_e^{\mathrm{con}}=0.995002$. Their aligned projector overlap is $0.9983$. The best saved-fidelity run at this parameter point belongs to the $C_4$ branch, with $F_e=0.995075$.}
    \label{fig:secondary-n11}
\end{figure}

For the full sweep, we group runs into structural families using chains of best-found normalized projector overlaps of at least $0.95$, allowing $SU(2)$ rotations and occupation-basis conjugation. This is a heuristic structural criterion, not a convergence test. Of the $48$ families additional to the family containing the best saved-fidelity run at each parameter point, $32$ occur at $n\geq17$ and $\eta=0.95$. We do not interpret all these separations as evidence for distinct converged optima. At $n=17$ and $n=19,\ldots,25$, taking the best encoder found at $\eta=0.90$ and reoptimizing only its decoder at $\eta=0.95$ gives higher computed fidelity than the $\eta=0.95$ sweep winner, with both decoders evaluated at solver tolerance $10^{-9}$. At $n=18$, decoder-only reevaluation is a near-tie, while ten encoder--decoder rounds from the $\eta=0.90$ winner followed by final polishing give a small improvement. These checks indicate incomplete exploration of the high-transmissivity fidelity landscape. A plausible explanation is that at high transmissivity, structurally different encoders can have nearly identical fidelities, particularly when they correct the dominant loss processes similarly. The larger encoder search space at higher photon number may compound the resulting difficulty of resolving small improvements within a finite optimization budget. 

\FloatBarrier

\endgroup

\section{MacWilliams--Farkas distance bounds and LP monotonicity}

\subsection{The \texorpdfstring{$SU(2)$}{SU(2)} MacWilliams transform and exact Farkas certificates}

MacWilliams identities and linear-programming bounds for spin codes were developed by Bumgardner \cite{bumgardnerCodesWmetricSpaces2012a} and later generalized by Okada \cite{okadaQuantumAnalogDelsartes2025a}. We use the more recent formulation of Kubischta and Teixeira \cite{kubischtaMacWilliamsIdentitiesIntrinsic2026}, as stated in their Theorem~A1.

\begin{lemma}[Kubischta--Teixeira, Theorem~A1]
    Let $G$ a group with unitary representation $\pi_V: G\to \operatorname{GL}(V)$ on vector space $V$. Suppose that 

    \begin{equation}
\operatorname{End}(V) \cong \bigoplus_\ell W_\ell
\end{equation}

    is the decomposition of the endomorphism space of $V$ in terms of irreps of $G$ under the usual conjugation action, and further assume that each irrep appears with multiplicity at most $1$. Now, consider a codespace $\mathcal{C} \subseteq V$ defined by a projector $P$ with dimension $\operatorname{rank} P = K$ which detects every sector $W_\ell$ with $\ell \in \mathcal{E}$. Then, there exist real numbers $A_\ell, B_\ell$ such that 

    \begin{equation}
        \begin{aligned}
            A_\ell &\geq 0
                && \quad \forall \ell, \\
            A_\ell &\leq K B_\ell
                && \quad \forall \ell, \\
            \sum_\ell A_\ell &= K, \\
            \sum_\ell B_\ell &= K^2, \\
            B_\ell &= \sum_t M_{\ell t} A_t
                && \quad \forall \ell, \\
            A_\ell &= K B_\ell
                && \quad \forall \ell \in \mathcal{E}.
        \end{aligned}
    \end{equation}

    where $M = (M_{\ell t})$ is the relevant MacWilliams transform.  \label{kubischta-LP-theorem}
\end{lemma}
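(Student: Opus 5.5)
The plan is to build the LP variables from orthogonal-projector weights of $P$ against each sector $W_\ell$, and then verify each constraint in turn. Fix a Hilbert--Schmidt orthonormal basis $\{E_{\ell,\mu}\}_\mu$ of each $W_\ell$; such a basis exists because the sectors are mutually orthogonal, distinct irreps, and the conjugation action is unitary on $\operatorname{End}(V)$. We define
\begin{equation}
A_\ell=\frac1K\sum_\mu \bigl|\Tr(E_{\ell,\mu}P)\bigr|^2,\qquad B_\ell=\sum_\mu \Tr\bigl(E_{\ell,\mu}^\dagger P E_{\ell,\mu}P\bigr).
\end{equation}
Both sums are basis-independent within $W_\ell$. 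The normalization is fixed so that the trivial sector gives $A_0=K$, which forces the stated sums: by completeness of $\{E_{\ell,\mu}\}$ in $\operatorname{End}(V)$, Parseval gives $\sum_\ell\sum_\mu|\Tr(E_{\ell,\mu}P)|^2=\Tr(P^2)=K$. I will need to adjust the $1/K$ factors so the constraints come out as $\sum_\ell A_\ell=K$ and $\sum_\ell B_\ell=K^2$ exactly. Similarly, $\sum_{\ell,\mu}E^\dagger_{\ell,\mu}XE_{\ell,\mu}=\Tr(X)\,I$, which yields $\sum_\ell B_\ell=(\Tr P)^2=K^2$. So I will take $A_\ell=\frac{1}{K}\cdot\sum_\mu|\Tr(E_{\ell,\mu}P)|^2\cdot K$ or a similarly scaled version, chosen to match the paper's normalization.

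Positivity is immediate for $A_\ell$. For $B_\ell$, note that $\Tr(E^\dagger PEP)=\|PEP\|_F^2\ge0$. The inequality $A_\ell\le KB_\ell$ is Cauchy--Schwarz:
\begin{equation}
|\Tr(EP)|^2=|\Tr(PEP)|^2\le \Tr(P)\,\|PEP\|_F^2.
\end{equation}
Equality in Cauchy--Schwarz holds exactly when $PEP\propto P$. Hence detection of every $E_{\ell,\mu}$ for $\ell\in\mathcal E$ gives $A_\ell=KB_\ell$ on $\mathcal E$.

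The main step is the linear relation $B_\ell=\sum_tM_{\ell t}A_t$. The strategy is to expand $P=\sum_{t,\nu}\Tr(E_{t,\nu}^\dagger P)E_{t,\nu}$ inside the quadratic expression $\sum_\mu\Tr(E^\dagger_{\ell,\mu}PE_{\ell,\mu}P)$. This defines the superoperator $\Lambda_\ell(X)=\sum_\mu E^\dagger_{\ell,\mu}XE_{\ell,\mu}$. Because $W_\ell$ is $G$-invariant and the basis transforms unitarily, $\Lambda_\ell$ commutes with the conjugation action. Since every irrep of $\operatorname{End}(V)$ appears with multiplicity at most one, Schur's lemma (Lemma~\ref{schur}) forces $\Lambda_\ell$ to act as a scalar $M_{\ell t}$ on each $W_t$. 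Then
\begin{equation}
B_\ell=\Tr(\Lambda_\ell(P)P)=\sum_t M_{\ell t}\,\|\Pi_tP\|_F^2,
\end{equation}
and $\|\Pi_tP\|_F^2$ is exactly $A_t$ up to the chosen normalization.

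I expect the main obstacle to be bookkeeping of normalizations, so that $M$ coincides with Kubischta--Teixeira's transform and the sums come out as $K$ and $K^2$. I would resolve this by defining $M$ via the Schur scalar above. One further point needs checking: $\Lambda_\ell$ need not be self-adjoint, but it is equivariant, which suffices for the Schur argument.
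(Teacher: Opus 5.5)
The paper gives no proof of this lemma. It imports it from Kubischta--Teixeira (Theorem A1), and likewise imports the explicit $SU(2)$ matrix from their Corollary A3. Your argument is a correct, self-contained proof along the standard Shor--Laflamme--Rains lines, provided you commit to the unnormalized enumerators you end up with:
$A_\ell=\sum_\mu|\Tr(E_{\ell,\mu}P)|^2=\|\Pi_\ell P\|_F^2$ and $B_\ell=\sum_\mu\|PE_{\ell,\mu}P\|_F^2$, with no factors of $K$ anywhere. A $1/K$ in $A_\ell$ would break both $\sum_\ell A_\ell=K$ and $A_\ell=KB_\ell$ on $\mathcal E$.

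You should delete the remark that the trivial sector gives $A_0=K$. In fact $A_0=K^2/\dim V$ and $B_0=K/\dim V$, and nothing in your argument uses it.

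Two steps deserve one line each:
\begin{enumerate}
\item The sectors $W_\ell$ are Hilbert--Schmidt orthogonal. The orthogonal projection of one irreducible invariant subspace onto an inequivalent one is equivariant, hence zero by Schur. This is what makes the union of your bases complete, and hence what Parseval and $\sum_{\ell,\mu}E_{\ell,\mu}^\dagger XE_{\ell,\mu}=\Tr(X)I$ rely on.
\item The identifications $\sum_\nu|\Tr(E_{t,\nu}P)|^2=\|\Pi_tP\|_F^2$ and $\Tr(\Lambda_\ell(P)P)=\langle P,\Lambda_\ell(P)\rangle$ use $P=P^\dagger$, via $\Tr(EP)=\overline{\Tr(E^\dagger P)}$. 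This matters when $W_t^\dagger\neq W_t$.
\end{enumerate}
Your remark that equivariance alone suffices for Schur is right. Moreover, $\Lambda_\ell^\dagger$ is the analogous superoperator built from $W_\ell^\dagger$, so $M$ is real whenever the sectors are adjoint-closed, as for $SU(2)$. That is the case the paper's real LP and Farkas certificates need.

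Your route buys transparency and generality. Only Schur's lemma and invariance of the Hilbert--Schmidt inner product enter, so the statement holds for any group with a unitary representation and multiplicity-free $\operatorname{End}(V)$. The Cauchy--Schwarz step even shows that $A_\ell=KB_\ell$ holds if and only if $P$ detects $W_\ell$.

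The citation buys the closed form. Your $M_{\ell t}$ is defined abstractly as the Schur scalar of $\Lambda_\ell$ on $W_t$. Identifying it with the $6j$ expression the paper uses for $M_n$ is a separate recoupling computation. Your normalization is at least consistent with it: $\Lambda_\ell(I)=\frac{2\ell+1}{n+1}I$ reproduces the column $[M_n]_{\ell 0}=(2\ell+1)/(n+1)$.
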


We use the application of this result to $G = SU(2)$, with $V = H_n \cong S_{n/2}$ where $H_n$ is the two-mode subspace with exactly $n$ photons, as defined earlier, and $S_{n/2}$ the spin-$n/2$ irrep of $SU(2)$. This is exactly the space on which all our code projectors are defined. Furthermore, from Clebsch-Gordan (Lemma~\ref{Clebsch-gordan}) and Lemma~\ref{su2-self-dual},

\begin{equation}
\operatorname{End}(S_{n/2}) \cong S_n \oplus S_{n-1} \oplus \dots \oplus S_0
\end{equation}

i.e., our vector space satisfies the multiplicity-free condition. The $SU(2)$ MacWilliams transform goes back to Bumgardner \cite{bumgardnerCodesWmetricSpaces2012a}. In the normalization used here, its matrix is given by Corollary~A3 of Ref.~\cite{kubischtaMacWilliamsIdentitiesIntrinsic2026}:

\begin{equation}
   [M_{n}]_{\ell t} = (-1)^{n + \ell + t}(2\ell + 1) \begin{Bmatrix}
        n/2 & n/2 & \ell \\ 
        n/2 & n/2 & t
    \end{Bmatrix} \label{su2-macwilliams}
\end{equation}

which is written in terms of the Wigner $6j$- symbols. Finally, as argued in Corollary \ref{loss-rotation-equivalence}, correcting loss up to order $k$ is equivalent to detecting the elementary tensors $W_0 = \mathcal{T}_n^0, \ W_1 = \mathcal{T}_n^1, \ \dots, W_k = \mathcal{T}_n^k$. Thus, we get the following corollary

\begin{corollary}
    If there exists a $K$-dimensional two-mode code on $H_n$ given by a projector $P$ that corrects $d-1$ loss errors, then there exist $A_\ell, B_\ell \in \mathbb{R}$ that satisfy the constraints of Lemma \ref{kubischta-LP-theorem}, with $\mathcal{E} = \{0, 1,2,\dots, d-1 \}$ and MacWilliams matrix $M_{n}$ given by equation \ref{su2-macwilliams}. \label{su2-LP-corollary} 
\end{corollary}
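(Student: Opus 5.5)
The proof will be a direct specialization of Lemma~\ref{kubischta-LP-theorem} to $G=SU(2)$ acting on $V=H_n\cong S_{n/2}$ through $R_n$. I will check the hypotheses of that lemma, identify the detected sectors, and then read off the constraints.

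First, the multiplicity-free hypothesis. Lemma~\ref{Clebsch-gordan} and Lemma~\ref{su2-self-dual} give
\[
\operatorname{End}(H_n)\cong S_n\oplus S_{n-1}\oplus\dots\oplus S_0 ,
\]
with each $S_\ell$ appearing exactly once. The corresponding sectors are $W_\ell=\mathcal T_n^\ell$ from Eq.~\eqref{tensor-operator-space}, so the decomposition required by Lemma~\ref{kubischta-LP-theorem} holds with index set $\{0,\dots,n\}$. In this normalization the MacWilliams transform is the $6j$ expression of Eq.~\eqref{su2-macwilliams}, which I take as given from Corollary~A3 of Ref.~\cite{kubischtaMacWilliamsIdentitiesIntrinsic2026}.

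Second, the detected sectors. Suppose $P$ has rank $K$ and corrects $d-1$ photon losses. By Lemma~\ref{loss-rotation-equivalence} with $k=d-1$, this holds exactly when $P$ satisfies the KL detection conditions for $\mathcal K_{d-1}\cong\bigoplus_{\ell=0}^{d-1}\mathcal T_n^\ell$. That is, $P$ detects every operator in each $W_\ell$ for $\ell\in\mathcal E=\{0,1,\dots,d-1\}$.

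Two edge cases need checking.
\begin{itemize}
\item If $d-1>n$, the statement is vacuous: loss of more than $n$ photons leaves the sector, so only $\ell\le n$ occurs, and $\mathcal E$ should be read as $\mathcal E\cap\{0,\dots,n\}$.
\item The sector $\ell=0$ is always detected, since $\mathcal T_n^0=\mathbb C I$.
\end{itemize}
Lemma~\ref{loss-rotation-equivalence} already uses Lemma~\ref{lower-loss-KL-inclusion}. That inclusion is what guarantees that all lower-rank sectors, and not only the top rank $d-1$, lie in $\mathcal E$.

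Third, I apply Lemma~\ref{kubischta-LP-theorem} to $V=H_n$, $G=SU(2)$, the projector $P$, and the set $\mathcal E$. It produces real $A_\ell,B_\ell$ satisfying all the listed constraints with $M=M_n$, which is the claim.

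The main obstacle is that Lemma~\ref{kubischta-LP-theorem} is stated for a group in a form written with finite groups in mind, while $SU(2)$ is compact. I would handle this in one of two ways. The first is to note that the cited theorem only needs a unitary representation with a multiplicity-free decomposition of $\operatorname{End}(V)$; the quantities $A_\ell,B_\ell$ are built from Hilbert--Schmidt norms of the projections of $P$ onto the sectors $W_\ell$, with no group averaging. The alternative, if an averaging step turns out to be needed, is to replace sums over $G$ by Haar integrals; compactness makes all the relevant integrals and Schur-orthogonality arguments valid.

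I would also confirm that the MacWilliams matrix normalization in Eq.~\eqref{su2-macwilliams} matches the definitions of $A_\ell,B_\ell$ used in that theorem. Given the citation, that check reduces to bookkeeping.
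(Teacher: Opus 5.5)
Your proposal is correct and follows the paper's own argument. You verify that $\operatorname{End}(H_n)\cong\bigoplus_{\ell=0}^{n}S_\ell$ is multiplicity-free via Lemmas~\ref{Clebsch-gordan} and~\ref{su2-self-dual}, use Lemma~\ref{loss-rotation-equivalence} with $k=d-1$ to identify correction of $d-1$ losses with detection of $\mathcal T_n^0,\dots,\mathcal T_n^{d-1}$, and then read off the constraints of Lemma~\ref{kubischta-LP-theorem} with $M_n$ from Eq.~\eqref{su2-macwilliams}.

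Two small corrections, neither a gap. Your finite-versus-compact worry is unnecessary, because Lemma~\ref{kubischta-LP-theorem} is stated for an arbitrary group with a unitary representation and multiplicity-free $\operatorname{End}(V)$. Also, for the direction needed here, what places the lower ranks in $\mathcal E$ is the decomposition $\mathcal K_{d-1}=\bigoplus_{\ell\leq d-1}\mathcal T_n^\ell$, not the inclusion $\mathcal K_{k-1}\subseteq\mathcal K_k$.
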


The following will be used to systematically show that LPs obtained from MacWilliams identities are infeasible \cite{boydConvexOptimization2023}.

\begin{lemma}[Farkas]
    Let $A \in \mathbb{R}^{m \times n}$ and $c \in \mathbb{R}^n$. The systems 
    \begin{align}
        S_1 &=\left\{
        \begin{aligned}
            Ax &\leq 0, \\
            c^T x &< 0,
        \end{aligned}
        \right.
        \label{system-1-farkas} \\[0.5em]
        S_2&=\left\{
         \begin{aligned}
            A^T y + c &= 0, \\
            y &\geq 0,
        \end{aligned}
        \right.
        \label{system-2-farkas}
    \end{align}
    are strong alternatives. 
    \label{Farkas}
\end{lemma}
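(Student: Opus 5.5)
The plan is to prove the two directions of ``strong alternatives'' separately: first that $S_1$ and $S_2$ cannot both be feasible, and then that infeasibility of $S_2$ forces feasibility of $S_1$. The first direction is a one-line pairing argument. The second is a separating-hyperplane argument for a finitely generated convex cone.

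\emph{Step 1 (not both feasible).} Suppose $x$ solves $S_1$ and $y$ solves $S_2$. Then $c=-A^Ty$, so
\begin{equation}
c^Tx=-y^TAx .
\end{equation}
Since $y\geq0$ and $Ax\leq0$ componentwise, we have $y^TAx\leq0$, hence $c^Tx\geq0$. This contradicts $c^Tx<0$.

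\emph{Step 2 (at least one feasible).} Assume $S_2$ is infeasible, and define the cone
\begin{equation}
\mathcal K=\{A^Ty:\ y\in\mathbb R^m,\ y\geq0\}\subseteq\mathbb R^n .
\end{equation}
Infeasibility of $S_2$ says exactly that $-c\notin\mathcal K$. The set $\mathcal K$ is a convex cone.

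\emph{Step 2a (closedness).} We need $\mathcal K$ to be closed, and I expect this to be the main technical point. I would use the conic Carath\'eodory theorem. Every element of $\mathcal K$ is a nonnegative combination of a linearly independent subset of the rows $a_1,\dots,a_m$ of $A$. The reduction is standard: if the rows used are dependent, move along a dependency until some coefficient hits zero, and repeat.

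Hence $\mathcal K=\bigcup_I \mathcal K_I$, where $I$ ranges over the finitely many linearly independent index sets. Each $\mathcal K_I$ is the image of the closed orthant $\mathbb R_{\geq0}^{|I|}$ under an injective linear map. Such a map is a homeomorphism onto its image, which is a closed subspace, so each $\mathcal K_I$ is closed. A finite union of closed sets is closed, so $\mathcal K$ is closed.

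\emph{Step 2b (separation).} Since $\mathcal K$ is nonempty, closed and convex and $-c\notin\mathcal K$, the separating hyperplane theorem gives $x\in\mathbb R^n$ and $\beta\in\mathbb R$ with
\begin{equation}
x^T(-c)>\beta\geq x^Tk\qquad\text{for all }k\in\mathcal K .
\end{equation}
Concretely, $x$ can be taken as the difference between $-c$ and its nearest point in $\mathcal K$.

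\emph{Step 2c (reading off $S_1$).} Because $\mathcal K$ is a cone containing $0$, we get $\beta\geq0$. Moreover $\sup_{k\in\mathcal K}x^Tk$ is finite only if $x^Tk\leq0$ for every $k\in\mathcal K$, since otherwise scaling $k$ makes it unbounded. Taking $k=A^Te_i$ for each standard basis vector $e_i$ gives $(Ax)_i\leq0$ for every $i$, that is, $Ax\leq0$. Finally $-c^Tx>\beta\geq0$ gives $c^Tx<0$. So $x$ solves $S_1$, which completes the alternative.
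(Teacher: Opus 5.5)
Your proof is correct. Step~1 is the standard weak-duality pairing. In Step~2 you correctly isolate the one delicate point, the closedness of the finitely generated cone $\{A^Ty : y\geq 0\}$. Your conic Carath\'eodory argument settles it, since the cone is a finite union of injective linear images of closed orthants. Strict separation of $-c$ from the cone, combined with the scaling argument, then produces a solution of $S_1$.

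The paper does not prove the lemma itself; it cites Boyd--Vandenberghe, where the lemma is derived from LP duality. The homogeneous LP $\min\{c^Tx : Ax\leq 0\}$ is always feasible ($x=0$). Its value is $0$ or $-\infty$ according to whether $S_1$ is infeasible or feasible. Its dual is the pure feasibility problem $S_2$, with value $0$ if $S_2$ is feasible and $-\infty$ otherwise. Strong LP duality, which can fail only when both problems are infeasible, then forces exactly one of the two systems to be feasible.

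That route takes two lines, but it outsources all the work to LP strong duality, whose proof itself typically rests on a Farkas-type or separating-hyperplane argument. Your argument is the self-contained geometric proof underneath it. It also exhibits the certificate explicitly as the separating direction $x=-c-\Pi_{\mathcal K}(-c)$.

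For the paper's application, the two directions play distinct roles. Your easy direction is what makes a found certificate $\mathbf{x}$ with $U\mathbf{x}\leq 0$, $\mathbf{c}^T\mathbf{x}<0$ valid, i.e.\ what rules out the code parameters. Your Step~2 is what guarantees such a certificate exists whenever $\operatorname{LP}(n,K,d)$ is infeasible.
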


Finally, we need to transform the above LP in the form of Farkas' Lemma. Specifically, we're going to define appropriate variables to write the LP in the form of the second Farkas system \ref{system-2-farkas}. We prove the following Lemma. 

\begin{lemma}
    Let $\mathbf{e}_{n+1}, \ \mathbf{0}_{n+1}$ the all-one, all-zero vectors in $\mathbb{R}^{n+1}$, then $I^{(n+1)}, \mathbf{O}_{n+1}$ the identity, all-zero  matrices in $\mathbb{R}^{(n+1) \times (n+1)}$, respectively, and finally $[I_d^{(n+1)}]_{ij} = \delta_{ij} \mathbbm{1}_{\{i \leq d\}}, \ 1 \leq i,j \leq n+1$, where $\mathbbm{1}$ the indicator function. It follows that if there exists a vector $\mathbf{x} \in \mathbb{R}^{3n + 5}$ satisfying the system

    \begin{equation}
    S = \begin{cases}
        U \mathbf{x} \leq 0 \\ 
        \mathbf{c}^T \mathbf{x} < 0
    \end{cases} \quad \text{where} \quad  U = \begin{pmatrix}
        -I^{(n+1)} & \mathbf{e}_{n+1} & \mathbf{0}_{n+1} & M_n^T & -I_d^{(n+1)} \\ 
        KI^{(n+1)}& \mathbf{0}_{n+1} & \mathbf{e}_{n+1} & -I^{(n+1)} & KI_{d}^{(n+1)} \\
         -I^{(n+1)} &  \mathbf{0}_{n+1} & \mathbf{0}_{n+1} & \mathbf{O}_{n+1} & \mathbf{O}_{n+1} 
    \end{pmatrix}, \quad \mathbf{c} = - \begin{pmatrix} \mathbf{0}_{n+1} \\ 
        K \\ K^2 \\ \mathbf{0}_{n+1} \\ \mathbf{0}_{n+1}
    \end{pmatrix} 
\end{equation}

then there exists no QEC code on $H_n$ with projector of rank $K$ that can correct up to $d-1$ losses.  \label{loss-distance-certificate}
\end{lemma}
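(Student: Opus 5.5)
The plan is a contrapositive argument through Farkas' lemma. Suppose a rank-$K$ code on $H_n$ correcting $d-1$ losses exists. By Corollary~\ref{su2-LP-corollary}, there are real vectors $A,B\in\mathbb R^{n+1}$ satisfying the Kubischta--Teixeira constraints with $\mathcal E=\{0,\dots,d-1\}$. I would rewrite this feasible set as a system of the form $U^T\mathbf y+\mathbf c=0$, $\mathbf y\geq0$, i.e.\ the second Farkas system~\eqref{system-2-farkas}. Then Lemma~\ref{Farkas} says that system $S$ (the first alternative) has no solution. Equivalently, if $S$ is solvable, the LP is infeasible and no such code exists, which is the claim.

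The key step is choosing the nonnegative vector $\mathbf y\in\mathbb R^{3(n+1)}$ so that $U^T\mathbf y=-\mathbf c$ reproduces exactly the LP constraints. Reading off the column blocks of $U$ (widths $n+1,1,1,n+1,n+1$), I would split $\mathbf y=(\mathbf y_1,\mathbf y_2,\mathbf y_3)$ with $\mathbf y_1=A$, $\mathbf y_2=B'$ and $\mathbf y_3=K B'-A$, where $B'$ is $B$ itself or a suitable rearrangement of it. The five block equations of $U^T\mathbf y=-\mathbf c$ are then checked one at a time.

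\begin{itemize}
\item The first block, $-\mathbf y_1+K\mathbf y_2-\mathbf y_3=0$, is just the definition of $\mathbf y_3$.
\item The second and third blocks, $\mathbf e^T\mathbf y_1=K$ and $\mathbf e^T\mathbf y_2=K^2$, are the two normalizations $\sum_\ell A_\ell=K$ and $\sum_\ell B_\ell=K^2$.
\item The fourth block, $M_n\mathbf y_1-\mathbf y_2=0$, is the MacWilliams relation $B=M_nA$.
\item The fifth block, $-I_d\mathbf y_1+KI_d\mathbf y_2=0$, is $A_\ell=KB_\ell$ for $\ell\leq d-1$. With the indexing $\ell=i-1$, the indicator $i\leq d$ selects exactly $\mathcal E$.
\end{itemize}

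Nonnegativity of $\mathbf y$ is precisely $A\geq0$ together with $A\leq KB$. The third entry group $\mathbf y_3=KB-A\ge0$ carries the second inequality, and $\mathbf y_2=B$ must itself be nonnegative.

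The main obstacle is this last sign condition. It requires $B\geq0$, which is not directly among the listed constraints. I would derive it from $0\le A\le KB$: since $K\ge1$ and $A_\ell\ge0$, we get $B_\ell\ge A_\ell/K\ge0$.

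A second point to get right is that the block-transpose bookkeeping must match $U$ as displayed. The $M_n^T$ appearing in $U$ becomes $M_n$ after transposing, which gives the correct orientation $B=M_nA$. Once this is verified, the Farkas alternative finishes the proof.
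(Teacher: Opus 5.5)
Your proposal is correct and follows essentially the same route as the paper: take $\mathbf z=(\mathbf A,\mathbf B,K\mathbf B-\mathbf A)\geq 0$, verify block by block that $U^T\mathbf z+\mathbf c=0$ encodes the LP of Corollary~\ref{su2-LP-corollary}, and invoke Farkas' lemma to conclude that $S$ is then unsolvable. Your one-line derivation of $\mathbf B\geq 0$ from $0\leq\mathbf A\leq K\mathbf B$ with $K>0$ fills in a step the paper merely calls clear. The hedge about a ``suitable rearrangement'' of $B$ is unnecessary, since $B$ itself works.
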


\begin{proof}
    
    Let 
\begin{equation}
    \mathbf{A} = \begin{pmatrix}
        A_0 \\ 
        A_1 \\ 
        \vdots \\
        A_n
    \end{pmatrix}, \quad \mathbf{B} = \begin{pmatrix}
        B_0 \\ 
        B_1 \\
        \vdots \\ 
        B_n
    \end{pmatrix}, \quad \mathbf{s} = K \mathbf{B} - \mathbf{A}, \quad \mathbf{z} = \begin{pmatrix}
        \mathbf{A} \\ \mathbf{B} \\ \mathbf{s}
    \end{pmatrix} 
\end{equation}

It's clear that $\mathbf{A} \geq 0, \ \mathbf{B} \geq 0, \ K \mathbf{B} \geq \mathbf{A} \implies \mathbf{s} \geq 0 \implies \mathbf{z} \geq 0$. This last equation capture the two inequalities of the LP; we now need only encode the $4$ linear equations in $\mathbf{z}$, plus the relationship $\mathbf{s} = K \mathbf{B} - \mathbf{A}$. Indeed, we see we can write 

\begin{equation}
     \begin{pmatrix} 
        -I^{(n+1)} & K I^{(n+1)} & - I^{(n+1)} \\
        \mathbf{e}_{n+1}^T & \mathbf{0}_{n+1}^T  & \mathbf{0}_{n+1}^T\\ 
        \mathbf{0}_{n+1}^T & \mathbf{e}_{n+1}^T & \mathbf{0}_{n+1}^T \\ 
       M_{n} & -I^{(n+1)}  & \mathbf{O}_{n+1} \\ 
       -I^{(n+1)}_d & KI^{(n+1)}_{d} & \mathbf{O}_{n+1} 
     \end{pmatrix} \begin{pmatrix}
        \mathbf{A} \\ \mathbf{B} \\ \mathbf{s}
    \end{pmatrix} = \begin{pmatrix} \mathbf{0}_{n+1} \\ 
    K \\ 
    K^2  \\ 
    \mathbf{0}_{n+1} \\
    \mathbf{0}_{n+1}
     \end{pmatrix}
\end{equation}

Thus, in the notation of Farkas' lemma, we have the system 

\begin{equation}
    U = \begin{pmatrix}
        -I^{(n+1)} & \mathbf{e}_{n+1} & \mathbf{0}_{n+1} & M_n^T & -I_d^{(n+1)} \\ 
        KI^{(n+1)}& \mathbf{0}_{n+1} & \mathbf{e}_{n+1} & -I^{(n+1)} & KI_{d}^{(n+1)} \\
         -I^{(n+1)} &  \mathbf{0}_{n+1} & \mathbf{0}_{n+1} & \mathbf{O}_{n+1} & \mathbf{O}_{n+1} 
    \end{pmatrix}, \quad \mathbf{c} = - \begin{pmatrix} \mathbf{0}_{n+1} \\ 
        K \\ K^2 \\ \mathbf{0}_{n+1} \\ \mathbf{0}_{n+1}
    \end{pmatrix}, \quad \text{with equations} \quad S = \begin{cases}
        U^T \mathbf{z} + \mathbf{c} = 0 \\ 
        \mathbf{z} \geq 0
    \end{cases}
\end{equation}

So, from Farkas' lemma \ref{Farkas}, we get that if there exists an $\mathbf{x} \in \mathbb{R}^{3n + 5}$ such that 

\begin{equation}
    \mathbf{c}^T \mathbf{x}< 0, \ U \mathbf{x} \leq 0, 
\end{equation}

then from Corollary \ref{su2-LP-corollary} there exists no two-mode code on $H_n$ with dimension $K$ that corrects $d-1$ loss errors, as desired. 
\end{proof}

Finally, we note that the Wigner-$6j$ symbols are given by the Racah formula that takes the form \cite{varsalovicQuantumTheoryAngular1989}

\begin{equation}
    \begin{aligned}
        \begin{Bmatrix}
            a & b & c\\
            d & e & f
        \end{Bmatrix}
        &=
        \Delta(abc)\Delta(cde)\Delta(aef)\Delta(bdf)
        \times \text{rational sum},\\
        \Delta(abc)
        &=
        \left(
        \frac{
            (a+b-c)!(a-b+c)!(-a+b+c)!
        }{
            (a+b+c+1)!
        }
        \right)^{1/2}.
    \end{aligned}
\end{equation}

In our case, we have $a = b = d = e = n/2 = j, \ c = \ell, \ f = t$, so we get the product $\Delta(jj\ell) \Delta(\ell jj) \Delta(jjt) \Delta(jjt) = \Delta(jj\ell)^2 \Delta(jjt)^2 = \frac{(n-\ell)! (\ell!)^2 }{(n + \ell + 1)!} \cdot \frac{(n-t)!(t!)^2}{(n+t+1)!}$ using the fact that $\Delta$ is clearly invariant under cyclic permutations. So, we get that the relevant $6j$ symbols $\begin{Bmatrix}
    n/2 & n/2 & \ell \\ 
    n/2 & n/2 & t
\end{Bmatrix} \in \mathbb{Q}$, which means we can use exact symbolic algebra to find certificates, avoiding potential numerical tolerance issues.

\subsection{Monotonicity of LP feasibility in photon number}

We finally prove the monotonicity of LP feasibility. 

\begin{theorem}[Monotonicity of LP feasibility]
    If the loss linear program is feasible for some $(n,K,d)$, then it is also feasible for $(n+1,K,d)$. Consequently, if distance $d$ is infeasible for a given $(n,K)$, then it is also infeasible for $(n-1,K)$.
    \label{LP-feasibility-monotonicity}
\end{theorem}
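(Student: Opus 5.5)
The plan follows the hint in the main text: construct a nonnegative, column-stochastic, lower-triangular matrix $T_n\in\mathbb{R}^{(n+2)\times(n+1)}$ with $M_{n+1}T_n=T_nM_n$. Then push a feasible LP point $(\mathbf A,\mathbf B)$ for $(n,K,d)$ forward to $(\mathbf A',\mathbf B')=(T_n\mathbf A,T_n\mathbf B)$ and check every constraint of Lemma~\ref{kubischta-LP-theorem} at $n+1$.

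\textbf{Feasibility check, given $T_n$.} Nonnegativity $\mathbf A'\geq0$ follows because $T_n\geq0$ entrywise. Likewise $K\mathbf B'-\mathbf A'=T_n(K\mathbf B-\mathbf A)\geq0$. Column stochasticity, $\mathbf e_{n+2}^TT_n=\mathbf e_{n+1}^T$, preserves the sums, so $\sum A'_\ell=K$ and $\sum B'_\ell=K^2$. The intertwining relation gives $\mathbf B'=T_n M_n\mathbf A=M_{n+1}T_n\mathbf A=M_{n+1}\mathbf A'$. For the detection equalities, write $\mathbf s=K\mathbf B-\mathbf A$. It vanishes on $\{0,\dots,d-1\}$, and lower-triangularity ($[T_n]_{\ell t}=0$ for $\ell<t$) gives $s'_\ell=\sum_{t\le\ell}[T_n]_{\ell t}s_t=0$ for $\ell\le d-1$. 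So feasibility at $n$ implies feasibility at $n+1$. The contrapositive is the "consequently" clause: infeasibility at $(n,K,d)$ implies infeasibility at $(n-1,K,d)$.

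\textbf{Constructing $T_n$.} This is the main obstacle. I would first reinterpret $M_n$. The quantities $A_\ell$ and $B_\ell$ are the weight enumerators $\sum_\mu\operatorname{Tr}(PT^{(\ell)\dagger}_\mu P T^{(\ell)}_\mu)$ and $\sum_\mu|\operatorname{Tr}(PT^{(\ell)}_\mu)|^2$ in a Hilbert--Schmidt-orthonormal tensor basis on $H_n$. $M_n$ is the corresponding $6j$ transform, an involution up to normalization, with eigenstructure tied to the spin-$n/2$ recoupling.

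I would look for $T_n$ coming from a physical map that relates enumerators at $n$ and $n+1$ compatibly with both enumerators. The natural candidate is the embedding $H_n\otimes H_1\supset H_{n+1}$, or equivalently the adjoint of the Schwinger "add one photon" symmetrizer. Coupling a rank-$t$ tensor on spin $n/2$ with $\mathcal T_1^0$ or $\mathcal T_1^1$ on spin $1/2$ and projecting to spin $(n+1)/2$ produces ranks $\ell\in\{t,t+1\}$. That suggests a lower-bidiagonal $T_n$:
\[
[T_n]_{tt}=p_t,\qquad [T_n]_{t+1,t}=1-p_t .
\]
The weights $p_t\in[0,1]$ would be given by squared $6j$/$9j$ coefficients, and a bidiagonal matrix of this form is automatically nonnegative, lower-triangular and column-stochastic.

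I would then try to verify $M_{n+1}T_n=T_nM_n$ directly, as follows.
\begin{enumerate}
\item Use the rational Racah expressions from Eq.~\eqref{su2-macwilliams}, which can be rewritten in terms of Hahn/Racah orthogonal polynomials in $\ell(\ell+1)$.
\item Reduce the intertwining identity to a three-term contiguous relation for these polynomials, in both $n$ and $\ell$.
\item Fix the $p_t$ from the first column, $t=0$, and solve recursively.
\item Prove the remaining entries by a WZ/Zeilberger-type certificate. Gessel's and Koepf's methods are cited for exactly this purpose.
\end{enumerate}

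If the bidiagonal ansatz fails, my fallback is to take $T_n$ lower-triangular and solve the intertwining relation column by column. Since $M_n$ diagonalizes with simple eigenvalues $\pm$, I would characterize $T_n$ by mapping eigenvectors of $M_n$ to eigenvectors of $M_{n+1}$ with matching eigenvalues, and then prove nonnegativity of the resulting entries from a closed hypergeometric form. Establishing that nonnegativity is where I expect most of the work to lie.
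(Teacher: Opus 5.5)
Your push-forward step is correct and coincides with the paper's reduction. Given a nonnegative, column-stochastic, lower-triangular $T_n$ with $M_{n+1}T_n=T_nM_n$, every LP constraint transfers, and the second clause is the contrapositive. The gap is the construction of $T_n$, and the bidiagonal ansatz provably fails.

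Here is why. Row $0$ of $M_n$ is $\mathbf{1}^T/(n+1)$, so any lower-triangular intertwiner has equal column sums, and it is stochastic iff $[T_n]_{00}=(n+1)/(n+2)$. Moreover, the first column $M_ne_0=\bigl((2\ell+1)/(n+1)\bigr)_\ell$ has full support. Hence the $t=0$ column of $M_{n+1}T_n=T_nM_n$ determines all your $p_t$ recursively.

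At $n=1$ the resulting bidiagonal matrix happens to work. At $n=2$, however, it is forced to have columns $(3/4,1/4)$, $(37/60,23/60)$, $(37/100,63/100)$. The $(1,1)$ entry of the intertwining relation then fails: $(M_3T_2)_{11}=\tfrac{11}{20}\cdot\tfrac{37}{60}+\tfrac{3}{20}\cdot\tfrac{23}{60}=\tfrac{119}{300}$, whereas $(T_2M_2)_{11}=\tfrac14\cdot\tfrac13+\tfrac{37}{60}\cdot\tfrac12=\tfrac{47}{120}$.

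So no lower-bidiagonal intertwiner exists already at $n=2$. The paper's $T_n$ is genuinely full lower-triangular, with $[T_n]_{\ell t}\neq 0$ for every $\ell>t$. The spin-coupling heuristic behind your ansatz is also off. Coupling a rank-$t$ tensor with a rank-$1$ tensor on spin $1/2$ produces ranks $t-1,t,t+1$, so it would not even give the lower-triangular map your detection-equality step requires.

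Your fallback does not close the gap either. $M_n$ is an involution whose eigenvalues $\pm1$ have multiplicities $\lceil (n+1)/2\rceil$ and $\lfloor (n+1)/2\rfloor$, not simple ones, so eigenvector matching leaves a large intertwiner space. Even after imposing lower-triangularity and unit column sums there is residual freedom. At $n=1$ it is a one-parameter family with $[T_1]_{20}=x$, nonnegative only for $0\le x\le 1/3$. That family also contains members with negative entries, so nonnegativity is a genuine selection problem for which you offer no mechanism.

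The missing idea is the paper's triangular factorization $M_n=\mathsf R_n\mathsf U_n\mathsf R_n^{-1}$, obtained via the Whipple and Thomae transformations. Here $\mathsf R_n$ is lower-triangular and $[\mathsf U_n]_{ij}=(-1)^i\binom{j}{i}$ is independent of $n$ (it is the involution $p(x)\mapsto p(1-x)$ on polynomials). In this basis one can write down an explicit three-band $\mathsf Q_n$ with $\mathsf U_{n+1}\mathsf Q_n=\mathsf Q_n\mathsf U_n$. Then $T_n=\mathsf R_{n+1}\mathsf Q_n\mathsf R_n^{-1}$ has the closed form $[T_n]_{\ell t}=\frac{(2\ell+1)(n+2+2t(t+1))(n-t)!\,(n+t+1)!}{(n+2)(n+1-\ell)!\,(n+\ell+2)!}$ for $\ell>t$. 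These entries are manifestly nonnegative, and the column sums telescope to $1$.

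The specific choice of $\mathsf Q_n$ matters. Plain zero-padding also intertwines $\mathsf U_n$ and $\mathsf U_{n+1}$, but at $n=1$ it yields $x=-1/6$, i.e., a negative entry.
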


 The motivation for this was originally to be able to speed up the process of obtaining the distance certificates, since it implies that if we've found a solution to the $\operatorname{LP}(n,\,K,\,d)$ and we know that  $\operatorname{LP}(n,\,K,\,d+1)$ is infeasible, then we can start the search for the maximum possible distance at $d+1$ for $n+1$ (instead of starting at distance $1$). Nevertheless, this result is meaningful progress towards the bigger question of whether the existence of an $(n,K,d)$ code implies the existence of an $(n+1, K,d)$ code.

The proof follows from the following lemmas.

\begin{lemma}
    Suppose the loss linear program is feasible for some $(n,K,d)$ with solutions $\mathbf{A}, \mathbf{B}$. Further assume there exists an intertwiner $T_n \in \mathbb{R}^{(n+2)\times (n+1)}$ such that:
    \begin{itemize}
        \item $M_{n+1} T_n = T_n M_n$ 
        \item $T_n \geq 0$
        \item $\mathbf{1}^T T_n = \mathbf{1}^T$
        \item $[T_n]_{\ell t} = 0$ for $\ell < t$. 
    \end{itemize}
    Then, $\mathbf{A'} = T_n \mathbf{A}, \, \mathbf{B'} = T_n \mathbf{B}$ satisfy the loss LP for $(n+1, K,d)$. \label{LP-monotonicity-condition}
\end{lemma}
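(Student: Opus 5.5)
We verify the constraints of Lemma~\ref{kubischta-LP-theorem} for $(n+1,K,d)$ one at a time, using only the four properties of $T_n$ and the feasibility of $(\mathbf{A},\mathbf{B})$ at $(n,K,d)$. Index the rows of $T_n$ by $\ell\in\{0,\dots,n+1\}$ and its columns by $t\in\{0,\dots,n\}$.

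\emph{Linear constraints.} The MacWilliams relation transfers directly through the intertwining property:
\begin{equation}
M_{n+1}\mathbf{A}'=M_{n+1}T_n\mathbf{A}=T_nM_n\mathbf{A}=T_n\mathbf{B}=\mathbf{B}'.
\end{equation}
The two normalization constraints follow from column-stochasticity:
\begin{equation}
\mathbf{1}^T\mathbf{A}'=\mathbf{1}^T\mathbf{A}=K,
\qquad
\mathbf{1}^T\mathbf{B}'=\mathbf{1}^T\mathbf{B}=K^2.
\end{equation}

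\emph{Inequality constraints.} Nonnegativity is preserved because $T_n\geq0$ and $\mathbf{A}\geq0$, so $\mathbf{A}'=T_n\mathbf{A}\geq0$. For the second inequality, note that $\mathbf{s}=K\mathbf{B}-\mathbf{A}\geq0$ entrywise. Then
\begin{equation}
K\mathbf{B}'-\mathbf{A}'=T_n\mathbf{s}\geq0,
\end{equation}
again because $T_n\geq0$.

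\emph{Detection equalities.} This is the step that uses lower-triangularity, and it is the only real point of the argument. We need $A'_\ell=KB'_\ell$ for every $\ell\le d-1$. Write
\begin{equation}
KB'_\ell-A'_\ell=\sum_{t}[T_n]_{\ell t}\,s_t .
\end{equation}
Since $[T_n]_{\ell t}=0$ whenever $t>\ell$, the sum runs only over $t\le\ell\le d-1$. Every such $t$ lies in $\mathcal E$ for the original program, so $s_t=0$ there. Hence $A'_\ell=KB'_\ell$ for all $\ell<d$.

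One bookkeeping check remains: the sum over $t$ must stay within $0\le t\le n$, so we need $d-1\le n$. This holds whenever the original program is posed, since $\mathcal E\subseteq\{0,\dots,n\}$.

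All constraints of the $(n+1,K,d)$ program are therefore satisfied by $(\mathbf{A}',\mathbf{B}')$. The real difficulty in proving Theorem~\ref{LP-feasibility-monotonicity} is constructing such a $T_n$ from the $6j$ formula; that construction is not part of this lemma.
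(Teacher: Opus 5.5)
Your proof is correct and follows the paper's argument essentially line by line. Nonnegativity and $K\mathbf{B}'-\mathbf{A}'=T_n(K\mathbf{B}-\mathbf{A})\geq 0$ come from $T_n\geq 0$, the normalizations from $\mathbf{1}^T T_n=\mathbf{1}^T$, the MacWilliams relation from $M_{n+1}T_n=T_nM_n$, and the detection equalities from lower-triangularity, which restricts each sum to indices $t\le\ell\le d-1$ where $A_t=KB_t$ (your extra bookkeeping remark is harmless but unnecessary, since the only fact used there is $t\le\ell\le d-1\Rightarrow t\in\mathcal E$).
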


\begin{proof}
    Firstly, since $T_n \geq 0, \, \mathbf{A} \geq 0$, it immediately follows that $\mathbf{A'} \geq 0$. Furthermore, we have that $\mathbf{A} \leq K \mathbf{B}$, and for the new variables we similarly see
    \begin{equation}
K\mathbf{B'} - \mathbf{A'} = T_n (K\mathbf{B} - \mathbf{A}) \geq 0
\end{equation}

    We also start with $\mathbf{1}^T \mathbf{A} = K, \ \mathbf{1}^T \mathbf{B} = K^2$, so we get
    \begin{equation}
\mathbf{1}^T \mathbf{A'} = \mathbf{1}^T T_n \mathbf{A} = \mathbf{1}^T \mathbf{A} = K
\end{equation}
    and likewise 
    \begin{equation}
\mathbf{1}^T \mathbf{B'} = \mathbf{1}^T T_n \mathbf{B} = \mathbf{1}^T \mathbf{B} = K^2
\end{equation}
    
    We also satisfy the new MacWilliams identity 
    \begin{equation}
\mathbf{B'} = T_n \mathbf{B} = T_n M_n \mathbf{A} = M_{n+1} T_n \mathbf{A} = M_{n+1} \mathbf{A'}
\end{equation}

    and finally, for $0 \leq \ell \leq d-1$ we have $A_\ell = KB_\ell$, so 

    \begin{equation}
\mathbf{A'}_\ell = (T_n\mathbf{A})_\ell = \sum_{k =0}^{\ell} [T_n]_{\ell k} A_k = \sum_{k =0}^\ell [T_n]_{\ell k} K B_k = K \sum_{k =0}^\ell [T_n]_{\ell k}  B_k  = K \mathbf{B'}_\ell
\end{equation}

    So, $\mathbf{A'},\, \mathbf{B'}$ indeed satisfy the LP for $(n+1, K,d)$ as desired. 
\end{proof}

Thus, the goal is to find a $T_n$ satisfying these conditions.  We begin by simplifying the form of the $SU(2)$ MacWilliams transform \eqref{su2-macwilliams}.

\begin{lemma}
    The $SU(2)$ MacWilliams matrix can be written as

    \begin{equation}
        [M_n]_{\ell t} = (2\ell+1) \cdot \sum_{k = 0}^{\min(\ell,t)} (-1)^k \frac{(t + k)! (\ell+k)! (n-k)! }{(t-k)! (\ell-k)! (n+1+k)! (k!)^2}
    \end{equation}
    
    where $n \in \mathbb{Z}_{\geq 0}, \ 0 \leq \ell,t \leq n$. 
    \label{su2-macwilliams-simplified}
\end{lemma}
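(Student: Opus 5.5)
The plan is to start from the definition in Eq.~\eqref{su2-macwilliams}, $[M_n]_{\ell t}=(-1)^{n+\ell+t}(2\ell+1)\{j\,j\,\ell;j\,j\,t\}$ with $j=n/2$, and evaluate the $6j$ symbol explicitly with the Racah formula. The prefactor $\Delta(jj\ell)^2\Delta(jjt)^2$ has already been computed in the text:
\[
\frac{(n-\ell)!\,(\ell!)^2}{(n+\ell+1)!}\cdot\frac{(n-t)!\,(t!)^2}{(n+t+1)!}.
\]
The Racah sum runs over an index $z$ with denominator factorials $(z-j-j-\ell)!$, $(z-\ell-j-j)!$, $(z-j-j-t)!$ (twice), and numerator-type factors $(2j+\ell+t-z)!$ (twice) and $(4j-z)!$. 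The numerator is $(-1)^z(z+1)!$.

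First, substitute $z=n+\ell+t-k$, so that $k$ runs from $0$ to $\min(\ell,t)$. With this substitution the four "triad" factorials become $(t-k)!$, $(t-k)!$, $(\ell-k)!$, $(\ell-k)!$, and the three "quadrilateral" factorials become $(k!)^2$ and $(n-\ell-t+k)!$. The numerator becomes $(n+\ell+t-k+1)!$, and the sign becomes $(-1)^{n+\ell+t-k}$. This sign cancels the prefactor $(-1)^{n+\ell+t}$, leaving $(-1)^k$.

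The resulting single sum is an equivalent hypergeometric expression, but it is not yet of the stated form: the stated form involves $(t+k)!(\ell+k)!(n-k)!/(n+1+k)!$. The key step is therefore a transformation of the terminating hypergeometric series. I would write the Racah sum as a balanced ${}_4F_3(1)$. Then I would apply a Whipple-type transformation, or equivalently the Saalschütz-balanced symmetry, to the $6j$ symbol with four equal entries, to convert it into the series
\[
\sum_k (-1)^k\binom{\ell+k}{2k}\binom{t+k}{2k}\binom{2k}{k}^{2}\frac{(n-k)!}{(n+1+k)!}
\]
times the appropriate normalization.

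An alternative, perhaps cleaner, route is to recall that for $a=b=d=e=j$ the $6j$ symbol is proportional to a Racah polynomial in $\ell(\ell+1)$, and to use its ${}_4F_3$ representation
\[
{}_4F_3\!\left(-\ell,\ \ell+1,\ -t,\ t+1;\ 1,\ -n,\ n+2;\ 1\right),
\]
up to a normalization fixed by $\ell=0$. Expanding the Pochhammer symbols $(-\ell)_k(\ell+1)_k=(-1)^k(\ell+k)!/(\ell-k)!$ (and likewise for $t$), and using $(-n)_k(n+2)_k=(-1)^k\,n!\,(n+1+k)!/\big((n-k)!\,(n+1)!\big)$, reproduces exactly the summand in the lemma with the sign $(-1)^k$. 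The remaining constant is $(n+1)!/n!$ divided by the $\ell=0$ value $\{jj0;jjt\}=(-1)^{n+t}/(n+1)$. This constant has to combine to give precisely $(2\ell+1)$ times the sum.

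I would verify the normalization by checking $\ell=0$, where $[M_n]_{0t}$ must equal $1/(n+1)\cdot$ (the $k=0$ term $=1$). Here I am using $(-1)^{n+t}\{jj0;jjt\}=1/(n+1)$, since the $k=0$ term is $t!\,n!/(t!\,(n+1)!)=1/(n+1)$. This is consistent.

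The main obstacle is justifying the Racah-polynomial (${}_4F_3$) representation, that is, the hypergeometric transformation from the raw Racah sum. I would do this by citing the standard identification of $6j$ symbols with Racah polynomials (Varšalovič), or by a direct Whipple transformation of the balanced ${}_4F_3$. As a fallback, I would prove equality via a recurrence in $\ell$ (the three-term recurrence of $6j$ symbols) that both sides satisfy, together with matching initial values at $\ell=0,1$, using a Zeilberger-style certificate.
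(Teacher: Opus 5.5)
Your second, ``cleaner'' route is the paper's proof. The paper quotes Var\v{s}alovi\v{c}'s ${}_4F_3$ representation of a general $6j$ symbol and specializes it (after a column permutation) to $a=c=d=f=n/2$, $b=\ell$, $e=t$. It finds that the entire $\Delta$/factorial prefactor collapses to $(-1)^{n+\ell+t}/(n+1)$ in front of ${}_4F_3[-t,-\ell,t+1,\ell+1;-n,1,n+2\,|\,1]$. It then expands the Pochhammer symbols just as you do: $(n+1)!/n!=n+1$ cancels the $1/(n+1)$, and $(-1)^{n+\ell+t}$ cancels the sign in the definition of $M_n$. Note that these constants combine by multiplication, not by the division you wrote.

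The step you should not rely on is fixing the normalization ``by $\ell=0$''. A single value of $\ell$ cannot reveal how the prefactor depends on $\ell$, and it does depend on it. Relative to the ${}_4F_3$, the bare symbol carries a factor $(-1)^{\ell}$. For example, $\{\tfrac12\,\tfrac12\,1;\tfrac12\,\tfrac12\,0\}=+\tfrac12$, whereas the $\ell=0$ normalization $(-1)^{n+t}/(n+1)$ would give $-\tfrac12$. Only the signed combination $(-1)^{n+\ell+t}\{\cdot\}$ has the constant normalization $1/(n+1)$. That fact must be read off from the explicit prefactor of the cited formula, as the paper does, and the $\ell=0$ evaluation is then only a consistency check.

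Separately, the target series in your Whipple route is not proportional to the lemma's sum. Indeed, $\binom{\ell+k}{2k}\binom{t+k}{2k}\binom{2k}{k}^2=\frac{(\ell+k)!\,(t+k)!}{(\ell-k)!\,(t-k)!\,(k!)^4}$ differs from the lemma's summand by the $k$-dependent factor $(k!)^2$, which no overall normalization can absorb.
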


\begin{proof}
    We start by writing the Wigner-$6j$ symbol in terms of hypergeometric ${}_4F_3$ functions. Indeed \cite{varsalovicQuantumTheoryAngular1989}

\begin{equation}
    \begin{aligned}
        \begin{Bmatrix}
        a & b & c \\ 
        d & e & f
    \end{Bmatrix} &=  \frac{ (-1)^{a + b + d + e}\Delta(abc)\Delta(cde)\Delta(aef)\Delta(bdf) (a + e + f + 1)! (b + d + f + 1)! (a + c + d - f)!}{(a + b - c)! (a - b + c)! (-c + d + e)! (c + d - e)! (a + e - f)! (b + d - f)! (-a + c - d + f)! (2f + 1)!}   \\ 
    &\times {}_4F_3 \ \left[ \begin{matrix}
        -a - e + f,\, -b - d + f,\, -a + e + f + 1,\, b - d + f + 1 \\ 
        -a - c - d + f,\, -a + c - d + f + 1,\, 2f + 2
    \end{matrix} \middle| 1 \right]
    \end{aligned}
\end{equation}

In our case, $a = c = d = f = n/2$, $b= \ell,\, e = t$ (using the fact that $6j$ symbols are column-permutation invariant), this simplifies to

\begin{equation}
    \begin{aligned}
        \begin{Bmatrix}
            n/2 & \ell & n/2 \\ 
            n/2 & t & n/2
        \end{Bmatrix} &= \frac{(-1)^{n + \ell + t} (n-\ell)! (n-t)! (\ell! t!)^2}{(n + \ell + 1)!(n + t + 1)!} \times \frac{(n + t + 1)! (n + \ell + 1)! n! }{ (n-\ell)! (n-t)! (n + 1)! (t! \ell!)^2} \\ 
        &\times {}_4F_3 \left[ \begin{matrix}
       -t,\, -\ell ,\, t + 1 ,\,\ell + 1    \\ 
       -n,\, 1 ,\, n+2 
    \end{matrix} \middle| 1 \right] \\ 
        &= \frac{(-1)^{n + \ell + t}}{n+1} \times  {}_4F_3 \left[ \begin{matrix}
        -t,\, -\ell ,\, t + 1 ,\,\ell + 1    \\ 
        -n,\, 1 ,\, n+2 
        \end{matrix} \middle| 1 \right] 
    \end{aligned}
\end{equation}

Now, using \cite[\href{https://dlmf.nist.gov/16.2.E1}{Eq.~(16.2.1)}]{NISTDigitalLibrary},  and the convention for terminating hypergeometric series with negative-integer parameters \cite{gesselFindingIdentitiesWZ1995}, we get

\begin{equation}
    \begin{Bmatrix}
        n/2 & n/2 & \ell \\ 
        n/2 & n/2 & t
    \end{Bmatrix} = \frac{(-1)^{n + \ell + t}}{n + 1} \times {}_4 F_3 \left[ \begin{matrix}
        -t,\, -\ell ,\, t + 1 ,\,\ell + 1    \\ 
        -n,\, 1 ,\, n+2 
        \end{matrix} \middle| 1 \right]  = \frac{(-1)^{n + \ell + t}}{n+1} \times \sum^{ \min(\ell,t)}_{k =0} \frac{(-t)_k (-\ell)_k (t + 1)_k (\ell+1)_k}{(-n)_k (1)_k (n+2)_k  \cdot k!}
\end{equation}

where $(a)_r = a (a+1) \dots (a + r-1)$ for $a \in \mathbb{C},\,  r \in \mathbb{Z}^+$, with $(a)_0 = 1$.  So for $m \in \mathbb{Z}_{\geq 0}$ we have $(m)_r = m(m+1)(m+2) \dots(m+r-1) , \ (-m)_r = \begin{cases}
    \frac{(-1)^r m!}{(m-r)!}, & 0 \leq r \leq m \\ 0, & r > m 
\end{cases}$; see \cite[\href{https://dlmf.nist.gov/5.2}{Eq.~(5.2.4, 5.2.7 )}]{NISTDigitalLibrary}. So, our formula becomes

\begin{equation}
    \begin{aligned}
        \begin{Bmatrix}
        n/2 & n/2 & \ell \\
        n/2 & n/2 & t
        \end{Bmatrix} &= \frac{(-1)^{n + \ell + t}}{n + 1} \times \sum_{k =0}^{\min(\ell,t)}  (-1)^k \frac{t! \ell! (t + k)! (\ell+k)! }{(t-k)! (\ell-k)! t! \ell! k! } \cdot \frac{(n-k)! (n+1)!}{n! k! (n + 1 + k)!} \\ 
        &=  (-1)^{n + \ell + t} \sum_{k =0}^{\min(\ell,t)} (-1)^k  \frac{(t + k)! (\ell+k)! (n-k)! }{(t-k)! (\ell-k)! (n+1+k)! (k!)^2}
    \end{aligned}
\end{equation}

Substituting into equation \eqref{su2-macwilliams} gives the desired formula. 

\end{proof}

We now show how we can factor the MacWilliams matrix. 

\begin{lemma}
    Define the matrices 

    \begin{equation}
        [\mathsf{R}_n]_{\ell j} = \begin{cases}
            \frac{(2\ell+1)(\ell+j)! (n-j)!}{j!(\ell-j)! (n-\ell)! (n+\ell+1)!}, & j \leq \ell \\
            0, & j > \ell 
        \end{cases}, \quad [\mathsf{U}_n]_{ij} = \begin{cases}
            (-1)^i \binom{j}{i} , & i \leq j \\ 
            0, & i > j 
        \end{cases}
    \end{equation}

    with $0 \leq \ell,j \leq n$. Then, it follows that $\mathsf{R}_n$ is invertible, and we can write 

    \begin{equation}
        M_n = \mathsf{R}_n \mathsf{U}_n \mathsf{R}_n^{-1} 
    \end{equation} \label{su2-macwilliams-matrix-factorization}
\end{lemma}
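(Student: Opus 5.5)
Since $\mathsf R_n$ is lower triangular, I would prove invertibility first and then reduce the factorization to the single matrix identity $M_n\mathsf R_n=\mathsf R_n\mathsf U_n$, checked entrywise.

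\textbf{Invertibility.} $\mathsf R_n$ is lower triangular by definition. Its diagonal entries are
\begin{equation*}
[\mathsf R_n]_{\ell\ell}=\frac{(2\ell+1)(2\ell)!}{\ell!\,(n+\ell+1)!},
\end{equation*}
which are nonzero for every $0\le\ell\le n$. Hence $\mathsf R_n$ is invertible, and it suffices to prove $M_n\mathsf R_n=\mathsf R_n\mathsf U_n$.

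\textbf{Reducing both sides to explicit sums.} The right-hand side is a single finite sum,
\begin{equation*}
[\mathsf R_n\mathsf U_n]_{\ell j}=\sum_{i=0}^{\min(\ell,j)}(-1)^i\binom{j}{i}[\mathsf R_n]_{\ell i}.
\end{equation*}
For the left-hand side I would use Lemma~\ref{su2-macwilliams-simplified}, which factors $M_n$ as $M_n=AB$ with
\begin{equation*}
A_{\ell k}=\frac{(2\ell+1)(\ell+k)!}{(\ell-k)!\,k!}\ (k\le\ell),\qquad B_{kt}=(-1)^k\frac{(n-k)!}{(n+1+k)!}\,\frac{(t+k)!}{(t-k)!\,k!}\ (k\le t).
\end{equation*}
Here $A$ is lower triangular and $B$ is upper triangular. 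Then $[M_n\mathsf R_n]_{\ell j}=\sum_k A_{\ell k}\,(B\mathsf R_n)_{kj}$, and the whole difficulty is concentrated in the inner sum
\begin{equation*}
(B\mathsf R_n)_{kj}=(-1)^k\frac{(n-k)!\,(n-j)!}{(n+1+k)!\,k!\,j!}\sum_{t=\max(k,j)}^{n}\frac{(2t+1)(t+k)!\,(t+j)!}{(t-k)!\,(t-j)!\,(n-t)!\,(n+t+1)!}.
\end{equation*}

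\textbf{The main obstacle: the inner sum over $t$.} The factor $(2t+1)$ together with the paired factorials $(t\pm k)$, $(t\pm j)$ and $(n-t)$, $(n+t+1)$ identifies this as a terminating very-well-poised ${}_5F_4$ series. I would first try to evaluate it in closed form by Dougall's ${}_5F_4$ summation, after shifting $t\mapsto t+\max(k,j)$ and checking the parameter balance. I expect a product of factorials that vanishes unless $k\le j$, which would make $B\mathsf R_n$ upper triangular. If the parameters do not fit Dougall's formula directly, the fallback is to compute a WZ/Zeilberger certificate for the summand, following the approach of Gessel and Koepf cited above. That certificate would give a first-order recurrence in $j$ (or $n$), and I would then check initial values.

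\textbf{Matching the two sides.} With the closed form in hand, substituting back gives $[M_n\mathsf R_n]_{\ell j}$ as a single sum over $k\le\min(\ell,j)$. It remains to identify its summand with $(-1)^k\binom{j}{k}[\mathsf R_n]_{\ell k}$, which should be a direct factorial comparison. If the closed form comes out in a slightly different normalization, I would instead reduce the difference of the two sides to a Chu--Vandermonde-type finite sum in $k$. As a sanity check on the structure, $\mathsf U_n$ is an involution by binomial inversion, so the identity predicts $M_n^2=I$; this can be tested at small $n$ (e.g.\ $n=1,2$) before committing to the general computation.
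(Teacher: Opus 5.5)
Your invertibility argument, the reduction to $M_n\mathsf R_n=\mathsf R_n\mathsf U_n$, and the plan to evaluate the inner $t$-sum in closed form all follow the paper's route. The gap is in what you expect that closed form to be. Your inner sum
\[
\Sigma_{kj}=\sum_{t=\max(k,j)}^{n}\frac{(2t+1)(t+k)!\,(t+j)!}{(t-k)!\,(t-j)!\,(n-t)!\,(n+t+1)!}
\]
has strictly positive terms and is symmetric under $k\leftrightarrow j$. It therefore cannot vanish for $k>j$, so $B\mathsf R_n$ is not upper triangular.

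A side remark on the summation formula: after your shift the series is a very-well-poised ${}_4F_3$ at argument $-1$, not a ${}_5F_4(1)$. It is summed by the limiting case of Dougall, ${}_4F_3[a,1+a/2,b,-N;a/2,1+a-b,1+a+N;-1]=(1+a)_N/(1+a-b)_N$, which is what the paper uses. One gets $\Sigma_{kj}=(k+j)!/[(n-k)!\,(n-j)!]$. Hence $(B\mathsf R_n)_{kj}=(-1)^k\binom{k+j}{k}/(n+1+k)!$ and
\[
[M_n\mathsf R_n]_{\ell j}=(2\ell+1)\sum_{k=0}^{\ell}(-1)^k\frac{(\ell+k)!\,(k+j)!}{(\ell-k)!\,(k!)^2\,(n+1+k)!\,j!}.
\]
This runs over all $k\le\ell$ and does not agree term by term with $\sum_{k\le\min(\ell,j)}(-1)^k\binom{j}{k}[\mathsf R_n]_{\ell k}$. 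Already at $j=0$ the right side is a single term, while the left side is an alternating sum of $\ell+1$ nonzero terms. So the ``direct factorial comparison'' in your last step fails.

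What remains is a genuine two-term relation between ${}_3F_2[\ell+1,j+1,-\ell;n+2,1;1]$ and ${}_3F_2[\ell+1,-j,-\ell;-n,1;1]$. Neither side sums to a product in general. The paper supplies this relation via the terminating Thomae transformation; your fallback of reducing the difference to a Chu--Vandermonde sum does not provide it as stated.

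If you want to use only Chu--Vandermonde, exploit the involution you noticed. Since $\mathsf U_n^2=I$, it suffices to prove $A\,(B\mathsf R_n\mathsf U_n)=\mathsf R_n$. The identity $\sum_i(-1)^i\binom{j}{i}\binom{k+i}{k}=(-1)^j\binom{k}{j}$ gives $(B\mathsf R_n\mathsf U_n)_{kj}=(-1)^{k+j}\binom{k}{j}/(n+1+k)!$, which is lower triangular. Then
\[
\sum_{k=j}^{\ell}A_{\ell k}\,\frac{(-1)^{k+j}\binom{k}{j}}{(n+1+k)!}
=\frac{(2\ell+1)\,(\ell+j)!}{j!\,(\ell-j)!\,(n+j+1)!}\;{}_2F_1\!\left[-(\ell-j),\,\ell+j+1;\,n+j+2;\,1\right]=[\mathsf R_n]_{\ell j}
\]
by Chu--Vandermonde, which would complete your argument.
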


\begin{proof}
    Indeed, we see
    
    \begin{equation}
        \begin{aligned}
        [\mathsf{R}_n \mathsf{U}_n]_{\ell t} &=  \sum_k [\mathsf{R}_n]_{\ell k} [\mathsf{U}_n]_{kt}  =  \sum_{k \leq \min(\ell,t)} \frac{(2\ell + 1) (\ell+k)! (n-k)!}{k! (\ell-k)! (n-\ell)! (n+\ell+1)!} \cdot (-1)^k \frac{t!}{k! (t-k)!} \\ 
        &= \sum_{k = 0}^{\min(\ell,t)} (-1)^k  \frac{(2\ell+1) (\ell+k)! (n-k)! t!}{(k!)^2 (t-k)! (\ell-k)! (n-\ell)! (n+\ell+1)!}
    \end{aligned}, \label{R_nU_n}
    \end{equation}

    and similarly, 

    \begin{equation}
        \begin{aligned}
            [M_n \mathsf{R}_n]_{\ell t}
            &= \sum_{k} [M_n]_{\ell k} [\mathsf{R}_n]_{kt} \\
            &= (2\ell + 1)  \sum_{k=t}^n \frac{(2k+1)(k+t)! (n-t)!}{t!(k-t)! (n-k)! (n+k + 1)!} \\
            &\quad\times \sum_{j = 0}^{\min(\ell,k)} (-1)^j \frac{(k + j)! (\ell + j)! (n-j)!}{(k-j)! (\ell-j)! (n+1+j)! (j!)^2} 
        \end{aligned}
    \end{equation}

    Now, we're going to utilize the Whipple identity \cite{koepfHypergeometricSummationAlgorithmic2014} (we specialize to the case $e = -N$)
    
    \begin{equation}
        {}_4F_3 \left[ \begin{matrix}
        a,\, 1 + a/2,\, d,\, -N \\ 
        a/2,\, 1 + a-d,\, 1 + a+N
    \end{matrix}  \middle| -1 \right] = \frac{(1+a)_N}{(1+a-d)_N} \label{whipple}
    \end{equation}

    Note that the sum is defined for $t \leq k \leq n,\, 0 \leq j \leq \ell, \ 0 \leq j \leq k$. We want to flip the order of the finite sum i.e., fix $j$ first, which gives us $t,j \leq k \leq n, \ 0 \leq j \leq \ell$. Thus, the sum becomes 

    \begin{equation}
        \begin{aligned}
            [M_n \mathsf{R}_n]_{\ell t} &= (2\ell+1) \sum_{j = 0}^\ell  (-1)^j \frac{(\ell+j)! (n-j)! (n-t)!}{t! (\ell-j)! (n+1+j)! (j!)^2} \sum_{k = \max(t,j)}^n  \frac{(k+j)! (2k+1) (k+t)!}{(k-t)! (n-k)! (n+k+1)! (k-j)!}
        \end{aligned}
    \end{equation}

    Re-writing the inner sum with $r = n-k \implies k = n-r$, we have 

    \begin{equation}
        \begin{aligned}
            &\sum_{r = 0}^{n-\max(t,j)}  \frac{(n-r + j)! (2n - 2r+1) (n-r+t)!}{(n-r-t)! (r)! (2n-r+1)! (n-r-j)!}  \\ 
            = \frac{(n+j)!(n+t)!}{(n-t)! (n-j)! (2n)! } &\sum_{r =0}^{n- \max(j,t)} \frac{(n-r + j)! (n-j)!}{r! (n+j)! (n-r-j)!} \cdot \frac{(n-r + t)! (n-t)!}{(n + t)! (n-r - t)!}  \cdot \frac{(2n+1)!}{(2n - r + 1)!}  \cdot  \frac{(2n-2r + 1)}{2n+1} \\ 
             = \frac{(n+j)!(n+t)!}{(n-t)! (n-j)! (2n)!} &\sum_{r =0}^{n- \max(j,t)} \frac{(j-n)_r}{(-j-n)_r} \cdot \frac{(t-n)_r}{(-t -n)_r} \cdot (-1)^r (-2n - 1)_r \cdot \frac{(1/2 - n)_r}{(-1/2 - n)_r} \cdot \frac{1}{r!} \\
             &= \frac{(n+j)!(n+t)!}{(n-t)! (n-j)! (2n)!} {}_4F_3 \left[\begin{matrix}
                -2n - 1,\, 1 - (2n + 1)/2,\, t-n,\, j-n \\ 
                -n-1/2,\, -n -t,\, -n-j
             \end{matrix} \middle| -1 \right] 
        \end{aligned}
    \end{equation}

    and we see this is equation \eqref{whipple} with $a = -2n - 1$, $d = t-n$, $N = n-j$. So, we can just substitute

    \begin{equation}
        \begin{aligned}
            [M_n \mathsf{R}_n]_{\ell t} &= (2\ell+1) \sum_{j = 0}^\ell  (-1)^j \frac{(\ell+j)! (n-j)! (n-t)!}{t! (\ell-j)! (n+1+j)! (j!)^2}  \cdot \frac{(n+j)! (n+t)!}{(n-t)! (n-j)! (2n)!} \frac{(2n)! (t+j)!}{(n + j)! (n+t)!} \\
            &= (2\ell+1) \sum_{j = 0}^\ell  (-1)^j  \frac{(\ell+j)! (t+j)!}{t!(\ell-j)!(n+1+j)! (j!)^2} \\ 
            &= \frac{(2\ell+1)}{(n+1)!} {}_3 F_2 \left[\begin{matrix}
                \ell+1,\, t+1,\, -\ell \\
                n+2,\, 1
            \end{matrix}\middle| 1 \right]
        \end{aligned}
    \end{equation}
    using $(-\ell)_j (\ell+1)_j = (-1)^j \frac{(\ell+j)!}{(\ell-j)!}$.  We now apply the terminating Thomae transformation \cite{raoGroupTheoreticalBasis1992} \begin{equation}
{}_3F_2 \left[\begin{matrix}
    a,\, b,\, -N \\ d,\, e
    \end{matrix} \middle| 1 \right] = \frac{(d-a)_N}{(d)_N}{}_3F_2 \left[\begin{matrix}
        a,\, e-b,\, -N \\ 1+a-d-N,\, e
    \end{matrix} \middle| 1 \right].
\end{equation}

    In our notation we have $a = \ell+1$, $b = t+1$, $N = \ell$, $d = n+2$, $e = 1$. So, this gives us 

    \begin{equation}
        \begin{aligned}
            [M_n \mathsf{R}_n]_{\ell t} &= \frac{(2\ell+1)}{(n+1)!} {}_3 F_2 \left[\begin{matrix}
                \ell+1,\, t+1,\, -\ell \\
                n+2,\, 1
            \end{matrix}\middle| 1 \right] \\ 
            &=  \frac{(2\ell+1)n!}{(n-\ell)! (n + 1 + \ell )!} {}_3F_2  \left[\begin{matrix}
                \ell+1,\, -t,\, -\ell \\
                 -n,\, 1
            \end{matrix}\middle| 1 \right] \\ 
            &= \frac{(2\ell +1)n!}{(n-\ell)! (n+1+\ell)!} \sum_{k=0}^{\min(\ell,t)} \frac{(\ell+1)_k (-t)_k (-\ell)_k}{(-n)_k (1)_k } \frac{1}{k!} \\ 
            &= \frac{(2\ell +1)}{(n-\ell)! (n+1+\ell)!} \sum_{k=0}^{\min(\ell,t)} (-1)^k \frac{(\ell+k)! t! (n-k)! }{ (t-k)! (\ell-k)! (k!)^2} = [\mathsf{R}_n \mathsf{U}_n]_{\ell t}
        \end{aligned} 
    \end{equation}

    which was to be shown. Furthermore, since $\mathsf{R}_n$ is triangular, and all its diagonal entries are non-zero, $\mathsf{R}_n$ is invertible, so we conclude $M_n = \mathsf{R}_n \mathsf{U}_n \mathsf{R}_n^{-1}$, as desired. 

\end{proof}

Now, we find an intertwiner $\mathsf{Q}_n$ that satisfies the desired commutation in the rotated basis of the MacWilliams matrix. 

\begin{lemma}
    
Let

\begin{equation}
    \begin{aligned}
        a_{n,t} = \frac{(n+1-t)(n+2t + 2)}{n+2}, \quad  b_{n,t} &= \frac{4t^2 + (2-3n)t - 2n}{n+2}, \quad  c_{n,t} = \frac{2(t+1)(n-t)}{n+2} \\ 
        \mathsf{Q}_n \in \mathbb{R}^{(n+2)\times (n+1)}, \quad [\mathsf{Q}_n]_{it} &= \begin{cases}
    a_{n,t}, & i = t, \\ 
    b_{n,t}, & i = t+1, \\ 
    c_{n,t}, & i = t + 2, \\ 
    0, & \text{otherwise}
\end{cases}
    \end{aligned} \label{Qn-definition}
\end{equation}

It follows that $\mathsf{U}_{n+1} \mathsf{Q}_n = \mathsf{Q}_n \mathsf{U}_n$. 
\end{lemma}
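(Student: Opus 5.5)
The plan is to prove $\mathsf{U}_{n+1}\mathsf{Q}_n=\mathsf{Q}_n\mathsf{U}_n$ by comparing the two products entrywise. Since $\mathsf{Q}_n$ has only three nonzero bands, each entry reduces to a three-term binomial identity. Write $q_{st}=[\mathsf{Q}_n]_{st}$, so that $q_{tt}=a_{n,t}$, $q_{t+1,t}=b_{n,t}$ and $q_{t+2,t}=c_{n,t}$. The entries of $\mathsf U_m$, namely $(-1)^i\binom{j}{i}$, do not depend on $m$. For $0\le i\le n+1$ and $0\le t\le n$, the two sides are therefore
\begin{equation*}
[\mathsf{U}_{n+1}\mathsf{Q}_n]_{it}=(-1)^i\Bigl[a_{n,t}\tbinom{t}{i}+b_{n,t}\tbinom{t+1}{i}+c_{n,t}\tbinom{t+2}{i}\Bigr]
\end{equation*}
and
\begin{equation*}
[\mathsf{Q}_n\mathsf{U}_n]_{it}=\sum_{j\in\{i,i-1,i-2\}}q_{ij}(-1)^j\tbinom{t}{j}=(-1)^i\Bigl[a_{n,i}\tbinom{t}{i}-b_{n,i-1}\tbinom{t}{i-1}+c_{n,i-2}\tbinom{t}{i-2}\Bigr].
\end{equation*}
Throughout, binomials with a negative lower index, or with lower index exceeding the upper one, are taken to be zero.

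Before comparing, I will check that the range restrictions on $\mathsf{Q}_n$ are harmless. The entry $c_{n,n}=0$ vanishes automatically, so the missing row $n+2$ causes no trouble. In the right-hand sum the column index $j$ must satisfy $j\le n$. This only matters in row $i=n+1$, where the $a_{n,n+1}$ term would be needed, but it is multiplied by $\binom{t}{n+1}=0$ because $t\le n$. Similarly, the terms $b_{n,i-1}$ and $c_{n,i-2}$ appear only when the binomial beside them is nonzero. Hence it suffices to prove the single identity
\begin{equation*}
a_{n,t}\tbinom{t}{i}+b_{n,t}\tbinom{t+1}{i}+c_{n,t}\tbinom{t+2}{i}=a_{n,i}\tbinom{t}{i}-b_{n,i-1}\tbinom{t}{i-1}+c_{n,i-2}\tbinom{t}{i-2},
\end{equation*}
with $a,b,c$ given by the polynomial formulas in Eq.~\eqref{Qn-definition} for all integer arguments.

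I will prove this identity by cases on $i$ relative to $t$.

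For $i>t+2$, both sides vanish.

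For $i\le t$, I factor out $\binom{t}{i}$ using
\begin{equation*}
\tbinom{t+1}{i}=\tbinom{t}{i}\frac{t+1}{t+1-i},\qquad \tbinom{t+2}{i}=\tbinom{t}{i}\frac{(t+1)(t+2)}{(t+1-i)(t+2-i)},
\end{equation*}
\begin{equation*}
\tbinom{t}{i-1}=\tbinom{t}{i}\frac{i}{t-i+1},\qquad \tbinom{t}{i-2}=\tbinom{t}{i}\frac{i(i-1)}{(t-i+1)(t-i+2)}.
\end{equation*}
Multiplying through by $(n+2)(t+1-i)(t+2-i)$ turns the claim into a polynomial identity in $(n,t,i)$ of total degree at most about $4$. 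I will verify it by expanding and collecting monomials, which is mechanical but is the main computational obstacle. A useful sanity check is to confirm it first at a few small values of $i$ (say $i=0,1,2$), where the right side loses terms.

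The boundary rows $i=t+1$ and $i=t+2$, where $\binom{t}{i}=0$, must be checked directly:
\begin{equation*}
b_{n,t}(t+1)+c_{n,t}\tbinom{t+2}{2}\cdot\frac{2}{t+2}\cdot\frac{t+2}{2}\;\text{(i.e.\ }b_{n,t}+c_{n,t}(t+2))=-b_{n,t}+c_{n,t-1}t \quad (i=t+1),
\end{equation*}
\begin{equation*}
c_{n,t}=c_{n,t}\quad (i=t+2).
\end{equation*}
The case $i=t+2$ is trivial. The case $i=t+1$ is a one-line check from the explicit formulas for $b_{n,t}$ and $c_{n,t}$.
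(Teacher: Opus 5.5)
Your proposal is correct. Its skeleton is the paper's: the same entrywise formulas for $[\mathsf U_{n+1}\mathsf Q_n]_{it}$ and $[\mathsf Q_n\mathsf U_n]_{it}$, reducing to the same three-term binomial identity. Your check that $c_{n,n}=0$ and $\binom{t}{n+1}=0$ make the truncated row and column harmless is right, and the paper leaves it implicit.

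Where you differ is in how you verify that identity. You split into $i\le t$, $i=t+1$, $i=t+2$, $i>t+2$. In the first case you divide by $\binom{t}{i}$ and leave a quartic polynomial identity to be expanded. The paper instead uses Pascal's rule to rewrite $\binom{t+1}{i}$ and $\binom{t+2}{i}$ in terms of $\binom{t}{i},\binom{t}{i-1},\binom{t}{i-2}$. It then uses $a_{n,t}+b_{n,t}+c_{n,t}=n+1$ and $(n+2)(b_{n,t}+2c_{n,t})=n(t+2)-2t$ to show that $(n+2)$ times the difference is
\[
(2i-n)\Bigl[i\tbinom{t}{i}-(t-i+1)\tbinom{t}{i-1}\Bigr]+2(i+t-n-1)\Bigl[(i-1)\tbinom{t}{i-1}-(t-i+2)\tbinom{t}{i-2}\Bigr].
\]
This vanishes for every integer $i$ by the absorption identity.

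The paper's route gives you uniformity in $i$, with no case split or boundary rows, and only quadratic coefficient algebra. Your route is more brute-force but finite and mechanical. Your deferred quartic identity does hold: it is exactly the display above divided by $\binom{t}{i}$, and each bracket is then identically zero as a rational function in $(t,i)$. Writing it that way spares you the monomial expansion.

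Your $i=t+1$ check is right in substance. The needed identity is $2b_{n,t}+(t+2)c_{n,t}=t\,c_{n,t-1}$, and multiplied by $n+2$ it reads $2(4t^2+(2-3n)t-2n)+2(t+1)(t+2)(n-t)-2t^2(n-t+1)=0$. However, the first expression in that display, $b_{n,t}(t+1)+c_{n,t}\binom{t+2}{2}\cdots$, is wrong. The left side is $b_{n,t}\binom{t+1}{t+1}+c_{n,t}\binom{t+2}{t+1}=b_{n,t}+(t+2)c_{n,t}$, as your parenthetical correctly states.
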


\begin{proof}
    Indeed, with the convention that $\binom{j}{i} = 0$ for $i > j,\, i < 0$, we have

\begin{equation}
    \begin{aligned}
        [\mathsf{U}_{n+1} \mathsf{Q}_n]_{ij}
        &= \sum_t [\mathsf{U}_{n+1}]_{it} [\mathsf{Q}_n]_{tj} \\
        &= (-1)^i \left[
            \binom{j}{i} a_{n,j}
            + \binom{j + 1}{i} b_{n,j}
            + \binom{j + 2}{i} c_{n,j}
        \right], \\
        [\mathsf{Q}_n \mathsf{U}_n]_{ij}
        &= \sum_t [\mathsf{Q}_n]_{it} [\mathsf{U}_n]_{tj} \\
        &= (-1)^i \left[
            a_{n,i} \binom{j}{i}
            - b_{n,i-1} \binom{j}{i-1}
            + c_{n,i-2} \binom{j}{i-2}
        \right].
    \end{aligned}
\end{equation}

Now, Pascal's identity give us $\binom{n}{k} = \binom{n-1}{k-1} + \binom{n-1}{k}$, so we have $\binom{j+1}{i} = \binom{j}{i-1} + \binom{j}{i},\, \binom{j+2}{i} = \binom{j+1}{i} + \binom{j+1}{i-1} = \binom{j}{i-2} +  2\binom{j}{i-1} + \binom{j}{i}$. So, we see 

\begin{equation}
    \begin{aligned}
        \binom{j}{i} a_{n,j}
        + \binom{j + 1}{i} b_{n,j}
        + \binom{j + 2}{i} c_{n,j}
        &= \binom{j}{i}(a_{n,j}+b_{n,j}+c_{n,j}) \\
        &\quad + \binom{j}{i-1}(b_{n,j}+2c_{n,j})
        + \binom{j}{i-2}c_{n,j}.
    \end{aligned}
\end{equation}

We can simplify $a_{n,j} + b_{n,j} + c_{n,j} = n + 1, \ b_{n,j} + 2c_{n,j} = \frac{-2j + n(2+j)}{n+2} $. Now, we want to somehow go from parameters indexed by $j$ to parameters indexed by $i$. Note that $i \binom{j}{i} = i \frac{j!}{i! (j-i)!} = \frac{j!}{(i-1)!(j-i)!} = (j-i + 1)\binom{j}{i-1} $, and likewise $(i-1) \binom{j}{i-1} = (j-i + 2) \binom{j}{i-2}$. Thus, we see

\begin{equation}
    \begin{aligned}
        (-1)^i[\mathsf{U}_{n+1} \mathsf{Q}_n-\mathsf{Q}_n \mathsf{U}_n]_{ij}
        &= \binom{j}{i}
        \left[n+1-\frac{(n+1-i)(n+2i+2)}{n+2}\right] \\
        &\quad + \binom{j}{i-1}
        \left[
            \frac{-2j+n(2+j)}{n+2}
            + \frac{4(i-1)^2+(2-3n)(i-1)-2n}{n+2}
        \right] \\
        &\quad + \binom{j}{i-2}
        \left[
            \frac{2(j+1)(n-j)-2(i-1)(n-i+2)}{n+2}
        \right] \\
        &= \binom{j}{i}\frac{i(2i-n)}{n+2} \\
        &\quad + \binom{j}{i-1}
        \frac{2(i-1)(i+j-n-1)-(2i-n)(j-i+1)}{n+2} \\
        &\quad + \binom{j}{i-2}
        \frac{2(i-j-2)(i+j-n-1)}{2+n} \\
        &= (n+2)^{-1}\Biggl\{
            (2i-n)\left[
                i \binom{j}{i}-(j-i+1)\binom{j}{i-1}
            \right] \\
        &\qquad\qquad
            +2(i+j-n-1)\left[
                (i-1)\binom{j}{i-1}-(j-i+2)\binom{j}{i-2}
            \right]
        \Biggr\} \\
        &= 0.
    \end{aligned}
\end{equation}

using the two binomial coefficient identities. Thus, we conclude that 

\begin{equation}
    \mathsf{U}_{n+1} \mathsf{Q}_n = \mathsf{Q}_n \mathsf{U}_n \label{U_nQ_n_intertwine}
\end{equation}

which was to be shown.
\end{proof}

\begin{lemma}
    Define the map 

\begin{equation}
    T_n = \mathsf{R}_{n+1} \mathsf{Q}_n \mathsf{R}_{n}^{-1} \label{MacWilliams-intertwiner}
\end{equation}

This map satisfies the conditions of Lemma \ref{LP-monotonicity-condition}, and the monotonicity of LP feasibility follows. 
\end{lemma}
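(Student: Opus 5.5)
We check the four conditions of Lemma~\ref{LP-monotonicity-condition} for $T_n=\mathsf R_{n+1}\mathsf Q_n\mathsf R_n^{-1}$ one at a time.

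\textbf{Intertwining.} This follows from the two factorizations already established. By Lemma~\ref{su2-macwilliams-matrix-factorization}, $M_{n+1}\mathsf R_{n+1}=\mathsf R_{n+1}\mathsf U_{n+1}$ and $\mathsf R_n^{-1}M_n=\mathsf U_n\mathsf R_n^{-1}$. Combining these with $\mathsf U_{n+1}\mathsf Q_n=\mathsf Q_n\mathsf U_n$ gives
\begin{equation*}
M_{n+1}T_n=\mathsf R_{n+1}\mathsf U_{n+1}\mathsf Q_n\mathsf R_n^{-1}=\mathsf R_{n+1}\mathsf Q_n\mathsf U_n\mathsf R_n^{-1}=T_nM_n .
\end{equation*}

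\textbf{Lower-triangularity.} All three factors are lower triangular in the sense $[\cdot]_{\ell t}=0$ for $\ell<t$. For $\mathsf R_{n+1}$ this is by definition, and $\mathsf R_n^{-1}$ inherits it as the inverse of an invertible lower-triangular matrix. For $\mathsf Q_n$, the only nonzero entries sit at $i=t,t+1,t+2$. The product of these factors is therefore lower triangular.

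\textbf{Explicit entries of $T_n$.} The remaining two conditions need the entries themselves, so the first step is to invert $\mathsf R_n$ in closed form. Up to diagonal rescalings, $\mathsf R_n$ is the matrix expressing Legendre-type polynomials in a falling-factorial basis. Concretely, writing $[\mathsf R_n]_{\ell j}=(2\ell+1)\,\alpha_\ell\,\binom{\ell+j}{j}\binom{\ell}{j}\,\beta_j$ with $\alpha_\ell=1/[(n-\ell)!(n+\ell+1)!]$ and $\beta_j=(n-j)!$ (modulo factorials), the inner matrix $\binom{\ell+j}{j}\binom{\ell}{j}$ is the shifted-Legendre coefficient matrix. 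Its inverse is known to be $(-1)^{\ell+j}(2j+1)\binom{2\ell}{\ell-j}\big/\binom{\ell+j+1}{\ell-j}$-type, and I would obtain the precise form by Legendre orthogonality or verify it by a Chu--Vandermonde sum.

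Next, $\mathsf Q_n\mathsf R_n^{-1}$ has three banded contributions, and multiplying by $\mathsf R_{n+1}$ gives a single finite sum for $[T_n]_{\ell t}$. I expect this sum to collapse through a Gosper/Zeilberger-style telescoping, or a Saalschütz ${}_3F_2$ evaluation, to a product formula. One useful guide is the physical interpretation: $T_n$ should describe how the weight enumerator of a code on spin $n/2$ transforms when the code is embedded into spin $(n+1)/2$. That suggests $T_n$ is a two-band matrix with $[T_n]_{\ell\ell}$ and $[T_n]_{\ell+1,\ell}$ nonzero, resembling coupling with an extra spin-$1/2$. So I would first test small $n$ symbolically to guess the entries, likely $[T_n]_{\ell\ell}=\frac{(n+1-\ell)(\ldots)}{(n+1)(\ldots)}$ and its complement on the subdiagonal, and then prove the guessed closed form from the factorization.

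\textbf{Column-stochasticity and nonnegativity.} Column-stochasticity $\mathbf 1^TT_n=\mathbf 1^T$ can be checked independently of the closed form. We need $\mathbf 1^T\mathsf R_{n+1}\mathsf Q_n=\mathbf 1^T\mathsf R_n$, that is, the column sums $\sigma^{(n)}_j=\sum_\ell[\mathsf R_n]_{\ell j}$ must satisfy
\begin{equation*}
a_{n,j}\sigma^{(n+1)}_j+b_{n,j}\sigma^{(n+1)}_{j+1}+c_{n,j}\sigma^{(n+1)}_{j+2}=\sigma^{(n)}_j .
\end{equation*}
Here $\sigma_j^{(n)}$ is a terminating hypergeometric sum; by the Whipple identity already used, it should evaluate to something simple, since $\mathbf 1^TM_n$ is the trace-normalization row. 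Once the closed form for $T_n$ is in hand, nonnegativity follows by inspecting the factors for $0\le t\le\ell\le n+1$.

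\textbf{Main obstacle.} The hardest step is evaluating $[T_n]_{\ell t}$ in closed form and showing it is nonnegative, because $\mathsf R_n^{-1}$ has alternating signs. The cancellation must be established exactly. If the telescoping does not close cleanly, I would fall back on a WZ certificate to prove the guessed formula.
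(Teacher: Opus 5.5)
Your intertwining argument is the paper's, and your lower-triangularity argument is correct: $T_n$ is a product of the triangular factors $\mathsf R_{n+1}$, $\mathsf Q_n$, $\mathsf R_n^{-1}$. That is slightly cleaner than reading it off a closed form.

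The gap is nonnegativity, which is the one condition that does not follow formally from the factorization. You never obtain $[T_n]_{\ell t}$, so $T_n\geq 0$ is not proved, and the shape you propose to guess is wrong. $T_n$ is not bidiagonal: every entry with $\ell\geq t$ is strictly positive. Already for $n=1$, computing $\mathsf R_2\mathsf Q_1\mathsf R_1^{-1}$ gives
\[
T_1=\begin{pmatrix}2/3&0\\ 1/4&5/12\\ 1/12&7/12\end{pmatrix},
\]
so any attempt to prove a two-band formula would fail. The heuristic behind that guess, embedding a spin-$n/2$ code into spin $(n+1)/2$, would itself amount to monotonicity of codes. The paper explicitly leaves that open, so it cannot guide the computation.

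What you are missing is the closed form
\[
[T_n]_{\ell t}=\begin{cases}\dfrac{(n+1-t)(n+2t+2)}{(n+2)(n+t+2)}, & \ell=t,\\ \dfrac{(2\ell+1)\bigl(n+2+2t(t+1)\bigr)(n-t)!\,(n+t+1)!}{(n+2)\,(n+1-\ell)!\,(n+\ell+2)!}, & \ell>t,\end{cases}
\]
together with a verification that sidesteps the alternating signs of $\mathsf R_n^{-1}$ you identify as the obstacle. The paper never inverts $\mathsf R_n$. Instead it checks $T_n\mathsf R_n=\mathsf R_{n+1}\mathsf Q_n$ entrywise:
\begin{itemize}
\item The only long sum is on the left, and its terms are nonnegative. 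It collapses via the identity $\sum_{t=r}^{\ell-1}\frac{(2t+1)(n+2+2t(t+1))(t+r)!}{(t-r)!}=\frac{\ell(\ell-r)(nr+2n+2(r+1)\ell^2+2)(\ell+r)!}{(r+1)(r+2)(\ell-r)!}$, proved by induction on $\ell$.
\item The right side has only three terms per entry, and the two sides match.
\item Invertibility of $\mathsf R_n$ then identifies this matrix as $T_n$, and positivity is read off directly.
\end{itemize}

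Your column-stochasticity reduction is sound, and it can be finished without the closed form. Let $e_0$ be the first standard basis vector. The standard $6j$ sum rule gives $\mathbf 1^TM_n=(n+1)e_0^T$. Combined with $\mathsf U_n^2=I$ and $e_0^T\mathsf U_n=\mathbf 1^T$, this yields $\mathbf 1^T\mathsf R_n=\mathbf 1^T/n!$. Your three-term relation then reduces to $a_{n,j}+b_{n,j}+c_{n,j}=n+1$, which is a shortcut compared with the paper's telescoping of the explicit column sums. As written, though, you leave that evaluation open as well, so both nonnegativity and column-stochasticity remain unproved in your proposal.
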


\begin{proof}
    From the proven intertwining relation \eqref{U_nQ_n_intertwine}, we see that 

\begin{equation}
    \begin{aligned}
        M_{n+1} T_n &= \mathsf{R}_{n+1} \mathsf{U}_{n+1} \mathsf{R}_{n+1}^{-1} \mathsf{R}_{n+1} \mathsf{Q}_n \mathsf{R}_n^{-1} \\ 
        &= \mathsf{R}_{n+1} \mathsf{U}_{n+1} \mathsf{Q}_n \mathsf{R}_{n}^{-1} \\ 
        &= \mathsf{R}_{n+1} \mathsf{Q}_n \mathsf{U}_n \mathsf{R}_n^{-1} \\ 
        &= \mathsf{R}_{n+1} \mathsf{Q}_n \mathsf{R}_n^{-1} \mathsf{R}_n \mathsf{U}_n \mathsf{R}_{n}^{-1} \\ 
        &= T_n M_n 
    \end{aligned} \label{M_nT_n_intertwine}
\end{equation}

that is, $T_n$ intertwines MacWilliams transforms at $n$, $n+1$. 

Now, we show that $T_n \geq 0$ and lower-triangular. We're specifically going to show that 

\begin{equation}
    [T_n]_{\ell t} = \begin{cases}
        0, & \ell < t \\ 
        \frac{(n+1-t)(n+ 2t + 2)}{(n+2)(n + t + 2)}, & \ell = t \\ 
        \frac{(2\ell + 1)(n+2 + 2t(t+1))(n-t)! (n+t+1)!}{(n+2)(n+1-\ell)!(n+\ell+2)!}, & \ell > t
    \end{cases} \label{explicit_T_n}
\end{equation}

Instead of inverting $\mathsf{R}_n$, we're going to show that this form satisfies $T_n \mathsf{R}_n = \mathsf{R}_{n+1} \mathsf{Q}_n$, and then the result follows from invertibility. Indeed, we check for $0 \leq r \leq n,\, r \leq \ell \leq n+1$, (note that the first term only exists for $\ell \leq n$, so the term is written to vanish at $\ell = n + 1$)

\begin{equation}
    \begin{aligned}
        [T_n \mathsf{R}_n]_{\ell r}
        &= \sum_t [T_n]_{\ell t} [\mathsf{R}_n]_{tr} \\
        &= \frac{(n+1-\ell)^2(n+2\ell+2)}{(n+2)(n+\ell+2)}
        \frac{(2\ell+1)(\ell+r)!(n-r)!}{r! (\ell-r)! (n+1-\ell)! (n+\ell+1)!} \\
        &\quad + \sum_{t = r}^{\ell-1} \frac{(2\ell+1)(n+2 + 2t(t+1))(n-t)! (n+t+1)!}{(n+2)(n+1-\ell)! (n+\ell+2)!}
        \cdot \frac{(2t+1)(t+r)!(n-r)!}{r! (t-r)! (n-t)! (n+t+1)!} \\ 
        &= \frac{(2\ell+1)(n-r)!}{(n+2)(n+\ell+2)! (n+1-\ell)! r! }
        \left( \frac{(\ell+r)!(n+2\ell+2) (n+1-\ell)^2}{(\ell-r)!} \right. \\
        &\qquad\left. + \sum_{t =r}^{\ell-1} \frac{(2t+1)(n + 2 + 2t(t+1)) (t+r)!}{(t-r)!}  \right)
    \end{aligned}\label{T_nR_n_derivation}
\end{equation}

Now, we show inductively that 

\begin{equation}
\sum_{t =r}^{\ell-1} \frac{(2t+1)(n + 2 + 2t(t+1)) (t+r)!}{(t-r)!} = \frac{\ell(\ell-r)(nr +  2n + 2(r+1)\ell^2 + 2)(\ell+r)!}{(r+1)(r+2)(\ell-r)!}
\end{equation}

Indeed, when $\ell = r$, both sides are trivially zero. Now, suppose the identity holds for some $\ell \geq r$. Then, we have 

\begin{equation}
    \begin{aligned}
        &\sum_{t =r}^{\ell} \frac{(2t+1)(n + 2 + 2t(t+1)) (t+r)!}{(t-r)!} \\
        &= \frac{\ell(\ell-r)(nr +  2n + 2(r+1)\ell^2 + 2)(\ell+r)!}{(r+1)(r+2)(\ell-r)!}
        + \frac{(2\ell+1)(n+2+2\ell(\ell+1))(\ell+r)!}{(\ell-r)!} \\
        &= \frac{(\ell+r)!}{(\ell-r)! (r+1)(r+2)}  \left[ \ell(\ell-r) (nr + 2n + 2(r+1)\ell^2 + 2) + (2\ell+1)(n+2 + 2\ell(\ell+1))(r+1)(r+2) \right] \\ 
        &= \frac{(\ell+r)!}{(\ell-r)! (r+1)(r+2)} \cdot \left[(\ell+1)(1+\ell+r)\left(4\ell(1+r) + 2\ell^2(1+r) + (2+n)(r+2) \right) \right] \\ 
        &= \frac{(\ell-r+1) (\ell+1) (\ell+1+r)! }{(r+1) (r+2) (\ell-r+1)!} \left(2(r+1)(\ell+1)^2 + nr + 2n + 2 \right) 
    \end{aligned}
\end{equation}

where the algebraic simplification was checked with Mathematica. This concludes the inductive proof. So, we get

\begin{equation}
    \begin{aligned}
        [T_n \mathsf{R}_n]_{\ell r}
        &= \frac{(2\ell+1)(n-r)! (\ell+r)!}{(n+2)(n+\ell+2)! (n+1-\ell)! r! (\ell-r)!} \\
        &\quad\times \left( (n+2\ell+2) (n+1-\ell)^2
        + \frac{\ell(\ell-r)(nr +  2n + 2(r+1)\ell^2 + 2)}{(r+1)(r+2)} \right)
    \end{aligned} \label{T_nR_nResult}
\end{equation}

Similarly, we see for $0 \leq r \leq n$,\, $r \leq \ell \leq n+1$

\begin{equation}
    \begin{aligned}
        [\mathsf{R}_{n+1} \mathsf{Q}_n]_{\ell r} &=  \sum_{k = 0}^\ell [\mathsf{R}_{n+1}]_{\ell k} [\mathsf{Q}_n]_{kr}\\
         &=  \frac{(2\ell+1)(\ell+r)!(n+1-r)!}{r!(\ell-r)! (n+1-\ell)!(n+\ell+2)!}
         \left[ a_{n,r} + \frac{(\ell+r+1)  (\ell-r)}{(r+1) (n+1-r)} b_{n,r} \right. \\
         &\qquad\left. + \frac{2(\ell+r+1)(\ell+r+2)(\ell-r)(\ell-r-1)}{(r+2) (n+1-r)(n+2)} \right] \\ 
         &= \frac{(2\ell+1)(\ell+r)!(n-r)!}{(n+2)r!(\ell-r)! (n+1-\ell)!(n+\ell+2)!} (n+2)(n+1-r) \\
         &\quad\times \left[ a_{n,r} + \frac{(\ell+r+1)  (\ell-r)}{(r+1) (n+1-r)} b_{n,r} \right. \\
         &\qquad\left. + \frac{2(\ell+r+1)(\ell+r+2)(\ell-r)(\ell-r-1)}{(r+2) (n+1-r)(n+2)} \right]
    \end{aligned}
\end{equation}

So we check 

\begin{equation}
    \begin{aligned}
        (n+2)(n+1-r) \cdot \left[ a_{n,r} + \frac{(\ell+r+1)  (\ell-r)}{(r+1) (n+1-r)} b_{n,r} + \frac{2(\ell+r+1)(\ell+r+2)(\ell-r)(\ell-r-1)}{(r+2) (n+1-r)(n+2)} \right] \\ 
        = (n+2r + 2)(n+1-r)^2 + \frac{(\ell+r+1)(\ell-r)(4r^2 + (2-3n)r -2n)}{(r+1)} + \frac{2(\ell+r+1)(\ell+r+2)(\ell-r)(\ell-r-1)}{(r+2)} \\
        = (n+2\ell+2) (n+1-\ell)^2  + \frac{\ell(\ell-r)(nr +  2n + 2(r+1)\ell^2 + 2)}{(r+1)(r+2)}
    \end{aligned}
\end{equation}

where the algebraic simplification was once again checked with Mathematica. Combining with \eqref{T_nR_nResult}, as well as with the fact that for $r > \ell$, $[T_n \mathsf{R}_n]_{\ell r} = 0 = [\mathsf{R}_{n+1}\mathsf{Q}_n]_{\ell r}$ since first sum requires $t \leq \ell$, $t \geq r$ to be non-zero in equations \eqref{T_nR_n_derivation} (and similarly for the other sum with $k$), we conclude that $T_n \mathsf{R}_n = \mathsf{R}_{n+1} \mathsf{Q}_n$, and since $\mathsf{R}_n$ are invertible, that confirms equation \eqref{explicit_T_n} as the unique formula for $T_n$. And since all entries of $T_n$ are clearly non-negative, we conclude $T_n \geq 0$. 

Finally, we want to check $\mathbf{1}^T T_n = \mathbf{1}^T$. Indeed, we see

\begin{equation}
    \begin{aligned}
        [\mathbf{1}^T T_n]_{j} &= \frac{(n+1-j)(n+2j + 2)}{(n+2)(n+2+j)} + \frac{(n+2+2j(j+1))(n-j)! (n+j+1)!}{(n+2)}  \sum_{k=j+1}^{n+1} \frac{(2k+1)}{(n+1-k)! (n+k+2)!}  
    \end{aligned}
\end{equation}

The sum becomes 

\begin{equation}
    \begin{aligned}
        \sum_{k=j+1}^{n+1} \frac{(2k+1)}{(n+1-k)! (n+k+2)!}   &= \sum_{k=j+1}^{n} \frac{1}{(n+1-k)!(n+k+1)!} - \frac{1}{(n-k)!(n+k+2)!}  + \frac{1}{(2n+2)!}  \\ 
        &= \frac{1}{(n-j)!(n+j+2)!} - \frac{1}{(2n+2)!} + \frac{1}{(2n+2)!} = \frac{1}{(n-j)!(n+j+2)!}
    \end{aligned}
\end{equation}

from the telescoping sum. Substituting, in the previous equation, we get

\begin{equation}
    \begin{aligned}
        [\mathbf{1}^T T_n]_j &= \frac{(n+1-j)(n+2j + 2)}{(n+2)(n+2+j)} + \frac{(n+2+2j(j+1))(n-j)! (n+j+1)!}{(n+2)}  \cdot \frac{1}{(n-j)!(n+j+2)!} \\ 
        &=  \frac{(n+1-j)(n+2j + 2)}{(n+2)(n+2+j)} + \frac{(n+2+2j(j+1)) }{(n+2)(n+j+2)} \\ 
        &=  \frac{(n+2 + 2j)(n+2-j) + 2j^2}{(n+2)(n+2 +j)} = \frac{(n+2)^2 - (n+2)j + 2j(n+2) - 2j^2 + 2j^2}{(n+2)(n+2+j)} \\ 
        &= \frac{(n+2)(n+2+j)}{(n+2)(n+2+j)} = 1
    \end{aligned}
\end{equation}

so we have 

\begin{equation}
    \mathbf{1}^T T_n = \mathbf{1}^T \label{normalization-T_n}
\end{equation}

as desired. 
\end{proof}

\subsection{Certified and constructed distances}

 We present the maximum $\operatorname{LP}$-feasible distance at each total photon number $1 \leq n \leq 25$ for a qubit encoding ($K = 2$) obtained from explicit rational Farkas certificates, alongside the maximum distance of constructed codes at each $n$. Note that we're able to achieve the maximum attainable photon loss distance at every $n$ apart from $n = 18$ (with high-precision constructions, and the exact algebraic constructions mentioned earlier). Whether that difference is a result of the LP conditions not being tight, or the existence of a better $n = 18$ code, remains an open question. To see the relevant distance benchmarks for our binary-polyhedral family, see Table \ref{tab:constraint-balanced-family-codes}. Also note that related spin-code LP bounds were tabulated by Okada \cite{okadaQuantumAnalogDelsartes2025a}, which are consistent with our distance bound. 

\begin{table}[H]
    \centering
    \small
    \renewcommand{\arraystretch}{1.08}
    \begin{tabular}{c c c l l}
        \hline
        $n$ & Farkas bound $d_{\mathrm{F}}$ & Constructed distance $d_{\mathrm{con}}$ & Constructed code & Numerical classification \\
        \hline
        $1$  & $1$ & $1$ & Trivial rank-two subspace & Trivial \\
        $2$  & $1$ & $1$ & Trivial rank-two subspace & Trivial \\
        $3$  & $1$ & $1$ & Trivial rank-two subspace & Trivial \\
        $4$  & $2$ & $2$ & $2O:\rho_3$ & $2O:\rho_3$ \\
        $5$  & $2$ & $2$ & $C_6: \rho_2 \oplus \rho_6,\, \left[ 2D_4:\rho_7\right]$ & $C_6:\rho_2\oplus\rho_6$ \\
        $6$  & $2$ & $2$ & $C_6:\rho_1\oplus\rho_3$ & $C_6:\rho_1\oplus\rho_3$ \\
        $7$  & $3$ & $3$ & $2I:\rho_3$ & $2I:\rho_3$ \\
        $8$  & $3$ & $3$ & $2D_3:\rho_5$ & $C_6:\rho_1\oplus\rho_3$ \\
        $9$  & $3$ & $3$ & $2D_4:\rho_6$ & $C_6:\rho_2\oplus\rho_4$ \\
        $10$ & $3$ & $3$ & $2D_4:\rho_5$ & $2D_4:\rho_5$ \\
        $11$ & $3$ & $3$ & $2D_2:\rho_5$ & $C_4:\rho_3\oplus\rho_4$ \\
        $12$ & $3$ & $3$ & $2T:\rho_1\oplus\rho_2$ & $2T:\rho_1\oplus\rho_2$ \\
        $13$ & $4$ & $4$ & $2O:\rho_5$ & $2O:\rho_5$ \\
        $14$ & $4$ & $4$ & $C_6:\rho_1\oplus\rho_5$ & $C_6:\rho_1\oplus\rho_5$ \\
        $15$ & $4$ & $4$ & $2T:\rho_5$ & $2T:\rho_5$ \\
        $16$ & $4$ & $4$ & $2D_3:\rho_5$ & $2D_3:\rho_5$ \\
        $17$ & $4$ & $4$ & $2D_3:\rho_6$ & $2D_3:\rho_6$ \\
        $18$ & $5$ & $4$ & $2D_2:\rho_3\oplus\rho_4$ & $2D_2:\rho_3\oplus\rho_4$ \\
        $19$ & $5$ & $5$ & $2T:\rho_4$ & $2T:\rho_4$ \\
        $20$ & $5$ & $5$ & $2D_6:\rho_6$ & $C_6:\rho_1\oplus\rho_5$ \\
        $21$ & $5$ & $5$ & $2D_4:\rho_7$ & $2T:\rho_6$ \\
        $22$ & $5$ & $5$ & $2D_5:\rho_7$ & $2D_5:\rho_7$ \\
        $23$ & $5$ & $5$ & $C_4:\rho_3\oplus\rho_4$ & $C_4:\rho_3\oplus\rho_4$ \\
        $24$ & $5$ & $5$ & $C_4: \rho_1 \oplus \rho_2, \, \left[2D_3:\rho_5\right]$ & $C_4:\rho_1\oplus\rho_2$ \\
        $25$ & $5$ & $5$ & $2O:\rho_4$ & $2O:\rho_4$ \\
        \hline
    \end{tabular}
    \caption{Farkas upper bounds, the best saved rank-two constructions, and the symmetry classifications of the best numerical codes from Table~\ref{tab:numerical-code-symmetry-classification} for $1\leq n\leq 25$. A code of distance $d$ corrects up to $d-1$ losses. The codes in $[]$ are examples of alternative codes achieving the same distance.}
    \label{tab:farkas-constructed-distances}
\end{table}
 
\FloatBarrier

\bibliographystyle{apsrev4-2}
\bibliography{Candidacy}

\clearpage
\endgroup
\end{document}